%% file: 0-paper.tex
\documentclass[11pt]{article}
\usepackage{fullpage}
\usepackage[hyphens]{url}
\usepackage{hyperref}
\hypersetup{
breaklinks=true,
    colorlinks=true, 
    linkcolor=black, 
    citecolor=black, 
    filecolor=black,
    urlcolor=black 
}
\usepackage[utf8]{inputenc} 
\usepackage[T1]{fontenc}    
\usepackage{booktabs}       
\usepackage{amsfonts}       
\usepackage{nicefrac}       
\usepackage{microtype}      
\usepackage{diagbox}
\usepackage{enumerate}
\usepackage[shortlabels]{enumitem}
\usepackage{subfigure}
\usepackage{tabularx}
\usepackage{verbatim}
\usepackage{afterpage}
\usepackage{float}
\usepackage{wrapfig}
\usepackage{subfigure}
\usepackage{minitoc}
\usepackage{empheq}
\usepackage{cancel}
\usepackage[normalem]{ulem}
\usepackage{tcolorbox}
\tcbuselibrary{skins, breakable}

\usepackage{graphicx} 
\usepackage{caption}

\usepackage{mathrsfs}
\usepackage{amsmath}
\usepackage{amsthm}
\usepackage{amssymb}
\usepackage{tablefootnote}
\usepackage{multirow}
\usepackage{enumerate}
\usepackage{color}
\usepackage{xcolor}
\usepackage{tikz}
\usetikzlibrary{arrows.meta,fit,calc}

\usepackage[numbers]{natbib}

\NewDocumentCommand{\new}{+m}{%
  {\begingroup\color{black}#1\endgroup}
}

\newcommand\blfootnote[1]{%
  \begingroup
  \renewcommand\thefootnote{}%
  \footnote{\hspace{-1.2em}#1}
  \addtocounter{footnote}{-1}%
  \endgroup
}

\usepackage{algorithm}
\usepackage{algpseudocode}
\allowdisplaybreaks[4]
\usepackage{bm,todonotes}
\allowdisplaybreaks

\newtheorem{thm}{Theorem}[section]
\newtheorem{lem}{Lemma}[section]

\newtheorem{asmp}{Assumption}[section]
\newtheorem{defn}{Definition}[section]

\newcounter{subassumption}[asu]

\makeatletter
\renewcommand{\p@subassumption}{\theasu}
\makeatother

\newtheoremstyle{remarkstyle}
  {}                    
  {}                    
  {\normalfont}         
  {}                    
  {\itshape}            
  {.}                   
  { }                   
  {}                    
\theoremstyle{remarkstyle}
\newtheorem{rem}{Remark}[section]

\definecolor{wjs}{RGB}{200,0,50}

\hypersetup{
colorlinks=true,
filecolor=black,
citecolor=blue,
urlcolor=black,
}

\input{tex/math_commands}

\title{Optimal Watermark Localization in Mixed-Source \\
Large Language Model Texts}

\author{
  {Jose H.\ Blanchet\textsuperscript{1}}\qquad
  {T.\ Tony Cai\textsuperscript{2}}  \qquad
  {Xiang Li\textsuperscript{2}} \qquad
 {Hao Liu\textsuperscript{1}}  \qquad
  {Qi Long\textsuperscript{2}} \qquad
   {Weijie J.\ Su\textsuperscript{2}} 
  \\[1ex]
  \textsuperscript{1}Stanford University\\
  \textsuperscript{2}University of Pennsylvania\\[2ex]
}

\begin{document}

\maketitle

\doparttoc

\begin{abstract}

Watermarking provides a principled way to authenticate text generated by large language models (LLMs). In practice, however, the final text may be mixed-source, with watermark evidence surviving at only a subset of token positions after rewriting, insertion, deletion, or paraphrasing. Although prior work has studied global detection of watermark signals, when such signals can be localized remains unclear. We formulate watermark localization as a token-level multiple-testing problem based on pivotal statistics, with a latent indicator recording whether watermark dependence survives at each position. Under an asymptotic regime indexed by exponents for signal sparsity, next-token concentration, and effective-vocabulary growth, we derive a sharp boundary for global detection and phase transitions for discovery and classification within the class of coordinatewise pivot-based localization rules. We show that discovery is strictly harder than detection and that consistent classification is impossible across the parameter regime within this class. We then develop an adaptive thresholding method that does not require knowledge of the exponents or time-varying next-token distributions, but uses a data-driven estimate of the surviving watermark fraction. The method attains the optimal discovery boundary and near-optimal discovery power relative to homogeneous pivot-based rules. 
Simulations support the theoretical phase transitions, while
experiments on model-generated texts demonstrate practical localization
performance under common edit mechanisms.

\blfootnote{Emails:\;
\texttt{\{jose.blanchet, haoliu20\}@stanford.edu},\;
\texttt{\{tcai, suw\}@wharton.upenn.edu},\;
\texttt{\{lx10077, qlong\}@upenn.edu}.}
\blfootnote{Author names are listed in alphabetical order.}

\end{abstract}

\doparttoc 
\faketableofcontents 

\input{1-intro}

\input{2-prelim}

\input{3-results}

\input{4-theory}

\input{5-experiments}

\input{6-discuss}

\section*{Acknowledgments}
We thank Hossein Moradi Rekabdarkolaee for helpful comments on the exposition of an earlier version of this work during the NISS Writing Workshop.
This work was supported in part by NIH grants R01MH143267, U01CA274576, and R01EB036016, NSF grant DMS-2310679, a Meta Faculty Research Award, and Wharton AI for Business. J. Blanchet gratefully acknowledges support from the Department of Defense through ONR award 1398311 and from the National Science Foundation through grants 2312204 and 2403007. The content is solely the responsibility of the authors and does not necessarily represent the official views of the NIH.

\bibliographystyle{plainnat}
\bibliography{bib/chatgpt,bib/privacy,bib/stat}

\clearpage
\input{7-appendix}

\end{document}

%% file: tex/math_commands.tex
\usepackage{amsmath,amsfonts,bm}

\def\1{\bm{1}}

\def\eps{{\varepsilon}}

\def\rd{{\textnormal{d}}}

\DeclareMathAlphabet{\mathsfit}{\encodingdefault}{\sfdefault}{m}{sl}
\SetMathAlphabet{\mathsfit}{bold}{\encodingdefault}{\sfdefault}{bx}{n}

\renewcommand{\xi}{\zeta}
\newcommand{\Decision}{{\mathcal D}_n}

\def\Bdelta{\boldsymbol{\delta}}

\def\0{{\bf 0}}
\def\1{{\bf 1}}

\def\AM{{\mathcal A}}

\def\FM{{\mathcal F}}

\def\SM{{\mathcal S}}

\def\WM{{\mathcal W}}

\def\RB{{\mathbb R}}

\DeclareMathOperator{\EB}{\mathbb{E}}

\def\PB{{\mathbb P}}

\def\rd{{\mathrm{d}}}

\def\Algo{{\texttt{SPOT}}}
\def\Wrate{\alpha}

\def\Key{{\mathtt{Key}}}

\def\token{{w}}

\def\Voca{{\WM}}
\newcommand\bP{\bm{P}}

\def\SMmax{\SM^{\mathrm{gum}}}

\def\Yars{Y}



%% file: 1-intro.tex
\section{Introduction}\label{sec:intro}

Large language models (LLMs) have recently become a powerful technology for generating human-like text and other media~\citep{touvron2023llama,singh2025openai,yang2025qwen3}. They are now widely used in writing, education, programming, scientific discovery, and many other aspects of daily life. At the same time, their ability to generate fluent text at scale raises serious concerns about misuse, including misinformation~\citep{weidinger2021ethical,starbird2019disinformation,pan2023risk}, academic integrity~\citep{stokel2022ai,milano2023large}, and data pollution for training future models~\citep{radford2023robust,shumailov2023curse}. These risks make it increasingly important to reliably authenticate the origin of text, especially when such information supports decisions about authorship attribution, academic assessment, and accountability~\citep{wu2025survey}.

To address this problem, watermarking has been proposed as a principled approach for verifying the origin of LLM-generated text~\citep{kirchenbauer2023watermark,scott2023watermarking,li2024statistical}. Its main idea is to embed a hidden statistical signal into the generation process through controllable and recoverable pseudorandomness. Since LLMs generate text sequentially through sampling each token from next-token prediction (NTP) distributions, a watermarking scheme can privately modify this sampling rule so that the generated token still marginally follows the same NTP distribution, while being coupled with a pseudorandom variable known to the verifier. This coupling remains largely invisible to human users but provides valid statistical evidence for verifying whether the text was produced by a watermarked model. Following this principle, many watermarking schemes have been proposed since 2023~\citep{zhao2025sok,ji2026overview}. These efforts have made watermarking one of the most promising approaches for providing provable evidence about the origin of LLM-generated text.

In real-world applications, however, LLM-generated text is rarely used without modification. Users often paraphrase, rewrite, insert, or delete portions of generated outputs before using or submitting them. Such edits weaken watermark signals because modified tokens may no longer preserve the original dependence between the text and the pseudorandomness, while the verifier observes only the final text. This challenge has motivated work on robust watermark design against human edits~\citep{kuditipudi2023robust,yoo2023robust,zhu2024duwak,hou2024semstamp,ren2023robust,christ2024pseudorandom,golowich2024edit}. From a statistical perspective, \citet{li2025robust} models edited text as a mixture in which only a fraction of tokens still carry watermark signals and studies global detection of partially watermarked text. Under the same mixture model, \citet{li2025optimal} estimates how much watermark signal remains after common human editing. These studies address only aggregate questions: whether a mixed-source text still contains watermark evidence in aggregate, and how strong the remaining signal is.

A natural next question is more fine-grained: for a mixed-source text, can we identify which parts still preserve watermark evidence? We refer to this task as watermark localization, or watermark discovery, following the terminology of signal discovery in the multiple testing literature~\citep{CaiSun2017OSD}; see Figure~\ref{fig:localization-schematic} for an illustration. Unlike global detection, which only determines whether watermark evidence is present in the text as a whole, localization aims to provide token-level information about which positions still provide watermark evidence. This refinement is important when a document is only partially generated, collaboratively written, or substantially revised: a document-level conclusion may be too coarse to distinguish a short AI-generated passage from a largely AI-generated document, or to separate machine-generated content from substantial human revisions. Localization, therefore, provides a more informative basis for attribution, credit assignment, and targeted review.

\begin{figure}[t]
\centering
\resizebox{1\textwidth}{!}{%
\begin{tikzpicture}[font=\small]

\definecolor{wmblue}{HTML}{4C78A8}
\definecolor{editorange}{HTML}{F58518}
\definecolor{discgreen}{HTML}{54A24B}
\definecolor{missred}{HTML}{E45756}
\definecolor{lightgray}{HTML}{F2F2F2}

\tikzset{
  token/.style={
    draw=black!25,
    fill=lightgray,
    rounded corners=2pt,
    minimum height=5.2mm,
    minimum width=1.05cm,
    inner xsep=2pt,
    font=\scriptsize
  },
  signal/.style={
    token,
    draw=wmblue!80,
    fill=wmblue!18
  },
  edited/.style={
    token,
    draw=editorange!90,
    fill=editorange!20
  },
  selected/.style={
    draw=discgreen!90!black,
    line width=0.85pt,
    rounded corners=2pt
  },
  missed/.style={
    draw=missred!90!black,
    line width=0.85pt,
    rounded corners=2pt,
    densely dashed
  },
  panelbox/.style={
    draw=black!18,
    rounded corners=5pt,
    inner sep=4pt
  },
  flowarrow/.style={
    -{Latex[length=2.0mm]},
    line width=0.65pt,
    draw=black!55
  },
  smallnote/.style={
    font=\scriptsize,
    text=black!70,
    align=center
  }
}

\node[signal] (orig1) at (0.00,0) {The};
\node[signal] (orig2) at (1.25,0) {model};
\node[signal]  (orig3) at (2.50,0) {writes};
\node[signal] (orig4) at (3.75,0) {a};
\node[signal] (orig5) at (5.00,0) {draft};
\node[signal]  (orig6) at (6.25,0) {with};
\node[signal] (orig7) at (7.50,0) {hidden};
\node[signal]  (orig8) at (8.75,0) {subtle};
\node[signal] (orig9) at (10.00,0) {evidence};

\node[panelbox, fit=(orig1)(orig9)] (boxorig) {};
\node[smallnote, anchor=west] at (11.25,0)
{blue tokens are watermarked};

\node[signal] (edit1) at (0.00,-1.60) {The};
\node[edited] (edit2) at (1.25,-1.60) {author};
\node[signal]  (edit3) at (2.50,-1.60) {writes};
\node[signal] (edit4) at (3.75,-1.60) {a};
\node[edited] (edit5) at (5.00,-1.60) {revised};
\node[edited]  (edit6) at (6.25,-1.60) {text};
\node[edited] (edit7) at (7.50,-1.60) {with};
\node[signal] (edit8) at (8.75,-1.60) {subtle};
\node[signal] (edit9) at (10.00,-1.60) {evidence};

\node[panelbox, fit=(edit1)(edit9)] (boxedit) {};
\node[smallnote, anchor=west] at (11.25,-1.60)
{orange tokens are edited};

\node[signal] (out1) at (0.00,-3.20) {The};
\node[token] (out2) at (1.25,-3.20) {author};
\node[token] (out3) at (2.50,-3.20) {writes};
\node[signal] (out4) at (3.75,-3.20) {a};
\node[token] (out5) at (5.00,-3.20) {revised};
\node[token] (out6) at (6.25,-3.20) {text};
\node[token] (out7) at (7.50,-3.20) {with};
\node[token] (out8) at (8.75,-3.20) {subtle};
\node[signal] (out9) at (10.00,-3.20) {evidence};

\node[panelbox, fit=(out1)(out9)] (boxout) {};

\node[selected, fit=(out1)] {};
\node[selected, fit=(out4)] {};
\node[missed, fit=(out7)] {};
\node[selected, fit=(out9)] {};

\newcommand{\legenditem}[2]{%
  \tikz[baseline=-0.5ex]{\node[#1, minimum width=4mm, minimum height=3mm] {};}\; #2%
}

\node[anchor=west, font=\scriptsize] at (11.25,-2.60) {\legenditem{signal}{surviving signal, $\theta_t=1$}};
\node[anchor=west, font=\scriptsize] at (11.25,-2.96) {\legenditem{token}{null signal, $\theta_t=0$}};
\node[anchor=west, font=\scriptsize] at (11.25,-3.30) {\legenditem{selected}{true discovery, $(\theta_t,\delta_t)=(1, 1)$}};
\node[anchor=west, font=\scriptsize] at (11.25,-3.65) {\legenditem{missed}{false discovery, $(\theta_t,\delta_t)=(0, 1)$}};

\draw[flowarrow]
  (boxorig.south) --
  node[right, font=\scriptsize, text=black!55] {human edit}
  (boxedit.north);

\draw[flowarrow]
  (boxedit.south) --
  node[right, font=\scriptsize, text=black!55] {localization output: \{The, a, with, evidence\}}
  (boxout.north);

\end{tikzpicture}%
}
\vspace{-20pt}
\caption{
Watermark localization under common human edits. 
For each final-text position $t$, $\theta_t=1$ indicates a surviving watermark signal, and $\delta_t=1$ indicates selection by the localization method. 
Selected positions with $\theta_t=1$ are true discoveries, while those with $\theta_t=0$ are false discoveries.
}
\label{fig:localization-schematic}
\vspace{-10pt}
\end{figure}

Despite its practical importance, watermark localization remains much less understood from a statistical viewpoint. Recent works have explored this fine-grained problem using change-point detection~\citep{li2024segmenting} or online learning ideas~\citep{zhao2025efficiently}, showing that localization is achievable in practice. However, these works are mainly algorithmic and do not characterize the statistical limits of localization for mixed-source data. This gap is nontrivial: mixed-source text creates a heterogeneous sequence in which watermark-preserving locations may be sparse, non-contiguous, and distributionally heterogeneous, while the watermark signal itself varies across positions due to autoregressive LLM generation. Motivated by this gap, we ask the following questions: \textit{for mixed-source text, when is watermark localization statistically possible, and can it be achieved adaptively whenever localization is information-theoretically possible, without requiring prior knowledge of the source-mixing process?}

\subsection{Our Contributions}

\paragraph{A robust multiple-testing framework for watermark localization.}
We study these questions by developing a statistical theory and adaptive methodology for watermark localization in mixed-source LLM text.
Our first contribution is a statistical formulation of this problem.
For a text of length $n$, we use a scalar pivotal statistic $Y_t$ from~\citet{li2024statistical} to quantify the watermark signal at each token position $t$.
The key property is that, when no watermark signal survives at position $t$, $Y_t$ follows a known null distribution $\mu_0$, regardless of the marginal token distribution. This distribution-free null property has been central in prior statistical analyses of LLM watermarks and allows us to handle unknown and time-varying NTP distributions~\citep{li2024statistical,li2025robust,li2025optimal}.

To model the mixed-source text induced by human edits, we refine the mixture model of~\citet{li2025robust,li2025optimal} to the token level.
For each position $t$, we introduce a latent survival indicator $\theta_t\in\{0,1\}$: $\theta_t=1$ means that the watermark dependence at position $t$ survives editing, while $\theta_t=0$ means that this dependence is erased.
Conditional on $\theta_t$, the pivotal statistic follows either the null law $\mu_0$ or a watermark-induced alternative law $\mu_{1,\bP_t}$, determined by the local NTP distribution $\bP_t$ of token $w_t$.
Thus, given the observed sequence $Y_{1:n}:=(Y_1,\ldots,Y_n)$, watermark localization can be viewed as the task of inferring the latent survival indicators $\theta_{1:n}:=(\theta_1,\ldots,\theta_n)$.

Since exact recovery may be statistically impossible~\citep{CaiSun2017OSD}, we study this localization task through three inference goals of increasing strength.
\textit{Global detection} is the coarsest goal: it asks whether the verifier can reliably determine from the observed sequence whether at least one position satisfies $\theta_t=1$.
\textit{Discovery} asks whether the verifier can identify a non-trivial set of watermark-preserving positions while ensuring a vanishing false discovery rate.
Here, a false discovery is a selected position with no surviving watermark signal ($\theta_t=0$), while a missed discovery is an unselected position with surviving watermark signal ($\theta_t=1$).
\textit{Classification} is the strongest goal: it asks whether the verifier can asymptotically separate null positions from watermark-preserving positions across the entire text, with both false discoveries and missed discoveries vanishing.
This hierarchy turns localization into a sequence of increasingly demanding inference goals. 
We next characterize, for each goal, when it is information-theoretically achievable.

\vspace{-1.1em}
\paragraph{Phase transitions for three inference goals.}
To characterize the fundamental limits of these three inference goals, we consider an asymptotic regime indexed by three exponents $(p,q,\alpha)$.
The surviving watermark fraction satisfies $\varepsilon_n\asymp n^{-p}$, so that the text contains approximately $n\varepsilon_n\asymp n^{1-p}$ watermark-preserving positions; thus, larger $p$ corresponds to sparser surviving signals.
Each NTP distribution $\bP_t := (P_{t,w})_w$ satisfies
$1-\max_{\token\in\Voca}P_{t,\token}\asymp n^{-q}$, so that larger $q$ corresponds to a more concentrated next-token distribution and hence weaker token-level watermark evidence.
Finally, the effective low-probability vocabulary tail grows at rate $n^\alpha$, with larger $\alpha$ providing more rare-token opportunities for distinctive local evidence.
These exponents characterize the statistical difficulty of the problem and are not required as inputs to our procedure.
Under this regime, we derive an explicit boundary for global detection and sharp phase transitions for discovery and classification within the class of coordinatewise pivot-based localization rules.

\begin{itemize}
\vspace{-0.6em}
\item \textbf{Detection boundary.}
We first show that global detection is possible if and only if
\begin{equation}
\label{eq:new-detection-boundary}
\max\{p+q,\;2p+q-\alpha\}<1,
\end{equation}
up to boundary cases.
This generalizes the fixed-vocabulary detection boundary of~\citet{li2025robust} to the growing-vocabulary regime.
We also show that the truncated goodness-of-fit detection method of~\citet{li2025robust} continues to attain this enlarged boundary adaptively.

\vspace{-0.6em}
\item \textbf{Discovery boundary.}
For the main localization goal, we prove that discovery is possible if and only if
\begin{equation}
\label{eq:discovery-boundary}
p<\alpha
\quad\text{and}\quad
p+q<1,
\end{equation}
up to boundary cases.
This region \eqref{eq:discovery-boundary} is strictly smaller than the detectable region in~\eqref{eq:new-detection-boundary}, implying that discovery is fundamentally harder than global detection, because it requires locating a watermark-preserving position with vanishing false discovery rate.
In particular, when $\alpha=0$, discovery is impossible though global detection is still possible in some regimes. 

\vspace{-0.6em}
\item \textbf{Impossible classification.}
We further prove that consistent classification is impossible throughout the
entire $(p,q,\alpha)$ regime within the class of coordinatewise localization rules.
The reason is that the two requirements of classification are not compatible with each other.
To make missed discoveries vanish, one must select almost all watermark-preserving positions, including those with weak evidence; but doing so inevitably selects too many null positions, so the false discovery rate cannot vanish.
In this way, discovery can still be possible because it only needs a few strong watermark-preserving positions, whereas classification requires reliable recovery of all of them.
\end{itemize}

\begin{figure}[t]
\centering
\includegraphics[width=\textwidth]{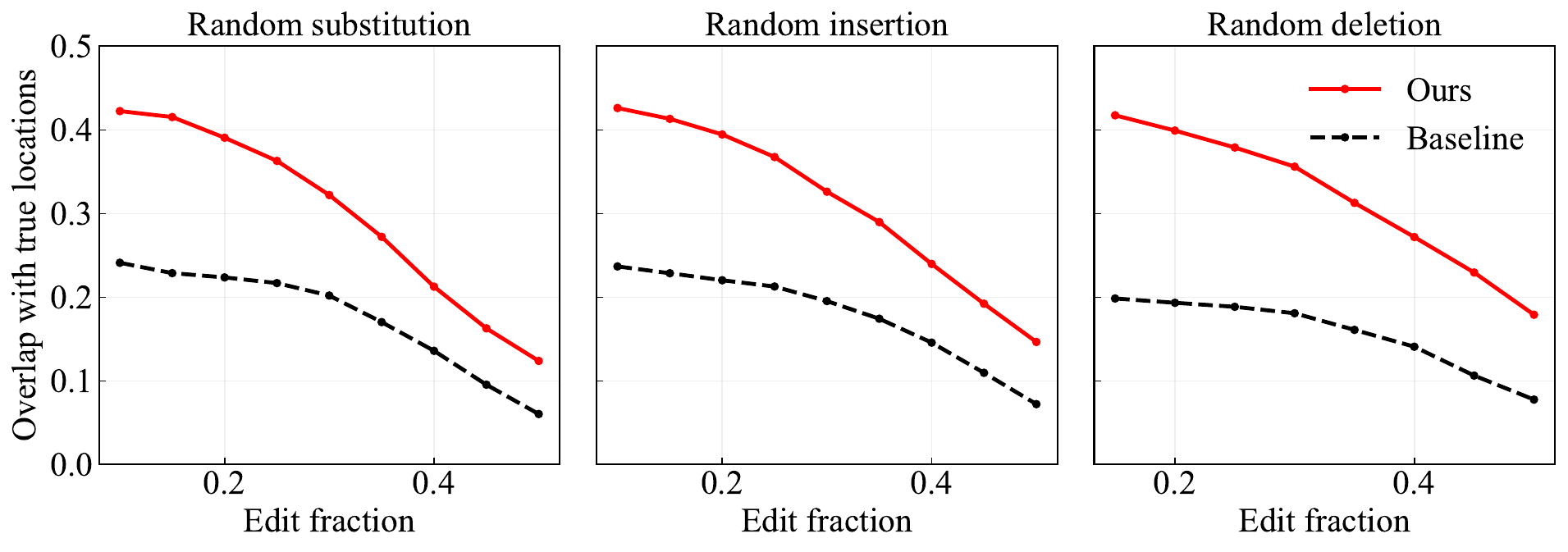}
\vspace{-20pt}
\caption{
LLM watermark localization under random edits; see Section~\ref{sec:LLM-experiments} for details.
For substitution, insertion, and deletion edits, we report the best intersection-over-union score subject to a false positive rate at most $0.05$. Higher values indicate better localization performance.
}
\label{fig:real-llm-motivating-comparison}
\vspace{-10pt}
\end{figure}

\vspace{-1.1em} 
\paragraph{Adaptive method and optimality.}
Our third contribution is an adaptive localization method, \Algo~(\textbf{S}canning \textbf{P}ivots \textbf{O}ver \textbf{T}hresholds), presented in
Algorithm~\ref{alg:adaptive-discovery-u}, for watermark discovery.
The method is designed to achieve discovery throughout the statistically
discoverable region without prior knowledge of the problem-dependent quantities
that determine the boundary, including $p$, $q$, $\Wrate$, and the time-varying
NTP distributions, although it uses a data-driven estimate of the overall
surviving watermark fraction \citep{li2025optimal}.

At a high level, \Algo\ searches for positions whose pivotal statistics fall in an extreme upper tail relative to the null law $\mu_0$. 
Thus, the main challenge is to choose a data-dependent threshold that is high enough to control false discoveries but not so high that useful watermark-preserving positions are lost. 
To do so, \Algo\ scans a grid of tail thresholds, estimates a surrogate false discovery level from empirical tail counts, and selects the largest discovery set whose estimated false discovery level is controlled. 
This tail-based calibration avoids estimating the heterogeneous alternative distributions, which is often impractical due to the inaccessibility of time-varying NTP distributions.

We establish two optimality guarantees.
First, \Algo\ is boundary-optimal: it achieves discovery throughout the
discoverable region characterized in \eqref{eq:discovery-boundary}.
Second, it has near-optimal discovery power, recovering, up to a constant factor,
as many watermark-preserving positions as the best rule in a natural class of
homogeneous pivot-based procedures under the same false-discovery constraint.
Figure~\ref{fig:real-llm-motivating-comparison} shows that \Algo\ achieves better watermark localization performance than the baseline method \citep{zhao2025efficiently} under common edits such as random substitution, insertion, and deletion.

\subsection{Related Work}

\paragraph{Watermark detection and localization.}
Our work is most closely related to statistical watermark detection and fine-grained localization.
\citet{li2025robust} studies global detection under a mixture model for edited text, and \citet{li2025optimal} estimates the remaining watermark proportion in hybrid AI--human text.
These works provide statistical tools for aggregate questions, but they do not identify where the surviving watermark signals are located.
For localization, \citet{li2024segmenting} uses change-point detection to segment watermarked text, while \citet{zhao2025efficiently} proposes an online localization method for mixed-source text, which serves as our main empirical baseline.
These methods show that localization is practically feasible, but they do not characterize its statistical limits or establish adaptive optimality.
Our work addresses this gap by formulating watermark localization as a token-level multiple testing problem and deriving sharp limits for detection, discovery, and classification.
More broadly, LLM watermarking has also been studied through
biased \citep{kirchenbauer2023watermark,tsur2025heavywater} or distribution-preserving watermark design~\citep{scott2023watermarking,zhao2024provable,xie2024debiasing,giboulot2024watermax,he2026improving,huang2026towards}, robustness-oriented design~\citep{kuditipudi2023robust,yoo2023robust,zhu2024duwak,hou2024semstamp,ren2023robust,christ2024pseudorandom,golowich2024edit}, and statistical detection~\citep{li2024statistical,li2025robust,he2026empirical}.
Rather than proposing a new watermarking scheme, we focus on the statistical limits of localizing surviving watermark evidence in mixed-source text.

\vspace{-1.1em}
\paragraph{Sparse signal discovery and multiple testing.}
Our formulation is most directly connected to the signal discovery framework of~\citet{CaiSun2017OSD}, which studies how to identify sparse signals under false discovery control.
We follow this perspective by distinguishing detection, discovery, and classification, but our setting differs because the alternative laws are heterogeneous and depend on unknown, time-varying NTP distributions generated by an autoregressive LLM.
Our work is also related to oracle compound decision rules and local-FDR methods for large-scale multiple testing~\citep{sun2007oracle}, where local posterior information provides a natural principle for selecting non-null positions.
In our setting, however, direct local-FDR estimation is difficult because the alternative law at each position is unobserved and varies with the local NTP distribution.
We therefore use tail counts of pivotal statistics to control false discoveries without estimating the full heterogeneous alternative distribution.
More broadly, our phase-transition analysis is related to sparse mixture detection and higher criticism, where sharp detection boundaries are characterized for rare and weak signals~\citep{donoho2004higher,donoho2015higher}, and to goodness-of-fit tests based on divergence statistics~\citep{jager2007goodness}.

\subsection{Organization of the Paper}
The remainder of the paper is organized as follows. In Section~\ref{sec:pre}, we review the basics of LLM watermarking. In Section~\ref{sec:method}, we formulate the watermark localization problem and present our adaptive localization method. In Section~\ref{sec:theory}, we investigate the information-theoretic limits of the problem and establish the optimality of our method. In Section~\ref{sec:simulation}, we present simulation studies to validate our theoretical findings. In Section~\ref{sec:LLM-experiments}, we conduct experiments on LLM-generated text to evaluate the empirical performance of our method. We conclude in Section~\ref{sec:discuss} with a discussion of future research directions. All the technical proofs and additional experimental details are in the appendix. 
The source code is publicly available at \url{https://github.com/SHL20/WatermarkLocalization}.

%% file: 2-prelim.tex
\section{Preliminaries}\label{sec:pre}

\paragraph{Watermarking protocol.} 
We study a standard watermarking protocol involving three parties: the model provider, the user, and the verifier~\citep{kuditipudi2023robust,xie2024debiasing,li2025robust}. As a motivating example, a student may use an LLM to assist with a homework assignment. The model provider embeds a statistical watermark into the generated text using a secret key $\Key$, which is available to the verifier, such as an instructor, but hidden from the user. Before submission, the user may revise or edit the generated output. The verifier observes only the final text, which may already have been modified, and aims to localize the watermark signal, namely, to identify which parts of the text still preserve watermark signals. The verifier does not observe the original prompt, the unedited model output, or the model's internal parameters. Thus, the protocol consists of three stages: watermark embedding by the model provider, possible human editing by the user, and watermark localization from the final edited text by the verifier.

\vspace{-1.1em} 
\paragraph{Watermark embedding.}  
We next describe how a watermark is embedded during text generation. LLMs generate text autoregressively: at position $t$, given previous tokens $\token_{1:(t-1)}:=\token_1\cdots \token_{t-1}$, the model computes a next-token prediction (NTP) distribution $\bP_t := (P_{t, w})_{w\in\Voca}$ over the vocabulary $\Voca$ and samples the next token accordingly. A watermark modifies this sampling step in a secret-key-dependent way. Specifically, the model computes a pseudorandom variable $\xi_t=\AM(\token_{(t-m):(t-1)},\Key)$, where $\AM$ is a cryptographic hash function, $\Key$ is the watermarking key, and $m$ is the context window size, and then selects the next token by a deterministic decoder $\token_t=\SM(\bP_t,\xi_t)$. This creates a dependence between the generated token $\token_t$ and the pseudorandom variable $\xi_t$.
The variable $\xi_t$ is called pseudorandom because it behaves statistically like a random draw, while being exactly reproducible from $\AM$, $\Key$, and the local text $\token_{(t-m):(t-1)}$. For theoretical analysis, we model $\xi_t$ as i.i.d. samples from a reference distribution $\pi$, reflecting standard cryptographic idealizations~\citep{barak2021book,schneier1996applied,wu2023dipmark,zhao2024permute}. A decoder $\SM$ is unbiased if $\PB_{\xi\sim\pi}(\SM(\bP,\xi)=\token)=P_{\token}$ for every NTP distribution $\bP = (P_{\token})_{\token \in \Voca}$ and candidate token $\token$. Thus, unbiased watermarking preserves the marginal distribution of generated tokens while creating a hidden dependence on $\xi_t$ that can later be used for verification.

\vspace{-1.1em} 
\paragraph{Pivotal statistics for watermark evidence.}
Since our primary interest is watermark localization, we adopt the verifier's perspective. Given the final text $\token_{1:n}$, the verifier reconstructs the corresponding pseudorandom sequence $\xi_{1:n}$ using the hash function $\AM$ and the shared key $\Key$. Hence, the data for statistical analysis are the paired observations $\{(\token_t,\xi_t)\}_{t=1}^n$. Watermark evidence is encoded in the dependence between a token and its pseudorandom variable. If a token is human-written, or if its watermark dependence has been destroyed by editing, then $\token_t$ is independent of $\xi_t$; if the watermark signal is preserved, then $\token_t$ remains coupled with $\xi_t$ through the decoder. To measure this dependence, we use a pivotal statistic $\Yars_t=Y(\token_t,\xi_t)$~\citep{li2024statistical}, which is designed to follow a known null distribution $\mu_0$ whenever $\token_t$ is independent of $\xi_t$, regardless of the marginal distribution of $\token_t$. When the watermark dependence is preserved, $\Yars_t$ instead follows an alternative distribution $\mu_{1,\bP_t}$ determined by the local NTP distribution $\bP_t$. Thus, pivotal statistics reduce watermark localization to distinguishing null-like positions from watermark-preserving positions. We formalize the resulting multiple testing problem in Section~\ref{sec:setting}.

\vspace{-1.1em} 
\paragraph{The Gumbel-max watermark.}
For concreteness, our theoretical analysis focuses on the Gumbel-max watermark~\citep{scott2023watermarking}, one of the most influential unbiased watermarking schemes. We use it as our main example because both its decoder and pivotal statistic admit explicit distributional forms, making it a clean setting for deriving sharp localization limits.\footnote{These ideas are not specific to Gumbel-max and may extend to other schemes with suitable pivotal statistics.}
The Gumbel-max watermark is based on the Gumbel-max trick for sampling from a multinomial distribution~\citep{gumbel1948statistical,papandreou2011perturb,maddison2014sampling,jang2016categorical}. 
Let $\xi=(U_w)_{w\in\Voca}$ consist of i.i.d. $U(0,1)$ random variables. The trick states that $\arg\max_{w\in\Voca}\log U_w/P_w$ follows the categorical distribution $\bP=(P_w)_{w\in\Voca}$ exactly. This motivates the unbiased decoder
\begin{equation}\label{eq:it}
\SMmax(\bP,\xi):=\arg\max_{w\in\Voca}\frac{\log U_w}{P_w}.
\end{equation}
For this watermark, the pivotal statistic is $\Yars_t=Y(\token_t,\xi_t)=U_{t,\token_t}$, where $\xi_t=(U_{t,w})_{w\in\Voca}$ collects the uniform pseudorandom at position $t$. If the observed token $\token_t$ is independent of $\xi_t$, then $\Yars_t\sim U(0,1)$. If the token is generated via the Gumbel-max decoder $\SMmax$, larger pseudorandom values are more likely to be selected, and the pivot becomes stochastically larger. More precisely, for a given NTP distribution $\bP$, its distribution is $\mu_{1,\bP}(Y\le r)=\sum_{w\in\Voca}P_wr^{1/P_w}$ for $r\in[0,1]$~\citep{li2024statistical}. This explicit null-versus-alternative structure is the basis for our localization analysis.

%% file: 3-results.tex
\section{Method}
\label{sec:method}

\subsection{Problem Formulation}
\label{sec:setting}

We first describe the statistical data structure behind watermark localization. From Section~\ref{sec:pre}, the verifier observes the final text $\token_{1:n}$, reconstructs the pseudorandom variables $\xi_{1:n}$, and computes the pivotal statistic $\Yars_t := Y(\token_t,\xi_t)$ at each position $t$. This scalar statistic measures the watermark evidence in the pair $(\token_t,\xi_t)$. Its key property is that, if the token is human-written or if human editing has broken its dependence on the pseudorandom variable, then $\token_t$ is independent of $\xi_t$, and $\Yars_t$ follows a known null distribution $\mu_0$ regardless of the marginal distribution of $\token_t$. In contrast, if the watermark signal survives, then $\token_t$ remains coupled with $\xi_t$ through the decoder, and $\Yars_t$ follows an alternative distribution $\mu_{1,\bP_t}$ determined by the local NTP distribution $\bP_t$.

This null-versus-signal structure motivates a latent label $\theta_t\in\{0,1\}$, where $\theta_t=1$ indicates that the watermark dependence at position $t$ survives editing, and $\theta_t=0$ indicates that it is erased. At the level of pivotal statistics, this structure is summarized as
\begin{equation}
\label{eq:mixture-Y}
\Yars_t\mid(\bP_t,\theta_t)\sim (1-\theta_t)\mu_0+\theta_t\mu_{1,\bP_t}
~~\text{for}~~t=1,\ldots, n.
\end{equation}
The localization target is to identify the signal set $I:=\{t:\theta_t=1\}$, consisting of positions where watermark evidence remains. We also denote the non-watermarking set by $N:=\{t:\theta_t=0\}$. Therefore, watermark localization is a token-level multiple decision problem: at each position, the verifier decides whether the local state is null-like or watermark-preserving.

A \textit{localization rule} is a binary sequence $\Bdelta=(\delta_1,\ldots,\delta_n)$, where $\delta_t=1$ means that position $t$ is declared to preserve watermark evidence. For a given rule $\Bdelta$, we define $S_{\Bdelta}:=\{t:\delta_t=1\}$ as the associated \textit{discovery set}, namely the set of positions selected by the rule $\Bdelta$ as watermark-preserving. In this paper, we focus on \textit{local rules} that are separate on pivotal statistics at each position, denoted by $\Decision:=\{\Bdelta:\delta_t=\phi_{t}(\Yars_t),\ t=1,\ldots,n\}$, where each $\phi_t:[0,1]\to\{0,1\}$ is a measurable function that may depend on the position $t$ and the text length $n$. Unless otherwise stated, all decision rules belong to $\Decision$.

\begin{rem}[Connection to prior formulation]
Our formulation in \eqref{eq:mixture-Y} is motivated by the mixture model of \citet{li2025robust}, but differs in that we explicitly introduce the latent binary sequence $\{\theta_t\}_{t=1}^n$ to encode which token positions still preserve watermark dependence after human edits. This additional structure turns robust watermark detection into a localization problem.
\end{rem}

\subsection{Our Method: \Algo~}
\label{sed:algo}

We now present our method \Algo~in Algorithm~\ref{alg:adaptive-discovery-u}. At a high level, the method searches for positions whose pivotal statistics look unusually abnormal under the null law $\mu_0$, and declares such positions as discoveries. Under the null case, $\theta_t = 0$, so that the $p$-value $p_t:=1-F_0(\Yars_t)$ is i.i.d. $U(0,1)$, where $F_0(y) := \mu_0(Y \le y)$ is the CDF of the null distribution $\mu_0$. Hence, very small $p$-values, or equivalently very large values of $F_0(\Yars_t)$, provide evidence that the corresponding positions may preserve watermark signal. Therefore, the problem reduces to choosing a threshold: positions with $F_0(\Yars_t)$ above this threshold are declared as discoveries.

\begin{algorithm}[t]
\caption{\Algo: \textbf{S}canning \textbf{P}ivots \textbf{O}ver \textbf{T}hresholds}
\label{alg:adaptive-discovery-u}
\begin{algorithmic}[1]
\Statex \textbf{Input:} Given text $\token_{1:n}$, hash function $\AM$, secret key $\Key$, pivot function $Y$, grid endpoints $0<u_{\min}<u_{\max}<1$, grid size $M_n$, target level $\lambda_n\in(0,1)$, and slack $\eta_n\in(0,\lambda_n)$.

\State \textbf{Compute pseudorandomness.} For each $t=1,\ldots,n$, reconstruct $\xi_t=\AM(\token_{(t-m):(t-1)},\Key)$.

\State \textbf{Compute pivotal statistics.} For each $t=1,\ldots,n$, compute $\Yars_t=Y(\token_t,\xi_t)$.

\State \textbf{Construct tail grid.} Set $\mathcal U_n:=\{u_j=u_{\min}+(j-1)\frac{u_{\max}-u_{\min}}{M_n-1}:j=1,\ldots,M_n\}$

\State \textbf{Estimate tail mass.} For each $u\in\mathcal U_n$, compute
\begin{equation}
\label{eq:Sn-hat-u}
\widehat S_n(u):=\frac1n\sum_{t=1}^n\mathbf 1\{F_0(Y_t)>\tau(u)\}.
\end{equation}

\State \textbf{Estimate surviving fraction.} Obtain an estimate $\widehat\varepsilon_n$ of the surviving watermark fraction.

\State \textbf{Estimate null proportion in the tail.} For each $u\in\mathcal U_n$, set
\begin{equation}
\label{eq:Ttail-hat-u}
\widehat T_n(u):=\frac{(1-\widehat\varepsilon_n)S_0(\tau(u))}{\widehat S_n(u)\vee n^{-1}}
=\frac{(1-\widehat\varepsilon_n)n^{-u}}{\widehat S_n(u)\vee n^{-1}}.
\end{equation}

\State \textbf{Select adaptive threshold.} Let
\begin{equation}
\label{eq:uhat-def}
\widehat u_n
:=
\inf\Bigl\{u\in\mathcal U_n:\ \widehat T_n(u)\le \lambda_n-\eta_n\Bigr\},
\qquad
\widehat\tau_n:=\tau(\widehat u_n)=1-n^{-\widehat u_n}.
\end{equation}

\State \textbf{Finalize decision.} Let $\Bdelta=(\delta_1,\ldots,\delta_n)$ by setting $\delta_t=1$ if $\Yars_t>\widehat\tau_n$, and $\delta_t=0$ otherwise.

\Statex \textbf{Output discoveries.} Return the estimated discovery set $S_{\Bdelta}=\{t\in\{1,\ldots,n\}:\delta_t=1\}$.
\end{algorithmic}
\end{algorithm}

The key difficulty is that the optimal threshold is hard to find. We want to output a discovery set $S_{\Bdelta}$ while controlling false discoveries. Here, a false discovery is a selected position whose watermark dependence has been erased, that is, a position with $\theta_t=0$ but $\delta_t=1$. A threshold that is too low may include too many false discoveries, while a threshold that is too high may remove many true discoveries. We introduce $\lambda_n$ as the target level for controlling the fraction of false discoveries among the selected positions.

Ideally, the threshold should depend on problem-specific quantities, such as the surviving watermark fraction $\eps_n$ and the NTP distribution $\bP_t$, which are unfortunately unavailable in practice. To address this issue, \Algo~scans thresholds of the form $\tau(u)=1-n^{-u}$. For each candidate $u$, it computes the empirical tail mass $\widehat S_n(u)$, which is the fraction of positions with $F_0(\Yars_t)>\tau(u)$. Under the null case where $\theta_t=0$, this tail probability is $S_0(\tau(u))=1-\tau(u)=n^{-u}$. If the surviving watermark fraction is $\varepsilon_n$, then roughly a fraction $1-\varepsilon_n$ of positions are null-like, so the expected null contribution to this tail is about $(1-\varepsilon_n)S_0(\tau(u))$. Replacing $\varepsilon_n$ by an estimate $\widehat\varepsilon_n$ and comparing this estimated null contribution with the observed tail mass give a quantity $\widehat T_n(u):=(1-\widehat\varepsilon_n)S_0(\tau(u))/(\widehat S_n(u)\vee n^{-1})$.\footnote{Here $a\vee b:=\max\{a,b\}$; the term $\vee n^{-1}$ in the denominator is used for numerical stability.}
We interpret $\widehat T_n(u)$ as an empirical proxy for the mFDR at the candidate threshold $\tau(u)$.

To control false discoveries, the algorithm selects the smallest grid point $u$ such that $\widehat T_n(u)\le \lambda_n-\eta_n$, where $\eta_n$ is a slack term used to guard against random fluctuations. The statistic $\widehat T_n(u)$ estimates the false-discovery level of the tail selected by the threshold $\tau(u)=1-n^{-u}$. Thus, a large value of $\widehat T_n(u)$ indicates that the selected tail may still be largely explained by null positions and is therefore not reliable enough. Once $\widehat T_n(u)$ falls below $\lambda_n-\eta_n$, the tail is estimated to contain sufficiently few false discoveries, with the slack accounting for data variability. Since $\tau(u)$ increases with $u$, choosing the smallest admissible $u$ yields the largest discovery set among the grid thresholds that pass the false-discovery check. This allows the method to retain as many candidate watermark-preserving positions as possible while controlling false discoveries.

\begin{rem}[Choice of fraction estimator]
Our theory is not tied to a specific estimator of the surviving watermark fraction. It only requires the accuracy condition in~\eqref{eq:estimator-accruacy-condition}. In our implementation, we use the optimal estimator from~\citet{li2025optimal}, which is also used in the experiments.
\end{rem}

\begin{rem}[Connection to prior methods]
Our method \Algo~is related to the signal discovery framework of \citet{CaiSun2017OSD}, which studies Gaussian mixture models and uses local false discovery rate ideas through density estimation. The main challenge in our setting is that the data distributions in \eqref{eq:mixture-Y} are highly heterogeneous, since they depend on unknown and time-varying NTP distributions. Direct density estimation is thus difficult. Instead, \Algo~uses tail probabilities $\widehat S_n(u)$ to estimate the false discovery level and selects the tail threshold adaptively for token-level localization.
\end{rem}

%% file: 4-theory.tex
\section{Theoretical Guarantees}
\label{sec:theory}

In this section, we establish the theoretical properties of our method. To facilitate the analysis, we first introduce the general assumptions in Section~\ref{sec:assumption} and the performance measures in Section~\ref{sec:metric}. We then characterize the statistical limits and phase transitions of the resulting inference tasks in Section~\ref{sec:screenability-and-discoverability}. Finally, in Section~\ref{sec:optimality}, we show that our method is adaptively optimal and is near-optimal in the number of discoveries under a fixed marginal false-discovery control.

\subsection{General Assumptions}
\label{sec:assumption}

We first impose a probabilistic assumption on the watermark-generation and human-editing process. Each observed token $w_t$ is modeled through a latent source mixture: after editing, each position either preserves the watermark dependence and is generated by the watermarked decoder, or the dependence is erased and the token behaves as an ordinary draw from the same NTP distribution. The latent variable $\theta_t$ records this post-edit survival state. This assumption provides the token-level basis for the null-versus-signal structure of the pivotal statistic $Y_t$. 

\begin{asmp}[Watermark generation and editing mechanism]
\label{asmp:main}
Let $\bP_t$ denote the NTP distribution at position $t$. Define the generation filtration $\FM_{t-1}:=\sigma(\{\token_j,\xi_j,\bP_{j+1}\}_{j=1}^{t-1})$, which contains the history used to form $\bP_t$. Let $\mathcal H_t:=\sigma(\theta_1,\ldots,\theta_t)$ be the editing filtration and $\mathcal G_t:=\FM_{t-1}\vee\mathcal H_t$ be the joint generation-editing filtration. We assume the following.

\begin{enumerate}
\item[\rm{(a)}] \textbf{Perfect pseudorandomness.} $\xi_1,\ldots,\xi_n$ are i.i.d., and $\xi_t$ is independent of $\mathcal G_t$ for every $t$.

\item[\rm{(b)}] \textbf{Latent source mixture.} Conditional on $\mathcal G_t$, the observed token is generated by
\[
\token_t=
\begin{cases}
\SM(\bP_t,\xi_t), & \theta_t=1,\\
\text{an independent draw from }\bP_t, & \theta_t=0.
\end{cases}
\]
In the second case, $\token_t$ is conditionally independent of $\xi_t$.

\item[\rm{(c)}] \textbf{Surviving watermark fraction.} There exist constants $0<c\le C<\infty$, independent of $t$ and $n$, such that
\[
c \cdot \varepsilon_n\le \mathbb P(\theta_t=1\mid \FM_{t-1})\le C \cdot \varepsilon_n
\quad\text{a.s. for all } \quad t\ge1.
\]
\end{enumerate}
\end{asmp}

Assumption~\ref{asmp:main} separates the generation process from the editing process. 
Condition (a) is the standard idealization that the cryptographic pseudorandomness behaves as fresh randomness ~\citep{kirchenbauer2023watermark,li2024statistical} and is independent of the past and of the editing decision \citep{li2025robust,li2025optimal}. 
Condition (b) encodes the local null-versus-signal structure with $\theta_t$ as a latent survival indicator~\citep{li2025robust}: when $\theta_t=1$, the token $\token_t$ is produced by the watermarked decoder $\mathcal{S}$ and remains coupled with $\xi_t$; when $\theta_t=0$, the token $\token_t$ has the same marginal NTP distribution but is independent of $\xi_t$, since human writing does not have access to the pseudorandom variable.
The third row of Figure~\ref{fig:localization-schematic} illustrates this structure, where the ground-truth vector is $(\theta_1, \ldots, \theta_9)=(1,0,0,1,0,0,0,0,1)$. 
Condition (c) controls the overall amount of surviving watermark evidence: $\varepsilon_n$ is the surviving watermark fraction, up to constant factors, while the editing decision is allowed to depend on the previously generated text. 
Altogether, Conditions (b) and (c) characterize the mixture relation between the observed token $\token_t$ and the reconstructed pseudorandom variable $\xi_t$. Since the verifier does not know which positions still preserve the watermark dependence, the latent indicators $\theta_t$ provide a principled way to encode this uncertainty at the token level. Thus, the assumption does not model human editing behavior in detail, but preserves the essential statistical structure needed for analyzing watermark localization.

\begin{asmp}[Asymptotic regime $(p,q,\Wrate)$]
\label{asmp:HCL+}
For a text of length $n$, the surviving watermark fraction satisfies $\varepsilon_n\asymp n^{-p}$ for some $p\in[0,1]$.\footnote{For two positive sequences $a_n$ and $b_n$, we write $a_n\asymp b_n$ if there exist universal constants $0<c_1\le c_2<\infty$, independent of $n$, such that $c_1b_n\le a_n\le c_2b_n$ for all sufficiently large $n$.}  When $p=0$, we further assume that $\varepsilon_n$ is bounded away from one: there exists $\pi_\star\in(0,1)$ such that $\varepsilon_n\le \pi_\star$ for all sufficiently large $n$.
For each position $t$, the vocabulary $\Voca_n$ admits the decomposition
\[
\Voca_n=\{w_{t,n}^\star\}\cup L_{t,n}\cup C_{t,n},
\qquad
P_{t,w_{t,n}^\star}=1-\Delta_n,\quad \Delta_n\asymp n^{-q}.
\]
Here $w_{t,n}^\star$ is the dominant token under $\bP_t$ and may vary with $t$. Thus, the dominant token can change across positions, but its probability is assumed to have the same asymptotic form. The residual mass is decomposed as
\[
\Delta_n=\Delta_n^{\mathrm{light}}+\Delta_n^{\mathrm{core}},
\qquad 
\Delta_n^{\mathrm{light}}\asymp n^{-q},
\qquad 
\Delta_n^{\mathrm{core}}\le c\,n^{-q}.
\]
The light set satisfies $|L_{t,n}|\asymp n^{\Wrate}$ and $P_{t,w}\asymp n^{-(\Wrate+q)}$ for all $w\in L_{t,n}$. The core set satisfies $|C_{t,n}|\asymp n^{r}$ for some $r<\Wrate$, and there exist constants $0<c_C\le C_C<\infty$ such that
\[
  c_C\,n^{-(\Wrate+q)} \le P_{t,w} \le C_C\,n^{-(r+q)}
  \qquad \text{for all } w\in C_{t,n},\ t=1,\dots,n.
\]
Consequently, the whole vocabulary satisfies $|\Voca_n|\asymp n^\Wrate$. In the special case $q=0$, we additionally assume that there exists $\eta_\star\in(0,1)$ such that $P_{t,w_{t,n}^\star}\ge \eta_\star$ for all $t$ and all sufficiently large $n$.
\end{asmp}

To study the statistical limits of watermark localization, we consider the asymptotic regime in Assumption~\ref{asmp:HCL+}, in the spirit of~\citet{li2024statistical,li2025robust}. The regime is parameterized by $(p,q,\Wrate)$. The parameter $p$ describes how quickly the surviving watermark fraction $\varepsilon_n\asymp n^{-p}$ decays after human edits. The parameter $q$ measures how concentrated each NTP distribution is around its dominant token: larger $q$ means that the dominant token has probability closer to one, and hence the watermark signal becomes weaker. The parameter $\Wrate$ describes the growth of the low-probability vocabulary tail.

The dominant--core--light decomposition in Assumption~\ref{asmp:HCL+} aims to capture a common shape of LLM next-token distributions. At each position $t$, the NTP distribution $\bP_t$ may have a dominant prediction $w_{t,n}^\star$, and this dominant token is allowed to vary with the context. We only require its probability to follow the common scale $1-\Delta_n$, with $\Delta_n\asymp n^{-q}$. The remaining probability mass $\Delta_n$ is split into a core set $C_{t,n}$ and a light set $L_{t,n}$. The light set $L_{t,n}$ contains most of the vocabulary, with $|L_{t,n}|\asymp n^\Wrate$, the same order as $|\Voca_n|$, and each light token has probability of order $n^{-(\Wrate+q)}$. In contrast, the core set $C_{t,n}$ is much smaller, with $|C_{t,n}|\asymp n^{r}$ for some $r<\Wrate$, but contains relatively larger-probability alternatives, with probabilities ranging from order $n^{-(\Wrate+q)}$ to $n^{-(r+q)}$. Thus, the light set represents the large low-probability tail that drives vocabulary growth, while the core set represents a smaller group of more likely alternatives. This distinction explains why $\Wrate$ controls the amount of tail information available for localization. The decomposition is also consistent with the empirical heavy-tailed behavior of language distributions~\citep{zipf2016human,moreno2016large}, where a few tokens receive most of the probability mass while many weak alternatives remain available. Figure~\ref{fig:hcl-ntp-illustration} provides an empirical illustration of this pattern using next-token distributions from OPT-1.3B on C4 prompts.

\begin{figure}[ht!]
\vspace{-5pt}
\centering
\includegraphics[width=0.6\textwidth]{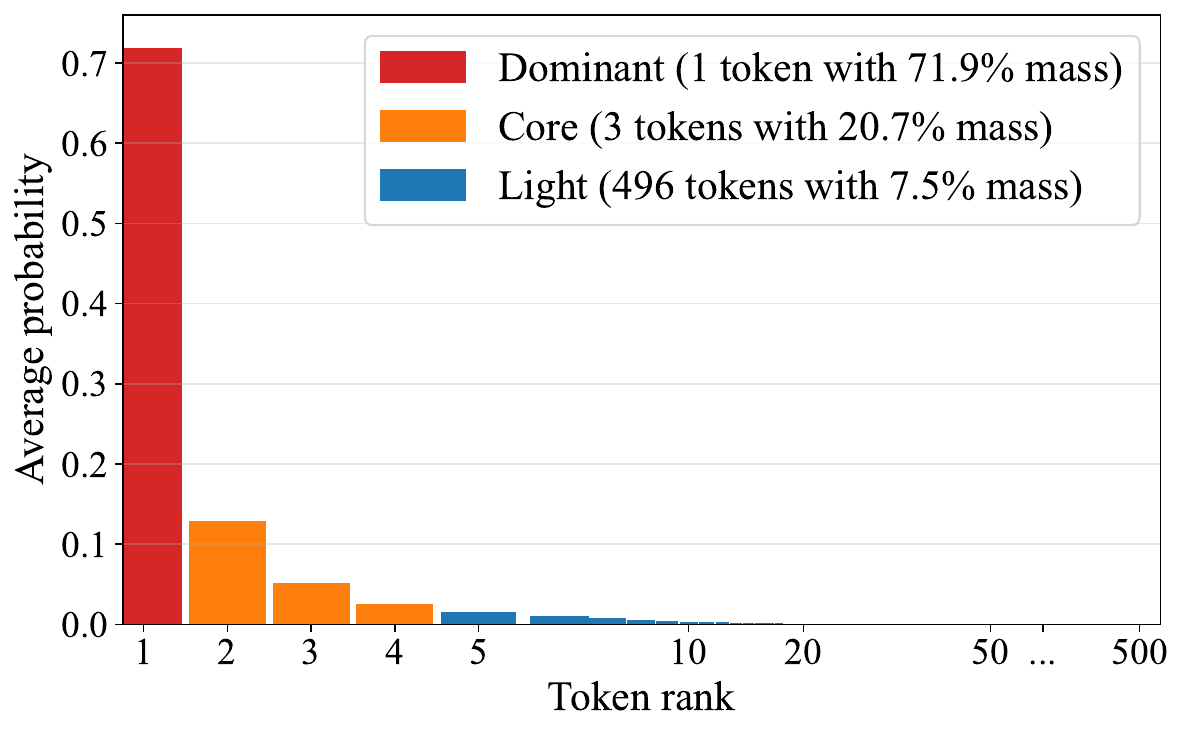}
\vspace{-10pt}
\caption{
Empirical illustration of the dominant--core--light structure in Assumption~\ref{asmp:HCL+}.
We sample prompts from C4 datasets \citep{raffel2020exploring} and generate continuations using OPT-1.3B \citep{zhang2022opt}, then compute the average ranked next-token probabilities using temperature $0.5$. 
}
\label{fig:hcl-ntp-illustration}
\vspace{-10pt}
\end{figure}

The parameters $(p,q,\Wrate)$ are used only to characterize theoretical difficulty; our method in Section~\ref{sed:algo} does not require knowing them. Instead, they allow us to state sharp phase-transition results: $p$ controls the sparsity of surviving watermark signals, $q$ controls token-level signal strength through NTP concentration, and $\Wrate$ controls how much useful tail information is available for localization. The additional condition for $q=0$ only prevents the dominant-token probability from vanishing; when $q>0$, this is automatic for sufficiently large $n$.

\subsection{Performance Measures}
\label{sec:metric}

We now introduce the performance measures used in our theoretical analysis. For completeness, we first formally define global detection. We then focus on token-level localization: following the signal discovery framework of~\citet{CaiSun2017OSD}, we define false positive and missed discovery rates, and use them to formalize discovery and classification.

\begin{defn}[Global detection]
\label{def:detection}
A sequence of tests $\psi_n=\psi_n(\Yars_{1:n})\in\{0,1\}$ achieves \emph{detection} if the sum of Type I and Type II errors vanishes for the global testing problem
\begin{equation}
\label{eq:previous-mixture-hypothesis}
H_0:\ \Yars_t\sim\mu_0\ \text{for all }t
\qquad\text{versus}\qquad
H_1:\ \Yars_t\mid \bP_t \sim (1-\varepsilon_n)\mu_0+\varepsilon_n\mu_{1,\bP_t}\ \text{for all }t,
\end{equation}
that is,
\[
\mathbb P_{H_0}(\psi_n=1)+\mathbb P_{H_1}(\psi_n=0)\to 0.
\]
Here $\psi_n=1$ means that the text is declared to contain surviving watermark signal. We say the testing problem in~\eqref{eq:previous-mixture-hypothesis} is \emph{detectable} if there exists a sequence of tests that achieves global detection.
\end{defn}

Throughout the localization definitions below, decision rules are understood to belong to the local class $\Decision$ introduced in Section~\ref{sed:algo}, unless explicitly stated otherwise. 
Global detection is different: it is a document-level testing problem, and the test $\psi_n$ may use the full pivot sequence $\Yars_{1:n}$.

\begin{defn}[Counts of discoveries and missed signals]
\label{def:EFP-ETP}
Given a   rule $\Bdelta=(\delta_1,\ldots,\delta_n)$ and ground-truth labels $\theta_t\in\{0,1\}$, define
\[
\mathrm{FP}_{\Bdelta}=\sum_{t:\theta_t=0}\delta_t,\qquad
\mathrm{TP}_{\Bdelta}=\sum_{t:\theta_t=1}\delta_t,\qquad
\mathrm{FN}_{\Bdelta}=\sum_{t:\theta_t=1}(1-\delta_t).
\]
Here $\mathrm{FP}_{\Bdelta}$, $\mathrm{TP}_{\Bdelta}$, and $\mathrm{FN}_{\Bdelta}$ count false discoveries, true discoveries, and missed signals, respectively. Since these quantities are random, we define their expectations as
\[
\mathrm{EFP}_{\Bdelta}=\mathbb{E}[\mathrm{FP}_{\Bdelta}],\qquad
\mathrm{ETP}_{\Bdelta}=\mathbb{E}[\mathrm{TP}_{\Bdelta}],\qquad
\mathrm{EFN}_{\Bdelta}=\mathbb{E}[\mathrm{FN}_{\Bdelta}].
\]
\end{defn}

\begin{defn}[Marginal false discovery and missed discovery rates]\label{def:FPR-MDR}
With the above notation, define
\[
\mathrm{mFDR}_{\Bdelta}
=\frac{\mathrm{EFP}_{\Bdelta}}{\mathrm{EFP}_{\Bdelta}+\mathrm{ETP}_{\Bdelta}},
\qquad
\mathrm{MDR}_{\Bdelta}
=\frac{\mathrm{EFN}_{\Bdelta}}{\mathrm{EFN}_{\Bdelta}+\mathrm{ETP}_{\Bdelta}}.
\]
Here, $\operatorname{mFDR}_{\delta}$ is the ratio of the expected number of
false discoveries to the expected total number of discoveries, while
$\operatorname{MDR}_{\delta}$ is the ratio of the expected number of
missed signals to the expected total number of watermark-preserving
positions.
\end{defn}

\begin{defn}[Discovery]
\label{def:discovery}
A sequence of decision rules $\Bdelta$ achieves \emph{discovery} if
\[
\mathrm{mFDR}_{\Bdelta}\ \to\ 0
\qquad\text{and}\qquad
\mathbb{P}\left(\left|S_{\Bdelta}\right| \geq 1\right) \to 1.
\]
Discovery requires at least one selected position while the marginal
false discovery rate vanishes. The localization problem in~\eqref{eq:mixture-Y} is said to be \emph{discoverable} if there exists a sequence of local decision rules achieving discovery.
\end{defn}

\begin{rem}[Multiple discoveries]
The discovery condition can be strengthened to $\mathbb{P}(|S_{\Bdelta}|\ge K)\to 1$ for any fixed integer $K\ge 1$. This modification does not affect the discovery boundary; see the proof of Theorem~\ref{thm:discovery-boundary}.
\end{rem}

\begin{defn}[Classification]
\label{def:classification}
A sequence of decision rules $\Bdelta$ achieves \emph{classification} if
\[
\mathrm{mFDR}_{\Bdelta} \to 0 
\qquad \text{and} \qquad
\mathrm{MDR}_{\Bdelta} \to 0.
\]
Classification requires both false discoveries and missed watermark-preserving positions to vanish asymptotically. The localization problem in~\eqref{eq:mixture-Y} is said to be \emph{classifiable} if there exists a sequence of local decision rules achieving classification.
\end{defn}

The above definitions form a hierarchy of inference goals. Global detection in Definition~\ref{def:detection} is the coarsest task: it only asks whether surviving watermark signal exists somewhere in the text, without identifying any location. For localization, the marginal false discovery rate (mFDR) in Definition~\ref{def:FPR-MDR} measures how many selected positions are actually null, while the missed discovery rate (MDR) measures how many watermark-preserving positions are not selected. Based on these aspects, discovery in Definition~\ref{def:discovery} is the weakest successful localization goal: it requires finding at least one surviving watermark signal while keeping the mFDR vanishing. Classification in Definition~\ref{def:classification} is stronger, requiring asymptotically correct separation of null and watermark-preserving positions. We use these criteria above to show that global detection can be possible in regimes where localization is not, and to characterize which levels of localization are statistically achievable.

\subsection{Statistical Limits and Phase Transitions}
\label{sec:screenability-and-discoverability}

Throughout this subsection, we work under Assumptions~\ref{asmp:main} and~\ref{asmp:HCL+}, so the difficulty of the problem is indexed by $(p,q,\Wrate)$. We characterize the fundamental limits of the inference goals defined above by identifying, in this parameter space, when each goal is achievable or impossible. These achievable and impossible regions yield the phase transitions studied below.

\vspace{-1.1em}
\paragraph{Detection boundary.}
We begin with global detection in Definition~\ref{def:detection}, which only asks whether any surviving watermark signal is present in the text. Following the terminology in~\citep{donoho2004higher,donoho2015higher,li2025robust}, we say that $H_0$ and $H_1$ in~\eqref{eq:previous-mixture-hypothesis} \emph{merge asymptotically} if the total variation distance between the joint distributions of $\Yars_{1:n}$ under $H_0$ and $H_1$ tends to zero. In this case, no test can reliably distinguish the two hypotheses. Conversely, if the two distributions separate asymptotically, detection is possible. The following theorem gives the detection boundary under the $(p,q,\Wrate)$ regime.

\begin{thm}[Detection boundary]
\label{thm:gumbel-general}
Under Assumptions~\ref{asmp:main}---~\ref{asmp:HCL+}, let $p,q\in[0,1]$ and $\Wrate\in[0,1)$.
    \begin{itemize}
        \item If $\max\{p+q,\,2p+q-\Wrate\}>1$, then $H_0$ and $H_1$ merge asymptotically. Hence, for any test, the sum of Type I and Type II errors tends to $1$ as $n\to\infty$.
        \item If $\max\{p+q,\,2p+q-\Wrate\}<1$, then $H_0$ and $H_1$ separate asymptotically. The likelihood-ratio test that rejects $H_0$ when the log-likelihood ratio is positive has a vanishing sum of Type I and Type II errors.
    \end{itemize}
\end{thm}

Theorem~\ref{thm:gumbel-general} gives the global detection boundary in the $(p,q,\Wrate)$ regime. Away from the boundary case, detection is possible exactly when both $p+q<1$ and $2p+q<1+\Wrate$ hold. Equivalently, the active boundary is $p+q=1$ when $p<\Wrate$, and $2p+q=1+\Wrate$ when $p\ge\Wrate$. These two constraints reflect two ways in which global detection can fail. The condition $p+q<1$ rules out the case where surviving watermark evidence is too sparse or too weak to produce visible tail deviations. The condition $2p+q<1+\Wrate$ captures the aggregate contribution of the growing light tail: when $\Wrate$ is larger, more low-probability alternatives are available, and their collective contribution can make $H_1$ distinguishable from $H_0$.
This result also clarifies the role of vocabulary growth. When $\Wrate=0$, the boundary reduces to the fixed-vocabulary detection boundary of~\citet{li2025robust}. When $\Wrate>0$, the detectable region expands because the light tail provides additional aggregate evidence. Thus, global detection can benefit from a growing vocabulary tail, although it still only answers whether surviving watermark signal exists somewhere in the text and does not identify its locations.

\vspace{-1.1em}
\paragraph{Discovery boundary.}
We then turn to discovery in Definition~\ref{def:discovery}, the weakest form of token-level localization. Unlike global detection, discovery requires selecting actual token positions, and thus its analysis must control the dependence among local decisions. For this purpose, we impose an additional weak-dependence condition to rule out pathological long-range dependence in the generation and editing process.

\begin{asmp}[Geometric mixing]
\label{asmp:alphamixing}
Let $\bar{\mathcal G}_t:=\mathcal F_t\vee\mathcal H_t$ be the full generation-editing information up to time $t$, and define the future sigma-field $\bar{\mathcal G}_{t+k}^{+}:=\sigma(\token_s,\xi_s,\bP_{s+1},\theta_s:s\ge t+k)$. Define
\[
\alpha_{\bar{\mathcal G}}(k):=\sup_{t\ge0}\alpha(\bar{\mathcal G}_t,\bar{\mathcal G}_{t+k}^{+})
\quad \text{for all} \quad k\ge1,
\]
where $\alpha(\mathcal A,\mathcal B):=\sup_{A\in\mathcal A,\ B\in\mathcal B}|\mathbb P(A\cap B)-\mathbb P(A)\mathbb P(B)|$ is the strong-mixing coefficient between two $\sigma$-fields. We assume that this dependence decays geometrically: there exist universal constants $C>0$ and $\rho\in(0,1)$ such that $\alpha_{\bar{\mathcal G}}(k)\le C\rho^k$ for all $k\ge1$.
\end{asmp}

The coefficient $\alpha_{\bar{\mathcal G}}(k)$ measures the maximal dependence between the information up to time $t$ and the information starting from time $t+k$; see, e.g., \citet{bradley2005basic}. Thus, Assumption~\ref{asmp:alphamixing} requires the dependence between the past/current generation-editing process and the distant future to decay rapidly as the gap $k$ grows. It does not impose exact independence. Rather, it rules out long-range dependence that would make token-level localization difficult to analyze.

This condition is compatible with how practical LLM-generated text is produced. Although transformers can attend to previous tokens within their context window~\citep{vaswani2017attention}, deployed autoregressive LLMs operate with finite context windows. From this perspective, an autoregressive LLM with a fixed context length can be approximately viewed as a finite-memory Markov process~\citep{zekri2024large}. Empirical studies also suggest that long-context models do not use all positions in the context uniformly effectively: performance can degrade when relevant information is distant or poorly positioned in the prompt~\citep{liu2024lost,li2024loogle,an2025does}. Therefore, Assumption~\ref{asmp:alphamixing} allows both the generated tokens and the editing indicators to depend on previous text, but requires this dependence to weaken with distance. Technically, this weak-dependence condition provides the concentration control needed for empirical tail counts and false-discovery analysis in the discovery problem.

\begin{thm}[Discovery boundary]
\label{thm:discovery-boundary}
Under Assumptions~\ref{asmp:main}--\ref{asmp:alphamixing}, let $p,q\in[0,1]$ and $\Wrate\in[0,1)$. 
\begin{itemize}
    \item If $p<\Wrate$ and $p+q<1$, then there exists a local decision rule that achieves discovery.
    \item If $p\ge \Wrate$ or $p+q>1$, then no local decision rule can achieve discovery.
\end{itemize}
In particular, when the effective vocabulary size is fixed, that is, $\Wrate=0$, discovery is impossible.
\end{thm}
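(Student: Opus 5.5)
The plan is to reduce everything to the tail behaviour of the Gumbel-max pivot under $\mu_0=U(0,1)$ and under $\mu_{1,\bP_t}$. Two elementary facts about $\mu_{1,\bP}$ carry most of the argument.

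First, its density is $f_{1,\bP}(r)=\sum_{w}r^{1/P_w-1}$. This is nondecreasing in $r$ and bounded by $|\Voca_n|\asymp n^{\Wrate}$. Hence $\mu_{1,\bP}(A)\le |\Voca_n|\,\mu_0(A)$ for every Borel set $A$.

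Second, for an upper tail of width $s$,
\[
\mu_{1,\bP}(Y>1-s)=\sum_w P_w\bigl(1-(1-s)^{1/P_w}\bigr)\asymp \sum_w \min\{s,P_w\}.
\]
In addition, $\mathrm{TV}(\mu_0,\mu_{1,\bP_t})\lesssim \Delta_n\asymp n^{-q}$, since the dominant token alone contributes a density $r^{\Delta_n/(1-\Delta_n)}$ that is close to $1$.

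All expectations of $\mathrm{FP}$ and $\mathrm{TP}$ are computed position by position. We condition on $\FM_{t-1}$ and use Assumption~\ref{asmp:main}(a)--(c): given $\FM_{t-1}$, $Y_t$ is a mixture with weight on the signal component between $c\varepsilon_n$ and $C\varepsilon_n$, and $\xi_t$ is fresh.

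\textbf{Achievability ($p<\Wrate$, $p+q<1$).} I would use the homogeneous threshold rule $\delta_t=\mathbf 1\{Y_t>1-n^{-u}\}$ with $u=\Wrate+q+\gamma$ for a small $\gamma>0$. Then $s=n^{-u}$ is below every light and core probability, so each of the $\asymp n^{\Wrate}$ non-dominant tokens contributes $\asymp s$. This gives $\mu_{1,\bP_t}(Y>1-s)\asymp n^{\Wrate}s$ uniformly in $t$.

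From this, $\mathrm{EFP}\asymp ns$ and $\mathrm{ETP}\asymp n\varepsilon_n n^{\Wrate}s$. Therefore $\mathrm{mFDR}\lesssim n^{p-\Wrate}\to0$.

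Also $\mathbb E|S_{\Bdelta}|\gtrsim n^{1-p-q-\gamma}\to\infty$ once $\gamma<1-p-q$. To get $\mathbb P(|S_{\Bdelta}|\ge1)\to1$, and indeed $\ge K$, I would bound $\mathrm{Var}(|S_{\Bdelta}|)$. The tool is a covariance inequality for strongly mixing sequences under Assumption~\ref{asmp:alphamixing}: $|\mathrm{Cov}(\delta_t,\delta_{t+k})|\le 4\alpha_{\bar{\mathcal G}}(k)\wedge \mathbb E\delta_t$. Splitting the lags at $k\asymp\log n$ gives $\mathrm{Var}\lesssim \mathbb E|S_{\Bdelta}|\log n+n^2\rho^{\log n}$. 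Chebyshev then finishes, provided $\log n=o(\mathbb E|S_{\Bdelta}|)$, which holds because $\mathbb E|S_{\Bdelta}|$ grows polynomially.

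\textbf{Impossibility.} Take an arbitrary local rule with rejection regions $A_t=\{\phi_t=1\}$, and write $a_t=\mu_0(A_t)$.

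If $p\ge\Wrate$, the density bound gives
\[
\mathrm{ETP}\le C\varepsilon_n\sum_t\mathbb E\,\mu_{1,\bP_t}(A_t)\le C'\varepsilon_n n^{\Wrate}\sum_t a_t,
\qquad
\mathrm{EFP}\ge(1-C\varepsilon_n)\sum_t a_t.
\]
The ratio $\mathrm{ETP}/\mathrm{EFP}$ is then $O(n^{\Wrate-p})=O(1)$, so $\mathrm{mFDR}$ stays bounded away from $0$. Here I use $\varepsilon_n\le\pi_\star<1$ when $p=0$.

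If $p+q>1$, use instead the TV bound $\mu_{1,\bP_t}(A_t)\le a_t+C\Delta_n$. Then $\mathrm{ETP}\le C\varepsilon_n\sum_t a_t+Cn\varepsilon_n\Delta_n$. Requiring $\mathrm{mFDR}\to0$ forces $\sum_t a_t=o(\mathrm{ETP})$, and hence $\sum_t a_t\lesssim n\varepsilon_n\Delta_n$. So $\mathbb E|S_{\Bdelta}|\lesssim n\varepsilon_n\Delta_n\asymp n^{1-p-q}\to0$. By Markov's inequality, $\mathbb P(|S_{\Bdelta}|\ge1)\to0$.

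The case $\Wrate=0$ is the special case $p\ge\Wrate$.

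\textbf{Main obstacle.} The delicate parts are the uniform two-sided estimate $\mu_{1,\bP}(Y>1-s)\asymp\sum_w\min\{s,P_w\}$ across the dominant, core and light tokens (in particular, showing the dominant token contributes only $\approx s$ and therefore cancels against the null), and the TV bound of order $\Delta_n$ in the regime $q=0$. I would handle the core tokens by noting that $s\ll n^{-(\Wrate+q)}\le P_w$, so each contributes $\asymp s$ and in total only $n^{r}s=o(n^{\Wrate}s)$. The mixing variance step is standard once the indicators are written as measurable functions of $\bar{\mathcal G}$.
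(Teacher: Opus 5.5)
Your proof is correct. It follows the paper's skeleton: a single homogeneous tail threshold plus the mixing-covariance/Chebyshev step for achievability, and an EFP-versus-ETP count in the two cases for impossibility. The impossibility half, however, uses a different key tool.

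\textbf{Impossibility.} The paper first applies the monotone rearrangement of Lemma~\ref{lem:threshold-reduction} to reduce to right-tail rules. It then uses the tail asymptotics of Lemma~\ref{lem:tail-pivot-HCL}, namely $\mathbb P(Y_t>1-a\mid\theta_t=1)\lesssim n^{\alpha}a$ and $\lesssim a+n^{-q}$. You instead use the set-wise versions $\mu_{1,\bP}(A)\le|\mathcal W_n|\,\mu_0(A)$ and $\mu_{1,\bP}(A)\le\mu_0(A)+\Delta_n$. These hold for every measurable rejection region, so no rearrangement is needed. You also need only $|\mathcal W_n|\lesssim n^{\alpha}$ and $\Delta_n\lesssim n^{-q}$ from Assumption~\ref{asmp:HCL+}, so your argument is shorter and more general. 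What the paper's route buys is the reduction to tail rules itself: it is reused for classification and for Theorem~\ref{thm:25optimal}, and it justifies restricting \Algo{} to tail thresholds.

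The TV step you flag as delicate is in fact exact: $\int_0^1(1-f_{1,\bP})_+\le\int_0^1(1-r^{\Delta_n/(1-\Delta_n)})\,dr=\Delta_n$. Moreover, $q=0$ cannot occur when $p+q>1$.

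\textbf{Achievability.} Your exponent $u=\alpha+q+\gamma$ lies beyond all non-dominant probabilities, where the tail likelihood ratio is of order $n^{\alpha}$. This gives $\mathrm{mFDR}\lesssim n^{p-\alpha}$ but only $\mathrm{ETP}\asymp n^{1-p-q-\gamma}$. The paper's $u\in(p+q,\alpha+q)$ captures the full light-token mass, so it attains $\mathrm{ETP}\asymp n^{1-p-q}$, the optimal order in Theorem~\ref{thm:25optimal}, at the cost of $\mathrm{mFDR}\lesssim n^{p+q-u}$. Your choice makes the duality with the $p\ge\alpha$ bound transparent: maximal likelihood ratio $n^{\alpha}$ against prior odds $n^{p}$. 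The paper's choice is the regime where \Algo{}'s adaptive threshold lands.

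\textbf{Caveat for $p=0$.} The bound $\varepsilon_n\le\pi_\star$ gives $1-C\varepsilon_n$ bounded away from zero only if $C\pi_\star<1$. Like the paper, you are implicitly assuming that $\mathbb P(\theta_t=0)$ is bounded away from zero.
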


Theorem~\ref{thm:discovery-boundary} gives the phase transition for the weakest nontrivial localization task, discovery. The condition $p+q<1$ ensures that the surviving watermark evidence is strong enough to produce sufficiently extreme token-level evidence. The additional condition $p<\Wrate$ is specific to localization: it requires the light set $L_{t,n}$ to grow fast enough relative to the sparsity of surviving watermark positions, so that at least one watermark-preserving position can stand out from null positions. This also explains why discovery is impossible when $\Wrate=0$: with a fixed effective vocabulary size, the light set does not grow, and the token-level separation needed for localization disappears.

This boundary shows that discovery is strictly harder than global detection. Detection can succeed by aggregating weak watermark evidence over the whole text, whereas discovery requires at least one individual position to be reliably identified. Consequently, there are regimes where the text can be detected as containing surviving watermark signal, but no local decision rule can reliably locate even one such position. In terms of phase boundaries, discovery requires both $p<\Wrate$ and $p+q<1$, while detection only requires $\max\{p+q,2p+q-\Wrate\}<1$. Hence, away from boundary cases, the discoverable region is a strict subset of the detectable region.
Figure~\ref{fig:phase-alpha-all} illustrates this separation. As $\Wrate$ increases, the light set becomes larger and provides more low-probability alternatives, which strengthens token-level separation and makes both detection and discovery easier. At the same time, the detectable-but-undiscoverable region shrinks. In the degenerate case $\Wrate=0$, discovery is impossible, even though global detection may still be possible in some regimes.

\begin{figure}[t!]
\centering
\includegraphics[width=\textwidth]{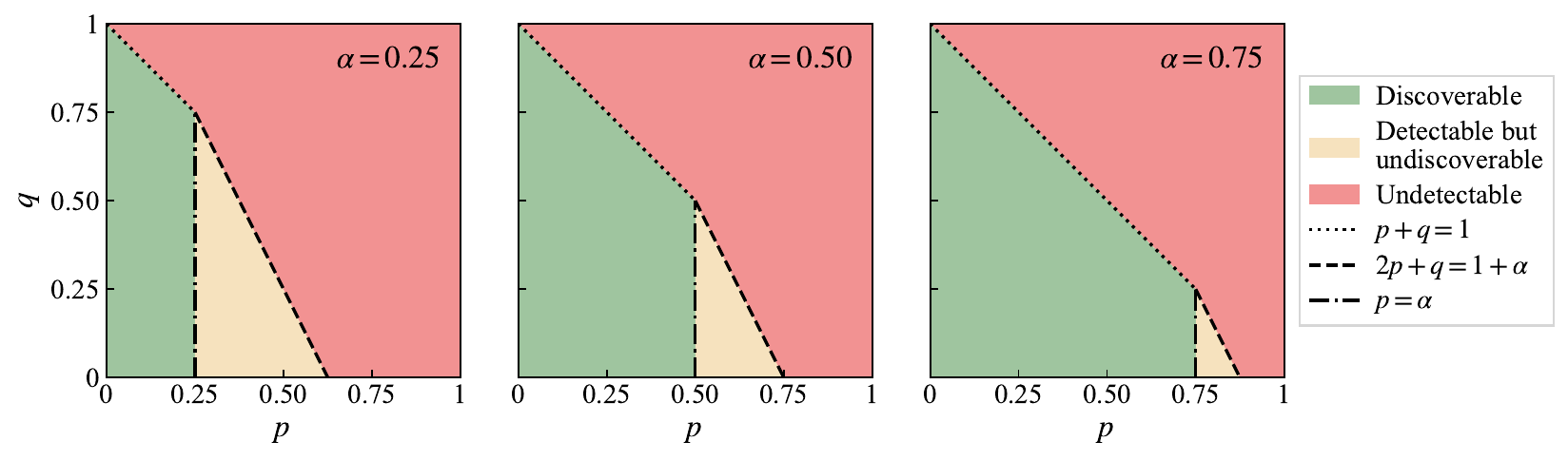}
\vspace{-20pt}
\caption{
Phase transitions for different values of $\Wrate\in\{0.25,0.50,0.75\}$.
The discoverable region is $p<\Wrate$ and $p+q<1$, while the undetectable region is $\max\{p+q,\,2p+q-\Wrate\}>1$.
The remaining region is detectable but not discoverable, illustrating the gap between detection and localization.
}
\label{fig:phase-alpha-all}
\vspace{-10pt}
\end{figure}

\vspace{-1.1em} 
\paragraph{Impossibility of classification.}
Discovery focuses on false-discovery control for the selected set, whereas classification additionally requires near-complete recovery of the signal set: almost all watermark-preserving positions must be selected while the selected set still contains only a negligible fraction of null positions.
Under our general editing model, these two requirements cannot be made compatible. 
Indeed, $\theta_t=1$ only means that the watermark dependence at position $t$ survives; it does not guarantee that the pivotal statistic $\Yars_t$ exhibits strong local evidence. 
As a result, some watermark-preserving positions may look nearly indistinguishable from null positions. 
To avoid missing these weak positions, a decision rule would have to select more aggressively, but doing so would also include too many null positions and prevent the false discovery rate from vanishing. 
Conversely, a conservative rule may control false discoveries, but it must miss a non-negligible fraction of watermark-preserving positions. 
The next theorem shows that this incompatibility rules out classification throughout the $(p,q,\Wrate)$ regime.

\begin{thm}[Impossibility of classification]
\label{thm:class}
Suppose Assumptions~\ref{asmp:main}--\ref{asmp:HCL+} hold. Then, for any $p,q\in[0,1]$ and $\Wrate\in[0,1)$, no sequence of local decision rules can achieve classification.
\end{thm}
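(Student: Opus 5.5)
The plan is to show that the two requirements in Definition~\ref{def:classification} conflict at the level of each single coordinate. The structural fact behind this is that, under the Gumbel-max watermark, the alternative law $\mu_{1,\bP_t}$ is never much more concentrated than $\mu_0=U(0,1)$ on any set: the dominant token carries mass $1-\Delta_n$ and contributes a nearly uniform component. Concretely, differentiating $\mu_{1,\bP}(Y\le r)=\sum_w P_w r^{1/P_w}$ gives the density $f_{1,\bP}(r)=\sum_{w} r^{1/P_w-1}$ on $[0,1]$. Each summand is at most $1$, and for $w\neq w^\star_{t,n}$ we have $\int_0^1 r^{1/P_w-1}\,dr=P_w$. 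Hence for every measurable $A\subset[0,1]$,
\[
\mu_{1,\bP_t}(A)\ \le\ \mu_0(A)+\sum_{w\ne w_{t,n}^\star}P_{t,w}\ =\ \mu_0(A)+\Delta_n .
\]
This bound holds uniformly in $t$, and by Assumption~\ref{asmp:HCL+} we have $\Delta_n\le 1-\eta_\star<1$ for all large $n$ (automatically $\Delta_n\to0$ when $q>0$; via $\eta_\star$ when $q=0$).

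Next, fix any local rule $\delta_t=\phi_t(Y_t)$ and write $A_t=\{\phi_t=1\}$ and $a_t=\mu_0(A_t)$, which is deterministic. By Assumption~\ref{asmp:main}(a)--(b), conditional on $\mathcal G_t$, the variable $Y_t$ has law $\mu_0$ if $\theta_t=0$ and law $\mu_{1,\bP_t}$ if $\theta_t=1$. Conditioning on $\mathcal G_t$ then gives
\[
\mathrm{EFP}=\sum_t a_t\,\mathbb P(\theta_t=0),\qquad
\mathrm{EFN}=\sum_t \mathbb E\bigl[\theta_t(1-\mu_{1,\bP_t}(A_t))\bigr]\ \ge\ \sum_t \mathbb P(\theta_t=1)\,(1-\Delta_n-a_t).
\]
By Assumption~\ref{asmp:main}(c), $\mathbb P(\theta_t=1)\in[c\varepsilon_n,C\varepsilon_n]$, so $\mathrm{EFN}+\mathrm{ETP}=\mathbb E|I|\asymp n\varepsilon_n$.

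Now suppose, for contradiction, that $\mathrm{MDR}\to0$. Then $\mathrm{EFN}=o(n\varepsilon_n)$. Combining this with the lower bound on $\mathrm{EFN}$ and $\mathbb P(\theta_t=1)\le C\varepsilon_n$ yields
\[
\sum_t a_t\ \ge\ \tfrac{c}{C}\,(\eta_\star-o(1))\,n .
\]
That is, a constant fraction of null mass must be selected. The bound for $\mathrm{EFP}$ then gives $\mathrm{EFP}\gtrsim n\min_t\mathbb P(\theta_t=0)$, while $\mathrm{ETP}\le C n\varepsilon_n$. There are two cases:
\begin{itemize}
\item[(i)] If $p>0$, then $\mathbb P(\theta_t=0)\to1$ and $\mathrm{ETP}=O(n^{1-p})=o(n)$, so $\mathrm{mFDR}\to1$.
\item[(ii)] If $p=0$, the requirement $\varepsilon_n\le\pi_\star<1$ in Assumption~\ref{asmp:HCL+} keeps the null proportion bounded below, so $\mathrm{EFP}\ge c' n$ while $\mathrm{ETP}\le n$. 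Thus $\mathrm{mFDR}$ stays bounded away from zero.
\end{itemize}
In both cases $\mathrm{mFDR}\not\to0$, which contradicts classification.

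The main obstacle I expect is case (ii). To get a lower bound on $\mathbb P(\theta_t=0)$ I need to read $\varepsilon_n\le\pi_\star$ together with condition (c) as saying that the marginal survival probability is bounded away from one. If the constant $C$ in (c) makes $C\pi_\star\ge1$, I would first try to use the averaged fraction $n^{-1}\sum_t\mathbb P(\theta_t=1)\asymp\varepsilon_n$, restricted to positions where $\mathbb P(\theta_t=0)$ is bounded below. A secondary point is the measure-theoretic care needed to justify the conditional law of $Y_t$ given $\mathcal G_t$ when the rules $\phi_t$ are fixed and $\bP_t$ is random. This is handled by the tower property, since the bound $\mu_{1,\bP_t}(A)\le a_t+\Delta_n$ is uniform over $\bP_t$. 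Notice that this argument never uses $\Wrate$ or $q$ beyond $\Delta_n\le1-\eta_\star$, which is consistent with the impossibility holding throughout the whole $(p,q,\Wrate)$ range.
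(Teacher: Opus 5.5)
Your argument is correct, and it takes a different and more elementary route than the paper.

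\textbf{How the paper argues.} The paper first uses Lemma~\ref{lem:threshold-reduction} to reduce to right-tail threshold rules, then applies the tail estimates of Lemma~\ref{lem:tail-pivot-HCL} and splits on $q$. For $q>0$, combining $\mathrm{ETP}=(1+o(1))\mathbb E|I|$ with $\mathrm{ETP}\lesssim\varepsilon_n A_n+n\varepsilon_n n^{-q}$ forces $A_n\gtrsim n$, which contradicts $\mathrm{EFP}=o(\mathbb E|I|)$. For $q=0$, $\mathrm{EFP}=o(\mathbb E|I|)$ forces $\mathrm{ETP}\le(1-\eta_\star+o(1))\mathbb E|I|$, which contradicts $\mathrm{MDR}\to0$.

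\textbf{What your route buys.} Your setwise domination $\mu_{1,\bP}(A)\le\mu_0(A)+\Delta_n$ holds for every measurable $A$, with constant one. It therefore removes the need for the threshold reduction and for the dominant--core--light tail calculus. It also handles $q>0$ and $q=0$ at once, since only $\Delta_n\le 1-\eta_\star$ is used. In addition, it directly justifies the $q=0$ bound $1-\eta_\star+Ca$, which the paper only attributes to the proof of the tail lemma. Finally, it makes clear that the impossibility depends only on the dominant token's mass staying bounded below, not on $\Wrate$ or on the light set. In exchange, the paper's route keeps the classification argument parallel to its discovery lower bound by reusing the same threshold and tail machinery.

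\textbf{On the $p=0$ obstacle you flag.} The paper relies on the same fact: its step $\mathrm{EFP}=\sum_t\mathbb P(\theta_t=0)a_{n,t}\asymp A_n$ presumes that $\mathbb P(\theta_t=0)$ is bounded away from zero uniformly in $t$. So this is not a gap relative to the paper. However, no fallback can remove it under a fully literal reading of the assumptions. If $C\pi_\star\ge1$, condition (c) allows $\theta_t\equiv1$ (take $\varepsilon_n=1/2$, $c=1$, $C=2$), and then $\delta_t\equiv1$ gives $\mathrm{mFDR}=\mathrm{MDR}=0$. The $p=0$ clause therefore has to be read as keeping the survival probability itself away from one.

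Under the weaker averaged version $n^{-1}\sum_t\mathbb P(\theta_t=1)\le\pi'<1$, your restriction idea does work. At least $(1-\pi')n/2$ positions then have $\mathbb P(\theta_t=0)\ge(1-\pi')/2$. Applying your $\mathrm{EFN}$ lower bound on those positions alone again gives $\mathrm{EFP}\gtrsim n$, while $\mathrm{ETP}\le n$.
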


\begin{rem}[Contrast with Gaussian mixtures]
The impossibility in Theorem~\ref{thm:class} contrasts with the Gaussian mixture setting of~\citet{CaiSun2017OSD}, where classification can be possible when the mean shift is sufficiently large. 
There, a stronger signal shifts the alternative distribution away from the null, so most non-null observations can be separated from null observations. 
In watermark localization, however, $\theta_t=1$ only records the survival of watermark dependence; it does not imply that the pivotal statistic $\Yars_t$ is far from the null law $\mu_0$. 
Since watermark pivotal statistics are typically bounded, the null and alternative laws can still overlap substantially even when the watermark dependence survives. 
Thus, surviving watermark dependence is not analogous to a large mean shift, which explains why classification is impossible under the model considered here.
\end{rem}

\subsection{Adaptive Optimality}
\label{sec:optimality}

In this subsection, we establish the adaptive optimality of \Algo\ for discovery. 
Here, ``adaptive'' means that the algorithm does not require the problem-dependent parameters~\citep{donoho2004higher,donoho2015higher}, which in our setting include $p,q,\Wrate$ and the NTP distributions $\bP_{1:n}$. 
The term ``optimal'' refers to boundary optimality: the algorithm achieves discovery throughout the discoverable region.

\vspace{-1.1em} 
\paragraph{Adaptive optimality for detection.}
We first revisit global detection for completeness. 
Although global detection is not the main focus of this paper, it is useful to ask whether an existing procedure can attain the detection boundary in the $(p,q,\Wrate)$ regime. 
We show that this is the case for the robust detection method of~\citet{li2025robust}, called \texttt{Tr-GoF}. 
This method is a truncated goodness-of-fit test: it compares the empirical distribution of the pivotal statistics with the null distribution $\mu_0$ after a suitable truncation, and rejects the global null $H_0$ when the deviation is sufficiently large. 
Although \texttt{Tr-GoF} was originally developed for the fixed-vocabulary setting, the following theorem shows that it remains adaptively optimal under vocabulary growth.

\begin{thm}[Adaptive optimality for global detection]
\label{thm:adaptivityTrGoF}
Suppose Assumptions~\ref{asmp:main}--\ref{asmp:alphamixing} hold. If $\max\{p+q,\,2p+q-\Wrate\}<1$, then \emph{\texttt{Tr-GoF}} achieves global detection.
\end{thm}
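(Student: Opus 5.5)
The plan is to reproduce the analysis of \texttt{Tr-GoF} from \citet{li2025robust}. First we reduce to a single-threshold higher-criticism-type statistic, then show that under $H_1$ the statistic exceeds its null critical value in two regimes, each matching one branch of the boundary $\max\{p+q,2p+q-\Wrate\}<1$. Recall that \texttt{Tr-GoF} takes the supremum over truncated thresholds $s$ of a normalized divergence between the empirical survival function $\widehat S(s)=n^{-1}\sum_t\mathbf 1\{p_t\le s\}$ and its null value $s$. Under $H_0$ the $p$-values $p_t=1-Y_t$ are i.i.d.\ $U(0,1)$, and Assumption~\ref{asmp:main}(a) with perfect pseudorandomness gives exactly this. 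By the standard empirical-process and Darling--Erd\H{o}s-type bounds already used in \citet{li2025robust}, the statistic is then $O_P(\sqrt{\log\log n})$, or at most polylogarithmic. It therefore suffices to exhibit, under $H_1$, one threshold $s_n$ in the truncation range at which the normalized deviation
\[
\frac{n\,|\widehat S(s_n)-s_n|}{\sqrt{n s_n(1-s_n)}}
\]
diverges polynomially. Such an $s_n$ lies below the supremum, so rejection follows.

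The core computation is the alternative tail mass. For the Gumbel-max pivot, $\mu_{1,\bP}(Y>1-s)=1-\sum_w P_w(1-s)^{1/P_w}$. We evaluate this under the dominant--core--light decomposition of Assumption~\ref{asmp:HCL+}. The dominant token contributes roughly $s(1+\Delta_n)$, so its excess is of order $s\,n^{-q}$. Each light token has $P_w\asymp n^{-(\Wrate+q)}$. For $s\gtrsim n^{-(\Wrate+q)}$ it contributes about $P_w$, and for smaller $s$ it contributes about $P_w(1-e^{-s/P_w})$. Hence the light set contributes excess mass of order $n^{-q}\min\{1,s\,n^{\Wrate+q}\}$ above $s$. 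The core set only adds nonnegative mass and is dominated. The signal mean excess is therefore
\[
\mathbb E\,\widehat S(s)-s\asymp \varepsilon_n\bigl(s\,n^{-q}+n^{-q}\min\{1,s\,n^{\Wrate+q}\}\bigr),
\]
using Assumption~\ref{asmp:main}(c) to average $\theta_t$ conditionally on $\FM_{t-1}$.

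We then consider two choices of threshold.
\begin{itemize}
\item[(i)] Take $s_n=n^{-(\Wrate+q)}$, or whatever is smallest in the truncation range. The excess is about $\varepsilon_n n^{-q}$, so the signal count is $n^{1-p-q}$ and the null standard deviation is $\sqrt{n\,s_n}=n^{(1-\Wrate-q)/2}$. The ratio is $n^{(1-2p-q+\Wrate)/2}$, which diverges iff $2p+q-\Wrate<1$.
\item[(ii)] Take $s_n\asymp 1/n$ on the log scale, or a slightly larger polylog multiple, where the null tail count is $O_P(\mathrm{polylog})$. The signal count is $n\varepsilon_n n^{-q}=n^{1-p-q}$, which diverges iff $p+q<1$. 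This is the extreme-tail regime that drives detection when $p<\Wrate$.
\end{itemize}
Taking the better of the two thresholds covers the full region. The fixed-$\Wrate$ analysis in \citet{li2025robust} already handles the truncation and calibration of the supremum, so we only need to check that these thresholds lie in its range.

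The main obstacle is concentration of $\widehat S(s_n)$ under $H_1$, because the $\theta_t$ and $\bP_t$ are dependent. Here Assumption~\ref{asmp:alphamixing} enters. The null part $\sum_t\bigl(\mathbf 1\{p_t\le s\}-\mathbb P(p_t\le s\mid\mathcal G_t)\bigr)$ is a martingale, since $\xi_t$ is fresh given $\mathcal G_t$ by Assumption~\ref{asmp:main}(a). Freedman's inequality then controls it with variance of order $n s$. The compensator $\sum_t \mathbb P(p_t\le s\mid\mathcal G_t)$ involves $\theta_t$. We bound it below via Assumption~\ref{asmp:main}(c) and the uniform-in-$t$ lower bound on the excess, which holds because the light-set bounds are uniform in $t$. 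For the fluctuation of $\sum_t\theta_t$ we use a Bernstein inequality for geometrically $\alpha$-mixing sequences, which gives relative error $o(1)$ once $n\varepsilon_n\to\infty$. The delicate point is regime (ii), where counts are polynomially small and fluctuations must be handled at Poisson scale. I would first try a blocking argument with blocks of length $\log^2 n$ and check that the mixing errors $n\rho^{\log^2 n}$ are negligible.
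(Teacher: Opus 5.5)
Your skeleton matches the paper's: calibrate the null at order $\log\log n$, then exhibit one deterministic tail point in $[p_n^+,1)$ whose deviation diverges polynomially. Your regime (i), $r_n=n^{-(\alpha+q)}$, is exactly the paper's Case~2.

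\textbf{Where you differ.} There are two differences, and both are legitimate.
\begin{itemize}
\item For $p<\alpha$, the paper uses the moderate tail point $r_n=n^{-(p+q-\eta)}$. This keeps $\mathbb F_n(r_n)/r_n=O_{\mathbb P}(1)$, so Lemma~\ref{lem:Ks-comparison} transfers divergence of $nK_2^+$ to $nK_s^+$ for every $s\in[-1,2]$. You instead go to the extreme tail $r_n\approx\mathrm{polylog}/n$.
\item For concentration, the paper applies a Bernstein inequality for mixing sequences directly to the indicators $\mathbf 1\{p_t\le r_n\}$ (Lemma~\ref{lem:tail-count-concentration}). You split the count into an exact $(\mathcal G_{t+1})$-martingale, using fresh pseudorandomness, plus a compensator that needs only concentration of $\sum_t\theta_t$ together with uniform-in-$t$ bounds on the light-token excess. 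This is cleaner and uses mixing only for the editing indicators.
\end{itemize}
The ``delicate'' extreme-tail case is not delicate in your framework. Freedman's inequality, with predictable variance bounded by the compensator, already gives Poisson-scale control, so no blocking is needed.

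\textbf{Two repairs are needed.}

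First, your signal count in regime (ii) is wrong in general. At $r_n\approx\mathrm{polylog}/n$, the light tokens (each of mass $\asymp n^{-(\alpha+q)}$) are saturated only when $\alpha+q\ge 1$. Otherwise your own excess formula gives a count $\asymp n\varepsilon_n\,n^{\alpha}r_n$, which is $n^{\alpha-p}$ up to logarithms, not $n^{1-p-q}$. The conclusion survives, because you invoke regime (ii) only when $p<\alpha$, and then this count still diverges polynomially.

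Second, \texttt{Tr-GoF} is not the normalized deviation. It is $\sup_r K_s^+(\mathbb F_n(r),r)$ for a fixed $s\in[-1,2]$, and $nK_s^+$ equals half the squared normalized deviation only for $s=2$.
\begin{itemize}
\item In regime (i) with $p\ge\alpha$, the excess $n^{-(p+q)}$ is at most $r_n$, so $\mathbb F_n(r_n)/r_n=O_{\mathbb P}(1)$ and Lemma~\ref{lem:Ks-comparison} closes the step.
\item In regime (ii), $\mathbb F_n(r_n)/r_n\to\infty$. There $K_s/K_2\to 0$ for $s<2$, and the comparison lemma gives nothing useful.
\end{itemize}
For regime (ii) you need a direct large-ratio bound. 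Since $\phi_s\ge 0$, we have $K_s(u,v)\ge v\,\phi_s(u/v)$. From the explicit form of $\phi_s$:
\begin{itemize}
\item $\phi_s(x)\sim x/(1-s)$ for $s<1$;
\item $\phi_s(x)\sim x\log x$ for $s=1$;
\item $\phi_s(x)\sim x^s/\{s(s-1)\}$ for $1<s\le 2$.
\end{itemize}
Hence $K_s(u,v)\gtrsim u$ once $u/v\ge M$. This gives $nK_s^+\gtrsim n\mathbb F_n(r_n)$, which grows polynomially and so exceeds the $\log\log n$ null level.

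With these two fixes, and the easy check that $\mathrm{polylog}/n\ge 1/n\ge p_n^+$ (and $\alpha+q<1$ in regime (i)), your argument is complete.
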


\vspace{-1.1em} 
\paragraph{Adaptive optimality for discovery.}
We next turn to the main focus of this paper: adaptive discovery. The key result is that \Algo\ attains the discovery boundary in Theorem~\ref{thm:discovery-boundary} without knowing the parameters that determine this boundary. This adaptivity comes from two ingredients. 
First, \Algo\ scans over $M_n$ tail thresholds and selects the threshold from the data, rather than using the unknown optimal tail scale. 
Second, it uses an estimate $\widehat\varepsilon_n$ of the surviving watermark fraction, rather than requiring $\varepsilon_n$ as prior knowledge. 
A feasible high-accuracy choice of $\widehat\varepsilon_n$ is the fraction estimator of~\citet{li2025optimal}, which is also used in our experiments. 
For generality, the theorem below only requires the estimator to satisfy the accuracy condition in~\eqref{eq:estimator-accruacy-condition}.

\begin{thm}[Adaptive optimality for discovery]
\label{thm:adaptivity}
Suppose Assumptions~\ref{asmp:main}--\ref{asmp:alphamixing} hold. Run \emph{\Algo} in Algorithm~\ref{alg:adaptive-discovery-u} with $M_n\asymp(\log n)^a$ for some $a\ge1$, $\lambda_n=C_1/\log n$, and $\eta_n=C_2/(\log n)^2$, where $C_1,C_2>0$ are constants independent of $p,q,\Wrate,n$. Suppose the estimator $\widehat\varepsilon_n$ satisfies
\begin{equation}
\label{eq:estimator-accruacy-condition}
\mathbb P\!\left(|\widehat\varepsilon_n-\varepsilon_n|>c\eta_n\right)=o(n^{-1})
\end{equation}
for some constant $c>0$. If $p<\Wrate$ and $p+q<1$, then \emph{\Algo} achieves discovery, provided that $0< u_{\min}<p+q<u_{\max}<1$.
\end{thm}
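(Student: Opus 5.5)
We need to show that \Algo\ achieves discovery: (i) $\mathrm{mFDR}_{\Bdelta}\to0$ for the data-dependent threshold $\widehat\tau_n$, and (ii) $\mathbb P(|S_{\Bdelta}|\ge1)\to1$. The plan is to reduce both to deterministic statements about a fixed grid. We compute the population tail counts $S_n(u):=\mathbb E\widehat S_n(u)$ and $T_n(u):=(1-\varepsilon_n)n^{-u}/S_n(u)$, show that $\widehat T_n(u)$ tracks $T_n(u)$ uniformly over $\mathcal U_n$ with error $o(\eta_n)$ on a high-probability event $E_n$, and then argue that $\widehat u_n$ is sandwiched between two deterministic grid points.

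\textbf{Step 1 (population tail of the alternative).} Under the Gumbel-max pivot, $F_0$ is the identity, and $\mu_{1,\bP}(Y>1-s)=1-\sum_w P_w(1-s)^{1/P_w}$. For $s=n^{-u}$ we split this sum over the dominant, core, and light sets of Assumption~\ref{asmp:HCL+}.
\begin{itemize}
\item The dominant token contributes about $s$.
\item Light tokens with $P_w\asymp n^{-(\Wrate+q)}$ contribute $\Delta^{\rm light}_n(1-e^{-s/P_w})$. This is $\asymp n^{-q}$ when $u<\Wrate+q$ and $\asymp n^{-q}\cdot n^{\Wrate+q-u}$ when $u>\Wrate+q$.
\item Core tokens contribute at most $O(n^{-q})$, and at least what the light set gives.
\end{itemize}
Hence for $u\in(q,\Wrate+q)$ we get $S_n(u)\asymp (1-\varepsilon_n)n^{-u}+\varepsilon_n n^{-q}$, and the signal part $\varepsilon_n n^{-q}\asymp n^{-(p+q)}$ dominates the null part $n^{-u}$ exactly when $u>p+q$. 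Since $p<\Wrate$, the window $(p+q,\Wrate+q)$ is nonempty, and so is its intersection with $(u_{\min},u_{\max})$, because $u_{\max}>p+q$. Because the grid spacing is $\asymp(\log n)^{-a}\to0$, there is a grid point $u^\star$ in this window with $u^\star-(p+q)\ge c_0>0$. At $u^\star$ we have $T_n(u^\star)\lesssim n^{-(u^\star-p-q)}=o(\lambda_n)$. Both the mFDR and the expected discovery count are conditionally averaged over $\theta_t$ using Assumption~\ref{asmp:main}(c), so the $\asymp$ bounds hold with constants $c,C$.

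\textbf{Step 2 (uniform concentration).} For each $u$, $n\widehat S_n(u)$ is a sum of bounded indicators of a geometrically $\alpha$-mixing sequence (Assumption~\ref{asmp:alphamixing}). A Bernstein inequality for strongly mixing sequences (Merlevède–Peligrad–Rio) gives relative deviation $|\widehat S_n(u)/S_n(u)-1|\le (\log n)^{-3}$ with probability $1-o(n^{-1})$, provided $nS_n(u)\ge n^{c}$. This holds for all $u\le u_{\max}<1$, since $S_n(u)\ge(1-\pi_\star)n^{-u}$. A union bound over the $M_n\asymp(\log n)^a$ grid points, combined with \eqref{eq:estimator-accruacy-condition}, yields $\sup_{u\in\mathcal U_n}|\widehat T_n(u)-T_n(u)|\le \eta_n/2$ on $E_n$ with $\mathbb P(E_n^c)=o(1)$. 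Here we use $T_n\le 1$ and that relative error $(\log n)^{-3}$ is $o(\eta_n)=o((\log n)^{-2})$.

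\textbf{Step 3 (conclusion, and the main obstacle).} On $E_n$, $\widehat T_n(u^\star)\le\lambda_n-\eta_n$, so $\widehat u_n\le u^\star$. Since $S_n(u^\star)\gtrsim n^{-(p+q)}$ and $p+q<1$, this gives $|S_{\Bdelta}|\ge n\widehat S_n(u^\star)\to\infty$, which proves (ii) and also the multiple-discovery version. Moreover, every admissible $u$ has $T_n(u)\le\lambda_n-\eta_n/2\le\lambda_n$.

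The delicate part is (i). The mFDR is a ratio of expectations for a random threshold, and $\widehat u_n$ depends on all the data, so we cannot plug in the fixed-threshold formula directly. The approach I would try first is to bound the counts by a union over the grid:
\[
\mathrm{EFP}\le \mathbb E\!\left[\max_{u\ \text{admissible on }E_n} \mathrm{FP}(u)\right]+n\,\mathbb P(E_n^c).
\]
For the counts I would use the same relative concentration applied separately to the null counts $\mathrm{FP}(u)$ and true counts $\mathrm{TP}(u)$. On $E_n$, every admissible $u$ then satisfies $\mathrm{FP}(u)\le 2\lambda_n\,(\mathrm{FP}(u)+\mathrm{TP}(u))$. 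Taking expectations gives
\[
\mathrm{EFP}\lesssim\lambda_n(\mathrm{EFP}+\mathrm{ETP})+o(1),
\]
where the $o(1)$ comes from $n\,\mathbb P(E_n^c)=o(1)$, which is exactly why the $o(n^{-1})$ rates are required. Since $\mathrm{ETP}\ge\mathbb E[\mathrm{TP}(u^\star)\mathbf 1_{E_n}]\to\infty$, this yields $\mathrm{mFDR}\lesssim\lambda_n\to0$.
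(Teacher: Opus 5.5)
Your proposal is correct, but it controls the mFDR by a different route from the paper.

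\textbf{The paper's route.} The paper proves an oracle calibration lemma that locates the crossing of $T_n^{\mathrm{orc}}$ at $p+q+\log\log n/\log n+O(1/\log n+\Delta_{u,n})$. It then uses the $\eta_n/2$ proxy stability to sandwich $\widehat u_n$ between two deterministic exponents, $u_n^-\le\widehat u_n\le u_n^+$, with probability $1-o(n^{-1})$. From there it bounds $\mathrm{EFP}$ by the fixed rule at $u_n^-$, obtaining $\mathrm{EFP}\lesssim n^{1-p-q}/\log n$. It bounds $\mathrm{ETP}$ by the fixed rule at $u_n^+$, and gets nonemptiness from the Chebyshev bound of Lemma~\ref{lem:mixing-rare-count} at $u_n^+$.

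\textbf{Your route.} You need only the upper half of the sandwich, and only with a fixed grid point $u^\star$ a constant distance above $p+q$, not the precise crossing. The lower half is replaced by self-normalization. Because your concentration event is uniform over the grid, the admissibility inequality holds pathwise at the data-dependent $\widehat u_n$. This gives $\mathrm{FP}\lesssim\lambda_n(\mathrm{FP}+\mathrm{TP})$ on $E_n$, and taking expectations costs only $n\mathbb P(E_n^c)$.

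\textbf{What each route buys.} Yours avoids the oracle calibration lemma entirely, and the mFDR bound no longer depends on where the threshold lands. It also asks very little of $\widehat\varepsilon_n$: you only need $1-\widehat\varepsilon_n\gtrsim 1-\varepsilon_n$ on $E_n$, since $\widehat T_n(u^\star)\lesssim n^{-c_0}$ leaves far more room than $\eta_n$. Your Step~2 claim $\sup_u|\widehat T_n-T_n|\le\eta_n/2$ would in fact need the constant $c$ in \eqref{eq:estimator-accruacy-condition} to be small relative to $C_2$, as the paper also tacitly assumes. Your argument never uses that precision, so this does not hurt you. In return, the paper's route gives the sharp location of the selected threshold and the explicit rate $\mathrm{EFP}\lesssim n^{1-p-q}/\log n$. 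Your bounds $\mathrm{mFDR}\lesssim\lambda_n$ and $\mathrm{ETP}\gtrsim n^{1-p-q}$ would nonetheless already yield Theorem~\ref{thm:25optimal}.

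\textbf{Small repairs.}
\begin{itemize}
\item The extra concentration you invoke for the null counts $\mathrm{FP}(u)=\sum_t(1-\theta_t)\mathbf 1\{Y_t>\tau(u)\}$ is legitimate. Each summand is $\bar{\mathcal G}_t$-measurable, so it inherits the geometric mixing, and its mean is $\asymp n^{1-u}\ge n^{1-u_{\max}}$.
\item In Step~2 you should record $\mathbb P(E_n^c)=o(n^{-1})$, not merely $o(1)$, since Step~3 relies on it. Your ingredients already deliver this: a super-polynomially small failure probability per grid point, a union over polylogarithmically many points, and the $o(n^{-1})$ estimator condition.
\item The constant $2$ in $\mathrm{FP}(u)\le 2\lambda_n(\mathrm{FP}(u)+\mathrm{TP}(u))$ should be a generic constant. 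When $p=0$, Assumption~\ref{asmp:main}(c) only gives $n^{-1}\sum_t\mathbb P(\theta_t=0)\asymp 1-\varepsilon_n$ up to constants. This is harmless, because you conclude with $\lesssim$.
\end{itemize}
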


\begin{rem}[Choice of the threshold grid]
The condition $u_{\min}<p+q<u_{\max}$ is a coverage condition for the threshold grid. 
It ensures that the scan includes the relevant tail scale around $p+q$, where the adaptive threshold is selected asymptotically. 
This condition does not require prior knowledge of $p$ or $q$: since discovery is possible only when $p+q<1$, one may take $u_{\min}$ close to $0$ and $u_{\max}$ close to $1$ in practice to achieve it. 

In our experiments, we use $[u_{\min},u_{\max}]=[0.005,0.98]$ for simulations and $[0.05,0.95]$ for LLM experiments.
\end{rem}

\vspace{-1.1em} 
\paragraph{Near-optimal discovery power.}
Beyond boundary optimality, we also ask whether \Algo\ discovers nearly as many watermark-preserving positions as possible under false positive control. 
To make this comparison meaningful and interpretable, we compare \Algo\ with homogeneous per-token local rules based on the pivotal statistic \citep{sun2007oracle}. 
Specifically, define
\begin{equation}
\label{eq:admissible-class-FPR-homY}
\Decision^{\mathrm{hom}}(\lambda_n)
:=
\Bigl\{\Bdelta:\ \exists\,\varphi:[0,1]\to\{0,1\}\ \text{such that}\ 
\delta_t=\varphi(\Yars_t)\ \text{for all }t,\ 
\mathrm{mFDR}_{\Bdelta}\le \lambda_n
\Bigr\}.
\end{equation}
This class consists of rules that apply the same pivot-based decision function across positions and satisfy the same false positive rate constraint. 

\begin{thm}[Near-optimal number of discoveries]
\label{thm:25optimal}
Suppose the assumptions and setup in Theorem~\ref{thm:adaptivity} hold. 
Let $\Bdelta_{\emph{\Algo}}$ denote the decision rule returned by Algorithm~\ref{alg:adaptive-discovery-u} with the same parameters as in Theorem~\ref{thm:adaptivity}. 
Then there exist constants $c_1,c_2>0$ such that
\[
\mathrm{mFDR}_{\Bdelta_{\emph{\Algo}}}\le c_1\lambda_n+o(1),
\qquad
\mathrm{ETP}_{\Bdelta_{\emph{\Algo}}}
\ge c_2\cdot \sup_{\Bdelta' \in \Decision^{\mathrm{hom}}(\lambda_n)}
\mathrm{ETP}_{\Bdelta'}
+o(1).
\]
\end{thm}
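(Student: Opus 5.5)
The plan is to reduce the supremum over $\Decision^{\mathrm{hom}}(\lambda_n)$ to a supremum over upper-tail thresholds, compute the optimal tail scale explicitly under Assumption~\ref{asmp:HCL+}, and then show that the threshold $\widehat u_n$ selected by \Algo\ lies within one grid step of that scale.

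\textbf{Step 1: reduction to tail rules.} For the Gumbel-max pivot, each alternative $\mu_{1,\bP}$ has density $\sum_w r^{1/P_w-1}$ on $[0,1]$, which is nondecreasing in $r$. The same holds for any mixture over $t$ of these densities, weighted by $\mathbb P(\theta_t=1)$ and the law of $\bP_t$. Write $\bar\varepsilon:=n^{-1}\sum_t\mathbb P(\theta_t=1)$.

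Fix a homogeneous rule $\varphi=\mathbf 1_A$ and set $a:=\mu_0(A)$. Then
\[
\mathrm{EFP}=n(1-\bar\varepsilon)a
\quad\text{and}\quad
\mathrm{ETP}=\sum_t\mathbb E[\theta_t\,\mu_{1,\bP_t}(A)],
\]
where the second identity uses Assumption~\ref{asmp:main}(a)--(b). Replace $A$ by the upper tail $\{Y>1-a\}$, which has the same null mass. By the monotone density and a Neyman--Pearson rearrangement, this does not change $\mathrm{EFP}$ and does not decrease $\mathrm{ETP}$. Since $\mathrm{mFDR}\le\lambda_n$ is equivalent to $\mathrm{ETP}\ge\frac{1-\lambda_n}{\lambda_n}\mathrm{EFP}$, the tail rule stays in $\Decision^{\mathrm{hom}}(\lambda_n)$. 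Hence the supremum equals $\sup_{u}\mathrm{ETP}(u)$ over tail rules $\tau(u)=1-n^{-u}$ that satisfy the mFDR constraint.

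\textbf{Step 2: the oracle tail scale.} Next I compute $G_t(u):=\mu_{1,\bP_t}(Y>1-n^{-u})=\sum_w P_w\bigl(1-(1-n^{-u})^{1/P_w}\bigr)$ under Assumption~\ref{asmp:HCL+}.

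\begin{itemize}
\item The dominant token contributes $\asymp n^{-u}$.
\item A light token with $P_w\asymp n^{-(\Wrate+q)}$ contributes $\asymp P_w$ when $u<\Wrate+q$. Summed over the light set, this gives $\asymp\Delta_n^{\mathrm{light}}\asymp n^{-q}$.
\item The core tokens contribute at most $\Delta^{\mathrm{core}}_n\lesssim n^{-q}$.
\end{itemize}

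So $G_t(u)\asymp n^{-q}+n^{-u}$ uniformly in $t$ for $u<\Wrate+q-\epsilon$, and $G_t(u)$ is much smaller once $u$ exceeds $\Wrate+q$. The resulting mFDR is $\asymp n^{-u}/(n^{-u}+\varepsilon_n n^{-q})$. The constraint $\lambda_n=C_1/\log n$ therefore forces $u\ge u^*_n:=p+q+\kappa\log\log n/\log n$. Because $p<\Wrate$, we have $u^*_n<\Wrate+q$ for large $n$, so the plateau is reached and $\sup\mathrm{ETP}\asymp n\varepsilon_n n^{-q}=n^{1-p-q}$. In the opposite regime $p+q\ge1$ the supremum is $o(1)$ or the statement is vacuous, but Theorem~\ref{thm:adaptivity} already assumes $p+q<1$.

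\textbf{Step 3: locating $\widehat u_n$.} I reuse the concentration machinery behind Theorem~\ref{thm:adaptivity}. Under Assumption~\ref{asmp:alphamixing}, a Bernstein inequality for strongly mixing sums gives $\widehat S_n(u)=(1-\bar\varepsilon)n^{-u}+\bar G(u)\,\bar\varepsilon\,(1+o_p(1))$ uniformly over the $M_n$ grid points. This is valid because the expected counts $n^{1-u}$ and $n^{1-p-q}$ grow polynomially for $u<u_{\max}<1$. Combining this with the accuracy condition \eqref{eq:estimator-accruacy-condition}, $\widehat T_n(u)$ tracks the true mFDR of the tail rule uniformly up to a $(1+o(1))$ factor and an $O(\eta_n)$ additive error.

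This yields two conclusions. First, with probability $1-o(1)$, $\widehat u_n$ is at most one grid step $h_n\asymp(\log n)^{-a}$ above the smallest admissible $u$. Second, the rule at $\widehat u_n$ has mFDR $\le c_1\lambda_n+o(1)$; on the complementary event the mFDR contribution is $o(1)$, since that event has probability $o(1)$ and counts are at most $n$.

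Moving the threshold by $h_n$ changes $n^{-u}$ only by the factor $n^{h_n}=e^{(\log n)^{1-a}}$, which is $O(1)$ for $a\ge1$. It leaves the plateau value $n^{-q}$ of $G_t$ unchanged. Therefore $\mathrm{ETP}_{\Bdelta_{\Algo}}\ge c_2\,n^{1-p-q}\ge c_2\sup_{\Decision^{\mathrm{hom}}(\lambda_n)}\mathrm{ETP}$ up to $o(1)$.

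\textbf{Main obstacle.} I expect the hardest step to be making Step 1 rigorous for a rule whose threshold is data-dependent and not fixed. Here $\mathrm{ETP}$ and $\mathrm{EFP}$ of \Algo\ involve the random $\widehat u_n$, which depends on all of $Y_{1:n}$. My first attempt would be to sandwich $\widehat u_n$ between two deterministic grid points $u^-\le\widehat u_n\le u^+$ with probability $1-o(n^{-1}\cdot n^{p+q})$. I would then bound the true-positive count below by the fixed-threshold rule at $u^+$ and the false-positive count above by the fixed-threshold rule at $u^-$. Both fixed rules are then handled by Steps 1--2.
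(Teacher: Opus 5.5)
Your architecture matches the paper's proof. You reduce the benchmark class to upper-tail rules by the monotone-density rearrangement (Lemma~\ref{lem:threshold-reduction}), and you use the signal tail $n^{-u}+n^{-q}$ from Lemma~\ref{lem:tail-pivot-HCL} to cap every admissible rule at $\mathrm{ETP}\lesssim n^{1-p-q}$. You then sandwich $\widehat u_n$ between deterministic exponents $u_n^\pm$, bound $\mathrm{EFP}$ at $u_n^-$ and $\mathrm{ETP}$ at $u_n^+$, and note that $n^{u_n^+-u_n^-}=O(1)$ because $a\ge1$; the paper uses this last fact only implicitly.

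The one real difference is in the benchmark bound. You solve the mFDR constraint for the admissible range of a single exponent $u$. The paper instead uses the self-bounding inequality $\mathrm{ETP}_{\Bdelta}\lesssim\varepsilon_n\lambda_n\mathrm{ETP}_{\Bdelta}+n^{1-p-q}$ for arbitrary coordinatewise null masses $a_{n,t}$. That avoids your case split at $u\gtrless\Wrate+q$, and since it never uses homogeneity, it bounds $\mathrm{ETP}$ over all local rules with $\mathrm{mFDR}\le\lambda_n$. Also, the gap between $\widehat u_n$ and the smallest admissible $u$ is $O(1/\log n+h_n)$ rather than one grid step. This comes from the constants in $\asymp$ and the slack $\eta_n$, but since $n^{O(1/\log n)}=O(1)$ it only changes $c_1,c_2$.

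There is, however, a genuine error in how you handle the bad event. In Step~3 you argue that the complementary event contributes $o(1)$ to the mFDR because it has probability $o(1)$ and counts are at most $n$. That does not follow, because mFDR is a ratio of expectations:
\begin{itemize}
\item A bad event of probability $\pi_n$ can add up to $n\pi_n$ to $\mathrm{EFP}$ and remove up to $n\pi_n$ from $\mathrm{ETP}$.
\item With only $\pi_n=o(1)$, this is $o(n)$, which can swamp $\mathrm{ETP}\asymp n^{1-p-q}$.
\item What you need is $n\pi_n=o(\lambda_n n^{1-p-q})$, that is, $\pi_n=o(n^{-(p+q)}/\log n)$.
\end{itemize}
The rate $1-o(n^{-1}\cdot n^{p+q})$ in your last paragraph gives only $n\pi_n=o(n^{p+q})$, which is insufficient whenever $p+q\ge 1/2$.

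The paper closes this by proving $\mathbb P(\mathcal E_n^c)=o(n^{-1})$, via Lemmas~\ref{lem:tail-relative-u} and~\ref{lem:tail-LFDR-stability-u}:
\begin{itemize}
\item The Bernstein bound at each grid point is superpolynomially small, since $nS_n(u)\gtrsim n^{1-u_{\max}}$.
\item The union bound over $M_n\asymp(\log n)^a$ grid points preserves this.
\item The accuracy condition~\eqref{eq:estimator-accruacy-condition} on $\widehat\varepsilon_n$ is imposed at rate $o(n^{-1})$ precisely for this purpose.
\end{itemize}
As a result, the bad-event terms are $n\mathbb P(\mathcal E_n^c)=o(1)$, which is exactly the $o(1)$ in the statement. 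If you replace ``probability $1-o(1)$'' by this $o(n^{-1})$ failure rate, your argument goes through.
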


Theorem~\ref{thm:25optimal} shows that \Algo\ is not only boundary-optimal but also quantitatively efficient. 
Among homogeneous local rules satisfying the target false positive rate constraint, \Algo\ achieves a constant fraction of the largest possible expected number of true discoveries, up to lower-order terms. 
Thus, the adaptive threshold chosen by \Algo\ does not only cross the correct phase boundary, but also retains near-oracle discovery power compared to a natural class of interpretable rules.

\begin{rem}[The choice of benchmark class]
\label{rem:why-hom-tail}
We use $\mathcal{D}_n^{\mathrm{hom}}$ as the benchmark class because it matches the information available to a practical verifier. 
After observing the final text, the verifier can compute a pivotal statistic $\Yars_t$ at each position, but does not observe the NTP distributions, the editing indicators, or other hidden generation states. 
Thus, a natural comparison is with rules that apply a common pivot-based decision function across positions under the same false positive rate constraint. 
We do not benchmark against fully time-varying or history-dependent rules, since such rules may rely on information unavailable to the verifier or on position-specific tuning that is not comparable to a practical localization procedure.
\end{rem}

%% file: 5-experiments.tex
\section{Simulations}
\label{sec:simulation}
In this section, we use simulations to verify the phase transition predicted by Theorem~\ref{thm:discovery-boundary} and to illustrate the adaptive behavior of our \Algo. Additional experimental details are in Appendix~\ref{sec:additional-simulation}.

\subsection{Experimental Setup}
\label{sec:sim-setup}

We simulate pivotal statistics $Y_{1:n}$ from the mixture model in~\eqref{eq:mixture-Y} to study the theoretical phase transitions.
The simulation pipeline is summarized in Figure~\ref{fig:simulation-pipeline}.
We first specify the asymptotic parameters.
For a given text length $n$ and each triple $(p,q,\alpha)$, we set the surviving watermark fraction to $\varepsilon_n=0.5n^{-p}$ and the residual mass away from the dominant token to $\Delta_n=n^{-q}$.
In our implementation, we consider $\alpha\in\{0,0.25,0.5,0.75\}$ and set $|\mathcal W_n|=2+\lfloor n^\alpha\rfloor$, with one core token and $\lfloor n^\alpha\rfloor$ light tokens.
This corresponds to setting $r=0$ in Assumption~\ref{asmp:HCL+}.
Throughout this subsection, we use independent pseudorandom variables so that the simulation isolates the statistical structure of the localization problem.

Second, we generate each NTP distribution $\bP_t$ using a latent Markov process $Z_t\in\{0,1\}$. 
The role of $Z_t$ is to allow the NTP distribution to vary mildly over time while keeping the same asymptotic structure. 
Specifically, $Z_t$ is a two-state Markov chain started from stationarity, with $\mathbb P(Z_1=1)=1/2$ and $\mathbb P(Z_t=Z_{t-1}\mid Z_{t-1})=0.95$. 
Thus, each regime, corresponding to a constant value of $Z_t$, tends to persist for many consecutive positions.
For each regime $z\in\{0,1\}$, we predefine an NTP distribution $\bP_n^{(z)}$ satisfying the dominant--core--light structure in Assumption~\ref{asmp:HCL+} and set $\bP_t=\bP_n^{(Z_t)}$.

\begin{figure}[t!]
\centering
\begin{tikzpicture}[font=\small, node distance=4mm]
\tikzset{
  simbox/.style={
    draw=black!25,
    rounded corners=4pt,
    fill=black!2,
    align=center,
    text width=0.21\textwidth,
    inner sep=6pt
  },
  simarrow/.style={
    -{Latex[length=2mm]},
    line width=0.7pt,
    draw=black!55
  }
}
\node[simbox] (step1) {
\textbf{Step 1}\\
Set parameters $(n,p,q,\alpha)$.
};
\node[simbox, right=of step1] (step2) {
\textbf{Step 2}\\
Generate NTP $\bP_t$ using binary $Z_t$.
};
\node[simbox, right=of step2] (step3) {
\textbf{Step 3}\\
Generate signal indicator $\theta_t$ independently.
};
\node[simbox, right=of step3] (step4) {
\textbf{Step 4}\\
Generate $Y_t$ according to $(\bP_t,\theta_t)$.
};
\draw[simarrow] (step1) -- (step2);
\draw[simarrow] (step2) -- (step3);
\draw[simarrow] (step3) -- (step4);
\end{tikzpicture}
\vspace{-15pt}
\caption{
Simulation pipeline for generating the pivotal statistics $\{Y_t\}_{t=1}^n$.
}
\label{fig:simulation-pipeline}
\vspace{-10pt}
\end{figure}

Third, we generate the survival indicators $\{\theta_t\}$ independently of $\{Z_t\}$, but with temporal dependence across positions. 
The indicator $\theta_t\in\{0,1\}$ records whether the watermark signal survives at position $t$. 
We sample $\{\theta_t\}$ from a two-state homogeneous Markov chain started from stationarity, with stationary mean $\mathbb P(\theta_t=1)=\varepsilon_n$. 
Hence, on average, an $\varepsilon_n$ fraction of positions retain watermark signal, typically in short bursts rather than in complete isolation.

Finally, given $(\bP_t,\theta_t)$, we generate the pivotal statistic $Y_t$ using the Gumbel-max watermark from Section~\ref{sec:pre}. 
If $\theta_t=0$, the watermark signal does not survive, and we sample $Y_t$ independently from the null law $\mu_0$. 
If $\theta_t=1$, we sample directly from the watermark-induced alternative law $\mu_{1,\bP_t}$ using the equivalent scalar construction $Y_t=U^{P_{t,w_t}}$, where $w_t\sim\bP_t$ and $U\sim\mathrm{Unif}(0,1)$ are independent.
Thus, each $Y_t$ follows either the null $\mu_0$ or the watermark-induced alternative law $\mu_{1,\bP_t}$, according to $\theta_t$, while the dependence across positions is inherited from the geometrically mixing process $(\bP_t,\theta_t)$.

\subsection{Discovery Boundary and Empirical Phase Transitions}
\label{sec:sim-boundary}

We now visualize the empirical discovery boundary of \Algo. 
For a decision rule $\Bdelta = (\delta_1, \ldots, \delta_n)$, we evaluate its finite-sample discovery performance by the following \textit{discovery error}
\begin{equation}
\label{eq:discovery-err}
\mathrm{Err}_{\Bdelta}
=
\mathrm{mFDR}_{\Bdelta}
+
p_{\mathrm{miss}}(\Bdelta),
\qquad
p_{\mathrm{miss}}(\Bdelta)
=
\mathbb P(|S_{\Bdelta}|=0),
\end{equation}
where $S_{\Bdelta}=\{t:\delta_t=1\}$ is the discovery set, $\mathrm{mFDR}_{\Bdelta}$ is the marginal false discovery rate, and $p_{\mathrm{miss}}(\Bdelta)$ is the probability that the rule $\Bdelta$ makes no discovery. 
This error directly reflects the two requirements of discovery: controlling marginal false discoveries and ensuring a nontrivial discovery set (that contains at least one position).
Smaller values therefore indicate better discovery performance.

In the implementation of \Algo, the target false discovery level is set as $\lambda_n=C/\log n$, where $C$ is a calibration constant. 
For each parameter setting, we tune this calibration constant over a predetermined grid and report the smallest resulting error. 
This tuning is used only to visualize the empirical phase transition. 
In the following, we consider two complementary views: one-dimensional phase-transition slices and two-dimensional heatmap diagrams.

\begin{figure}[t!]
\centering
\includegraphics[width=\textwidth]{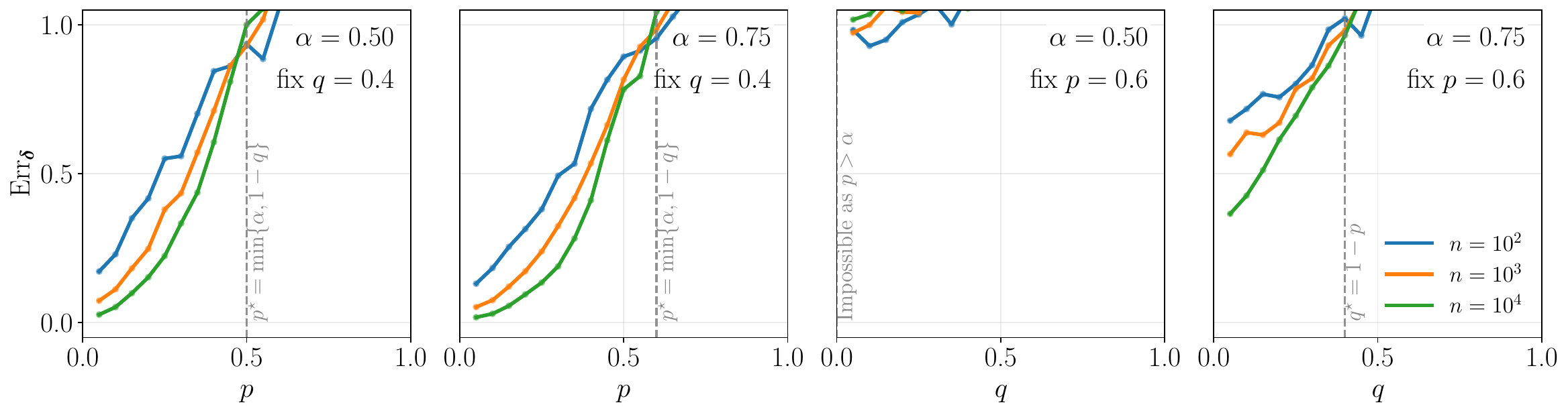}
\vspace{-20pt}
\caption{
Empirical discovery phase-transition slices for \Algo\ with $\alpha\in\{0.50,0.75\}$.
Each panel reports the smallest value of discovery errors defined in \eqref{eq:discovery-err} over a calibration grid along one slice, averaged over $200$ trials.
The vertical dashed lines show the theoretical discovery boundary.
}
\label{fig:discovery-boundary-panels7}
\vspace{-10pt}
\end{figure}

\vspace{-1.1em} 
\paragraph{Phase transition for a fixed $p$ or $q$.}
We first examine one-dimensional slices of the $(p,q)$ plane. 
We either fix $q=0.4$ and vary $p$, or fix $p=0.6$ and vary $q$. 
For each $\alpha\in\{0,0.25,0.5,0.75\}$ and $n\in\{10^2,10^3,10^4\}$, we run $200$ independent Monte Carlo trials. 
For clarity, we present the representative slices for $\alpha\in\{0.5,0.75\}$ here, while the corresponding results for $\alpha\in\{0,0.25\}$ are deferred to Appendix~\ref{sec:additional-simulation}.
Let $\mathcal R(a,b,K)=\{a+k(b-a)/(K-1):k=0,1,\ldots,K-1\}$ denote $K$ equally spaced grid points from $a$ to $b$. 
For the fixed-$q$ slices, we evaluate the error over $p\in\mathcal R(0.05,0.95,19)$; for the fixed-$p$ slices, we evaluate the error over $q\in\mathcal R(0.05,0.95,19)$. 
At each grid point, we tune the calibration constant $C$ over $\{0.05,0.10,\ldots,2.00\}$, set $\eta_n=0$, and report the smallest error.

According to Theorem~\ref{thm:discovery-boundary}, the discoverable region is $\{(p,q):p<\alpha,\ p+q<1\}$. 
Thus, for fixed $q$, the critical transition in the $p$-direction is $p^\star(q;\alpha)=\min\{\alpha,1-q\}$. 
In particular, when $q=0.4$, the predicted transition occurs at $p^\star(0.4;\alpha)=\min\{\alpha,0.6\}$. 
For fixed $p$, discovery is possible only if $p<\alpha$, and then only for $q<1-p$. 
When $p=0.6$, the predicted transition is therefore at $q=0.4$ if $\alpha>0.6$, while the entire slice is non-discoverable when $\alpha\le0.6$.
Figure~\ref{fig:discovery-boundary-panels7} confirms these predictions. 
For fixed $q=0.4$, the error remains small when $p<p^\star(0.4;\alpha)$ and increases sharply after crossing the predicted boundary. 
For fixed $p=0.6$, the transition occurs near $q=0.4$ when $\alpha>0.6$, while the error remains high across the whole slice when $\alpha\le0.6$. 
As $n$ increases, the empirical transition becomes sharper and aligns more closely with the theoretical boundary.

\begin{figure}[t!]
\centering
\includegraphics[width=\textwidth]{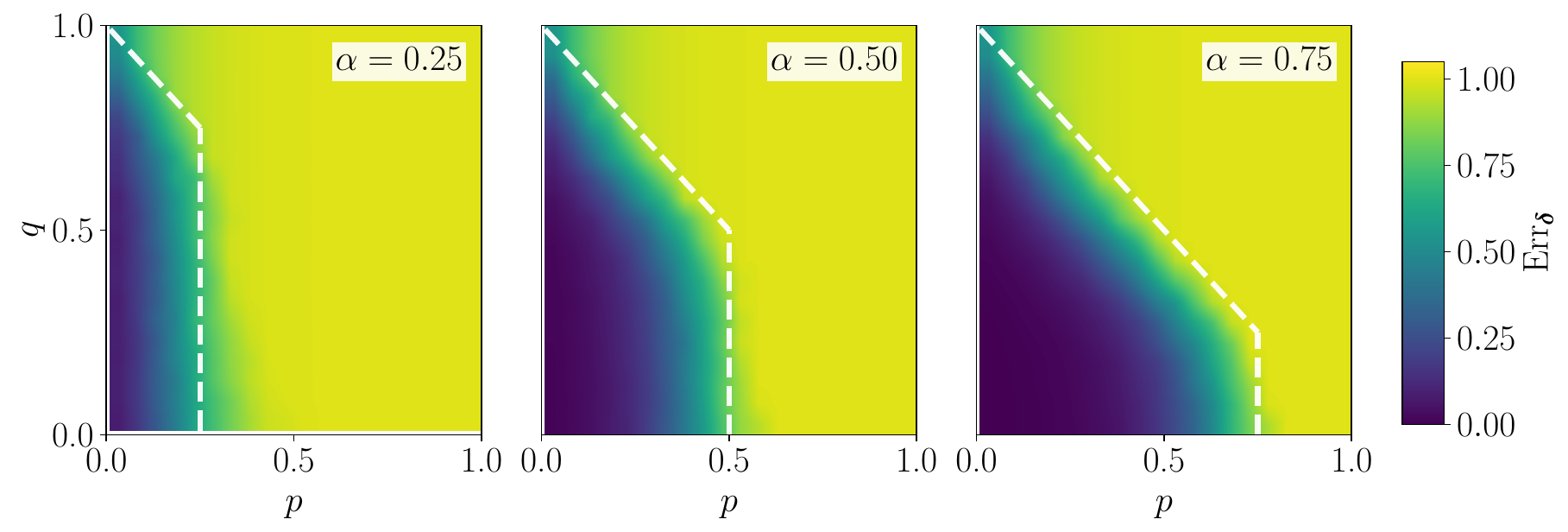}
\vspace{-20pt}
\caption{
Empirical discovery heatmap diagrams for \Algo\ with $n=10^4$.
Each panel reports the smallest value of discovery errors in \eqref{eq:discovery-err} on a $(p,q)$ grid, averaged over $500$ trials, with displayed values clipped at $1.05$.
White dashed lines show the theoretical discovery boundaries.
}
\label{fig:discovery-heat-panels7}
\vspace{-10pt}
\end{figure}

\vspace{-1.1em} 
\paragraph{Heatmap diagrams.}
We next evaluate the empirical boundary over a two-dimensional $(p,q)$ grid. 
For the heatmap plots, we fix $n=10^4$. 
For each $\alpha\in\{0.25,0.5,0.75\}$, we evaluate the error over 
$p\in\mathcal R(0.01,1,20)$ and 
$q\in\mathcal R(\log_n(|\mathcal W_n|/(|\mathcal W_n|-1)),1,20)$.\footnote{The lower bound on $q$ ensures that the tail threshold $\tau(u)=1-n^{-u}$ is meaningful relative to the vocabulary size, since it enforces $1-n^{-q}\ge 1/|\mathcal W_n|$.} 
For each $(p,q)$, we run $500$ independent Monte Carlo trials. 
At each grid point, we tune the calibration constant over $60$ log-spaced values from $0.01$ to $10$, and report the smallest error. 
For visualization, displayed values are clipped at $1.05$, so that all larger errors are shown as equally unfavorable.

Figure~\ref{fig:discovery-heat-panels7} shows the resulting heatmaps. 
Darker regions correspond to smaller discovery error, while lighter regions correspond to larger error. 
The dashed curves mark the theoretical boundaries $p=\alpha$ and $p+q=1$ from Theorem~\ref{thm:discovery-boundary}. 
Across all $\alpha>0$, the low-error region lies mostly inside the theoretically discoverable wedge $\{(p,q):p<\alpha,\ p+q<1\}$, while the high-error region dominates outside this wedge. 
Overall, the empirical transition bands align well with the predicted discovery boundary.

\section{Open-source Model Experiments}
\label{sec:LLM-experiments}

\subsection{Experiment Setup}

We follow the setup of~\citet{li2025robust} and evaluate discovery performance on open-source LLMs under controlled edits. Specifically, we sample $1000$ documents from the news-like subset of the C4 dataset~\citep{raffel2020exploring}. For each document, we use the last $50$ tokens as the prompt and ask OPT-1.3B~\citep{zhang2022opt} to generate an additional $n=400$ tokens as the continuation.
During watermark generation, each pseudorandom variable is computed from the previous $m=5$ tokens. We apply repeated-context masking, which adds the watermark only when the length-$m$ prefix context has not appeared earlier in the generated history. This technique aims to reduce the frequency of watermarking and improve text quality~\citep{dathathri2024scalable}. After obtaining the watermarked text, we apply the considered edit mechanisms to simulate human editing. We conduct experiments at temperatures $T\in\{0.3,0.5,0.7,1\}$, ranging from low- to high-temperature generation.

\vspace{-1em}
\paragraph{Ground-truth labels and surviving fraction.}
For post-edit text, we define ground-truth labels by comparing the final text with the original pre-edit watermarked token sequence, following~\citet{li2025optimal}.
Specifically, after editing, we decode the edited text, re-tokenize it, and then pad or truncate it to the target length.
For each position in the final post-edit sequence, we examine the block consisting of the current token and its previous $m$ tokens.
We label the current position as watermark-preserving only if this entire $(m+1)$-token block appears contiguously in the original pre-edit sequence.
We denote the resulting ground-truth labels by $\theta^{\mathrm{gt}}_1,\ldots,\theta^{\mathrm{gt}}_n$.
The oracle surviving watermark fraction is then defined as $\varepsilon_{\mathrm{true}}=n^{-1}\sum_{t=1}^n \theta^{\mathrm{gt}}_t$.

\vspace{-1em}
\paragraph{Baseline methods.}
We compare \Algo\ with \texttt{AOL} (\textbf{A}daptive \textbf{O}nline \textbf{L}ocator), the token-level localization method in Algorithm 2 of~\citet{zhao2025efficiently}.
Both methods output token-level decisions indicating whether each position preserves watermark evidence. For \Algo, we report two variants: \Algo{}\texttt{-oracle}, which uses the oracle surviving watermark fraction $\varepsilon_{\mathrm{true}}$, and \Algo{}\texttt{-plugin}, which uses the plug-in fraction estimator from~\citet{li2025optimal}. Since all methods involve calibration constants, such as the constant $C$ in $\lambda_n=C/\log n$ for \Algo, we report the best performance after tuning these constants over fixed grids. Details of the tuning grids are provided in Appendix~\ref{sec:LLM}.

\vspace{-1em}
\paragraph{Evaluation metrics.}
We consider several types of human modifications and evaluate token-level localization under each edit type.  Let $S^\star$ denote the set of true watermark-preserving locations, and let $\widehat S$ denote the set of locations selected by a method. We report two localization metrics: the empirical true-positive rate (TPR), defined as $\widehat{\mathrm{TPR}}=|\widehat S\cap S^\star|/|S^\star|$, and the intersection-over-union (IoU), defined as $\mathrm{IoU}=|\widehat S\cap S^\star|/|\widehat S\cup S^\star|$.  
The former measures the fraction of true watermark-preserving locations recovered, while the latter provides a stricter overlap measure that penalizes both missed locations and extra selected locations.  
We also compute the empirical false-positive rate (FPR) as $\widehat{\mathrm{FPR}}=|\widehat S\cap (S^\star)^c|/|(S^\star)^c|$, which measures the fraction of null locations incorrectly selected as watermark-preserving.  
While our theoretical discovery criterion is formulated in terms of
mFDR, we use the conventional token-level FPR in the empirical
comparison because it provides a common and directly interpretable
operating point across localization methods.
To compare methods under the same false-positive control, we report the largest empirical TPR and IoU each method can achieve subject to a prescribed empirical FPR upper bound.

\subsection{Localization Performance}
\label{sec:robust-evaluation}

Following~\citet{li2025robust}, we evaluate the localization performance of \Algo{} and \texttt{AOL} under three types of text modifications: (i) random edits, (ii) adversarial edits, and (iii) roundtrip translation.
Random edits include substitution, insertion, and deletion.
For a given edit rate, defined as the fraction of pre-edit tokens selected for modification, we randomly select that fraction of pre-edit tokens and either replace them, insert new tokens after them, or delete them.
For substitutions and insertions, the new tokens are sampled uniformly from the vocabulary $\mathcal W$.
Adversarial edits are more targeted: under the same edit-rate budget, they selectively modify pre-edit tokens to remove as much watermark signal as possible.
Roundtrip translation translates the text from English to French and then back to English using another language model.
Random and adversarial edits allow systematic control over the edit level, while roundtrip translation better reflects a practical text transformation.

\begin{figure}[t!]
\centering
\includegraphics[width=\textwidth]{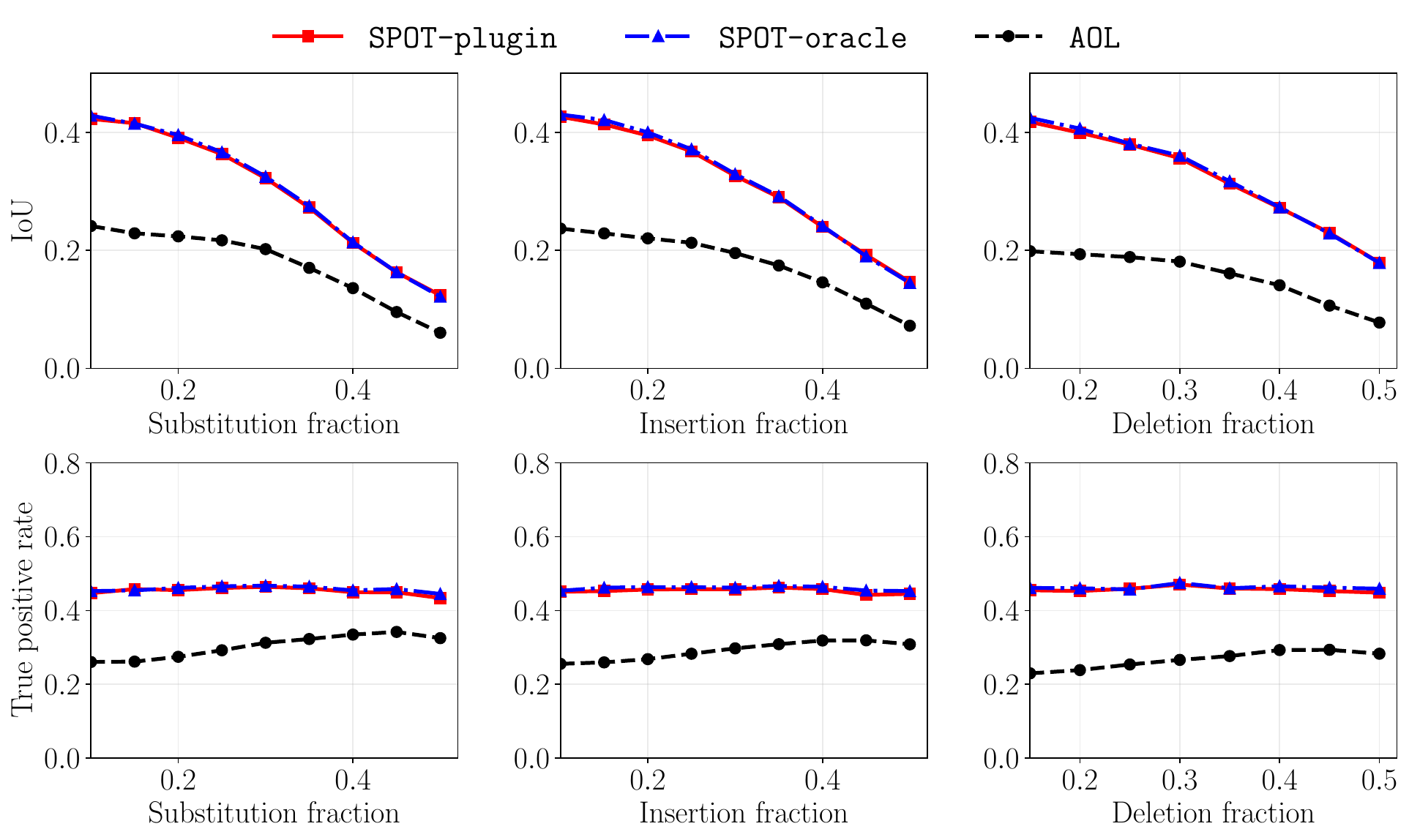}
\vspace{-20pt}
\caption{
Localization performance under three random edit mechanisms at temperature $T=1$, measured by IoU (top row) and TPR (bottom row).
Columns correspond to substitution, insertion, and deletion edits from left to right, with target FPR $0.05$.
Each panel reports the mean performance as a function of the edit fraction.
}
\label{fig:real-llm-random-edits-T1}
\vspace{-10pt}
\end{figure}

\vspace{-1.1em} 
\paragraph{Results for random edits.}

The random-edit results at temperature $T=1$ are shown in Figure~\ref{fig:real-llm-random-edits-T1}, with results for temperature $T=0.7$ reported in Appendix~\ref{sec:LLM}. 
Table~\ref{tab:real-llm-iou-lambda-005} further reports the IoU and TPR averaged over those edit levels $\{0.1, 0.15, 0.2, 0.25, 0.3, 0.35, 0.4\}$ for each temperature and each random edit type. 
As expected, increasing the edit fraction decreases IoU across all random edit types, since more watermark signals are removed or disrupted.
In contrast, the TPR curves of \Algo\ are nearly flat as the edit fraction increases, showing that the method continues to recover a stable fraction of the surviving watermark-preserving locations under the same FPR constraint.
This is consistent with the definition $\widehat{\mathrm{TPR}}=|\widehat S\cap S^\star|/|S^\star|$: even though stronger edits reduce the number of surviving locations, \Algo\ identifies a similar proportion of those that remain.
Thus, the decline in IoU mainly reflects the increasing difficulty of overlap-based localization under heavier edits, while the stable TPR indicates that the tail-thresholding rule remains robust.
We next examine how these patterns vary across methods and generation temperatures.

\begin{enumerate}
\vspace{-0.5em}
\item \Algo\ \textit{performs better at moderate and high temperatures.}
At $T=1$, both \Algo{}\texttt{-oracle} and \Algo{}\texttt{-plugin} substantially outperform \texttt{AOL} across three random editings, in both IoU and TPR.  The advantage is especially clear in the TPR curves in Figure~\ref{fig:real-llm-random-edits-T1}: the TPR of \texttt{AOL} stays noticeably lower, while both variants of \Algo\ recover a much larger fraction of the surviving watermark-preserving locations across the whole range of edit fractions.  At $T=0.7$, the improvement becomes smaller but remains consistent.  As shown in Table~\ref{tab:real-llm-iou-lambda-005}, \Algo{}\texttt{-oracle} remains better than \texttt{AOL} under most random-edit settings, and \Algo{}\texttt{-plugin} remains competitive.  This suggests that \Algo\ is particularly effective when the LLM output is more diverse, where more usable watermark evidence remains available for localization.

\vspace{-0.5em}
\item \Algo\ \textit{remains competitive at low temperatures in its oracle version.}
At lower temperatures, the gap between methods becomes smaller because watermark localization is intrinsically harder.  As shown in Table~\ref{tab:real-llm-iou-lambda-005}, at $T=0.5$, \Algo{}\texttt{-oracle} remains comparable to \texttt{AOL}: it is better under random deletion and close under random substitution and insertion, both in IoU and TPR.  By contrast, \Algo{}\texttt{-plugin} performs worse at this temperature, suggesting that the degradation mainly comes from the reduced accuracy of the surviving-fraction estimator from~\citet{li2025optimal}; we discuss this issue separately later.  Finally, all methods have lower IoU and TPR at low temperatures.  This is consistent with prior empirical observations that low-temperature generation makes LLM outputs more deterministic, leaves fewer usable watermark signals, and thereby makes watermark inference more difficult~\citep{kirchenbauer2023watermark,lu2024entropy,he2026empirical,tsur2026heavywater}.
\end{enumerate}

\begin{table}[t!]
\centering
\small
\setlength{\tabcolsep}{3pt}
\caption{
Average IoU and TPR across edit levels at different generation
temperatures.
For each metric, the tuning parameter is selected separately within
the highest nonempty $0.01$-wide empirical-FPR bin not exceeding
$0.05$.
For random edits, entries are averaged over edit rates
$\{0.10,0.15,0.20,0.25,0.30,0.35,0.40\}$.
For adversarial edits, entries are averaged over edit budgets
$K\in\{10,15,20,30,40\}$.
For roundtrip translation, the edit level is not directly controlled.
}
\label{tab:real-llm-iou-lambda-005}
\resizebox{\textwidth}{!}{
\begin{tabular}{c|c|cc|cc|cc|cc}
\toprule
\multirow{2}{*}{\textbf{Edit Types}}
& \multirow{2}{*}{\textbf{Methods}}
& \multicolumn{4}{c|}{\textbf{IoU}}
& \multicolumn{4}{c}{\textbf{TPR}} \\
\cmidrule(lr){3-6}\cmidrule(lr){7-10}
&
& $\mathbf{T=1}$ & $\mathbf{T=0.7}$ & $\mathbf{T=0.5}$ & $\mathbf{T=0.3}$
& $\mathbf{T=1}$ & $\mathbf{T=0.7}$ & $\mathbf{T=0.5}$ & $\mathbf{T=0.3}$ \\
\midrule

\multirow{3}{*}{Random substitution}
& \texttt{AOL}
& 0.204 & 0.141 & \textbf{0.086} & \textbf{0.039}
& 0.297 & 0.214 & \textbf{0.154} & \textbf{0.090} \\
& \Algo{}\texttt{-plugin}
& 0.344 & 0.153 & 0.063 & 0.020
& 0.458 & 0.229 & 0.128 & 0.067 \\
& \Algo{}\texttt{-oracle}
& \textbf{0.347} & \textbf{0.160} & 0.076 & 0.032
& \textbf{0.461} & \textbf{0.239} & 0.151 & 0.089 \\
\midrule

\multirow{3}{*}{Random insertion}
& \texttt{AOL}
& 0.199 & 0.142 & \textbf{0.089} & \textbf{0.044}
& 0.282 & 0.207 & 0.151 & \textbf{0.092} \\
& \Algo{}\texttt{-plugin}
& 0.349 & 0.158 & 0.069 & 0.021
& 0.457 & 0.229 & 0.131 & 0.065 \\
& \Algo{}\texttt{-oracle}
& \textbf{0.354} & \textbf{0.165} & 0.084 & 0.036
& \textbf{0.461} & \textbf{0.237} & \textbf{0.155} & 0.091 \\
\midrule

\multirow{3}{*}{Random deletion}
& \texttt{AOL}
& 0.176 & 0.132 & 0.079 & \textbf{0.039}
& 0.261 & 0.193 & 0.127 & 0.075 \\
& \Algo{}\texttt{-plugin}
& 0.363 & 0.162 & 0.069 & 0.023
& 0.454 & 0.223 & 0.116 & 0.061 \\
& \Algo{}\texttt{-oracle}
& \textbf{0.370} & \textbf{0.175} & \textbf{0.085} & 0.038
& \textbf{0.462} & \textbf{0.239} & \textbf{0.137} & \textbf{0.078} \\
\midrule

\multirow{3}{*}{Adversarial edits}
& \texttt{AOL}
& 0.238 & \textbf{0.140} & 0.037 & 0.006
& 0.254 & \textbf{0.162} & 0.054 & 0.010 \\
& \Algo{}\texttt{-plugin}
& 0.352 & 0.103 & 0.012 & 0.000
& 0.365 & 0.124 & 0.025 & 0.000 \\
& \Algo{}\texttt{-oracle}
& \textbf{0.376} & 0.134 & \textbf{0.050} & \textbf{0.008}
& \textbf{0.389} & 0.158 & \textbf{0.092} & \textbf{0.026} \\
\midrule

\multirow{3}{*}{Roundtrip translation}
& \texttt{AOL}
& 0.217 & 0.150 & \textbf{0.088} & \textbf{0.031}
& 0.301 & 0.213 & \textbf{0.166} & 0.081 \\
& \Algo{}\texttt{-plugin}
& 0.344 & 0.153 & 0.055 & 0.014
& 0.426 & 0.208 & 0.102 & 0.032 \\
& \Algo{}\texttt{-oracle}
& \textbf{0.351} & \textbf{0.165} & 0.081 & 0.028
& \textbf{0.433} & \textbf{0.226} & 0.162 & \textbf{0.085} \\

\bottomrule
\end{tabular}}
\vspace{-10pt}
\end{table}

\vspace{-1.1em} 
\paragraph{Results for adversarial edits.}
We next consider adversarial edits, where the editor is assumed to have access to the hash function $\AM$ and the secret key $\Key$ and can therefore target tokens carrying the strongest watermark signals.
To approximate this setting, we first compute the pivotal statistics for the LLM-generated response, then identify the top-$K$ tokens with the largest pivotal values, and finally replace them with randomly selected tokens.
Here, $K$ controls the edit budget and reflects the modification strength.
Because these edits directly target the strongest watermark signals, they are more disruptive than random edits.
The results are reported in the fourth row of Table~\ref{tab:real-llm-iou-lambda-005}.

The overall pattern is similar to that observed under random edits.
At $T=1$, both \Algo{}\texttt{-plugin} and \Algo{}\texttt{-oracle} outperform \texttt{AOL} by a clear margin in both metrics.
At $T=0.7$, \Algo{}\texttt{-oracle} and \texttt{AOL} have comparable performance, while \Algo{}\texttt{-plugin} remains competitive.
At lower temperatures, the differences become smaller because the strongest watermark signals are already weak or removed.
Overall, \Algo{}\texttt{-oracle} and \Algo{}\texttt{-plugin} show competitive or better performance than \texttt{AOL} in the adversarial setting.

\vspace{-1.1em} 
\paragraph{Results for roundtrip translation.}
For roundtrip translation, the edit level is not directly controlled, so we compare methods under the same FPR constraints across temperatures.
The IoU and TPR results are reported in the last row of Table~\ref{tab:real-llm-iou-lambda-005}.
The qualitative pattern is broadly consistent with the random and adversarial edit settings.
At $T=1$, both \Algo{}\texttt{-oracle} and \Algo{}\texttt{-plugin} outperform \texttt{AOL} in both metrics.
At moderate and low temperatures, \Algo{}\texttt{-oracle} remains comparable to \texttt{AOL}, while \Algo{}\texttt{-plugin} becomes less stable.
This again suggests that the localization rule itself remains competitive, whereas the plug-in version can be limited by the quality of the surviving-fraction estimate.

\vspace{-1.1em} 
\paragraph{Accuracy of the fraction estimator.}
The preceding experiments show a recurring gap between \Algo{}\texttt{-oracle} and \Algo{}\texttt{-plugin}, especially at lower temperatures and under stronger text modifications.
To better understand this gap, we evaluate the accuracy of the fraction estimator $\widehat{\varepsilon}_n$ from~\citet{li2025optimal}.
Specifically, we report the relative root mean squared error
\begin{equation}
\label{eq:RMSE}
\frac{\mathrm{RMSE}(\widehat{\varepsilon}_n)}{\varepsilon_{\mathrm{true}}}
=
\frac{
\sqrt{\mathbb{E}\left[(\widehat{\varepsilon}_n-\varepsilon_{\mathrm{true}})^2\right]}
}{
\varepsilon_{\mathrm{true}}
}.
\end{equation}
This quantity measures the typical estimation error on the scale of the true surviving fraction $\varepsilon_{\mathrm{true}}$.
For example, values around $0.25$, $0.5$, and $1$ correspond roughly to typical errors of $25\%$, $50\%$, and $100\%$ of the true surviving fraction, respectively.
Thus, values substantially below one indicate that the plug-in estimate is reasonably accurate, whereas values near or above one indicate that fraction estimation can become a practical bottleneck.

Table~\ref{tab:real-llm-relative-rmse} supports this interpretation.
At high temperature, the relative RMSE is small for random and adversarial edits, and \Algo{}\texttt{-plugin} closely tracks \Algo{}\texttt{-oracle}.
As the temperature decreases, the relative RMSE increases sharply, and the gap between the plug-in and oracle versions becomes larger.
The effect is particularly visible at $T=0.5$ and $T=0.3$, where the relative RMSE is often close to or above one.
These results suggest that the low-temperature degradation of \Algo{}\texttt{-plugin} is mainly driven by inaccurate estimation of the surviving watermark fraction, rather than by a failure of the localization rule itself.

\begin{table}[t!]
\centering
\caption{
Relative root mean squared error defined in \eqref{eq:RMSE} of the
Li-OPT fraction estimator at different temperatures.
For random and adversarial edits, the relative RMSE is computed
separately at each edit level and then averaged over the same edit
levels as in Table~\ref{tab:real-llm-iou-lambda-005}.
For roundtrip translation, the edit level is not directly controlled.
}
\label{tab:real-llm-relative-rmse}
\begin{tabular}{c|cccc}
\toprule
\textbf{Edit Types}
& \(\mathbf{T=1}\)
& \(\mathbf{T=0.7}\)
& \(\mathbf{T=0.5}\)
& \(\mathbf{T=0.3}\) \\
\midrule
Random substitution
& 0.263 & 0.529 & 1.370 & 3.988 \\
Random insertion
& 0.247 & 0.477 & 1.167 & 3.420 \\
Random deletion
& 0.268 & 0.516 & 1.450 & 3.603 \\
Adversarial edits
& 0.158 & 0.293 & 0.879 & 2.496 \\
\midrule
Roundtrip translation
& 0.289 & 0.432 & 1.397 & 4.115 \\
\bottomrule
\end{tabular}
\vspace{-10pt}
\end{table}

\begin{table}[!th]
\centering
\caption{
Average total verification time per sample at temperature $T=1$, using the parameters selected to maximize IoU within the prescribed empirical-FPR.
The total time is the sum of pivot recomputation time and method-inference time.
}
\label{tab:real-llm-runtime-T1}
\begin{tabular}{c|ccc}
\toprule
\textbf{Edit case}
& \texttt{AOL}
& \Algo{}\texttt{-plugin}
& \Algo{}\texttt{-oracle} \\
\midrule
Random substitution   & 0.232 & 0.221 & 0.216 \\
Random insertion      & 0.229 & 0.218 & 0.213 \\
Random deletion       & 0.117 & 0.114 & 0.108 \\
Adversarial edits     & 0.229 & 0.219 & 0.213 \\
Roundtrip translation & 0.117 & 0.115 & 0.108 \\
\midrule
\textbf{Average}
& 0.185
& 0.177
& 0.172 \\
\bottomrule
\end{tabular}
\vspace{-5pt}
\end{table}

\paragraph{Computational cost.}
Table~\ref{tab:real-llm-runtime-T1} reports the wall-clock verification time at $T=1$. The total cost consists of two components: recomputing the token-level pivotal statistics from the edited text and running the localization method once the pivots are available. Pivot recomputation is shared by all methods and dominates the overall runtime. Conditional on the computed pivots, \Algo{}\texttt{-plugin} takes only 0.005--0.008 seconds per sample, compared with 0.009--0.016 seconds for \texttt{AOL}, showing that its improved localization performance incurs no additional computational overhead.

%% file: 6-discuss.tex
\section{Discussion}
\label{sec:discuss}

This paper studies watermark localization in mixed-source LLM text from a statistical perspective. We formulate localization as a token-level multiple testing problem based on pivotal statistics, where each position has a latent indicator recording whether the watermark dependence survives editing. Under a regime that captures sparse surviving signals, concentrated next-token distributions, and growing vocabularies, we characterize the statistical limits of three inference goals: global detection, discovery, and classification. Our results show that discovery is strictly harder than global detection, and that consistent classification is impossible within the class of coordinatewise pivot-based localization rules. We then propose \Algo{}, an adaptive tail-thresholding method that does not require knowledge of the asymptotic exponents or the time-varying
next-token distributions. The method achieves the optimal discovery boundary and attains near-optimal discovery power among natural local rules. Simulations support the theoretical phase transitions, while
real-LLM experiments show the localization performance
of \Algo{} under common edit mechanisms.

Several directions remain open. First, our current implementation uses the fraction estimator of~\citet{li2025optimal} to estimate the surviving watermark fraction. Although our theory allows any estimator satisfying a suitable accuracy condition, the experiments show that this step can become a practical bottleneck, especially at low temperatures where next-token distributions are more concentrated. This raises the question of whether localization can be performed without a separate fraction-estimation step. More broadly, the current procedure is batch-based: it uses a global fraction estimate for the entire text before choosing the localization threshold. An interesting direction is to develop online or streaming localization methods, where text arrives sequentially and the threshold is updated using accumulating evidence. Such methods may better adapt to documents whose source composition changes over time and reduce the need for a fixed global estimate of the surviving watermark fraction.

Second, our impossibility result shows that consistent classification is impossible within the class of coordinatewise localization rules. This leaves open the possibility of stronger recovery guarantees under additional structural assumptions or for localization procedures that exploit information across multiple token positions. Our model allows the latent survival indicators to switch frequently between watermark-preserving and null states, making exact token-level recovery too demanding. In practice, however, mixed-source text may have more block-like structure~\citep{li2024segmenting}: LLM-generated passages, human-written passages, and heavily revised segments may each persist over multiple consecutive tokens. Under such segment-level regularity, or when the target is region-level rather than exact token-level recovery, stronger forms of localization may become possible.

Third, our analysis focuses on token-level watermark localization. For Gumbel-max and other token-level watermarking schemes, the extension is relatively direct whenever valid pivotal statistics can be constructed for individual token positions. Semantic or sentence-level watermarks are different \citep{hou2024semstamp,ren2023robust,huo2026pmark}: they may encode watermark information through sentence meanings, paraphrase-invariant features, or vector representations of larger text units rather than token-level dependence on pseudorandomness. In such settings, the localization unit may be a sentence, span, or semantic embedding, and the null-versus-signal formulation must be redefined. Extending localization theory to these non-token-level watermarks is an important open direction.

%% file: 7-appendix.tex
\newpage
\appendix

\begingroup
\makeatletter
\let\oldpartname\partname
\let\oldthepart\thepart
\renewcommand{\partname}{}
\renewcommand{\thepart}{}
\part{Supplementary Material}
\let\partname\oldpartname
\let\thepart\oldthepart
\makeatother
\endgroup

\parttoc

\section{An Example of a Generation--Editing Process}
\label{sec:Markov}

This section gives an example of a joint generation--editing process satisfying Assumptions~\ref{asmp:main} and~\ref{asmp:alphamixing}. 
The example illustrates how dependence between generation and editing can be modeled through a finite-memory state summary and exogenous editing innovations.

\begin{defn}[Coupled generation--editing process]
\label{def:coupled_mechanism}
We introduce the following notations.
\begin{itemize}
    \item For $t\ge0$, define $\mathcal F_t:=\sigma(\{w_j,\zeta_j,\bP_{j+1}\}_{j=1}^{t})$ as the generation $\sigma$-field, with $\mathcal F_0$ trivial. 
Let $\varphi$ be an injective measurable map into a state space $\mathsf X$, with a measurable inverse on its image, and define $X_t:=\varphi(w_t,\zeta_t,\bP_{t+1}) \in \mathsf X$ for $t\ge1$. 
Then $\mathcal F_t=\sigma(X_1,\ldots,X_t)$.

\item Let $\{U_t\}_{t\ge1}$ be i.i.d.\ random variables, independent of $(X_0,\theta_0)$, where $\theta_0\in\{0,1\}$. 
Set $\mathcal H_0:=\sigma(\theta_0)$ and define $\theta_t=\Phi_n(\theta_{t-1},X_{t-1},U_t)$ for $t\ge1$, where $\Phi_n:\{0,1\}\times\mathsf X\times\mathsf U\to\{0,1\}$ is measurable and may depend on $n$. 
Let $\mathcal H_t:=\sigma(\theta_1,\ldots,\theta_t)$

\item Define $\mathcal G_0:=\mathcal F_0\vee\mathcal H_0$, $\mathcal G_t:=\mathcal F_{t-1}\vee\mathcal H_t$ for $t\ge1$, and $\bar{\mathcal G}_t:=\mathcal F_t\vee\mathcal H_t$. 
Assume $U_t\perp\!\!\!\perp\mathcal G_{t-1}$ for all $t\ge1$. 
For $k\ge1$, define $\bar{\mathcal G}_{t+k}^{+}:=\sigma(X_s,\theta_s:s\ge t+k)$.

\item Finally, suppose that for each $n$ there exists a time-homogeneous Markov kernel $K_n$ on $\mathsf X$ such that, for all $t\ge1$ and $A\in\mathcal B(\mathsf X)$,
\[
\mathbb P(X_t\in A\mid \mathcal F_{t-1}\vee\sigma(\theta_t))
=
K_n(X_{t-1},\theta_t;A).
\]
Thus, conditional on $(X_{t-1},\theta_t)$, the next generation state $X_t$ is independent of earlier history. 
Define $S_t:=(X_t,\theta_t)$ and $\mathcal S_t:=\sigma(S_0,\ldots,S_t)$.
\end{itemize}
\end{defn}

\begin{lem}[Joint Markov property]
\label{lem:S_markov_theta_not}
For each fixed $n$, $\{S_t\}_{t\ge0}$ is a time-homogeneous Markov chain with respect to $\{\mathcal S_t\}$. 
The marginal process $\{\theta_t\}_{t\ge0}$ need not be Markov unless $\mathbb P(\theta_t=1\mid X_{t-1},\theta_{t-1})$ is almost surely a function of $\theta_{t-1}$ alone.
\end{lem}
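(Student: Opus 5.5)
To prove Lemma~\ref{lem:S_markov_theta_not}, I will compute the one-step conditional law of $S_t=(X_t,\theta_t)$ given $\mathcal S_{t-1}$. The aim is to show that this law equals $Q_n(S_{t-1};\cdot)$ for a kernel $Q_n$ that does not depend on $t$. The computation factors the transition in two stages: first $\theta_t$, then $X_t$ given $\theta_t$.

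For Borel sets $A\subseteq\mathsf X$ and $b\in\{0,1\}$, I condition first on the larger $\sigma$-field $\mathcal G_{t-1}\vee\sigma(X_{t-1})=\mathcal F_{t-1}\vee\mathcal H_{t-1}$, which contains $\mathcal S_{t-1}$.

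\textbf{Step 1 (the $\theta$-update).} Since $\theta_t=\Phi_n(\theta_{t-1},X_{t-1},U_t)$, the independence of $U_t$ from the past gives
\[
\mathbb P(\theta_t=b\mid \mathcal F_{t-1}\vee\mathcal H_{t-1})=g_n(\theta_{t-1},X_{t-1};b),
\qquad
g_n(\vartheta,x;b):=\mathbb P(\Phi_n(\vartheta,x,U_1)=b).
\]
This is the freezing lemma for independent variables. It needs $U_t\perp\!\!\!\perp \mathcal F_{t-1}\vee\mathcal H_{t-1}$. The definition only states $U_t\perp\!\!\!\perp\mathcal G_{t-1}=\mathcal F_{t-2}\vee\mathcal H_{t-1}$, which lacks $X_{t-1}$. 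I expect this to be the main subtlety. I will argue that the intended reading, since $\theta_t$ is generated after $X_{t-1}$ is revealed, is $U_t\perp\!\!\!\perp\mathcal F_{t-1}\vee\mathcal H_{t-1}$, and I will state this interpretation explicitly. The index conventions in the definition are otherwise off by one.

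\textbf{Step 2 (the $X$-update).} By the kernel assumption, $\mathbb P(X_t\in A\mid\mathcal F_{t-1}\vee\sigma(\theta_t))=K_n(X_{t-1},\theta_t;A)$. I need the same identity when the conditioning is enlarged to $\mathcal F_{t-1}\vee\mathcal H_t$. I will read the assumption as conditioning on the full history $\mathcal F_{t-1}\vee\mathcal H_t$, which is natural because the kernel depends only on $(X_{t-1},\theta_t)$. Alternatively, I will note that conditionally on $\mathcal F_{t-1}$ and $\theta_t$, the earlier labels $\theta_{1:t-1}$ carry no further information about $X_t$.

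\textbf{Combining the steps.} By the tower property,
\[
\mathbb P(X_t\in A,\theta_t=b\mid \mathcal F_{t-1}\vee\mathcal H_{t-1})
=g_n(\theta_{t-1},X_{t-1};b)\,K_n(X_{t-1},b;A)=:Q_n(S_{t-1};A\times\{b\}).
\]
The right side is $\sigma(S_{t-1})$-measurable. Conditioning down to $\mathcal S_{t-1}$ therefore leaves it unchanged, and $Q_n$ does not depend on $t$. This gives time-homogeneity. A monotone class argument extends the identity from rectangles to all measurable sets.

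\textbf{Non-Markov part.} I will show the converse direction by example. Since $\mathbb P(\theta_t=1\mid\mathcal S_{t-1})=g_n(\theta_{t-1},X_{t-1};1)$, projecting onto $\sigma(\theta_0,\dots,\theta_{t-1})$ gives $\mathbb E[g_n(\theta_{t-1},X_{t-1};1)\mid\theta_{0:t-1}]$. The law of $X_{t-1}$ given the $\theta$-path generally depends on $\theta_{t-2},\theta_{t-3},\dots$ through $K_n$. The marginal chain is therefore Markov if $g_n$ is a function of $\theta_{t-1}$ alone, but not otherwise.

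For a concrete counterexample, let $X$ take values in $\{0,1\}$ with $X_t=\theta_t$, so $K_n$ is deterministic. Let $\theta_t=X_{t-1}$ XOR an independent coin. Then $\theta_t$ depends on $\theta_{t-1}$, and a small two-step calculation shows it fails the Markov property. A better choice is a lagged example, $X_t=\theta_t$ with $\theta_{t+1}$ depending on $X_{t-1}$ stored in the state: take $\mathsf X=\{0,1\}^2$ with $X_t=(\theta_t,\theta_{t-1})$. Then $\mathbb P(\theta_t=1\mid\theta_{0:t-1})$ depends on $\theta_{t-2}$, so the marginal $\theta$-process is not Markov.
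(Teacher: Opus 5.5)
Your proof of the Markov property follows the paper's argument: the $\theta$-update via the fresh innovation, then the $X$-update via the kernel, combined by the tower property. It is written out more carefully than the paper's, and it is correct under the readings you adopt. Both strengthenings you flag are genuinely needed, and the paper uses them tacitly. Its proof says ``independent of earlier history'', and the proof of Theorem~\ref{thm:relaxation_implies_A31_A32} explicitly invokes $U_t\perp\!\!\!\perp\mathcal F_{t-1}\vee\mathcal H_{t-1}$. You should also put $X_0,\theta_0$ into the filtrations, so that $\mathcal S_{t-1}\subseteq\mathcal F_{t-1}\vee\mathcal H_{t-1}$ actually holds.

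Drop your fallback in Step~2, though. The claim that, given $\mathcal F_{t-1}$ and $\theta_t$, the earlier labels carry no information about $X_t$ is exactly the missing hypothesis, and it does not follow from the literal one. For example, let $\theta_t=U_t$ be i.i.d.\ fair bits and $X_t=\theta_{t-1}\oplus\theta_{t-2}$, with independent fair initial bits. Since $\theta_{t-1}$ is independent of $\mathcal F_{t-1}\vee\sigma(\theta_t)$, the stated kernel condition holds with $K_n\equiv\mathrm{Ber}(1/2)$. Yet $\mathbb P(X_t=1\mid\mathcal S_{t-1})=\mathbf 1\{\theta_{t-1}\neq\theta_{t-2}\}$ is not $\sigma(S_{t-1})$-measurable, because $S_{t-1}=(\theta_{t-2}\oplus\theta_{t-3},\theta_{t-1})$ leaves $\theta_{t-2}$ a fair bit.

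The non-Markov part contains two false claims. First, the XOR example cannot work. With $X_{t-1}=\theta_{t-1}$ the update is $\theta_t=\theta_{t-1}\oplus U_t$, which is a Markov chain, so no two-step calculation will show otherwise. Second, ``Markov if $g_n$ is a function of $\theta_{t-1}$ alone, but not otherwise'' is wrong in the ``not otherwise'' direction. Take $X_t$ i.i.d.\ fair bits (constant kernel) and $\theta_t=X_{t-1}$. The labels are then i.i.d., hence Markov, although $\mathbb P(\theta_t=1\mid X_{t-1},\theta_{t-1})=X_{t-1}$ is not a function of $\theta_{t-1}$. The lemma only claims sufficiency, which your projection argument does prove; the paper's proof leaves that direction implicit.

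Your lagged example is the right one and matches the paper's up to the order of coordinates, but it must be made explicit. Take $K_n((a,b),c;\cdot)=\delta_{(c,a)}$ and $\Phi_n(i,(a,b),u)=b$, with $X_0=(\theta_0,B)$ for independent fair bits $\theta_0,B$. Then $X_t=(\theta_t,\theta_{t-1})$ and $\theta_t=\theta_{t-2}$. Hence $\mathbb P(\theta_t=1\mid\theta_{t-1}=0,\theta_{t-2}=1)=1$ while $\mathbb P(\theta_t=1\mid\theta_{t-1}=0,\theta_{t-2}=0)=0$, and both conditioning events have probability $1/4$.
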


\begin{proof}[Proof of Lemma~\ref{lem:S_markov_theta_not}]
By construction, $\theta_t=\Phi_n(\theta_{t-1},X_{t-1},U_t)$, and the fresh-innovation condition implies that the conditional law of $\theta_t$ given $\mathcal S_{t-1}$ depends only on $(X_{t-1},\theta_{t-1})$. 
Given $(X_{t-1},\theta_t)$, the Markov-kernel assumption gives the conditional law of $X_t$ as $K_n(X_{t-1},\theta_t;\cdot)$, independent of earlier history. 
Thus the conditional law of $S_t=(X_t,\theta_t)$ given $\mathcal S_{t-1}$ depends only on $S_{t-1}$.

To see that $\{\theta_t\}$ need not be Markov, take $\mathsf X=\{0,1\}^2$, let $U_t$ be unused, set $\Phi_n(i,(a,b),u):=a$, and let $K_n((a,b),a;\cdot)=\delta_{(b,a)}(\cdot)$. 
Then $X_t=(\theta_{t-1},\theta_t)$ and $\theta_{t+1}=\theta_{t-1}$. 
Hence $\mathbb P(\theta_{t+1}=1\mid\theta_t=0,\theta_{t-1}=1)=1$, while $\mathbb P(\theta_{t+1}=1\mid\theta_t=0,\theta_{t-1}=0)=0$, so the law of $\theta_{t+1}$ given $\theta_t$ depends on $\theta_{t-1}$.
\end{proof}

\begin{thm}[Verification of the main assumptions]
\label{thm:relaxation_implies_A31_A32}
Assume the following two conditions.

\begin{enumerate}[label=(\roman*),leftmargin=2.5em]
\item There exist constants $0<c\le C<\infty$ and a sequence $\varepsilon_n\in(0,1)$ such that $c\varepsilon_n\le\mathbb P(\theta_t=1\mid X_{t-1},\theta_{t-1})\le C\varepsilon_n$ almost surely for all $t\ge1$. Equivalently, with $p_n(i,x):=\mathbb P(\Phi_n(i,x,U_t)=1)$, we have $p_n(i,x)\in[c\varepsilon_n,C\varepsilon_n]$ for all $i\in\{0,1\}$ and $x\in\mathsf X$.

\item The Markov chain $\{S_t\}_{t\ge0}$ satisfies a Doeblin condition uniformly in $n$: there exist $\eta\in(0,1)$ and a probability measure $\nu$ on $\mathsf X\times\{0,1\}$ such that $\mathbb P(S_{t+1}\in A\mid S_t=s)\ge\eta\nu(A)$ for all measurable $A$ and all states $s$.
\end{enumerate}

Then $c\varepsilon_n\le\mathbb P(\theta_t=1\mid\mathcal F_{t-1})\le C\varepsilon_n$ almost surely for all $t\ge1$ and
\[
\alpha_{\bar{\mathcal G}}(k)
:=
\sup_{t\ge1}\alpha(\bar{\mathcal G}_t,\bar{\mathcal G}_{t+k}^{+})
\le 2 \cdot (1-\eta)^k,
\qquad k\ge1.
\]
\end{thm}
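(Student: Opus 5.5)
The proof splits into two parts: the conditional survival-probability bounds, and the mixing bound.

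\emph{Survival-probability bounds.} We use the structure $\theta_t=\Phi_n(\theta_{t-1},X_{t-1},U_t)$ with $U_t\perp\!\!\!\perp\mathcal G_{t-1}$. Since $\mathcal F_{t-1}\vee\sigma(\theta_{t-1})\subseteq\mathcal G_{t-1}\vee\sigma(X_{t-1})$, and $X_{t-1}$ is itself $\mathcal F_{t-1}$-measurable, the fresh innovation $U_t$ is independent of $(\theta_{t-1},\mathcal F_{t-1})$. This requires checking that $U_t$ is independent of $\mathcal F_{t-1}\vee\mathcal H_{t-1}=\bar{\mathcal G}_{t-1}$. I would read this into the fresh-innovation hypothesis by interpreting $\mathcal G_{t-1}$ as containing $X_{t-1}$, or justify it through the Markov-kernel construction, where $X_{t-1}$ is generated before $U_t$.

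Given that independence, the tower property yields
\[
\mathbb P(\theta_t=1\mid\mathcal F_{t-1})=\mathbb E\bigl[p_n(\theta_{t-1},X_{t-1})\mid\mathcal F_{t-1}\bigr].
\]
Condition (i) gives $p_n(i,x)\in[c\varepsilon_n,C\varepsilon_n]$ pointwise, so averaging preserves the bounds $[c\varepsilon_n,C\varepsilon_n]$ almost surely.

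\emph{Mixing bound.} The plan is to reduce $\alpha(\bar{\mathcal G}_t,\bar{\mathcal G}^+_{t+k})$ to a total-variation contraction of the chain $S$.

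\begin{enumerate}
\item First identify the $\sigma$-fields. We have $\bar{\mathcal G}_t=\mathcal F_t\vee\mathcal H_t\subseteq\mathcal S_t=\sigma(S_0,\dots,S_t)$, up to the initial $\sigma(X_0,\theta_0)$, which only enlarges the past. We also have $\bar{\mathcal G}^+_{t+k}=\sigma(S_s:s\ge t+k)$. Since enlarging either $\sigma$-field can only increase $\alpha$, it suffices to bound $\alpha(\mathcal S_t,\sigma(S_s:s\ge t+k))$.

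\item By Lemma~\ref{lem:S_markov_theta_not}, $S$ is a time-homogeneous Markov chain with kernel $Q_n$. For $A\in\mathcal S_t$ and $B$ in the future $\sigma$-field, the Markov property gives $\mathbb P(B\mid\mathcal S_{t+k})=h(S_{t+k})$ for a measurable $h$ with $0\le h\le1$. Hence
\[
|\mathbb P(A\cap B)-\mathbb P(A)\mathbb P(B)|=\bigl|\mathbb E\bigl[\mathbf 1_A\bigl(Q_n^kh(S_t)-\mathbb E\,Q_n^kh(S_t)\bigr)\bigr]\bigr|.
\]
This is at most $\sup_{s,s'}|Q_n^kh(s)-Q_n^kh(s')|\le\sup_{s,s'}\|Q_n^k(s,\cdot)-Q_n^k(s',\cdot)\|_{\mathrm{TV}}$.

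\item Condition (ii), the Doeblin minorization $Q_n(s,\cdot)\ge\eta\nu$, gives the standard coupling contraction: the Dobrushin coefficient of $Q_n$ is at most $1-\eta$. Submultiplicativity then gives $\sup_{s,s'}\|Q_n^k(s,\cdot)-Q_n^k(s',\cdot)\|_{\mathrm{TV}}\le(1-\eta)^k$. I would use the Nummelin splitting or the direct coupling: at each step, with probability $\eta$, both copies draw from $\nu$ and coalesce.

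\item Depending on the TV convention (whether the supremum over sets or half the $L^1$ norm), a factor of $2$ may appear. This yields $\alpha\le2(1-\eta)^k$ uniformly in $t$ and $n$, since $\eta$ does not depend on $n$.
\end{enumerate}

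The main obstacle is measure-theoretic bookkeeping, specifically two points. The first is confirming that $\bar{\mathcal G}^+_{t+k}$ is generated by $\{S_s\}_{s\ge t+k}$: this uses the fact that $\varphi$ is injective with a measurable inverse, so each $X_s$ recovers $(w_s,\zeta_s,\bP_{s+1})$. The second is justifying the independence of $U_t$ from $\theta_{t-1}$ jointly with $\mathcal F_{t-1}$, which the first part needs. The mixing contraction itself is standard.
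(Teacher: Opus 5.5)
Your proposal is correct and follows the paper's proof: the tower property with $U_t$ independent of $\mathcal F_{t-1}\vee\mathcal H_{t-1}$ gives the survival bounds, and the Markov property of $S_t$, the Doeblin total-variation contraction, and monotonicity of $\alpha$ under $\bar{\mathcal G}_t\subseteq\mathcal S_t$ and $\bar{\mathcal G}^+_{t+k}\subseteq\sigma(S_{t+k},S_{t+k+1},\ldots)$ give the mixing bound. The paper simply asserts that independence, i.e., it reads the fresh-innovation hypothesis as in your first option (your fallback via the kernel condition would not supply it, since that condition only fixes the conditional law of $X_{t-1}$ given the past, not its joint law with $U_t$); and the paper's factor $2$ comes from using the $L^1$ convention for total variation, so your $(1-\eta)^k$ is a slight sharpening.
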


\begin{proof}[Proof of Theorem~\ref{thm:relaxation_implies_A31_A32}]
For the first claim, by the tower property,
\[
\mathbb P(\theta_t=1\mid\mathcal F_{t-1})
=
\mathbb E\!\left[
\mathbb P(\theta_t=1\mid\mathcal F_{t-1},X_{t-1},\theta_{t-1})
\mid\mathcal F_{t-1}
\right].
\]
Since $\theta_t$ is a function of $(X_{t-1},\theta_{t-1},U_t)$ and $U_t$ is independent of $\mathcal F_{t-1}\vee\mathcal H_{t-1}$, the inner term equals $\mathbb P(\theta_t=1\mid X_{t-1},\theta_{t-1})$, which lies in $[c\varepsilon_n,C\varepsilon_n]$ by condition (i). 
The same bounds therefore hold after conditioning on $\mathcal F_{t-1}$.

For the mixing claim, let $P$ be the one-step transition kernel of $\{S_t\}$. 
The Doeblin condition implies the total-variation contraction
\begin{equation}
\label{eq:Doeblin_contraction}
\sup_{s,s'}\|P^k(s,\cdot)-P^k(s',\cdot)\|_{\mathrm{TV}}
\le
2(1-\eta)^k,
\qquad k\ge1.
\end{equation}
We first use this contraction to control the mixing of the joint state filtration. 
Fix $t\ge0$ and $k\ge1$. 
For any $A\in\mathcal S_t=\sigma(S_0,\ldots,S_t)$ and $B\in\sigma(S_{t+k},S_{t+k+1},\ldots)$, the Markov property gives
$\mathbb P(B\mid\mathcal S_t)=\mathbb P(B\mid S_t)$. 
Writing $\mu_t:=\mathcal L(S_t)$, we have $\mathbb P(B)=\int \mathbb P(B\mid S_t=s)\,\mu_t(\rd s).$
Therefore,
\[
\begin{aligned}
\big|\mathbb P(A\cap B)-\mathbb P(A)\mathbb P(B)\big|
&\le
\sup_s \big|\mathbb P(B\mid S_t=s)-\mathbb P(B)\big| \\
&\le
\sup_{s,s'}\big|\mathbb P(B\mid S_t=s)-\mathbb P(B\mid S_t=s')\big| \\
&\le
2(1-\eta)^k,
\end{aligned}
\]
where the last inequality follows from~\eqref{eq:Doeblin_contraction}. 
Taking the supremum over $A$ and $B$ yields $\alpha_{\mathcal S}(k)\le2(1-\eta)^k$.
Since $\bar{\mathcal G}_t\subseteq\mathcal S_t$ and $\bar{\mathcal G}_{t+k}^{+}\subseteq\sigma(S_{t+k},S_{t+k+1},\ldots)$, monotonicity of strong-mixing coefficients gives $\alpha_{\bar{\mathcal G}}(k)\le2(1-\eta)^k$. 
This proves the result.
\end{proof}

\begin{rem}[Exact vs.\ band survival probabilities]
\label{rem:exact_vs_band}
The band condition in Assumption~\ref{asmp:main}, namely $c\varepsilon_n\le\mathbb P(\theta_t=1\mid\mathcal F_{t-1})\le C\varepsilon_n$ almost surely, allows the survival probability to depend on the generated history through $\mathcal F_{t-1}$. 
This is more flexible than the exact exogeneity condition $\mathbb P(\theta_t=1\mid\mathcal F_{t-1})=\varepsilon_n$, which corresponds to the special case $c=C=1$. 
For the phase-transition results, replacing $\varepsilon_n$ by constant multiples only changes constants in thresholds and does not affect the exponents. 
Thus the detection, discovery, and classification boundaries are unchanged under the exact and band variants.
\end{rem}

\begin{rem}[Run-length interpretation of Doeblin]
\label{rem:doeblin-runlength}
The Doeblin condition can be interpreted as a uniform reset or forgetting property. 
More generally, suppose $S_t$ satisfies an $m$-step Doeblin minorization: there exist $m\in\mathbb N$, $\eta\in(0,1]$, and a probability measure $\nu$ such that $P^m(s,\cdot)\ge\eta\nu(\cdot)$ for all states $s$. 
Then, regardless of the current context and edit status, within $m$ steps the joint process has probability at least $\eta$ of behaving as if it were drawn from $\nu$.

This condition also controls the length of consecutive surviving-watermark tokens. 
Let $A_0:=\{(x,0):x\in\mathsf X\}$ and $A_1:=\{(x,1):x\in\mathsf X\}$. 
Starting from $S_{t_0}\in A_1$, define $L:=\inf\{\ell\ge1:S_{t_0+\ell}\in A_0\}$. 
If $\nu(A_0)=:p_0>0$, then $\mathbb P_s(S_m\in A_0)\ge\eta p_0=:\delta>0$ for every state $s$. 
Applying the Markov property at times $m,2m,\ldots$ gives $\mathbb P(L>km)\le(1-\delta)^k$ for all $k\ge0$, and hence $\mathbb P(L>\ell)\le(1-\delta)^{\lfloor\ell/m\rfloor}\le\exp(-\delta\ell/m)$ for all $\ell\ge0$. 
Thus $L$ has exponential tails and finite moments depending only on $(m,\eta,p_0)$.

This does not mean that the longest observed run over $n$ positions is uniformly bounded. 
With exponential tails, the longest run among $n$ positions is typically of order $\log n$. 
Thus the Doeblin condition permits occasional longer bursts, but rules out heavy-tailed or polynomially long persistent regimes.
\end{rem}

\input{8-proof}

\section{Additional Details for the Simulation Studies}
\label{sec:additional-simulation}

This section provides additional details for the simulation studies in Section~\ref{sec:simulation}. 
We describe how the NTP distributions, survival indicators, and pivotal statistics are generated, and explain why the resulting process matches the mixture model used in our theory.

\paragraph{Generation of the NTP process.}
For each parameter tuple $(n,p,q,\alpha)$, we set $\varepsilon_n=0.5n^{-p}$ and $\Delta_n=n^{-q}$, as in the main text.
We consider $\alpha\in\{0,0.25,0.5,0.75\}$ and set $|\mathcal W_n|=2+\lfloor n^\alpha\rfloor$, with one dominant token, one core token, and $\lfloor n^\alpha\rfloor$ light tokens.

To introduce mild temporal variation in the NTP distribution, we use a latent regime process $Z_t\in\{0,1\}$.
Specifically, $Z_t$ is a two-state time-homogeneous Markov chain initialized at stationarity, with $\mathbb P(Z_1=1)=1/2$ and $\mathbb P(Z_t=Z_{t-1}\mid Z_{t-1})=\rho_Z$, where $\rho_Z=0.95$.
Thus, each regime, corresponding to a constant value of $Z_t$, tends to persist for many consecutive positions.

For each regime $z\in\{0,1\}$, we predefine an NTP distribution $\bP_n^{(z)}$ on this decomposition
\[
\mathcal W_n=\{w_{n,z}^\star\}\cup C_{n,z}\cup L_{n,z},
\qquad
|C_{n,z}|=1,
\qquad
|L_{n,z}|=\lfloor n^\alpha\rfloor .
\]
This corresponds to the case $r=0$ in Assumption~\ref{asmp:HCL+}.
The probability mass is assigned as
\[
P_n^{(z)}(w_{n,z}^\star)
=
1-\Delta_{n,z}^{\mathrm{core}}-\Delta_{n,z}^{\mathrm{light}},
\]
\[
P_n^{(z)}(w)
=
\Delta_{n,z}^{\mathrm{core}}
\quad (w\in C_{n,z}),
\qquad
P_n^{(z)}(w)
=
\frac{\Delta_{n,z}^{\mathrm{light}}}{|L_{n,z}|}
\quad (w\in L_{n,z}).
\]
Here $\Delta_{n,z}^{\mathrm{core}}$ and $\Delta_{n,z}^{\mathrm{light}}$ are regime-dependent constant-factor perturbations of the dominant--core--light scales, followed by clipping and renormalization so that $\Delta_{n,z}^{\mathrm{core}}+\Delta_{n,z}^{\mathrm{light}}\le \Delta_n$ and $\Delta_{n,z}^{\mathrm{core}}+\Delta_{n,z}^{\mathrm{light}}\asymp \Delta_n$.
Therefore, the two regimes may differ in both the token decomposition and the constant factors in the probability masses, while preserving the same asymptotic exponents.
We then set $\bP_t=\bP_n^{(Z_t)}$.
This construction gives a temporally dependent and heterogeneous NTP process that remains within the dominant--core--light regime assumed in the theory.

\paragraph{Generation of the survival indicators.}
Independently of $\{Z_t\}_{t=1}^n$, we generate the survival indicators $\theta_t\in\{0,1\}$ from a two-state Markov chain.
The indicator $\theta_t=1$ means that the watermark signal survives at position $t$, while $\theta_t=0$ means that the position follows the null law.
The chain is initialized at stationarity with $\mathbb P(\theta_t=1)=\varepsilon_n$.
In the implementation, its transition matrix is
\[
Q_{\theta,n}
=
\begin{pmatrix}
1-a_n & a_n\\
b_n & 1-b_n
\end{pmatrix},
\qquad
a_n=0.6\varepsilon_n,
\qquad
b_n=0.6(1-\varepsilon_n).
\]
Since $a_n/(a_n+b_n)=\varepsilon_n$, the stationary fraction of watermark-preserving positions is exactly $\varepsilon_n$.
Moreover, $b_n\asymp1$, so the surviving positions tend to form short bursts rather than long contiguous blocks.
This captures the mixed-source structure induced by local edits while keeping the survival process independent of the NTP driver.

\begin{rem}
This construction is a special case of the Markov framework in Appendix~\ref{sec:Markov}.
To see the connection, one may take the latent state to be $X_t=Z_t$, since $\bP_t$ is a deterministic function of $Z_t$.
The survival process is exogenous because it is generated independently of the NTP process, with stationary mean $\varepsilon_n$.
\end{rem}

\paragraph{Generation of the pivotal statistics.}
Given $(\bP_t,\theta_t)$, we generate the pivotal statistic $Y_t$ according to the mixture model in~\eqref{eq:mixture-Y}.
If $\theta_t=0$, we sample $Y_t\sim\mu_0$, independently conditional on the latent process.
For the Gumbel-max watermark, $\mu_0$ is the uniform distribution on $[0,1]$.

If $\theta_t=1$, we sample from the watermark-induced alternative law $\mu_{1,\bP_t}$.
Instead of explicitly generating the full vocabulary-sized pseudorandom vector, it suffices to sample the scalar probability level of the selected token.
Specifically, we draw $w_t\sim\bP_t = (P_{t,w})_{w \in \Voca}$ and an independent $U_t\sim\mathrm{Unif}(0,1)$, and set $Y_t=U_t^{P_{t, w_t}}$.
This construction has exactly the desired alternative pivot distribution.
Indeed, for any $r\in[0,1]$,
\[
\mathbb P(Y_t\le r\mid \bP_t,\theta_t=1)
=
\sum_{w\in\mathcal W_n}
P_{t, w}\,
\mathbb P\!\left(U_t^{P_{t, w}}\le r\right)
=
\sum_{w\in\mathcal W_n}
P_{t, w} r^{1/P_{t, w}}
=
F_{1,\bP_t}(r),
\]
which is the alternative pivot law in Lemma~\ref{lem:alt-pivot-density}.
Equivalently, the conditional density is $f_{1,\bP_t}$ in~\eqref{eq:alt-pivot-density}.
Therefore, conditional on $(\bP_t,\theta_t)$, we have $Y_t\sim(1-\theta_t)\mu_0+\theta_t\mu_{1,\bP_t}$, which matches the mixture specification used in the theory.
For pure-null simulations, we set $\theta_t\equiv0$, so that $Y_t\stackrel{\mathrm{i.i.d.}}{\sim}\mu_0$.

This equivalence is the reason that the simulation code can work directly with pivotal statistics and scalar probability levels.
Algorithm~\ref{alg:adaptive-discovery-u} only uses the data through $\{Y_t\}_{t=1}^n$, and the construction above is distributionally equivalent to simulating the corresponding Gumbel-max watermark pivotal statistics.

\paragraph{Additional phase-transition slices.}
Figure~\ref{fig:discovery-boundary-panels000025} reports additional one-dimensional phase-transition slices for $\alpha\in\{0,0.25\}$.
These panels complement Figure~\ref{fig:discovery-boundary-panels7} in the main text, which reports the corresponding slices for $\alpha\in\{0.5,0.75\}$.
As in the main simulations, each point is obtained by averaging over $200$ independent Monte Carlo trials, and the displayed value is the smallest discovery error in~\eqref{eq:discovery-err} over the calibration grid.
The vertical dashed lines mark the theoretical boundary from Theorem~\ref{thm:discovery-boundary}.
For $\alpha=0$, the discovery error remains large across the displayed ranges, consistent with the impossibility of discovery in the fixed-vocabulary regime.
For $\alpha=0.25$, the empirical transition becomes visible and aligns with the predicted boundary, with sharper transitions as $n$ increases.

\begin{figure}[t!]
\centering
\includegraphics[width=\textwidth]{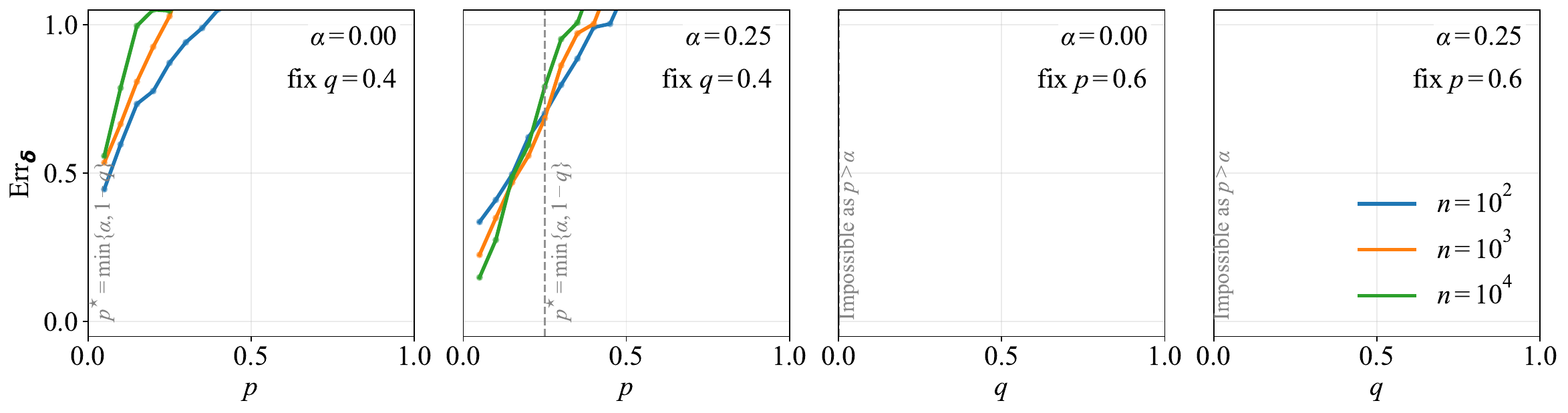}
\vspace{-15pt}
\caption{
Empirical discovery phase-transition slices for \Algo\ with $\alpha\in\{0,0.25\}$.
Each panel reports the smallest discovery error in~\eqref{eq:discovery-err} over a calibration grid along one slice, averaged over $200$ trials.
The vertical dashed lines show the theoretical discovery boundary.
}
\label{fig:discovery-boundary-panels000025}
\end{figure}

\section{Details and Additional Results of Language Model Experiments}
\label{sec:LLM}

This section provides additional details for the real-LLM experiments in Section~\ref{sec:LLM-experiments}. 
We describe the watermark generation procedure, prompt construction, edit mechanisms, empirical FPR control, implementation details for the compared methods, and additional results under FPR levels not shown in the main text.

\subsection{Additional Experimental Details}

\paragraph{Watermark generation.}
We use a context window of size $m=5$.
At position $t$, the pseudorandom variable is computed from the previous $m$ tokens as $\xi_t=\AM(s_{(t-m):(t-1)},\Key)$, where $\AM$ is the hash function used in~\citep{zhao2024permute}.
To reduce repetitive outputs, we apply repeated-context masking: the watermark is added only when the length-$m$ context has not appeared earlier in the generated history, following~\citep{hu2023unbiased,wu2023dipmark,dathathri2024scalable}.
This is the 1-sequence repeated-context masking strategy of~\citet{dathathri2024scalable}.
If the current context is masked, the token is sampled directly from the temperature-scaled NTP distribution using multinomial sampling.

\paragraph{Prompt construction and post-edit processing.}
We use the \texttt{realnewslike} split of the C4 dataset and follow the setup in Appendix~C.1 of~\citet{li2025robust}.
Each C4 example is tokenized with a truncation length $2048-20=2028$, and we keep the first $1000$ examples whose truncated token sequence has length at least $50+400=450$.
For each retained example, we use the last $50$ tokens as the prompt and ask OPT-1.3B to generate a watermarked continuation of length $n=400$.

After applying an edit, we decode the edited token sequence into text and then re-tokenize it.
We then pad or truncate the resulting sequence to the target length used by the verifier.
The ground-truth watermark-preserving labels are constructed as described in Section~\ref{sec:LLM-experiments}: a post-edit position is labeled as watermark-preserving only if the corresponding local block of length $m+1$ appears contiguously in the original pre-edit watermarked sequence.
This definition reflects how the verifier reconstructs pseudorandomness, since the pseudorandom variable at each position is determined by the previous $m$ tokens.

\paragraph{Edit mechanisms and FPR control.}
Following the previous setting~\citep{li2025robust}, we consider random edits, adversarial edits, and roundtrip translation, which has already been introduced in Section \ref{sec:LLM-experiments}.
To compare localization methods under the same false-positive constraint, we evaluate each method over a fixed grid of calibration parameters.
For each target FPR level $\lambda$, we compute the empirical FPR, TPR, and IoU for every grid value and report the largest IoU among those with empirical FPR at most $\lambda$.
We consider several target FPR levels, from conservative to less conservative, to study how localization performance changes with the allowed false-positive rate.

\subsection{Additional Details for Compared Methods}

\paragraph{Plug-in fraction estimator.}
For \Algo{}\texttt{-plugin}, we use the optimal fraction estimator of~\citet{li2025optimal} to estimate the surviving watermark fraction $\eps_n$ from the pivotal statistics of the observed text.
The estimator is based on the mixture formulation that the observed pivotal statistic follows $(1-\eps_n)F_0+\eps_n F_1$, where $F_0$ is the known null law and $F_1$ is an unknown population-level alternative law associated with surviving watermark signals.
It estimates $F_1$ using an auxiliary sample of pivotal statistics from fully watermarked text, yielding an empirical alternative law $\widehat F_1$ and density ratio $\widehat r=\rd\widehat F_1/\rd F_0$.
It then uses the variance-reducing weight $\omega_{\varepsilon}(y) = \frac{1-\widehat r(y)}{(1-\varepsilon)+\varepsilon \widehat r(y)}$, and computes $\widehat\varepsilon_n$ by numerically solving the estimating equation specified in~\citet{li2025optimal}.

\paragraph{Baseline method and calibration.}
As a localization baseline, we use the adaptive online locator \texttt{AOL} of~\citet{zhao2025efficiently}.
In our notation, let $Y_t\in[0,1]$ denote the per-token pivotal statistic.
For the Gumbel watermark, \texttt{AOL} uses the transformed score $s_t=-\log(1-Y_t)$.
Given the score sequence $s_1,\ldots,s_n$, \texttt{AOL} applies the Aligator online smoothing algorithm with multiple random circular starting points, producing $M$ fitted values $\widehat{\theta}_t^{(1)},\ldots,\widehat{\theta}_t^{(M)}$ for each token position.
These fitted values are averaged to form the final localization score,
$\widehat{\theta}_t=M^{-1}\sum_{i=1}^M\widehat{\theta}_t^{(i)}$.
Token $t$ is then declared watermark-preserving whenever $\widehat{\theta}_t>\zeta$, where $\zeta$ is the calibration threshold for \texttt{AOL}, analogous to the calibration constant $C$ in our method.

The default threshold in the official implementation of \texttt{AOL} can yield high IoU but also a high FPR.
For a fair comparison under fixed empirical FPR constraints, we therefore sweep the threshold $\zeta$ over a predetermined grid.
Specifically, for \texttt{AOL}, we sweep $\zeta\in\{1.00,1.01,\ldots,3.00\}$.
For \Algo, we sweep the calibration constant over $\{0.01,0.02,\ldots,1.00\}\cup\{1.00,1.05,\ldots,5.50\}$.
For each method, we report the best IoU subject to the target empirical FPR constraint. The optimal parameter for each method is in the interior of the tuning grid, showing that the tuning is sufficient.

\begin{rem}
We exclude another token-level localization method, \texttt{SeedBS}~\citep{li2024segmenting}, because it had low accuracy and prohibitive runtime in our experiments.
\end{rem}

\subsection{Additional Results}
\label{appen:new-LLM-results}

\paragraph{TPR versus IoU under fixed FPR control.}
Before presenting the additional real-LLM results, we briefly clarify the difference between the two localization metrics used in our evaluation.  Let $S^\star$ denote the set of true watermark-preserving locations and let $\widehat S$ denote the set selected by a method.  Throughout the experiments, we compare methods under a fixed empirical FPR constraint.  Under this controlled-error comparison, the empirical true-positive rate $\widehat{\mathrm{TPR}}=|\widehat S\cap S^\star|/|S^\star|$ is the most direct measure of localization power: it reports the fraction of true watermark-preserving locations recovered while keeping the false-positive level fixed.

The intersection-over-union score, $\mathrm{IoU}=|\widehat S\cap S^\star|/|\widehat S\cup S^\star|$, measures a stricter notion of set overlap.  It penalizes missed watermark-preserving locations through $S^\star\setminus\widehat S$ and extra selected locations through $\widehat S\setminus S^\star$.  This makes IoU a conservative summary of localization quality.  However, because our tables already enforce an empirical FPR constraint, IoU partially penalizes false positives a second time through the denominator $|\widehat S\cup S^\star|$.  It can also vary with the size of $S^\star$, which changes with the edit strength.  For this reason, we use TPR at fixed empirical FPR as the primary controlled-error metric, analogous to reporting power at a fixed size in classical testing, and report IoU as a complementary robustness metric.

\begin{figure}[t!]
\centering
\includegraphics[width=\textwidth]{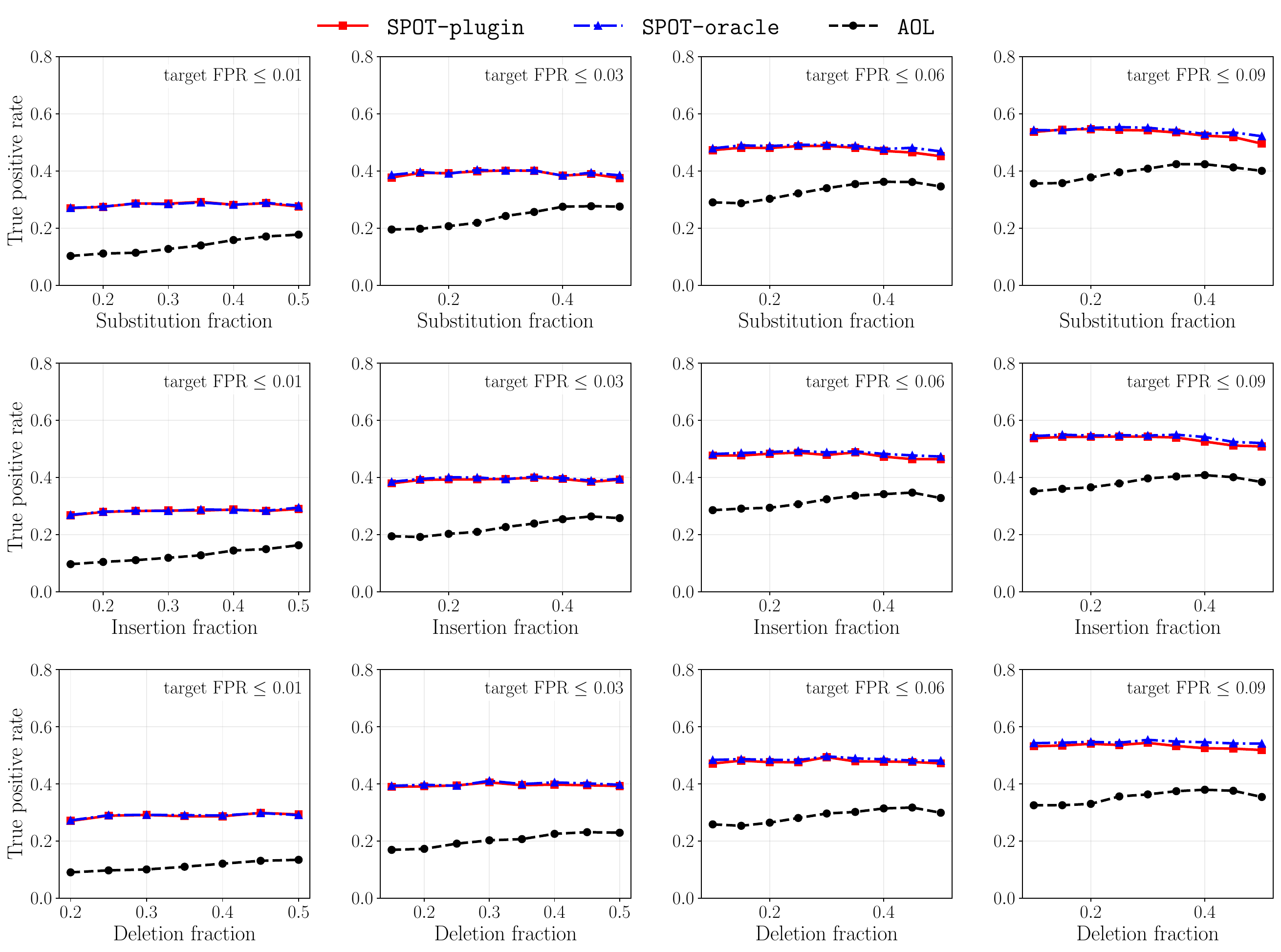}
\vspace{-20pt}
\caption{
Localization performance under three random edit mechanisms at temperature $T=1$, measured by TPR.
Columns correspond to target FPR levels $0.01$, $0.03$, $0.06$, and $0.09$ from left to right.
Rows correspond to substitution, insertion, and deletion edits from top to bottom.
Each panel reports the mean TPR as a function of the edit fraction.
}
\label{fig:real-llm-random-edits-TPR-T1}
\vspace{-10pt}
\end{figure}

\begin{figure}[t!]
\centering
\includegraphics[width=\textwidth]{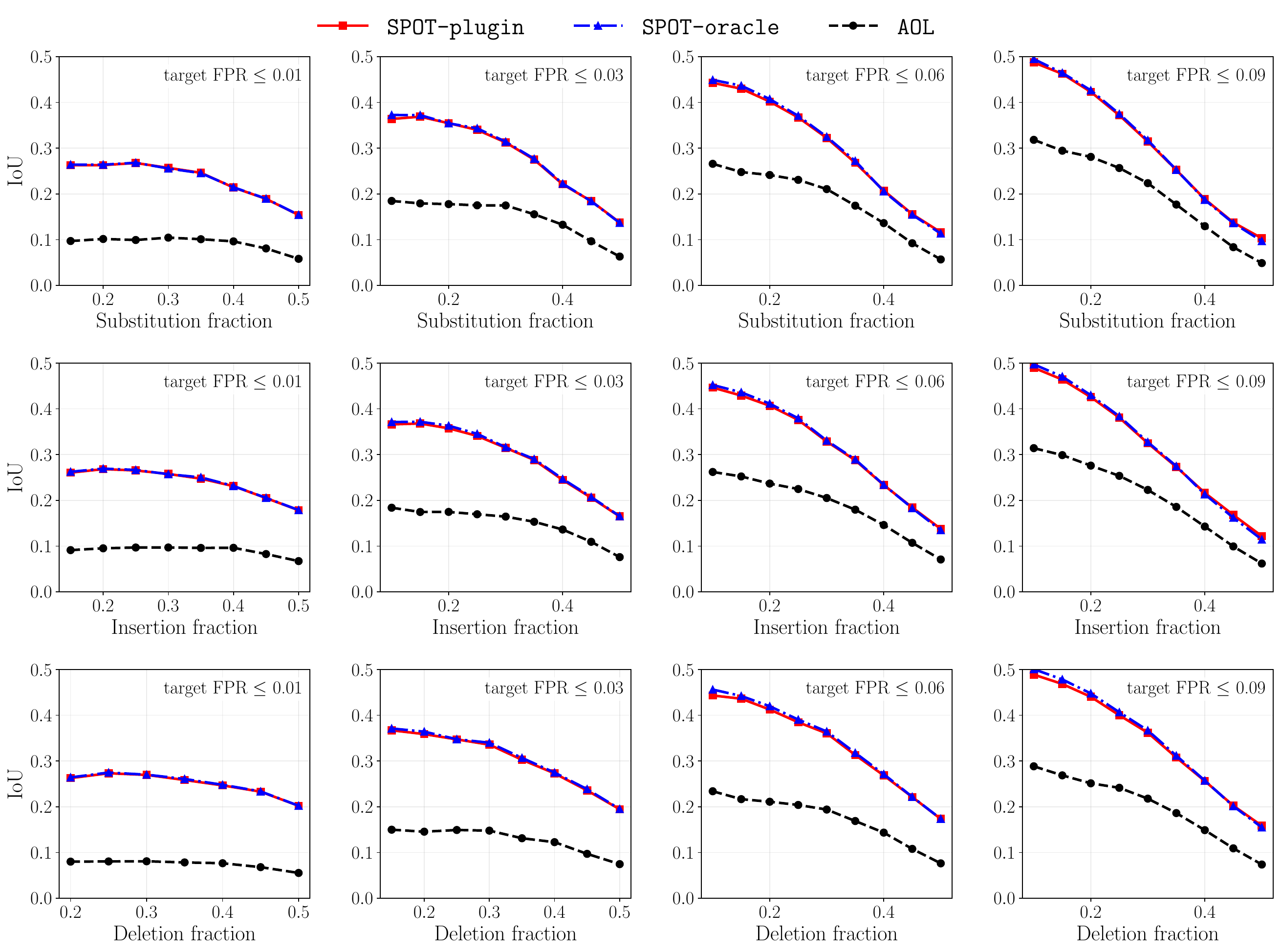}
\vspace{-20pt}
\caption{
Localization performance under three random edit mechanisms at temperature $T=1$, measured by IoU.
Columns correspond to target FPR levels $0.01$, $0.03$, $0.06$, and $0.09$ from left to right.
Rows correspond to substitution, insertion, and deletion edits from top to bottom.
Each panel reports the mean IoU as a function of the edit fraction.
}
\label{fig:real-llm-random-edits-IoU-T1}
\vspace{-10pt}
\end{figure}

\paragraph{IoU and TPR for different FPR levels.}

In the main text, due to space limitations, we report only the IoU and TPR results at temperature $T=1$ under the target FPR level 0.05. To better illustrate the dependence on the target FPR, we provide additional localization results under three random edit mechanisms at $T=1$ for target FPR levels 0.01, 0.03, 0.06, and 0.09. The IoU results are shown in Figure~\ref{fig:real-llm-random-edits-IoU-T1}, and the TPR results are shown in Figure~\ref{fig:real-llm-random-edits-TPR-T1}. We also report the corresponding $T=0.7$ results in Figures~\ref{fig:tpr-random-edits-t07-fprlevels} and~\ref{fig:iou-random-edits-t07-fprlevels}. These figures complement Figure~\ref{fig:real-llm-random-edits-T1} in the main text. The qualitative pattern remains consistent: \Algo{}\texttt{-oracle} and \Algo{}\texttt{-plugin} generally outperform or remain competitive with \texttt{AOL} under random substitution, insertion, and deletion across different FPR levels. At $T=0.7$, the advantage is still visible but smaller, and the absolute localization performance is lower than at $T=1$, consistent with the lower-temperature setting providing weaker usable watermark evidence and making localization more difficult.

\begin{figure}[t]
\centering
\includegraphics[width=\textwidth]{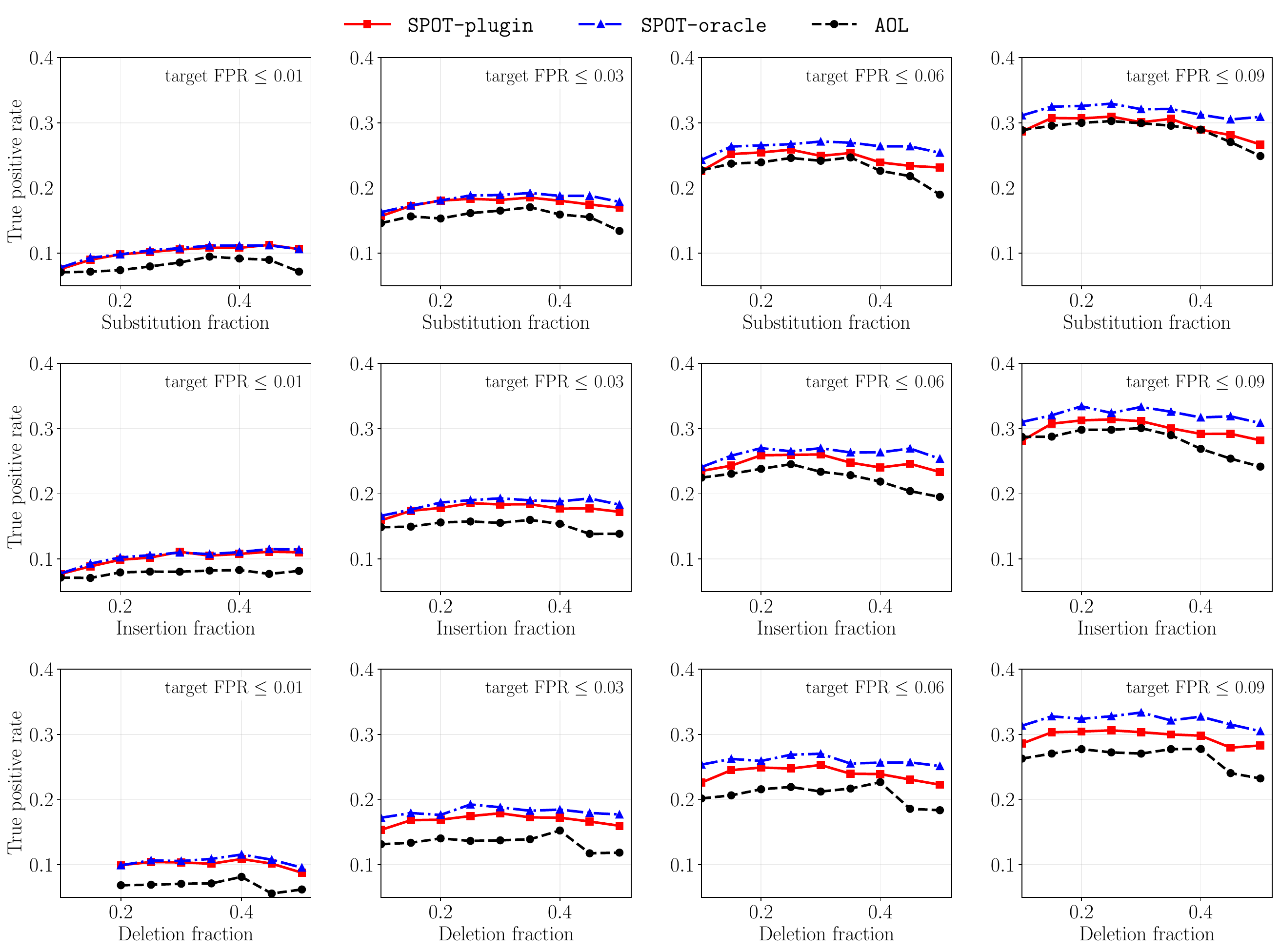}
\caption{Localization performance under three random edit mechanisms at temperature \(T=0.7\),
measured by TPR. Columns correspond to target FPR levels \(0.01, 0.03, 0.06,\) and \(0.09\)
from left to right. Rows correspond to substitution, insertion, and deletion edits from top to
bottom. Each panel reports the mean TPR as a function of the edit fraction.}
\label{fig:tpr-random-edits-t07-fprlevels}
\end{figure}

\begin{figure}[t]
\centering
\includegraphics[width=\textwidth]{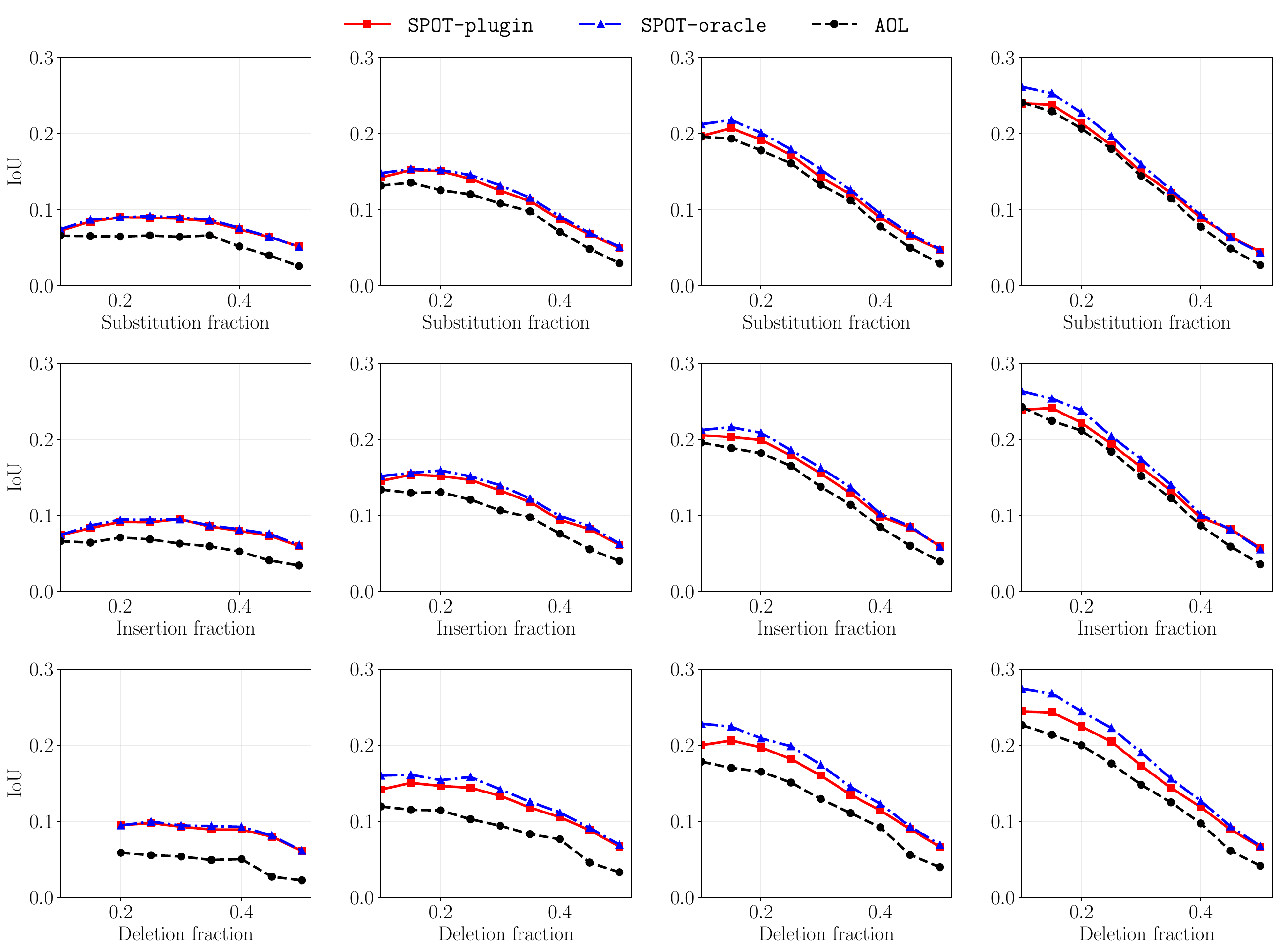}
\caption{Localization performance under three random edit mechanisms at temperature \(T=0.7\),
measured by IoU. Columns correspond to target FPR levels \(0.01, 0.03, 0.06,\) and \(0.09\)
from left to right. Rows correspond to substitution, insertion, and deletion edits from top to
bottom. Each panel reports the mean IoU as a function of the edit fraction.}
\label{fig:iou-random-edits-t07-fprlevels}
\end{figure}

\paragraph{IoU and TPR across temperatures.}

Tables~\ref{tab:real-llm-iou-tpr-lambda-004} and~\ref{tab:real-llm-iou-tpr-lambda-007} report additional IoU and TPR results across generation temperatures under two empirical FPR constraints $\lambda\in\{0.04,0.07\}$.  These tables complement Table~\ref{tab:real-llm-iou-lambda-005} in the main text and allow us to examine how localization performance changes with temperature.  The same overall trends remain: \Algo{}\texttt{-oracle} is stable across temperatures, \Algo{}\texttt{-plugin} performs well when the fraction estimator is accurate, and both methods are strongest relative to \texttt{AOL} at higher temperatures.

\begin{table}[t!]
\centering
\small
\setlength{\tabcolsep}{3pt}
\caption{
Average IoU and TPR across edit levels at different generation temperatures, subject to empirical FPR at most $0.04$.
For random edits, entries are averaged over edit rates $\{0.1,0.2,0.3,0.4\}$.
For adversarial edits, entries are averaged over edit budgets $K\in\{10,15,20,30,40\}$.
For roundtrip translation, the edit level is not directly controlled.
}
\label{tab:real-llm-iou-tpr-lambda-004}
\resizebox{\textwidth}{!}{
\begin{tabular}{c|c|cc|cc|cc|cc}
\toprule
\multirow{2}{*}{\textbf{Edit Types}}
& \multirow{2}{*}{\textbf{Methods}}
& \multicolumn{4}{c|}{\textbf{IoU}}
& \multicolumn{4}{c}{\textbf{TPR}} \\
\cmidrule(lr){3-6}\cmidrule(lr){7-10}
&
& $\mathbf{T=1}$ & $\mathbf{T=0.7}$ & $\mathbf{T=0.5}$ & $\mathbf{T=0.3}$
& $\mathbf{T=1}$ & $\mathbf{T=0.7}$ & $\mathbf{T=0.5}$ & $\mathbf{T=0.3}$ \\
\midrule

\multirow{3}{*}{Random substitution}
& \texttt{AOL}           & 0.188 & 0.128 & \textbf{0.078} & \textbf{0.035} & 0.266 & 0.186 & \textbf{0.132} & \textbf{0.074} \\
& \Algo{}\texttt{-plugin} & 0.335 & 0.142 & 0.053 & 0.017 & 0.426 & 0.201 & 0.100 & 0.048 \\
& \Algo{}\texttt{-oracle} & \textbf{0.337} & \textbf{0.149} & 0.070 & 0.030 & \textbf{0.429} & \textbf{0.211} & 0.126 & \textbf{0.074} \\
\midrule

\multirow{3}{*}{Random insertion}
& \texttt{AOL}           & 0.187 & 0.130 & \textbf{0.080} & \textbf{0.039} & 0.255 & 0.183 & 0.128 & \textbf{0.078} \\
& \Algo{}\texttt{-plugin} & 0.341 & 0.148 & 0.058 & 0.018 & 0.425 & 0.202 & 0.102 & 0.049 \\
& \Algo{}\texttt{-oracle} & \textbf{0.345} & \textbf{0.155} & 0.077 & 0.032 & \textbf{0.432} & \textbf{0.211} & \textbf{0.133} & 0.074 \\
\midrule

\multirow{3}{*}{Random deletion}
& \texttt{AOL}           & 0.162 & 0.117 & 0.068 & \textbf{0.032} & 0.230 & 0.167 & 0.105 & 0.058 \\
& \Algo{}\texttt{-plugin} & 0.345 & 0.150 & 0.057 & 0.020 & 0.427 & 0.198 & 0.090 & 0.038 \\
& \Algo{}\texttt{-oracle} & \textbf{0.349} & \textbf{0.162} & \textbf{0.075} & \textbf{0.032} & \textbf{0.431} & \textbf{0.213} & \textbf{0.114} & \textbf{0.060} \\
\midrule

\multirow{3}{*}{Adversarial edits}
& \texttt{AOL}           & 0.211 & \textbf{0.120} & 0.028 & 0.003 & 0.224 & \textbf{0.137} & 0.040 & 0.004 \\
& \Algo{}\texttt{-plugin} & 0.333 & 0.079 & 0.004 & 0.000 & 0.330 & 0.094 & 0.008 & 0.000 \\
& \Algo{}\texttt{-oracle} & \textbf{0.342} & 0.114 & \textbf{0.040} & \textbf{0.005} & \textbf{0.351} & 0.133 & \textbf{0.072} & \textbf{0.014} \\
\midrule

\multirow{3}{*}{Roundtrip translation}
& \texttt{AOL}           & 0.198 & 0.134 & \textbf{0.074} & \textbf{0.025} & 0.269 & 0.185 & \textbf{0.136} & 0.061 \\
& \Algo{}\texttt{-plugin} & 0.328 & 0.138 & 0.045 & 0.013 & 0.397 & 0.183 & 0.072 & 0.022 \\
& \Algo{}\texttt{-oracle} & \textbf{0.333} & \textbf{0.149} & 0.070 & 0.024 & \textbf{0.401} & \textbf{0.197} & 0.132 & \textbf{0.065} \\

\bottomrule
\end{tabular}}
\vspace{-10pt}
\end{table}

\begin{table}[t!]
\centering
\small
\setlength{\tabcolsep}{3pt}
\caption{
Average IoU and TPR across edit levels at different generation temperatures, subject to empirical FPR at most $0.07$.
For random edits, entries are averaged over edit rates $\{0.1,0.2,0.3,0.4\}$.
For adversarial edits, entries are averaged over edit budgets $K\in\{10,15,20,30,40\}$.
For roundtrip translation, the edit level is not directly controlled.
}
\label{tab:real-llm-iou-tpr-lambda-007}
\resizebox{\textwidth}{!}{
\begin{tabular}{c|c|cc|cc|cc|cc}
\toprule
\multirow{2}{*}{\textbf{Edit Types}}
& \multirow{2}{*}{\textbf{Methods}}
& \multicolumn{4}{c|}{\textbf{IoU}}
& \multicolumn{4}{c}{\textbf{TPR}} \\
\cmidrule(lr){3-6}\cmidrule(lr){7-10}
&
& $\mathbf{T=1}$ & $\mathbf{T=0.7}$ & $\mathbf{T=0.5}$ & $\mathbf{T=0.3}$
& $\mathbf{T=1}$ & $\mathbf{T=0.7}$ & $\mathbf{T=0.5}$ & $\mathbf{T=0.3}$ \\
\midrule

\multirow{3}{*}{Random substitution}
& \texttt{AOL}           & 0.225 & 0.157 & \textbf{0.098} & \textbf{0.046} & 0.350 & 0.254 & \textbf{0.193} & \textbf{0.123} \\
& \Algo{}\texttt{-plugin} & 0.358 & 0.166 & 0.067 & 0.021 & 0.501 & 0.262 & 0.147 & 0.078 \\
& \Algo{}\texttt{-oracle} & \textbf{0.361} & \textbf{0.176} & 0.086 & 0.038 & \textbf{0.506} & \textbf{0.280} & 0.187 & 0.119 \\
\midrule

\multirow{3}{*}{Random insertion}
& \texttt{AOL}           & 0.225 & 0.161 & \textbf{0.103} & \textbf{0.051} & 0.337 & 0.251 & \textbf{0.192} & 0.121 \\
& \Algo{}\texttt{-plugin} & 0.366 & 0.174 & 0.074 & 0.023 & 0.501 & 0.264 & 0.152 & 0.077 \\
& \Algo{}\texttt{-oracle} & \textbf{0.370} & \textbf{0.183} & 0.095 & 0.043 & \textbf{0.507} & \textbf{0.282} & 0.191 & \textbf{0.124} \\
\midrule

\multirow{3}{*}{Random deletion}
& \texttt{AOL}           & 0.209 & 0.152 & 0.097 & \textbf{0.052} & 0.308 & 0.233 & 0.167 & 0.112 \\
& \Algo{}\texttt{-plugin} & 0.382 & 0.179 & 0.077 & 0.026 & 0.500 & 0.261 & 0.145 & 0.077 \\
& \Algo{}\texttt{-oracle} & \textbf{0.388} & \textbf{0.196} & \textbf{0.102} & 0.049 & \textbf{0.508} & \textbf{0.283} & \textbf{0.181} & \textbf{0.113} \\
\midrule

\multirow{3}{*}{Adversarial edits}
& \texttt{AOL}           & 0.285 & \textbf{0.174} & 0.060 & 0.014 & 0.310 & 0.203 & 0.091 & 0.029 \\
& \Algo{}\texttt{-plugin} & 0.408 & 0.125 & 0.017 & 0.000 & 0.421 & 0.153 & 0.038 & 0.000 \\
& \Algo{}\texttt{-oracle} & \textbf{0.419} & 0.171 & \textbf{0.067} & \textbf{0.018} & \textbf{0.443} & \textbf{0.204} & \textbf{0.126} & \textbf{0.060} \\
\midrule

\multirow{3}{*}{Roundtrip translation}
& \texttt{AOL}           & 0.251 & 0.182 & \textbf{0.109} & \textbf{0.043} & 0.359 & \textbf{0.269} & \textbf{0.216} & 0.117 \\
& \Algo{}\texttt{-plugin} & 0.366 & 0.173 & 0.061 & 0.016 & 0.475 & 0.250 & 0.127 & 0.048 \\
& \Algo{}\texttt{-oracle} & \textbf{0.373} & \textbf{0.188} & 0.093 & 0.034 & \textbf{0.483} & \textbf{0.269} & 0.200 & \textbf{0.126} \\

\bottomrule
\end{tabular}}
\vspace{-10pt}
\end{table}

%% file: 8-proof.tex
\section{Proof for Theoretical Guarantees}
\label{proof}

This section proves the theoretical guarantees in the order in which they are stated in the main text. For each result, we first present the main proof. Auxiliary lemmas are introduced when they are needed, while their proofs are deferred until after the corresponding theorem proof. This organization keeps the main argument transparent while retaining the technical details required for verification.

Throughout this appendix, the vocabulary and the dominant--core--light decomposition are as specified in Assumption~\ref{asmp:HCL+}. In particular, for each position $t$,
\[
\mathcal W_n=\{w_{t,n}^\star\}\cup C_{t,n}\cup L_{t,n},
\qquad
P_{t,w_{t,n}^\star}=1-\Delta_n,
\qquad
\Delta_n\asymp n^{-q},
\]
where $|L_{t,n}|\asymp n^\alpha$, $P_{t,w}\asymp n^{-(\alpha+q)}$ for $w\in L_{t,n}$, and $|C_{t,n}|\asymp n^r$ for some $r<\alpha$. Constants implicit in $O(\cdot)$, $o(\cdot)$, $\lesssim$, $\gtrsim$, and $\asymp$ do not depend on $t$ or $n$.

We first present the watermarked-component distribution of the pivotal statistic.  This is the same distribution referred to as the alternative law in the watermark-detection setting of~\citet{li2024statistical}; in the present mixed-source setting, however, it serves as one component of the mixture model.  The same finite-vocabulary calculation still applies here because $\mathcal W_n$ is finite for each fixed $n$.

\begin{lem}[Alternative law \citep{li2024statistical}]
\label{lem:alt-pivot-density}
Fix $t$ and $n$.  Conditional on the NTP distribution $\bP_t$ over the vocabulary $\mathcal W_n$, with $P_{t,w}>0$ and $\sum_{w\in\mathcal W_n}P_{t,w}=1$, the Gumbel--max pivotal statistic $Y_t\in(0,1)$ generated from the watermarked component has density or probability density function (PDF)
\begin{equation}
\label{eq:alt-pivot-density}
f_{1,\bP_t}(y)=\sum_{w\in\mathcal W_n} y^{1/P_{t,w}-1},
\qquad 0<y<1.
\end{equation}
Equivalently, its conditional CDF is
\begin{equation}
\label{eq:alt-pivot-cdf}
F_{1,\bP_t}(y)=\sum_{w\in\mathcal W_n} P_{t,w}y^{1/P_{t,w}},
\qquad 0\le y\le1.
\end{equation}
\end{lem}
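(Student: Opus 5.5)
We prove Lemma~\ref{lem:alt-pivot-density} by computing the conditional CDF of $Y_t=U_{t,w_t}$ directly from the Gumbel--max decoder~\eqref{eq:it}, and then differentiating. Fix $t,n$ and condition on $\bP_t$; by Assumption~\ref{asmp:main}(a) the coordinates $(U_w)_{w\in\mathcal W_n}$ of $\xi_t$ are i.i.d.\ $U(0,1)$ and independent of $\bP_t$. The first step is the standard reparametrization $E_w:=-\log U_w/P_{t,w}$. Each $E_w$ is exponential with rate $P_{t,w}$, since $\mathbb P(E_w>s)=\mathbb P(U_w<e^{-sP_{t,w}})=e^{-sP_{t,w}}$. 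The decoder $\arg\max_w \log U_w/P_{t,w}$ therefore selects $\arg\min_w E_w$. Ties occur with probability zero because the variables are continuous.

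The second step decomposes by the selected token. Write
\[
\mathbb P(Y_t\le y)=\sum_{w}\mathbb P\bigl(w_t=w,\;U_w\le y\bigr).
\]
The event $U_w\le y$ is the same as $E_w\ge s_y/P_{t,w}$ with $s_y:=-\log y$, that is, $E_w\ge -\log y/P_{t,w}$. So each summand equals
\[
\mathbb P\Bigl(E_w=\min_v E_v,\ E_w\ge \tfrac{-\log y}{P_{t,w}}\Bigr).
\]
We use the memoryless competing-exponentials fact: $\min_v E_v$ is exponential with rate $\sum_v P_{t,v}=1$, and it is independent of the index attaining the minimum, which equals $w$ with probability $P_{t,w}$. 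The event $\{E_w=\min_v E_v\}\cap\{E_w\ge a\}$ is exactly $\{\arg\min=w\}\cap\{\min_v E_v\ge a\}$. Hence the summand factors as
\[
P_{t,w}\,e^{-a}, \qquad a=-\log y/P_{t,w},
\]
which equals $P_{t,w}\,y^{1/P_{t,w}}$. This gives~\eqref{eq:alt-pivot-cdf}.

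As an alternative check, one can integrate directly:
\[
\mathbb P(w_t=w,U_w\le y)=\int_0^y\prod_{v\ne w}\mathbb P\bigl(U_v\le u^{P_{t,v}/P_{t,w}}\bigr)\,du=\int_0^y u^{(1-P_{t,w})/P_{t,w}}\,du,
\]
using $\sum_{v\ne w}P_{t,v}=1-P_{t,w}$. This integral again evaluates to $P_{t,w}y^{1/P_{t,w}}$.

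The final step differentiates the CDF on $(0,1)$, which gives $f_{1,\bP_t}(y)=\sum_w y^{1/P_{t,w}-1}$, i.e.~\eqref{eq:alt-pivot-density}. We then confirm the boundary values $F_{1,\bP_t}(0)=0$ and $F_{1,\bP_t}(1)=\sum_w P_{t,w}=1$. Since $\mathcal W_n$ is finite for each fixed $n$, the sum is finite and differentiation term by term is trivial.

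The only delicate point is the independence of the argmin and the min in the competing-exponentials step. The direct integral above bypasses this, so we would present that computation as the main argument. We would also record that conditioning on $\bP_t$ is legitimate, because $\xi_t$ is independent of $\mathcal G_t\supseteq\sigma(\bP_t)$ by Assumption~\ref{asmp:main}(a).
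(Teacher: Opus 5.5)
Your proof is correct. The following steps all check out:
\begin{itemize}
\item the reparametrization $E_w=-\log U_w/P_{t,w}\sim\mathrm{Exp}(P_{t,w})$;
\item the identification of the decoder with $\arg\min_w E_w$;
\item the event identity $\{U_w\le y\}=\{E_w\ge -\log y/P_{t,w}\}$;
\item the direct integral $\int_0^y u^{1/P_{t,w}-1}\,du=P_{t,w}y^{1/P_{t,w}}$.
\end{itemize}
Your justification for conditioning on $\bP_t$ via Assumption~\ref{asmp:main}(a), using $\sigma(\bP_t)\subseteq\mathcal G_t$, is exactly what is needed.

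The paper gives no derivation of this lemma; it imports the formula from \citet{li2024statistical}. The only related computation in the paper is in the simulation appendix, and it runs in the opposite direction. There the authors take \eqref{eq:alt-pivot-cdf} as given and check that the scalar construction $Y_t=U^{P_{t,w_t}}$, with $w_t\sim\bP_t$, reproduces it.

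Your competing-exponentials step actually proves more than the lemma. Since $\min_v E_v\sim\mathrm{Exp}(1)$ is independent of the argmin, on $\{w_t=w\}$ you have $Y_t=e^{-P_{t,w}\min_v E_v}$. This is distributed as $V^{P_{t,w}}$ with $V\sim U(0,1)$. Hence the pair $(w_t,Y_t)$ has the same joint law as $(w,V^{P_{t,w}})$ with $w\sim\bP_t$ and $V$ independent. That structural fact justifies the paper's simulation shortcut at the level of the joint law of token and pivot, not just the marginal CDF.

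The direct integral, which you propose as the main argument, is more elementary and avoids the argmin--min independence. However, it yields only the marginal CDF. Either version suffices for the lemma as stated.
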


\paragraph{Common notation.}
Recall that $\FM_{t-1}:=\sigma(\{\token_j,\xi_j,\bP_{j+1}\}_{j=1}^{t-1})$ contains the history used to form $\bP_t$, and $\mathcal H_t:=\sigma(\theta_1,\ldots,\theta_t)$ is the editing filtration.
We also define $\mathcal G_t:=\FM_{t-1}\vee\mathcal H_t$ by the joint generation-editing filtration and $\mathcal Y_t:=\sigma(Y_1,\ldots,Y_t)$ by the filtration generated by the observed pivotal statistics, and write
\[
\pi_t:=\mathbb P(\theta_t=1\mid\mathcal F_{t-1}).
\]
Assumption~\ref{asmp:main}(c) implies that, uniformly in $t$, $c\varepsilon_n\le \pi_t\le C\varepsilon_n$ almost surely.
Since the observed filtration $\mathcal Y_{t-1}$ is generally coarser than $\mathcal F_{t-1}$, we use the following conditional mixture quantities:
\[
\bar\pi_t:=\mathbb E[\pi_t\mid\mathcal Y_{t-1}],
\qquad
\bar f_{1,t}(y):=
\frac{\mathbb E[\pi_t f_{1,\bP_t}(y)\mid\mathcal Y_{t-1}]}{\bar\pi_t},
\]
and denote the corresponding watermarked-component law by $\bar\mu_{1,t}$.  
In short,  $\bar\mu_{1,t}$ is the effective watermarked-component law when conditioning on the observed history of pivotal statistics $\mathcal Y_{t-1}$.
Then, conditional on $\mathcal Y_{t-1}$, the one-step density of $Y_t$ is
\[
(1-\bar\pi_t)+\bar\pi_t\bar f_{1,t}(y),
\qquad 0<y<1,
\]
where $c\varepsilon_n\le \bar\pi_t\le C\varepsilon_n$ almost surely. 
We will prove the last equation in Lemma \ref{lem:conditional-pivot-mixture} soon.
We denote the null law by $\mu_0$, and write $\mathbb E_0$ for expectation under $Y\sim\mu_0$.  In the Gumbel--max case, $\mu_0$ is the uniform law on $(0,1)$.  Similarly, $\mu_{1,\bP_t}$ denotes the watermarked-component law with density given in~\eqref{eq:alt-pivot-density}, and $\mathbb E_{1,\bP_t}$ denotes expectation under $Y\sim\mu_{1,\bP_t}$.

\subsection{Proof of Theorem~\ref{thm:gumbel-general}: Detection boundary}

The proof follows the autoregressive Hellinger-affinity argument of~\citet{li2025robust}.  The tensorization step is standard; the new calculation is the one-step Hellinger scale under the growing dominant--core--light vocabulary.

\begin{lem}[Conditional one-step mixture]
\label{lem:conditional-pivot-mixture}
Under Assumption~\ref{asmp:main}, for any integrable function $h$,
\begin{equation}
\label{eq:conditional-pivot-mixture}
\mathbb E\big[h(Y_t)\mid \mathcal F_{t-1}\big]
=(1-\pi_t)\mathbb E_0h(Y)+\pi_t\mathbb E_{1,\bP_t}h(Y).
\end{equation}
Consequently, conditional on $\mathcal Y_{t-1}$, the one-step alternative density is
\begin{equation}
\label{eq:conditional-density-Ypast}
1-\bar\pi_t+\bar\pi_t\bar f_{1,t}(y).
\end{equation}
\end{lem}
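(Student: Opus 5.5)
The first identity~\eqref{eq:conditional-pivot-mixture} will come from conditioning in two stages: first on the finer $\sigma$-field $\mathcal G_t=\FM_{t-1}\vee\mathcal H_t$, and then averaging out $\theta_t$. Note that $\bP_t$ is $\FM_{t-1}$-measurable, since $\bP_t$ is included in the generator $\{\bP_{j+1}\}_{j\le t-1}$.

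\emph{Step 1 (conditioning on $\mathcal G_t$).} On the event $\{\theta_t=0\}$, Assumption~\ref{asmp:main}(b) says $\token_t$ is an independent draw from $\bP_t$ and is conditionally independent of $\xi_t$. By Assumption~\ref{asmp:main}(a), $\xi_t$ is independent of $\mathcal G_t$ and has law $\pi$. The pivot property of $Y$ (for Gumbel-max, $U_{t,w}\sim U(0,1)$ for each fixed $w$) then gives that $Y_t=Y(\token_t,\xi_t)$ has law $\mu_0$ conditionally on $\mathcal G_t$. On the event $\{\theta_t=1\}$, we have $\token_t=\SM(\bP_t,\xi_t)$, where $\bP_t$ is $\mathcal G_t$-measurable and $\xi_t$ is an independent fresh draw. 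Hence $Y_t$ has conditional law $\mu_{1,\bP_t}$ given $\mathcal G_t$, which is exactly the law computed in Lemma~\ref{lem:alt-pivot-density}. Combining the two cases,
\[
\mathbb E[h(Y_t)\mid\mathcal G_t]=(1-\theta_t)\mathbb E_0h(Y)+\theta_t\mathbb E_{1,\bP_t}h(Y).
\]

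\emph{Step 2 (tower property).} Condition on $\FM_{t-1}\subseteq\mathcal G_t$. The coefficients $\mathbb E_0h$ and $\mathbb E_{1,\bP_t}h$ are $\FM_{t-1}$-measurable, so the only remaining randomness is in $\theta_t$. Replacing $\theta_t$ by $\mathbb E[\theta_t\mid\FM_{t-1}]=\pi_t$ yields~\eqref{eq:conditional-pivot-mixture}. Integrability holds because $\mathbb E|h(Y_t)|<\infty$, and the measurability of $\bP\mapsto\mathbb E_{1,\bP}h$ follows from the explicit density~\eqref{eq:alt-pivot-density}.

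\emph{Step 3 (passing to the observed filtration).} First check that $\mathcal Y_{t-1}\subseteq\FM_{t-1}$, since each $Y_j$ with $j\le t-1$ is a function of $(\token_j,\xi_j)$. Conditioning~\eqref{eq:conditional-pivot-mixture} further on $\mathcal Y_{t-1}$ and taking $h=\mathbf 1_B$ gives
\[
\mathbb P(Y_t\in B\mid\mathcal Y_{t-1})=\int_B\Bigl(1-\bar\pi_t+\mathbb E[\pi_tf_{1,\bP_t}(y)\mid\mathcal Y_{t-1}]\Bigr)\,dy .
\]
Fubini is used here to move the conditional expectation inside the integral; it applies because the integrand is nonnegative. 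By the definition of $\bar f_{1,t}$, the middle term equals $\bar\pi_t\bar f_{1,t}(y)$, which gives~\eqref{eq:conditional-density-Ypast}. Finally, the band $c\varepsilon_n\le\bar\pi_t\le C\varepsilon_n$ is inherited from $\pi_t$, since conditional expectation is monotone.

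\emph{Main obstacle.} The delicate point is Step 3. Interchanging the conditional expectation with the $dy$-integral for all Borel $B$ simultaneously requires a regular version of the conditional expectation. I would handle this by choosing a jointly measurable version of $(y,\omega)\mapsto\mathbb E[\pi_tf_{1,\bP_t}(y)\mid\mathcal Y_{t-1}]$. Such a version exists because $f_{1,\bP_t}(y)$ is jointly measurable in $(y,\bP_t)$ and nonnegative, and it then suffices to verify the identity on a $\pi$-system of intervals. Step 1 needs only care in stating the pivot property for a token that is random but independent of $\xi_t$.
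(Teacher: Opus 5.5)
Your proposal is correct and follows essentially the same route as the paper. You condition on the latent label to obtain the law $\mu_0$ or $\mu_{1,\bP_t}$, average $\theta_t$ given $\FM_{t-1}$ to produce $\pi_t$, and then use the tower property with $\mathcal Y_{t-1}\subseteq\FM_{t-1}$ to obtain the density $1-\bar\pi_t+\bar\pi_t\bar f_{1,t}$. Your explicit conditioning on $\mathcal G_t$ and the jointly measurable version with conditional Fubini in Step 3 only make rigorous details that the paper's short proof passes over.
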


\begin{lem}[Autoregressive Hellinger reduction]
\label{lem:H2}
Let $\rho_0$ and $\rho_1$ denote the joint densities of $Y_{1:n}$ under $H_0$ and $H_1$, respectively, and let $\mathcal P_n$ be the class of NTP distributions satisfying Assumption~\ref{asmp:HCL+}.
\begin{enumerate}[label=(\roman*),leftmargin=2.5em]
\item We have $\mathrm{TV}(\rho_0,\rho_1)\to0$, if
\begin{equation}
\label{eq:hellinger-merging}
\sum_{t=1}^n
\sup_{\substack{\bP\in\mathcal P_n\\ \gamma\in[c\varepsilon_n,C\varepsilon_n]}}
H^2\big(\mu_0,(1-\gamma)\mu_0+\gamma\mu_{1,\bP}\big)=o(1).
\end{equation}
\item We have $\mathrm{TV}(\rho_0,\rho_1)\to1$, if there exists a deterministic sequence $a_n>0$ with $na_n\to\infty$ such that, almost surely,
\begin{equation}
\label{eq:hellinger-separation-condition}
\min_{1\le t\le n}
H^2\big(\mu_0,(1-\bar\pi_t)\mu_0+\bar\pi_t\bar\mu_{1,t}\big)
\ge a_n.
\end{equation}
\end{enumerate}
\end{lem}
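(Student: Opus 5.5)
The plan has two parts, one for each item of Lemma~\ref{lem:H2}. Both start from the same object. Under $H_0$, $\rho_0(y_{1:n})=1$ on $(0,1)^n$. Under $H_1$, Lemma~\ref{lem:conditional-pivot-mixture} factorizes the joint density as
\[
\rho_1(y_{1:n})=\prod_{t=1}^n g_t(y_t\mid y_{1:t-1}),
\qquad
g_t:=1-\bar\pi_t+\bar\pi_t\bar f_{1,t},
\]
where each factor is $\mathcal Y_{t-1}$-measurable. The tool throughout is the Hellinger affinity $A(\rho_0,\rho_1)=\int\sqrt{\rho_0\rho_1}=1-H^2(\rho_0,\rho_1)$, with the convention $H^2=1-\int\sqrt{\cdot\,\cdot}$. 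It is linked to total variation by
\[
H^2\le \mathrm{TV}\le \sqrt{2H^2}
\quad\text{and}\quad
\mathrm{TV}\ge 1-A.
\]

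For (i), I would integrate out $y_n,y_{n-1},\dots,y_1$ one at a time. For fixed past $y_{1:t-1}$, the innermost integral is $\int_0^1\sqrt{g_t(y\mid y_{1:t-1})}\,dy=1-H^2(\mu_0,(1-\bar\pi_t)\mu_0+\bar\pi_t\bar\mu_{1,t})$.

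The key point is that $\bar\mu_{1,t}$ is a mixture, over the conditional law of $(\pi_t,\bP_t)$ given $\mathcal Y_{t-1}$, of laws $\mu_{1,\bP}$ with $\bP\in\mathcal P_n$. Hence $(1-\bar\pi_t)\mu_0+\bar\pi_t\bar\mu_{1,t}$ is itself a mixture of the measures $(1-\gamma)\mu_0+\gamma\mu_{1,\bP}$ with $\gamma\in[c\varepsilon_n,C\varepsilon_n]$. Indeed, its density is $\mathbb E[(1-\pi_t)+\pi_t f_{1,\bP_t}\mid\mathcal Y_{t-1}]$. By concavity of $\sqrt{\cdot}$, the affinity with $\mu_0$ is concave in the second argument, so
\[
\int\sqrt{g_t}\;\ge\;1-h_n,
\qquad
h_n:=\sup_{\bP,\gamma}H^2\big(\mu_0,(1-\gamma)\mu_0+\gamma\mu_{1,\bP}\big).
\]
This bound is uniform in the past. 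Iterating gives $A(\rho_0,\rho_1)\ge(1-h_n)^n\ge 1-nh_n$. Under \eqref{eq:hellinger-merging}, $nh_n=o(1)$, so $H^2(\rho_0,\rho_1)\to0$ and therefore $\mathrm{TV}\to0$.

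For (ii), the same peeling is used with the inequality reversed. Condition \eqref{eq:hellinger-separation-condition} holds almost surely for every $t$, so each innermost integral satisfies $\int\sqrt{g_t}\le 1-a_n$ for almost every past. Iterating yields
\[
A(\rho_0,\rho_1)\le(1-a_n)^n\le e^{-na_n}\to0,
\]
and hence $\mathrm{TV}\ge 1-A\to1$.

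The main obstacle is the measure-theoretic bookkeeping in (i). One must justify that $\bar\mu_{1,t}$ is a genuine mixture of admissible $\mu_{1,\bP}$ jointly with the weight $\pi_t$. Since $\bar f_{1,t}$ is defined with $\pi_t$ inside the conditional expectation, the convexity step has to be applied to the full mixture density $\mathbb E[1-\pi_t+\pi_tf_{1,\bP_t}\mid\mathcal Y_{t-1}]$, not to $\bar f_{1,t}$ alone. A regular conditional distribution of $(\pi_t,\bP_t)$ given $\mathcal Y_{t-1}$ makes this rigorous. The almost-sure bounds $\pi_t\in[c\varepsilon_n,C\varepsilon_n]$ and $\bP_t\in\mathcal P_n$ then guarantee that the mixing measure is supported on the admissible set.
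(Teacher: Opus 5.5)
Your proposal is correct and matches the paper's proof essentially step for step. The paper also peels the joint density one coordinate at a time via the recursion $A_t=\mathbb E_0[\sqrt{\rho_1(Y_{1:t-1})}\,(1-h_t)]$, using the one-step conditional density $\lambda_t=\mathbb E[1-\pi_t+\pi_t f_{1,\bP_t}\mid\mathcal Y_{t-1}]$; it applies convexity of $H^2(\mu_0,\cdot)$ to this full mixture (exactly the point you flag) to get $A_n\ge\prod_t(1-b_t)$ in (i), and it uses $h_t\ge a_n$ to get $A_n\le(1-a_n)^n\le e^{-na_n}$ in (ii), with the same conversions $\mathrm{TV}\le\sqrt2\,H$ and $\mathrm{TV}\ge 1-A_n$.
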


\begin{lem}[Dominant--core--light Hellinger estimates]
\label{lem:HCL-hellinger-estimates}
Assume Assumption~\ref{asmp:HCL+} is true. It follows that, uniformly over $t$ and all admissible $\bP_t$,
\begin{equation}
\label{eq:second-moment-scale}
\mathbb E_0\big(f_{1,\bP_t}(Y)-1\big)^2\asymp n^{\alpha-q}.
\end{equation}
Moreover, if $\gamma_n\asymp n^{-p}$ for $p>0$, and if $\gamma_n$ is bounded away from zero and one for $p=0$, then
\begin{equation}
\label{eq:hellinger-scale-uncond}
H^2\big(\mu_0,(1-\gamma_n)\mu_0+\gamma_n\mu_{1,\bP_t}\big)
\asymp
\begin{cases}
\gamma_n^2 n^{\alpha-q}, & p>\alpha,\\
\gamma_n\Delta_n, & 0\le p\le \alpha.
\end{cases}
\end{equation}
The above (lower) bounds also hold almost surely with $\mu_{1,\bP_t}$ replaced by $\bar\mu_{1,t}$ and $\gamma_n$ replaced by $\bar\pi_t$, that is,
\begin{equation}
\label{eq:second-moment-scale1}
\EB_0(\bar{f}_{1,t}(Y)-1)^2  \asymp n^{\alpha-q}
\quad \text{and} \quad
H^2\big(\mu_0,(1-\bar{\pi}_t)\mu_0+\bar{\pi}_t\bar{\mu}_{1,t}\big)
\gtrsim
\begin{cases}
\gamma_n^2 n^{\alpha-q}, & p>\alpha,\\
\gamma_n\Delta_n, & 0\le p\le \alpha.
\end{cases}
\end{equation}
\end{lem}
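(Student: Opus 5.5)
The plan is to reduce everything to explicit integrals against the uniform null $\mu_0$, using the closed form \eqref{eq:alt-pivot-density}. Two facts drive the argument. First, $f_{1,\bP_t}-1$ is small and of order $\Delta_n$ on the bulk of $(0,1)$. Second, it has a spike of height $\asymp n^{\Wrate}$ on a window $1-y\lesssim n^{-(\Wrate+q)}$ near $y=1$, and this spike comes from the light set $L_{t,n}$. Both facts hold uniformly over admissible $\bP_t$. The second-moment estimate is a pairwise sum. The Hellinger estimate then follows from the elementary comparison $(\sqrt{1+x}-1)^2\asymp \min\{x^2,|x|\}$ for $x\ge -1$.

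\textbf{Step 1: second moment.} Expanding the square and integrating term by term gives
\[
\mathbb E_0 f_{1,\bP}(Y)^2=\sum_{w,w'}\frac{P_wP_{w'}}{P_w+P_{w'}-P_wP_{w'}}.
\]
Since $\mathbb E_0 f_{1,\bP}=\sum_w P_w=1$, the quantity $\mathbb E_0(f-1)^2$ equals this sum minus one. I will split the sum into blocks (dominant--dominant, dominant--other, light--light, core--core, core--light).
\begin{itemize}
\item The dominant--dominant term equals $(1-\Delta_n)/(1+\Delta_n)=1-2\Delta_n+O(\Delta_n^2)$.
\item The dominant--other cross terms equal $2\sum_{w\ne w^\star}\frac{(1-\Delta_n)P_w}{1-\Delta_n+\Delta_nP_w}=2\Delta_n+O(\Delta_n^2)$. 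The $\pm2\Delta_n$ contributions from these first two blocks cancel.
\item Each light--light term is $\asymp n^{-(\Wrate+q)}$ because $P_w\asymp P_{w'}$. There are $\asymp n^{2\Wrate}$ such pairs, so this block is $\asymp n^{\Wrate-q}$.
\item Every core--core and core--light term is bounded by $\min(P_w,P_{w'})$. This gives totals $O(n^{r-q})$, which is $o(n^{\Wrate-q})$ because $r<\Wrate$.
\end{itemize}
The leftover $O(\Delta_n^2)=O(n^{-2q})$ is also $o(n^{\Wrate-q})$. This yields \eqref{eq:second-moment-scale}, uniformly in $t$, since every bound uses only the constants in Assumption~\ref{asmp:HCL+}.

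\textbf{Step 2: Hellinger scale.} Write $H^2=\mathbb E_0(\sqrt{1+\gamma_n(f-1)}-1)^2\asymp \mathbb E_0\min\{\gamma_n^2(f-1)^2,\gamma_n|f-1|\}$. Pointwise, $f\le |\mathcal W_n|\lesssim n^{\Wrate}$, so $\gamma_n|f-1|\lesssim n^{\Wrate-p}$.
\begin{itemize}
\item If $p>\Wrate$, then $\gamma_n|f-1|\to0$ uniformly, the quadratic branch is active everywhere, and Step 1 gives $H^2\asymp\gamma_n^2n^{\Wrate-q}$.
\item If $p\le\Wrate$, the upper bound is $H^2\le\gamma_n\mathbb E_0|f-1|=2\gamma_n\mathrm{TV}(\mu_0,\mu_{1,\bP})$. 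I will bound $\mathrm{TV}\lesssim\Delta_n$ by a direct estimate. The density $f$ is below $1$ on the bulk and exceeds $1$ only where the non-dominant terms kick in, and the excess mass there is at most $\sum_{w\ne w^\star}P_w=\Delta_n$ plus an $O(\Delta_n)$ correction from the dominant term $y^{1/(1-\Delta_n)-1}$.
\item For the matching lower bound when $p\le\Wrate$, restrict to the window $A_n=\{1-y\le c\,n^{-(\Wrate+q)}\}$. There each light term $y^{1/P_w-1}$ is bounded below by a constant, so $f\gtrsim n^{\Wrate}$. Hence $\gamma_n(f-1)\gtrsim n^{\Wrate-p}\gtrsim1$ on $A_n$, the linear branch applies, and $H^2\gtrsim\gamma_n n^{\Wrate}\mu_0(A_n)\asymp\gamma_n\Delta_n$.
\item For $p=0$, the constraints $\gamma_n\le\pi_\star$ and $P_{w^\star}\ge\eta_\star$ keep all constants uniform.
\end{itemize}

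\textbf{Step 3: conditional versions.} The mixture $\bar f_{1,t}$ is a $\mathcal Y_{t-1}$-conditional average of $f_{1,\bP_t}$ with weights $\pi_t/\bar\pi_t$. Every pointwise bound used above is uniform over admissible $\bP$, namely the spike lower bound on $A_n$, the bound $f\lesssim n^{\Wrate}$, and the bulk behaviour. Hence the same bounds hold for $\bar f_{1,t}$ almost surely, with $\bar\pi_t\in[c\varepsilon_n,C\varepsilon_n]$.
\begin{itemize}
\item For the upper bound on $\mathbb E_0(\bar f_{1,t}-1)^2$, Jensen gives $(\bar f-1)^2\le \mathbb E[\pi_t(f-1)^2\mid\mathcal Y_{t-1}]/\bar\pi_t$.
\item For the lower bound, I will restrict to the spike window together with a bulk region where $f-1\lesssim -\Delta_n$ uniformly. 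There the mixture keeps $|\bar f-1|$ of the same order, because on each such region all $f_{1,\bP}$ share a sign and a scale.
\item The Hellinger lower bounds then follow exactly as in Step 2.
\end{itemize}

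\textbf{Main obstacle.} I expect the main difficulty to be the lower bounds: the exact cancellation of the $\pm2\Delta_n$ terms in Step 1, which matters most when $\Wrate=0$, and the uniform control of the bulk region needed to transfer the lower bound to the averaged density $\bar f_{1,t}$. For the latter, I would first try to prove the sign-and-scale bounds pointwise on fixed intervals whose length is a power of $n$, so that averaging over $\bP_t$ cannot cause cancellation.
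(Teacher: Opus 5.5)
Your Step~2 is essentially the paper's argument. The paper's integrand $h(z)=1-\sqrt{1+z}+z/2$ equals $\tfrac12(\sqrt{1+z}-1)^2$, and it uses the same two regimes: the quadratic regime via $\gamma_n|f-1|\lesssim n^{\Wrate-p}$, and the linear regime via the window $A_n$. For $\mathrm{TV}\lesssim\Delta_n$ the paper bounds the negative part, $(1-f)_+\le 1-y^{\Delta_n/(1-\Delta_n)}$. Your positive-part version also works, most cleanly through $f-1\le\sum_{w\ne w^\star}y^{1/P_w-1}$, which integrates to exactly $\Delta_n$.

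The routes differ in Step~1. The paper centers each atom, $b_w(y)=y^{1/P_w-1}-P_w$. This gives $\mathbb E_0(f-1)^2=\sum_{w_1,w_2}I(w_1,w_2)$ with $I(w_1,w_2)=\frac{P_{w_1}P_{w_2}(1-P_{w_1})(1-P_{w_2})}{P_{w_1}+P_{w_2}-P_{w_1}P_{w_2}}\ge0$. The lower bound then follows by discarding every block except light--light, and no cancellation ever has to be tracked.

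Your uncentered bookkeeping gives the upper bound. However, your claim that the leftover $O(\Delta_n^2)=O(n^{-2q})$ is $o(n^{\Wrate-q})$ is false at $\Wrate=q=0$, where both quantities are of order one. An order-one leftover of unknown sign could cancel the light--light block, so as written your lower bound is not established in that corner. The repair is short. The leftover equals $\frac{2\Delta_n^2}{1+\Delta_n}-2\Delta_n\sum_{w\ne w^\star}\frac{P_w^2}{1-\Delta_n+\Delta_nP_w}$. This is $\ge0$ because $P_w\le 1-\Delta_n$ for $w\ne w^\star$ (since $w^\star$ is the mode). Alternatively, center as the paper does.

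For Step~3, the paper proves $\mathbb E_0(\bar f_{1,t}-1)^2\gtrsim n^{\Wrate-q}$ algebraically in Lemma~\ref{lem:conditional-second-moment-inverse}. It writes $\mathbb E_0\bar f_{1,t}^2-1$ as a $\rho\otimes\rho$ average of $\pi_1\pi_2\{\mathbb E_0f_{1,\bP_1}f_{1,\bP_2}-1\}$ over two independent draws of $(\pi_t,\bP_t)$ given $\mathcal Y_{t-1}$. It then bounds this cross-moment from below, uniformly, using light--light pairs of two different NTP distributions.

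Your pointwise route is more elementary, and the spike window alone already suffices. Suppose $f_{1,\bP}-1\ge c'n^{\Wrate}$ on $A_n=\{y\ge1-c\,n^{-(\Wrate+q)}\}$ uniformly over admissible $\bP$. Then the same bound holds for $\bar f_{1,t}$, and $\int_{A_n}(\bar f_{1,t}-1)^2\,\rd y\gtrsim n^{2\Wrate}n^{-(\Wrate+q)}=n^{\Wrate-q}$. So the bulk region is unnecessary. The same estimate also gives the unconditional lower bound in Step~1, including the corner $\Wrate=q=0$.

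One detail needs to be made explicit; the paper glosses over it as well. At $\Wrate=0$ you need $f-1\gtrsim1$ on $A_n$, not merely $f\gtrsim1$. This matters both here and in your Step~2 claim that $\gamma_n(f-1)\gtrsim n^{\Wrate-p}$. To get it, take $c$ small, so that each light atom contributes nearly $1$ while the dominant term $y^{\Delta_n/(1-\Delta_n)}$ stays close to $1$. When $q=0$ this uses $P_{w^\star}\ge\eta_\star$.
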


\begin{proof}[Proof of Theorem~\ref{thm:gumbel-general}]
We essentially use Lemma~\ref{lem:H2} together with Lemma~\ref{lem:HCL-hellinger-estimates}.  Since $\pi_t\asymp\varepsilon_n$ uniformly in $t$, and since $\varepsilon_n\asymp n^{-p}$ with the convention $n^{-p}=1$ when $p=0$, the one-step Hellinger scale is
\begin{equation}
\label{eq:one-step-H-order-v3}
H^2\big(\mu_0,(1-\pi_t)\mu_0+\pi_t\mu_{1,\bP_t}\big)
\asymp
\begin{cases}
\varepsilon_n^2 n^{\alpha-q}, & p>\alpha,\\
\varepsilon_n\Delta_n, & 0\le p\le\alpha.
\end{cases}
\end{equation}
For $p=0$, the second line means $H^2\asymp\Delta_n$.  The same two orders are valid as almost-sure lower bounds after conditioning on the observed past and replacing $\mu_{1,\bP_t}$ by $\bar\mu_{1,t}$.

We first prove the merging case.  If $\max\{p+q,2p+q-\alpha\}>1$, then necessarily $p>0$ under $p,q\in[0,1]$ and $\alpha\in[0,1)$.  By Lemma~\ref{lem:H2}(i), it suffices to show that the sum of the one-step Hellinger terms tends to zero.  In the regime $p>\alpha$, using \eqref{eq:one-step-H-order-v3}, we have
\[
\sum_{t=1}^n H^2\big(\mu_0,(1-\pi_t)\mu_0+\pi_t\mu_{1,\bP_t}\big)
\lesssim n\varepsilon_n^2n^{\alpha-q}
=n^{1+\alpha-q-2p+o(1)},
\]
which tends to zero when $2p+q-\alpha>1$.  In the regime $0<p\le\alpha$, similarly we have
\[
\sum_{t=1}^n H^2\big(\mu_0,(1-\pi_t)\mu_0+\pi_t\mu_{1,\bP_t}\big)
\lesssim n\varepsilon_n\Delta_n
=n^{1-p-q+o(1)},
\]
which tends to zero when $p+q>1$.  Therefore the joint null and alternative laws merge whenever $\max\{p+q,2p+q-\alpha\}>1$.

We next prove separation.  If $p>\alpha$, take $a_n\asymp\varepsilon_n^2n^{\alpha-q}$ in Lemma~\ref{lem:H2}(ii).  Then $na_n\to\infty$ precisely when $2p+q-\alpha<1$.  If $0\le p\le\alpha$, take $a_n\asymp\varepsilon_n\Delta_n$, with $\varepsilon_n\asymp1$ in the dense case $p=0$.  Then $na_n\to\infty$ precisely when $p+q<1$.  Hence, whenever
\[
\max\{p+q,2p+q-\alpha\}<1,
\]
Lemma~\ref{lem:H2}(ii) gives $\mathrm{TV}(\rho_0,\rho_1)\to1$.  The likelihood-ratio test has minimum total testing error $1-\mathrm{TV}(\rho_0,\rho_1)$, so its sum of Type~I and Type~II errors tends to zero.
\end{proof}

\paragraph{Auxiliary proofs for Theorem~\ref{thm:gumbel-general}.}

In the following, we present the omitted proofs for the lemmas used above.

\begin{proof}[Proof of Lemma~\ref{lem:conditional-pivot-mixture}]
Conditional on $\mathcal F_{t-1}$, the NTP distribution $\bP_t$ is fixed and $\theta_t$ has conditional success probability $\pi_t$.  If $\theta_t=1$, the observed token $w_t$ is generated by the Gumbel--max decoder and the pivotal statistic $Y_t$  has law $\mu_{1,\bP_t}$.  If $\theta_t=0$, the observed token $w_t$ is conditionally independent of the pseudorandomness $\zeta_t$ so that the pivotal statistic $Y_t$  has law $\mu_0$.  Taking the conditional expectation over $\theta_t$ gives~\eqref{eq:conditional-pivot-mixture}. 

Note that $\mathcal Y_{t-1} \subset \mathcal{F}_{t-1}$ because $Y_t$ is a measurable function of $(w_t, \zeta_t)$. 
The law of total expectation, or the tower rule, implies that we have 
$\EB[\EB[h(Y)|\FM_{t-1}]|\mathcal{Y}_{t-1}] = \EB[h(Y)|\mathcal{Y}_{t-1}]$.
When taking a further conditional expectation over $\mathcal Y_{t-1}$ on both sides of \eqref{eq:conditional-pivot-mixture}, we have
\[
\mathbb E\big[h(Y_t)\mid \mathcal Y_{t-1}\big]
=(1-\bar{\pi}_t)\mathbb E_0h(Y)+\bar{\pi}_t\mathbb E_{\bar{f}_{1,t}}[h(Y)].
\]
Since the above identity holds for all bounded measurable $h$, the conditional law of $Y_t$ over $\mathcal{Y}_{t-1}$ has the PDF given in~\eqref{eq:conditional-density-Ypast}.
\end{proof}

\begin{proof}[Proof of Lemma~\ref{lem:H2}]
For probability measures \(P\) and \(Q\) dominated by a common measure \(\nu\), with densities
\(p\) and \(q\), we use the convention
\[
H^2(P,Q)
:=1-\int \sqrt{pq}\,\rd\nu
=\frac12\int(\sqrt p-\sqrt q)^2\,\rd\nu .
\]
Equivalently, the Hellinger affinity is
\[
\mathcal A(P,Q):=\int \sqrt{pq}\,\rd\nu=1-H^2(P,Q).
\]
In our setting, the joint null law of $(Y_1,\ldots,Y_n)$ is $\rho_0=\mathrm{Unif}(0,1)^{\otimes n},$ the uniform law on the unit cube $[0,1]^n$. We denote by $\rho_1$ the joint law of $(Y_1,\ldots,Y_n)$ under the mixed-source model whose conditional one-step densities are given by Lemma~\ref{lem:conditional-pivot-mixture}. 
Let \(A_t:=\mathbb E_0\sqrt{\rho_1(Y_{1:t})}\) with \(A_0=1\).
By the definition above, \(A_n\) is the Hellinger affinity between the alternative law $\rho_1$ and null joint law $\rho_0$.

For simplicity, we write $\lambda_t(y):=1-\bar\pi_t+\bar\pi_t\bar f_{1,t}(y),$ where \(\bar\pi_t=\mathbb P(\theta_t=1\mid\mathcal Y_{t-1})\) and
\(\bar\pi_t\bar f_{1,t}(y)= \mathbb E[\pi_t f_{1,\bP_t}(y)\mid\mathcal Y_{t-1}]\). At a high level, \(\lambda_t\) is the actual conditional density of \(Y_t\) given \(\mathcal Y_{t-1}\) under the
alternative $H_1$.  By Lemma~\ref{lem:conditional-pivot-mixture}, the alternative
likelihood factors recursively as
\begin{equation}
\label{eq:conditional-likelihood-factorization}
\rho_1(Y_{1:t})=
\rho_1(Y_{1:t-1})
\lambda_t(Y_t),
\qquad t=1,\ldots,n.
\end{equation}
Using the last equation, we then simplify the expression of $A_n$.
Conditioning on \(\mathcal Y_{t-1}\) under \(H_0\), and using \(Y_t\sim\mu_0\), gives
\[
A_t
=
\mathbb E_0\!\left[
\sqrt{\rho_1(Y_{1:t-1})}
\int_0^1\sqrt{\lambda_t(y)}\,\rd y
\right]
=
\mathbb E_0\!\left[
\sqrt{\rho_1(Y_{1:t-1})}\{1-h_t\}
\right],
\]
where, for simplicity, we denote by
\[
h_t:=
H^2\big(\mu_0,(1-\bar\pi_t)\mu_0+\bar\pi_t\bar\mu_{1,t}\big).
\]

For part~(i), the density \(\lambda_t\) is not a fixed density of the form
\(1-\gamma+\gamma f_{1,\bP}\); rather, it is the conditional average
$\lambda_t(y)
=
\mathbb E\!\left[
1-\pi_t+\pi_t f_{1,\bP_t}(y)
\,\middle|\,\mathcal Y_{t-1}
\right].$
Since \(H^2(\mu_0,\cdot)\) is convex in its second argument, or equivalently since the Hellinger
affinity is concave in the second density, we have
\[
h_t
\le
\mathbb E\!\left[
H^2\big(\mu_0,(1-\pi_t)\mu_0+\pi_t\mu_{1,\bP_t}\big)
\,\middle|\,\mathcal Y_{t-1}
\right].
\]
Using \(c\varepsilon_n\le\pi_t\le C\varepsilon_n\) and \(\bP_t\in\mathcal P_n\), it almost surely follows that
\[
h_t
\le
\sup_{\substack{\bP\in\mathcal P_n\\ \gamma\in[c\varepsilon_n,C\varepsilon_n]}}
H^2\big(\mu_0,(1-\gamma)\mu_0+\gamma\mu_{1,\bP}\big) =: b_t.
\]
The affinity recursion thus gives \(A_t\ge(1-b_t)A_{t-1}\).  Iterating gives $A_n\ge\prod_{t=1}^n(1-b_t)$ and 
\[
H^2(\rho_0,\rho_1)
=1-A_n
\le
1-\prod_{t=1}^n(1-b_t)
\le
\sum_{t=1}^n b_t
=o(1).
\]
Thus \(\mathrm{TV}(\rho_0,\rho_1)\le \sqrt{2}\,H(\rho_0,\rho_1)\to0\).

For part~(ii), the assumption is on the actual conditional one-step law, namely
\(h_t\ge a_n\) almost surely for every \(t\).  Hence, every conditional one-step affinity (that is $\int_0^1\sqrt{\lambda_t(y)}\,\rd y = 1- h_t$) is at most \(1-a_n\), and the same recursion yields $A_t \le (1-a_n)A_{t-1}.$
Therefore, it follows that
\[
A_n\le(1-a_n)^n\le \exp(-na_n)\to0.
\]
Finally, for any two probability measures \(P,Q\), $\mathrm{TV}(P,Q)\ge 1-\int\sqrt{\rd P\,\rd Q}.$
Applying this to the two joint laws gives $\mathrm{TV}(\rho_0,\rho_1)\ge 1-A_n\to1.$
This proves the lemma.
\end{proof}

\begin{proof}[Proof of the second-moment estimate \eqref{eq:second-moment-scale} in Lemma~\ref{lem:HCL-hellinger-estimates}]
Fix \(t\) and suppress the \(t,n\) subscripts whenever there is no ambiguity. So we write
\(P_w:=P_{t,w}\), \(w^\star:=w^\star_{t,n}\), \(C:=C_{t,n}\), and \(L:=L_{t,n}\) for simplicity.
 For each \(w\in\mathcal W_n\), set $b_w(y):=y^{\frac1{P_w}-1}-P_w$ for short.
Then \(f_{1,\bP_t}-1=\sum_{w\in\mathcal W_n}b_w(y)\) and
\(\int_0^1 b_w(y)\rd y=0\).  For \(w_1,w_2\in\mathcal W_n\), we introduce the helper function
\[
I(w_1,w_2):=\int_0^1 b_{w_1}(y)b_{w_2}(y)\rd y .
\]
A direct integration gives that, if we set \(p_i=P_{w_i}\),
\begin{equation}
\label{eq:Iww-v3}
I(w_1,w_2)
=\frac{1}{a_{w_1}+a_{w_2}-1} -P_{w_1}P_{w_2}
= p_1p_2 \frac{(1-p_1)(1-p_2)}{p_1+p_2-p_1p_2}.
\end{equation}
In particular, \(I(w_1,w_2)\ge0\), and since
\(p_1+p_2-p_1p_2\ge p_1\vee p_2\), we also have the useful bound $I(w_1,w_2)\le \min\{P_{w_1},P_{w_2}\}.$
Therefore, the target $\mathbb E_0 [f_{1,\bP_t}(Y)-1]^2$ is reformulated as 
\begin{equation}
\label{eq:second-moment-sun-decomposition}
\mathbb E_0 [f_{1,\bP_t}(Y)-1]^2 = \sum_{w_1,w_2\in\mathcal W_n} I(w_1,w_2).
\end{equation}

\paragraph{Upper bound.}
For the upper bound, we split the ordered pairs $(w_1,w_2)$ in the sum in~\eqref{eq:second-moment-sun-decomposition} into three disjoint classes.
First, we consider pairs with at least one light token.  If, say, \(w_1\in L\), then \(P_{w_1}\asymp n^{-(\alpha+q)}\) by definition.  By the bound $I(w_1,w_2)\le \min\{P_{w_1},P_{w_2}\}$, we have
\[
I(w_1,w_2)\le P_{w_1}\lesssim n^{-(\alpha+q)}
\qquad
\text{for every }w_2\in\mathcal W_n .
\]
Note that there are at most \(2|L||\mathcal W_n|\lesssim n^{2\alpha}\) ordered pairs with at least one
light coordinate.  Hence, their total contribution is bounded by
\[
\sum_{\substack{(w_1,w_2):\\ w_1\in L\ \text{or}\ w_2\in L}}
I(w_1,w_2)
\lesssim
n^{2\alpha}n^{-(\alpha+q)}
=
n^{\alpha-q}.
\]

Second, we consider pairs with no light token but with at least one core token.  In other words, we consider core--core and dominant--core pairs.  Let $\Delta^{\mathrm{core}}_{t,n}:=\sum_{w\in C}P_w$ denote the total probability mass of the core tokens.
By Assumption~\ref{asmp:HCL+}, \(|C|\asymp n^r\), \(r<\alpha\), and
\(\Delta^{\mathrm{core}}_{t,n}\lesssim n^{-q}\).  For the core--core pairs,
using \(I(w_1,w_2)\le P_{w_1}\) again gives
\[
\sum_{w_1,w_2\in C} I(w_1,w_2)
\le
\sum_{w_1\in C}\sum_{w_2\in C}P_{w_1}
=
|C|\sum_{w_1\in C}P_{w_1}
\lesssim
n^r n^{-q}
=
n^{r-q}.
\]
For the dominant--core pairs, again using \(I(w^\star,w)\le P_w\) and
\(I(w,w^\star)\le P_w\),
\[
\sum_{w\in C}\{I(w^\star,w)+I(w,w^\star)\}
\le
2\sum_{w\in C}P_w
\lesssim
n^{-q}.
\]
Thus, the total contribution from all pairs with no light token but at least one core token is
\(O(n^{r-q})\), which is \(O(n^{\alpha-q})\) because \(r<\alpha\).

Finally, we consider the dominant--dominant pair.  Since \(P_{w^\star} = 1-\Delta_n\),
\[
I(w^\star,w^\star)
=
\frac{1}{2/(1-\Delta_n)-1}
-
(1-\Delta_n)^2
=
\frac{1-\Delta_n}{1+\Delta_n}
-
(1-\Delta_n)^2
=
\frac{(1-\Delta_n)\Delta_n^2}{1+\Delta_n}.
\]
Therefore \(I(w^\star,w^\star)\lesssim \Delta_n^2\asymp n^{-2q}\), and since
\(\alpha\ge0\), this is also \(O(n^{\alpha-q})\).

Combining the three classes gives
\[
\mathbb E_0 [f_{\bP_t}(Y)-1]^2
=
\sum_{w_1,w_2\in\mathcal W_n} I(w_1,w_2)
\lesssim
n^{\alpha-q}.
\]

\paragraph{Lower bound.}
For the lower bound, we restrict the sum in~\eqref{eq:second-moment-sun-decomposition} to light--light pairs only.  If \(w_1,w_2\in L\), then
\(P_{w_i}\asymp n^{-(\alpha+q)}\).  Using the positive representation of \(I(w_1,w_2)\),
\[
I(w_1,w_2)
=
P_{w_1}P_{w_2}
\frac{(1-P_{w_1})(1-P_{w_2})}{P_{w_1}+P_{w_2}-P_{w_1}P_{w_2}}
\asymp
n^{-(\alpha+q)}.
\]
Indeed, the numerator is of order \(n^{-2(\alpha+q)}\), while the denominator is of order
\(n^{-(\alpha+q)}\).  There are \(|L|^2\asymp n^{2\alpha}\) light--light ordered pairs.  Since all
\(I(w_1,w_2)\) are nonnegative,
\[
\mathbb E_0 [f_{\bP_t}(Y)-1]^2
\ge
\sum_{w_1,w_2\in L}I(w_1,w_2)
\gtrsim
n^{2\alpha}n^{-(\alpha+q)}
=
n^{\alpha-q}.
\]
Together with the upper bound, this proves that, uniformly over $t$ and all admissible $\bP_t$,
\[
\mathbb E_0\big(f_{1,\bP_t}(Y)-1\big)^2
=
\mathbb E_0 [f_{\bP_t}(Y)-1]^2
\asymp
n^{\alpha-q}.
\]
\end{proof}

\begin{proof}[Proof of the Hellinger estimates \eqref{eq:hellinger-scale-uncond} in Lemma~\ref{lem:HCL-hellinger-estimates}]
Fix \(t\), and write \(f_t:=f_{1,\bP_t}\) and \(g_t:=f_t-1\).  For a mixture weight
\(\gamma_n\asymp\varepsilon_n\), we define
\(H^2_{\gamma_n}:= H^2\big(\mu_0,(1-\gamma_n)\mu_0+\gamma_n\mu_{1,\bP_t}\big)\) for simplicity.  The density of the mixture-source data is \(1+\gamma_n g_t\).  Thus, with \(Y\sim\mu_0\) and \(Z:=\gamma_n g_t(Y)\), we have \(\mathbb E_0 Z=0\) and
\[
H^2_{\gamma_n} = 1-\mathbb E_0\sqrt{1+Z} = \mathbb E_0 h(Z),
\qquad
h(z):=1-\sqrt{1+z}+\frac z2 .
\]
We shall use the elementary facts that \(h(z)\asymp z^2\) uniformly for \(|z|\) small; if
\(z\ge -\bar\gamma\) for some fixed \(\bar\gamma<1\), then \(0\le h(z)\lesssim_{\bar\gamma}|z|\);
and \(h(z)\asymp z\) for \(z\ge1\). Here, $\lesssim_{\bar\gamma}$ means the omitted constant depends only on $\bar\gamma$.
As a result, we have $h(z) \lesssim |z|$ for all $z \ge -\gamma_0$ where $\gamma_0 \in (0, 1)$ by definition.

\paragraph{First suppose \(p>\alpha\).}  Since \(0\le y^{1/P_{t,w}-1}\le1\) on \((0,1)\), we have
\(0\le f_t(y)\le|\mathcal W_n|\) and hence \(|Z|\le \gamma_n|\mathcal W_n|\).  Because
\(\gamma_n|\mathcal W_n|\asymp n^{\alpha-p}=o(1)\), the perturbation is uniformly small.  Therefore
\(h(Z)\asymp Z^2\), and the second-moment estimate
\eqref{eq:second-moment-scale} gives
\begin{equation}
\label{eq:help1}
H^2_{\gamma_n}
\asymp
\mathbb E_0Z^2
=
\gamma_n^2\mathbb E_0\big(f_t(Y)-1\big)^2
\asymp
\gamma_n^2 n^{\alpha-q}.
\end{equation}
This proves the quadratic Hellinger scale.

\paragraph{Now suppose \(0\le p\le\alpha\).}  We first prove the upper bound.  Since \(Z\ge-\gamma_n\), and the mixture weights considered here are bounded above by a constant strictly smaller than one, the
bound \(h(Z)\lesssim |Z|\) gives
\[
H^2_{\gamma_n}
\lesssim
\mathbb E_0|Z|
=
2\mathbb E_0 (Z)_+
=
2\gamma_n\mathbb E_0\big(f_t(Y)-1\big)_+ .
\]
Here we used \(\mathbb E_0 Z=0\) and the notation $(x)_+ := \max\{x, 0\}$.  The dominant token gives
\(f_t(y)\ge y^{1/P_{t,w^\star_{t,n}}-1}=y^{\Delta_n/(1-\Delta_n)}\).  Hence, \((1-f_t(y))_+\le 1-y^{\Delta_n/(1-\Delta_n)}\), and since
\(\mathbb E_0(f_t-1)_+=\mathbb E_0(1-f_t)_+\), we obtain
\[
\mathbb E_0\big(f_t(Y)-1\big)_+
\le
\int_0^1\left(1-y^{\Delta_n/(1-\Delta_n)}\right)\rd y
=
\Delta_n .
\]
Therefore, we prove that the upper bound that \(H^2_{\gamma_n}\lesssim \gamma_n\Delta_n \).

It remains to prove the matching lower bound in the regime \(0\le p\le\alpha\).  Let
\(A_n:=\{Y\ge 1-c n^{-(\alpha+q)}\}\), where \(c>0\) is a sufficiently small fixed constant.  Then
\(\mathbb P_0(A_n)\asymp n^{-(\alpha+q)}\) due to $\mu_0 = \mathrm{Unif}(0, 1)$.  We then consider a light token \(w\in L_{t,n}\) which satisfies the bounds \(P_{t,w} \le C n^{-(\alpha+q)}\) for some universal constant $C > 0$.
On the event \(A_n\), there exists a universal constant $c_0 > 0$ such that for any sufficiently large $n$,
\[
Y^{1/P_{t, w}-1} \ge (1-c n^{-(\alpha+q)})^{n^{\alpha+q}/C} \ge c_0.
\]
Therefore, uniformly over any $t$, \(g_t(Y) = f_t(Y)-1\gtrsim |L_{t,n}|\asymp n^\alpha\) on \(A_n\), and consequently \(Z = \varepsilon_n g_t(Y)\gtrsim \gamma_n n^\alpha\) on this event.  If \(p<\alpha\), then \(\gamma_n n^\alpha\to\infty\), so \(h(Z)\gtrsim Z\) on \(A_n\), and
\begin{equation}
\label{eq:help2}
H^2_{\gamma_n}
\ge
\mathbb E_0\big[h(Z)\mathbf 1_{A_n}\big]
\gtrsim
\gamma_n n^\alpha\,\mathbb P_0(A_n)
\asymp
\gamma_n n^{-q}
\asymp
\gamma_n\Delta_n .
\end{equation}
If \(p=\alpha\), including \(p=\alpha=0\), then by a similar argument, \(Z \gtrsim \gamma_n n^\alpha \asymp 1\) is bounded below by a positive constant on \(A_n\).  Hence, \(h(Z)\gtrsim1\) on \(A_n\), and thus
\[
H^2_{\gamma_n}
\gtrsim
\mathbb P_0(A_n)
\asymp
n^{-(\alpha+q)}
\asymp
\gamma_n\Delta_n .
\]
Together with the upper bound, this proves
\(H^2_{\gamma_n}\asymp\gamma_n\Delta_n\) for \(0\le p\le\alpha\).
\end{proof}

\begin{proof}[Proof of the conditional counterparts result \eqref{eq:second-moment-scale1} in Lemma~\ref{lem:HCL-hellinger-estimates}]
Finally, we verify that similar bounds hold for the actual conditional one-step watermarked density $\bar f_{1,t}$.  The argument is essentially the same as above, with $f_{1,\bP_t}$ replaced by $\bar f_{1,t}$.  For completeness, we spell out the only differences below.

We first prove the second-moment estimate.
By Jensen's inequality and~\eqref{eq:second-moment-scale}, we have $\mathbb E_0\big(\bar f_{1,t}(Y)-1\big)^2 \le \mathbb E_0\big( f_{1,\bP_t}(Y)-1\big)^2 \asymp n^{\alpha - q}$. 
The inverse direction follows from Lemma \ref{lem:conditional-second-moment-inverse}.
\begin{lem}
\label{lem:conditional-second-moment-inverse}
Under Assumptions \ref{asmp:main}---\ref{asmp:HCL+}, it follows that 
\[
\mathbb E_0\big( \bar{f}_{1,t}(Y)-1\big)^2 \gtrsim n^{\alpha - q}.
\]
\end{lem}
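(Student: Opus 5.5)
The plan is to bypass the pairwise-integral expansion used for \eqref{eq:second-moment-scale}, since averaging over the unobserved $\bP_t$ and $\pi_t$ given $\mathcal Y_{t-1}$ destroys the explicit structure. Instead, the lower bound will come from a localized event on which every admissible $f_{1,\bP}$ is uniformly large. Take the upper-tail event $A_n:=\{y\ge 1-c\,n^{-(\alpha+q)}\}$ from the proof of \eqref{eq:hellinger-scale-uncond}, with $c>0$ a small fixed constant. The argument there shows that there are constants $c_0,c_L>0$, not depending on $t$ or on the admissible $\bP$, such that for all $y\in A_n$ and $w\in L_{t,n}$ we have $y^{1/P_{t,w}-1}\ge c_0$. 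Summing over the light set, with $|L_{t,n}|\ge c_L n^{\alpha}$, gives
\[
f_{1,\bP}(y)\ \ge\ c_0 c_L\, n^{\alpha}\qquad\text{for all } y\in A_n \text{ and every }\bP\text{ satisfying Assumption~\ref{asmp:HCL+}}.
\]
This bound is deterministic and uniform over the class $\mathcal P_n$.

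The key step is to transfer this bound to $\bar f_{1,t}$. By definition, $\bar f_{1,t}(y)=\mathbb E[\pi_t f_{1,\bP_t}(y)\mid\mathcal Y_{t-1}]/\bar\pi_t$, and this is a weighted average of admissible densities $f_{1,\bP_t}$ with nonnegative weights $\pi_t/\bar\pi_t$ of conditional mean one. The pointwise lower bound on $A_n$ is preserved under such averaging, so almost surely $\bar f_{1,t}(y)\ge c_0c_L n^{\alpha}$ on $A_n$. If a version issue arises because the identity holds only for a.e.\ $y$, I will integrate against $\mathbf 1_{A_n}$ and apply the tower property to $\mathbb E[\pi_t\int_{A_n} f_{1,\bP_t}\,\rd y\mid\mathcal Y_{t-1}]$, which gives the equivalent integrated statement.

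Now restrict the second moment to $A_n$. For $n$ large, $c_0c_Ln^{\alpha}-1\ge \tfrac12 c_0c_L n^{\alpha}$ when $\alpha>0$. Then
\[
\mathbb E_0\big(\bar f_{1,t}(Y)-1\big)^2\ \ge\ \int_{A_n}\big(\bar f_{1,t}(y)-1\big)^2\rd y\ \gtrsim\ n^{2\alpha}\cdot n^{-(\alpha+q)}=n^{\alpha-q}.
\]

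I expect the main obstacle to be the degenerate case $\alpha=0$, where $c_0c_L$ may be smaller than one, so the lower bound on $A_n$ does not force $\bar f_{1,t}-1$ to be bounded away from zero. For this case I will use a different localization. The dominant token gives $f_{1,\bP}(y)\ge y^{\Delta_n/(1-\Delta_n)}$. On the event $\{y\ge 1-c'n^{-q}\}$ with $c'$ small, the light-token terms still contribute at least $c_0c_L$, because when $\alpha=0$ we have $P_w\asymp n^{-q}$ and the same exponent estimate applies. The dominant term contributes at least $1-O(c')$. Choosing $c'$ small, the sum therefore exceeds $1+\tfrac12 c_0c_L$ uniformly over $\mathcal P_n$. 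This bound again passes through the conditional average, and the event has probability $\asymp n^{-q}=n^{\alpha-q}$, which gives the claim. The same two-case scheme, with uniform bounds pushed through the conditional expectation, also delivers the conditional Hellinger lower bounds in \eqref{eq:second-moment-scale1}.
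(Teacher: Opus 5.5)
Your proof is correct, but it takes a different route from the paper.

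\textbf{The paper's route.} For this lemma the paper does not localize. It writes $\mathbb E_0(\bar f_{1,t}(Y)-1)^2$ as an average over two independent draws $(\pi_1,\bP_1),(\pi_2,\bP_2)$ from the conditional law of $(\pi_t,\bP_t)$ given $\mathcal Y_{t-1}$, weighted by $\pi_1\pi_2/(\mathbb E\pi_1)^2$. It then bounds the cross moment $\mathbb E_0 f_{1,\bP_1}(Y)f_{1,\bP_2}(Y)-1$ from below, uniformly over admissible pairs. For this it uses the closed-form pairwise sum $\sum_{w,j}P_{1,w}P_{2,j}(1-P_{1,w})(1-P_{2,j})/\{1-(1-P_{1,w})(1-P_{2,j})\}$, whose terms are all nonnegative, and keeps only light--light pairs, giving $|L_n(\bP_1)|\,|L_n(\bP_2)|\,n^{-(\alpha+q)}\asymp n^{\alpha-q}$. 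This mirrors the upper-bound computation for \eqref{eq:second-moment-scale}. It also shows that the whole second moment is driven by light--light interactions, and it handles all $\alpha\ge0$ at once. (Its remark that light probabilities are small fails when $\alpha=q=0$; there one uses $1-P_{k,w}\ge\eta_\star$ instead.)

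\textbf{Your route.} You reuse the tail-event device from the Hellinger lower bound. A pointwise, $\bP$-uniform lower bound on $A_n$ survives any nonnegative reweighting. So you need neither the two-draw representation nor the pairwise identity, and you do not even use $\pi_t\asymp\varepsilon_n$ beyond $\bar\pi_t>0$. The same bound also feeds directly into both conditional Hellinger lower bounds. For $p>\alpha$, $Z=\bar\pi_t(\bar f_{1,t}-1)$ is $o(1)$ and of order $\varepsilon_n n^{\alpha}$ on $A_n$, so $h(Z)\asymp Z^2$ there already gives $\varepsilon_n^2n^{\alpha-q}$ without invoking the second moment.

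\textbf{What your route costs.} You capture only the extreme-tail contribution, which happens to be of the correct order, and you need a separate treatment of $\alpha=0$. That split is avoidable: on $A_n$, keep the dominant term $y^{\Delta_n/(1-\Delta_n)}\ge1-O(c)$ in every case and choose $c$ small, exactly as you already do for $\alpha=0$.

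\textbf{Version issue.} The cleanest fix is to use a regular conditional law of $(\pi_t,\bP_t)$ given $\mathcal Y_{t-1}$, as the paper implicitly does. This makes $\bar f_{1,t}$ a genuine mixture for every $y$. If you take the integrated route instead, add the one missing step, Cauchy--Schwarz: $\int_{A_n}(\bar f_{1,t}-1)^2\,\rd y\ge\{\int_{A_n}(\bar f_{1,t}-1)\,\rd y\}^2/\mu_0(A_n)$.
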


We next prove the corresponding lower bound for the Hellinger estimate, whose argument is similar to that for~\eqref{eq:hellinger-scale-uncond}.
In the first regime \(0\le p\le \alpha\), the pointwise lower bound \(f_t(Y)-1 \gtrsim n^{\alpha}\) on the event \(A_n:={Y\ge 1-c n^{-(\alpha+q)}}\) holds almost surely and uniformly over all admissible \(\bP_t\). This lower bound is thus preserved under the weighted conditional average defining \(\bar f_{1,t}\). Repeating the lower-bound argument in~\eqref{eq:help1}, with \(\gamma_n\) replaced by \(\bar\pi_t\), gives
\[
H^2\bigl(\mu_0,(1-\bar\pi_t)\mu_0+\bar\pi_t\bar\mu_{1,t}\bigr)
\gtrsim
\bar\pi_t\Delta_n
\qquad\text{a.s.}
\]
In the second regime \(p>\alpha\), we again have
\(\bar\pi_t|\mathcal W_n|=o(1)\), so the Hellinger integrand can be analyzed similarly as in \eqref{eq:help1}:
\[
H^2\big(\mu_0,(1-\bar\pi_t)\mu_0+\bar\pi_t\bar\mu_{1,t}\big)
\asymp
\bar\pi_t^{\,2}\mathbb E_0\big(\bar f_{1,t}(Y)-1\big)^2 .
\]
Since \(\bar\pi_t\asymp\varepsilon_n\), the desired conditional lower bound follows from Lemma \ref{lem:conditional-second-moment-inverse}.
\end{proof}

\begin{proof}[Proof of Lemma~\ref{lem:conditional-second-moment-inverse}]
Recall that \(\bar\pi_t:=\mathbb P(\theta_t=1\mid\mathcal Y_{t-1})\) and \(\bar\pi_t\bar f_{1,t}(y)= \mathbb E[\pi_t f_{1,\bP_t}(y)\mid\mathcal Y_{t-1}]\).  
Fix $t$ and condition on $\mathcal{Y}_{t-1}$. Let \(\rho\) denote the conditional law of \((\pi_t,\bP_t)\) given \(\mathcal Y_{t-1}\), and let \((\pi_1,\bP_1)\) and \((\pi_2,\bP_2)\) be two independent draws from \(\rho\).  Since
\(\mathbb E_0\bar f_{1,t}(Y)=1\), we have
\[
\begin{aligned}
\mathbb E_0\bigl(\bar f_{1,t}(Y)-1\bigr)^2
&= \mathbb E_0\bar f_{1,t}(Y)^2-1 = \frac{
\mathbb E_{\rho\otimes\rho}
\left[
\pi_1\pi_2
\left\{
\mathbb E_0 f_{1,\bP_1}(Y)f_{1,\bP_2}(Y)-1
\right\}
\right]
}{
\bigl(\mathbb E_\rho\pi_1\bigr)^2
}.
\end{aligned}
\]
It remains to obtain a uniform lower bound for
\(\mathbb E_0 f_{1,\bP_1}(Y)f_{1,\bP_2}(Y)-1\) over all admissible
\(\bP_1,\bP_2\).
By the same calculation as in equation~(20) of~\citet{li2025robust},
\[
\begin{aligned}
\mathbb E_0 f_{1,\bP_1}(Y)f_{1,\bP_2}(Y)-1
&=
\sum_{w\in\mathcal W_n}\sum_{j\in\mathcal W_n}
\frac{
P_{1,w}P_{2,j}(1-P_{1,w})(1-P_{2,j})
}{
1-(1-P_{1,w})(1-P_{2,j})
}  \\
&\ge
\frac12
\sum_{w\in\mathcal W_n}\sum_{j\in\mathcal W_n}
(P_{1,w}\wedge P_{2,j})(1-P_{1,w})(1-P_{2,j}).
\end{aligned}
\]
For \(k=1,2\), let \(L_n(\bP_k)\) be the light set of \(\bP_k\).  Under
Assumption~\ref{asmp:HCL+}, we have $|L_n(\bP_k)|\asymp n^\alpha$ and $P_{k,w}\asymp n^{-(\alpha+q)}$ for any $w\in L_n(\bP_k).$ 
For all sufficiently large \(n\), the light probabilities are uniformly small, so
\((1-P_{k,w})\ge 1/2\) on \(L_n(\bP_k)\).  Restricting the preceding sum to
\(w\in L_n(\bP_1)\) and \(j\in L_n(\bP_2)\), we obtain
\[
\begin{aligned}
\mathbb E_0 f_{1,\bP_1}(Y)f_{1,\bP_2}(Y)-1
&\gtrsim
\sum_{w\in L_n(\bP_1)}
\sum_{j\in L_n(\bP_2)}
n^{-(\alpha+q)}  \asymp
|L_n(\bP_1)|\,|L_n(\bP_2)|\,n^{-(\alpha+q)}
\asymp n^{\alpha-q}.
\end{aligned}
\]
This lower bound is uniform over all admissible \(\bP_1,\bP_2\).  Since
\(\pi_t\asymp \varepsilon_n\) uniformly by Assumption~\ref{asmp:main}(c), the weighted average above preserves this lower bound:
\[
\mathbb E_0\bigl(\bar f_{1,t}(Y)-1\bigr)^2
\gtrsim
\frac{
\mathbb E_{\rho\otimes\rho}[\pi_1\pi_2]\, n^{\alpha-q}
}{
(\mathbb E_\rho\pi_1)^2
}
\asymp n^{\alpha-q}.
\]
The bound holds almost surely in \(\mathcal Y_{t-1}\), uniformly in \(t\).  This proves the lemma.
\end{proof}

\subsection{Proof of Theorem~\ref{thm:discovery-boundary}: Discovery boundary}

The proof is organized around the right-tail behavior of the pivotal statistic.  The tail bounds in Lemma~\ref{lem:tail-pivot-HCL} will be used repeatedly in the proofs of the classification and adaptivity results.

\begin{lem}[Pivot tails under the dominant--core--light model]
\label{lem:tail-pivot-HCL}
Under Assumptions~\ref{asmp:main} and~\ref{asmp:HCL+}, for every $u>0$,
\begin{equation}
\label{eq:tail-null-HCL}
\mathbb P\big(Y_t>1-n^{-u}\mid \theta_t=0\big)=n^{-u}.
\end{equation}
Moreover, uniformly in $t$,
\begin{subequations}
\label{eq:tail-signal}
\begin{empheq}[left={\mathbb P\big(Y_t>1-n^{-u}\mid \theta_t=1\big)\asymp\empheqlbrace}]{align}
n^{-u}+n^{-q}, &\qquad 0<u<\alpha+q, \label{eq:tail-signal-u-less}\\
n^{\alpha-u}, &\qquad u\ge \alpha+q . \label{eq:tail-signal-u-greater}
\end{empheq}
\end{subequations}
\end{lem}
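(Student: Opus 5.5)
The null tail \eqref{eq:tail-null-HCL} is immediate. Given $\theta_t=0$, Assumption~\ref{asmp:main}(b) makes $w_t$ conditionally independent of $\xi_t$. Hence $Y_t=U_{t,w_t}\sim\mathrm{Unif}(0,1)$, and $\mathbb P(Y_t>1-n^{-u}\mid\theta_t=0)=n^{-u}$. For the signal case, I will condition on $\mathcal G_t$ (equivalently on $\bP_t$ with $\theta_t=1$) and use Lemma~\ref{lem:alt-pivot-density}. Set $s:=n^{-u}$. The exact tail is
\[
\mathbb P\big(Y_t>1-s\mid \bP_t,\theta_t=1\big)=1-\sum_{w}P_{t,w}(1-s)^{1/P_{t,w}}=\sum_{w}P_{t,w}\,g\!\big(s/P_{t,w}\big),
\]
where $g_P(s):=1-(1-s)^{1/P}$. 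The basic two-sided estimate is
\[
P\,g_P(s)\asymp \min\{P,\,s\},
\]
uniformly over $P\in(0,1]$ and $s\in(0,1/2]$. It follows from $1-e^{-x}\asymp\min\{1,x\}$ together with $-\log(1-s)\asymp s$. Thus each token contributes $\asymp\min\{P_{t,w},n^{-u}\}$, and it remains to sum over the dominant--core--light decomposition of Assumption~\ref{asmp:HCL+}. All constants are uniform in $t$, so averaging over $\bP_t$ given $\theta_t=1$ keeps the same orders.

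The dominant token contributes $(1-\Delta_n)g(s)$ with $1/P=1/(1-\Delta_n)$. Since $1/P-1=\Delta_n/(1-\Delta_n)$, we get $(1-s)^{1/P}=(1-s)(1-s)^{\Delta_n/(1-\Delta_n)}$. Therefore this contribution is $\asymp s+\Delta_n\min\{1,s\log(1/s)\}$, up to a term of order $s\Delta_n$. This is at most $s+\Delta_n$; the case $q=0$ uses $P\ge\eta_\star$. The light set contributes $\asymp|L_{t,n}|\min\{n^{-(\alpha+q)},n^{-u}\}$.
\begin{itemize}
\item For $u<\alpha+q$, this is $n^{\alpha}\cdot n^{-(\alpha+q)}=n^{-q}$.
\item For $u\ge\alpha+q$, it is $n^{\alpha-u}$.
\end{itemize}
The core set contributes at most $\min\{\Delta_n^{\rm core},|C_{t,n}|n^{-u}\}\lesssim\min\{n^{-q},n^{r-u}\}$.

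For the lower bound when $u<\alpha+q$, the dominant term gives at least $\gtrsim s=n^{-u}$ and the light set gives $\gtrsim n^{-q}$. For the matching upper bound, the sum is at most $n^{-u}+n^{-q}$, because the core, light and dominant excess terms are each $\lesssim n^{-q}$. For $u\ge\alpha+q$, note that $n^{\alpha-u}\ge n^{-u}$. The light term gives the lower bound $\gtrsim n^{\alpha-u}$. For the upper bound, the dominant term is $\lesssim s+\Delta_n s\log(1/s)$, and the core is $\lesssim n^{r-u}\le n^{\alpha-u}$.

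\textbf{Main obstacle.} The delicate point is the dominant token's excess term $\Delta_n s\log(1/s)\asymp n^{-q-u}\log n$ in the regime $u\ge\alpha+q$. It must be dominated by $n^{\alpha-u}$. When $\alpha>0$ this holds, since $q\ge0$ and $\alpha$ beats the $\log n$ factor. When $\alpha=0$ and $q=0$ it is also comparable, but only up to a $\log n$ factor. I will first redo the dominant-token computation exactly via $1-(1-s)^{1+\Delta/(1-\Delta)}=s+(1-s)\big(1-(1-s)^{\Delta/(1-\Delta)}\big)$ and bound $1-(1-s)^{\delta}\le\delta s/(1-s)$, which gives the sharper $\lesssim \Delta_n s$ with no logarithm. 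With that bound the dominant excess is $\lesssim n^{-q-u}\le n^{\alpha-u}$ in all cases, and the remaining bookkeeping is routine.
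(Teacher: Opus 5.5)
Your proposal is correct and follows essentially the same route as the paper. Both use the exact signal tail $\sum_w P_{t,w}\{1-(1-x)^{1/P_{t,w}}\}$, the two-sided estimate $P\{1-(1-x)^{1/P}\}\asymp x\wedge P$ for $x\le 1/2$ (the paper's Lemma~\ref{lem:help-inequality}), and the same dominant/light/core summation in the regimes $u<\alpha+q$ and $u\ge\alpha+q$.

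The ``main obstacle'' you flag does not exist. Your own estimate, applied to the dominant token with $P=1-\Delta_n$ (at least $\eta_\star$ when $q=0$, tending to $1$ when $q>0$), already gives a contribution $\asymp x\wedge P\asymp n^{-u}$ with no logarithmic excess. That is exactly how the paper treats it; your intermediate $\Delta_n\min\{1,s\log(1/s)\}$ expression was spurious to begin with.
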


\begin{lem}[Bayes tail rule]
\label{lem:bayes-tail-rule-v3}
Fix \(\lambda>0\).  A local decision rule is denoted by
\(\Bdelta=(\delta_1,\ldots,\delta_n)\in\Decision\), where each coordinate decision has the form
\(\delta_t=\phi_t(Y_t)\in\{0,1\}\).  Consider the coordinatewise classification loss
\[
L(\Bdelta) = \frac{1}{n} \sum_{t=1}^n \left(
\theta_t(1-\delta_t)+\lambda(1-\theta_t)\delta_t\right) .
\]
Then the Bayes rule $\Bdelta^{\star} = (\delta_1^\star, \ldots, \delta_n^\star)$ that minimizes the expected loss $\EB[L(\Bdelta)]$ is also coordinatewise. Specifically, for each \(t\), its coordinate decision is a right-tail threshold in \(Y_t\), that is, for some exponent \(u_{n,t}\in[0,\infty]\),
\[
\delta_t^\star=\mathbf 1\{Y_t>1-n^{-u_{n,t}}\},
\]
with the conventions \(n^{-\infty}=0\) and \(n^0=1\).
\end{lem}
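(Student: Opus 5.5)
The loss is additive over positions and each $\delta_t$ depends only on $Y_t$. So I would minimize $\mathbb E[L(\Bdelta)]$ coordinate by coordinate. Write
\[
\mathbb E[L(\Bdelta)]=\frac1n\sum_{t=1}^n \mathbb E\bigl[\theta_t(1-\phi_t(Y_t))+\lambda(1-\theta_t)\phi_t(Y_t)\bigr].
\]
The $t$-th summand involves only $\phi_t$, so a minimizer of the sum is obtained by minimizing each summand separately over measurable $\phi_t:[0,1]\to\{0,1\}$. This already shows the Bayes rule is coordinatewise within $\Decision$.

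For fixed $t$, I condition on $Y_t$. The summand equals
\[
\mathbb E\bigl[\mathbb P(\theta_t=1\mid Y_t)(1-\phi_t(Y_t))+\lambda\,\mathbb P(\theta_t=0\mid Y_t)\phi_t(Y_t)\bigr].
\]
It is minimized pointwise by $\phi_t(y)=1$ iff $\mathbb P(\theta_t=1\mid Y_t=y)>\lambda\,\mathbb P(\theta_t=0\mid Y_t=y)$, with ties broken as rejection. To make this explicit, I average the identity of Lemma~\ref{lem:conditional-pivot-mixture} over $\mathcal F_{t-1}$ (tower rule). This gives the marginal law of $Y_t$ as having density $1-\tilde\pi_t+\tilde\pi_t\tilde f_t(y)$ with respect to $\mu_0=\mathrm{Unif}(0,1)$. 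Here $\tilde\pi_t:=\mathbb E\pi_t=\mathbb P(\theta_t=1)$ and $\tilde f_t(y):=\mathbb E[\pi_t f_{1,\bP_t}(y)]/\tilde\pi_t$, and the joint law of $(\theta_t,Y_t)$ factors accordingly. The Bayes condition then reads $\tilde\pi_t\tilde f_t(y)>\lambda(1-\tilde\pi_t)$, i.e. $\tilde f_t(y)>\kappa_t:=\lambda(1-\tilde\pi_t)/\tilde\pi_t$.

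The key step is monotonicity. By Lemma~\ref{lem:alt-pivot-density}, $f_{1,\bP}(y)=\sum_w y^{1/P_w-1}$ is nondecreasing in $y$ on $(0,1)$ for every admissible $\bP$, since each exponent $1/P_w-1\ge0$. A nonnegative mixture $\tilde f_t$ of such functions is again nondecreasing. Hence $\{y:\tilde f_t(y)>\kappa_t\}$ is an up-set of $(0,1)$, i.e. an interval $(\tau_t,1)$ or $[\tau_t,1)$ with $\tau_t\in[0,1]$. The endpoint has $\mu_0$-measure zero, so it can be assigned to either side without changing the risk. I therefore take the open form $\{Y_t>\tau_t\}$.

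Finally I set $u_{n,t}:=-\log_n(1-\tau_t)\in[0,\infty]$. With the conventions $n^0=1$ and $n^{-\infty}=0$, this covers $\tau_t=0$ (always select) and $\tau_t=1$ (never select), giving $\delta_t^\star=\mathbf 1\{Y_t>1-n^{-u_{n,t}}\}$.

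The main obstacle is the measure-theoretic bookkeeping. First, the relevant posterior must be $\mathbb P(\theta_t=1\mid Y_t)$, not the conditional on $\mathcal F_{t-1}$, because $\phi_t$ sees only $Y_t$. Second, the mixture over the random pair $(\pi_t,\bP_t)$ must preserve monotonicity; this holds because monotonicity holds pathwise, and the weights are nonnegative. Third, in the degenerate case $\tilde\pi_t=0$ or $1$, the rule is trivially $u_{n,t}=\infty$ or $0$.
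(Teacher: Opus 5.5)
Your proof is correct and follows essentially the same route as the paper. Both arguments minimize the posterior risk pointwise in each coordinate and write $\mathbb P(\theta_t=1\mid Y_t=y)$ via Bayes' rule with the averaged watermarked density; your $\tilde f_t$ coincides with the paper's $\mathbb E[f_{1,\bP_t}(y)\mid\theta_t=1]$. Both then conclude from the monotonicity of each $f_{1,\bP}$, which survives averaging, with your version spelling out the coordinatewise decomposition, boundary ties, and degenerate $\tilde\pi_t$ that the paper leaves implicit.
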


\begin{lem}[Reduction to right-tail threshold rules]
\label{lem:threshold-reduction}
For every local rule \(\Bdelta\in \Decision\), there exists a coordinatewise right-tail threshold rule \(\tilde\Bdelta=(\tilde\delta_1,\ldots,\tilde\delta_n)\) of the form
\[
\tilde\delta_t=\mathbf 1\{Y_t>1-a_{n,t}\},
\qquad 0\le a_{n,t}\le1,
\]
such that \(\tilde\Bdelta\) is at least as good as \(\Bdelta\) in expected false and true discoveries:
\[
\mathrm{EFP}_{\tilde\Bdelta}\le \mathrm{EFP}_{\Bdelta},
\qquad
\mathrm{ETP}_{\tilde\Bdelta}\ge \mathrm{ETP}_{\Bdelta}.
\]
Equivalently, for $a_{n,t}>0$, one may define $a_{n,t}=n^{-u_{n,t}}$ and $\tilde\delta_t=\mathbf 1\{Y_t>1-n^{-u_{n,t}}\}$.  
Therefore, to prove impossibility for all local rules, it suffices to prove impossibility for coordinatewise right-tail threshold rules.
\end{lem}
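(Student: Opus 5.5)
The proof works coordinate by coordinate. For each $t$, replace the acceptance set $\{y:\phi_t(y)=1\}$ by a right-tail interval of the same Lebesgue measure, and then use a rearrangement (Neyman--Pearson) argument based on the monotonicity of the Gumbel--max alternative density. Since $\mathrm{EFP}_{\Bdelta}=\sum_t \mathbb E[(1-\theta_t)\phi_t(Y_t)]$ and $\mathrm{ETP}_{\Bdelta}=\sum_t \mathbb E[\theta_t\phi_t(Y_t)]$ are sums of per-coordinate terms, it suffices to prove both inequalities for each $t$ separately.

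\emph{Step 1: joint laws of $(\theta_t,Y_t)$.} Fix $t$ and a Borel set $B\subseteq[0,1]$. I would condition on $\mathcal F_{t-1}$ and use Assumption~\ref{asmp:main}(a)--(b), exactly as in the proof of Lemma~\ref{lem:conditional-pivot-mixture}, but keeping $\theta_t$ inside the expectation. This gives
\[
\mathbb P(\theta_t=0,\,Y_t\in B)=\mathbb E[1-\pi_t]\,|B|,
\qquad
\mathbb P(\theta_t=1,\,Y_t\in B)=\int_B g_t(y)\,\rd y,
\]
where $g_t(y):=\mathbb E[\pi_t f_{1,\bP_t}(y)]$. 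The first identity holds because, given $\mathcal F_{t-1}$ and $\theta_t=0$, the token is independent of $\xi_t$, so $Y_t\sim\mathrm{Unif}(0,1)$. The second uses the density in Lemma~\ref{lem:alt-pivot-density}. It follows that
\[
\mathbb E[(1-\theta_t)\phi_t(Y_t)]=\mathbb E[1-\pi_t]\int_0^1\phi_t(y)\,\rd y,
\qquad
\mathbb E[\theta_t\phi_t(Y_t)]=\int_0^1\phi_t(y)g_t(y)\,\rd y .
\]

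\emph{Step 2: monotonicity of $g_t$.} By \eqref{eq:alt-pivot-density}, $f_{1,\bP}(y)=\sum_w y^{1/P_w-1}$. Each exponent $1/P_w-1$ is nonnegative, so $f_{1,\bP}$ is nondecreasing on $(0,1)$ for every $\bP$. Averaging against the nonnegative weight $\pi_t$ preserves this, so $g_t$ is nondecreasing. This is the one structural fact the argument needs.

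\emph{Step 3: construction and comparison.} Set $a_{n,t}:=\int_0^1\phi_t(y)\,\rd y\in[0,1]$ and $\tilde\delta_t:=\mathbf 1\{Y_t>1-a_{n,t}\}$. By Step 1, the false-positive contribution of $\tilde\delta_t$ equals $\mathbb E[1-\pi_t]\,a_{n,t}$, which is exactly that of $\delta_t$. For true positives, let $A=\{\phi_t=1\}$ and $T=(1-a_{n,t},1]$, so that $|A|=|T|$ and hence $|A\setminus T|=|T\setminus A|$. Since $g_t$ is nondecreasing, $g_t\ge g_t(1-a_{n,t})$ on $T\setminus A$ and $g_t\le g_t(1-a_{n,t})$ on $A\setminus T$. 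Therefore
\[
\int_T g_t-\int_A g_t=\int_{T\setminus A}g_t-\int_{A\setminus T}g_t\ \ge\ g_t(1-a_{n,t})\bigl(|T\setminus A|-|A\setminus T|\bigr)=0 .
\]
Summing over $t$ gives $\mathrm{EFP}_{\tilde\Bdelta}=\mathrm{EFP}_{\Bdelta}$ and $\mathrm{ETP}_{\tilde\Bdelta}\ge\mathrm{ETP}_{\Bdelta}$. When $a_{n,t}>0$, setting $u_{n,t}=-\log a_{n,t}/\log n$ gives the stated parametrization $a_{n,t}=n^{-u_{n,t}}$.

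\emph{Final claim.} If some local rule achieved discovery or classification, then so would $\tilde\Bdelta$. Indeed, $\mathrm{mFDR}$ is nonincreasing in $\mathrm{ETP}$ for fixed $\mathrm{EFP}$, and $\mathrm{EFN}=\mathbb E|I|-\mathrm{ETP}$, so $\mathrm{MDR}$ can only decrease. For discovery, $\mathbb P(|S|\ge1)$ must also be controlled, but impossibility arguments typically go through expectations, so this is not an issue.

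\emph{Main obstacle.} The only delicate step is Step 1. Because $\theta_t$ may depend on $\mathcal F_{t-1}$, one has to check that the null component still has exactly uniform mass $\mathbb E[1-\pi_t]\,|B|$. This holds because, conditional on $\mathcal G_t$, $\xi_t$ is fresh and independent of the token when $\theta_t=0$, so the tower argument goes through without further assumptions.
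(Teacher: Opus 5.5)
Your proposal is correct and follows the paper's proof. Coordinate by coordinate, you set $a_{n,t}$ equal to the null rejection mass $|\{\phi_t=1\}|$ and replace the rejection set by the upper interval of the same Lebesgue measure. Since the signal (sub)density is a $\pi_t$-weighted mixture of the nondecreasing $f_{1,\bP_t}$, it is nondecreasing, so the upper tail maximizes the true-positive mass while the false-positive mass is unchanged.

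Your Step 1 joint-law computation and the explicit bathtub inequality in Step 3 spell out what the paper states in one line. Your remark that the $\mathbb P(|S|\ge 1)$ requirement transfers only through expectations (via $\mathbb E|S_{\Bdelta}|\le\mathbb E|S_{\tilde\Bdelta}|$) is a point the paper leaves implicit.
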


\begin{proof}[Proof of Theorem~\ref{thm:discovery-boundary}] 
Recall that for any local rule \(\Bdelta=(\delta_1, \ldots, \delta_n)\),
\(\mathrm{EFP}_{\Bdelta}:=\mathbb E[\sum_{t=1}^n(1-\theta_t)\delta_t]\) and
\(\mathrm{ETP}_{\Bdelta}:=\mathbb E[\sum_{t=1}^n\theta_t\delta_t]\).  Equivalently,
\(\mathrm{EFP}_{\Bdelta}=\sum_{t=1}^n\mathbb P(\theta_t=0)\mathbb E[\delta_t\mid\theta_t=0]\) and
\(\mathrm{ETP}_{\Bdelta}=\sum_{t=1}^n\mathbb P(\theta_t=1)\mathbb E[\delta_t\mid\theta_t=1]\).

\paragraph{We first prove achievability.}  Suppose $p<\alpha$ and $p+q<1$.  Choose a constant $u\in(p+q,\alpha+q)$ and define the local rule $\delta^{(u)}_t:=\mathbf 1\{Y_t>1-n^{-u}\}$ with \(\Bdelta^{(u)} :=(\delta^{(u)}_1,\ldots,\delta^{(u)}_n)\).
By Lemma~\ref{lem:tail-pivot-HCL} and Assumption~\ref{asmp:main}(c),
\begin{equation}
\label{eq:discovery-ach-EFP}
\mathrm{EFP}_{\Bdelta^{(u)}}
=\sum_{t=1}^n\mathbb P(\theta_t=0)n^{-u}
\asymp n^{1-u},
\end{equation}
and, since $p+q < u<\alpha+q$,
\begin{equation}
\label{eq:discovery-ach-ETP}
\mathrm{ETP}_{\Bdelta^{(u)}}
=\sum_{t=1}^n\mathbb P(\theta_t=1)
\mathbb P(Y_t>1-n^{-u}\mid \theta_t=1)
\asymp n\varepsilon_n(n^{-u}+n^{-q})
\asymp n^{1-p-q}.
\end{equation}
Thus $\mathrm{ETP}_{\Bdelta^{(u)}}\to\infty$ because $p+q<1$, and
\[
\frac{\mathrm{ETP}_{\Bdelta^{(u)}}}{\mathrm{EFP}_{\Bdelta^{(u)}}}
\gtrsim n^{u-(p+q)}\to\infty
\]
because $u>p+q$.  Hence $\mathrm{mFDR}_{\Bdelta^{(u)}}\to0$ by definition.

It remains to show that a discovery is made with probability tending to one.  Let
\[
Z_t:=\delta^{(u)}_t\mathbf 1\{\theta_t=1\},
\qquad
m_n:=\mathbb E\sum_{t=1}^nZ_t=\mathrm{ETP}_{\Bdelta^{(u)}}.
\]
By Assumption~\ref{asmp:alphamixing}, $\{Z_t\}_{t=1}^n$ is bounded and geometrically strongly mixing.  Lemma~\ref{lem:mixing-rare-count} gives
\[
\frac{\mathrm{Var}(\sum_tZ_t)}{m_n^2}
\lesssim \frac{\log n}{m_n}\to0.
\]
Chebyshev's inequality implies $\mathbb P(\sum_tZ_t=0)\to0$.  Therefore, $\mathbb P(|S_{\Bdelta^{(u)}}|\ge1)\to1$, proving the possibility part.

\begin{lem}[Variance bound for rare discoveries under geometric mixing]
\label{lem:mixing-rare-count}
Let $\{Z_t\}_{t=1}^n$ be a triangular array with $0\le Z_t\le 1$.  Define its strong-mixing coefficients by
\[
\alpha_Z(k):=
\sup_{1\le t\le n-k}
\alpha\bigl(\sigma(Z_1,\ldots,Z_t),\sigma(Z_{t+k},Z_{t+k+1},\ldots,Z_n)\bigr),
\]
where $\alpha(\cdot,\cdot)$ denotes the usual strong-mixing coefficient between two sigma-fields, defined by $\alpha(\mathcal A,\mathcal B):=\sup_{A\in\mathcal A,B\in\mathcal B}
|\mathbb P(A\cap B)-\mathbb P(A)\mathbb P(B)|$.  Suppose that $\alpha_Z(k)\le C\rho^k$ for some constants $C>0$ and $\rho\in(0,1)$.  Let $m_n:=\sum_{t=1}^n\mathbb E Z_t$.  Then
\begin{equation}
\label{eq:mixing-rare-count-var}
\mathrm{Var}\Big(\sum_{t=1}^n Z_t\Big)
\lesssim m_n\log n.
\end{equation}
Consequently, if $m_n/\log n\to\infty$, then $\sum_{t=1}^n Z_t>0$ with probability tending to one.
\end{lem}

\paragraph{We next prove impossibility.}  By Lemma~\ref{lem:threshold-reduction}, it suffices to consider
threshold rules of the form \(\delta_t=\mathbf 1\{Y_t>1-a_{n,t}\}\), where \(0\le a_{n,t}\le1\).
Set \(A_n:=\sum_{t=1}^n a_{n,t}\).  Since \(Y_t\mid\{\theta_t=0\}\sim\mathrm{Unif}(0,1)\), the null
rejection probability at coordinate \(t\) is exactly \(a_{n,t}\).  Therefore
\[
\mathrm{EFP}_{\Bdelta}
=
\sum_{t=1}^n\mathbb P(\theta_t=0)a_{n,t}
\asymp
A_n,
\]
because \(\mathbb P(\theta_t=0)\) is bounded away from zero uniformly in \(t\).
We also need a uniform upper bound on the true-positive contribution.  Lemma~\ref{lem:tail-pivot-HCL}
implies that for every threshold level \(a\in[0,1]\),
\(\mathbb P(Y_t>1-a\mid\theta_t=1)\lesssim a+n^{-q}\).  Applying this with
\(a=a_{n,t}\), and using \(\mathbb P(\theta_t=1)\asymp\varepsilon_n\), gives
\begin{equation}
\label{eq:bound-ETP}
\mathrm{ETP}_{\Bdelta}
=
\sum_{t=1}^n
\mathbb P(\theta_t=1)
\mathbb P(Y_t>1-a_{n,t}\mid\theta_t=1)
\lesssim
\varepsilon_n A_n+n^{1-p-q}.
\end{equation}

Now, we are ready to prove the impossible result. We do this by considering two cases.
\begin{itemize}
    \item First, suppose \(p+q>1\) and, toward a contradiction, that discovery is achieved.  Since the definition of discovery implies that
\[
\mathrm{mFDR}_{\Bdelta}
=
\frac{\mathrm{EFP}_{\Bdelta}}
{\mathrm{EFP}_{\Bdelta}+\mathrm{ETP}_{\Bdelta}}
\to0,
\]
we must have \(\mathrm{EFP}_{\Bdelta}=o(\mathrm{ETP}_{\Bdelta})\); otherwise the ratio above could not
vanish.  Since \(\mathrm{EFP}_{\Bdelta}\asymp A_n\), this gives \(A_n=o(\mathrm{ETP}_{\Bdelta})\).
Substituting into the preceding upper bound \eqref{eq:bound-ETP} yields
\[
\mathrm{ETP}_{\Bdelta}
\lesssim
\varepsilon_n o(\mathrm{ETP}_{\Bdelta})+n^{1-p-q}
=
o(\mathrm{ETP}_{\Bdelta})+o(1),
\]
where the last term is \(o(1)\) because \(p+q>1\).  Hence \(\mathrm{ETP}_{\Bdelta}=o(1)\), and then
\(\mathrm{EFP}_{\Bdelta}=o(\mathrm{ETP}_{\Bdelta})=o(1)\) as well.  Consequently
\(\mathbb E|S_{\Bdelta}|=\mathrm{EFP}_{\Bdelta}+\mathrm{ETP}_{\Bdelta}=o(1)\).  Markov's inequality gives
\(\mathbb P(|S_{\Bdelta}|\ge1)\le\mathbb E|S_{\Bdelta}|\to0\), contradicting the discovery
requirement.

\item Second, we consider the case where $p\ge\alpha$. Lemma~\ref{lem:tail-pivot-HCL} implies the uniform domination
\begin{equation}
\label{eq:signal-tail-domination-alpha}
\mathbb P(Y_t>1-a\mid\theta_t=1)\lesssim n^\alpha a,
\qquad 0\le a\le1.
\end{equation}
Indeed, for $a=n^{-u}$, if $u<\alpha+q$, then $n^{-q}\le n^\alpha n^{-u}$, while if $u\ge\alpha+q$, the bound follows from~\eqref{eq:tail-signal-u-greater}; general $a$ follows by monotonicity.  Therefore,
\begin{equation}
\label{eq:help3}
\mathrm{ETP}_{\Bdelta}
\lesssim \varepsilon_n n^\alpha A_n
\asymp n^{\alpha-p}A_n
\asymp n^{\alpha-p}\mathrm{EFP}_{\Bdelta}.
\end{equation}
If \(\Bdelta\) achieves discovery, then \(\mathrm{mFDR}_{\Bdelta}\to 0\), which requires
\(\mathrm{EFP}_{\Bdelta}/\mathrm{ETP}_{\Bdelta}\to 0\) whenever \(\mathrm{ETP}_{\Bdelta}>0\).  However, the preceding display in \eqref{eq:help3} implies
\[
\frac{\mathrm{EFP}_{\Bdelta}}{\mathrm{ETP}_{\Bdelta}}
\gtrsim n^{p-\alpha},
\]
which is bounded away from zero when \(p\ge \alpha\) (even when $p=\alpha=0$).  Hence, \(\mathrm{mFDR}_{\Bdelta}\) cannot vanish.  Therefore, no local rule achieves discovery when \(p\ge \alpha\).
\end{itemize}

\end{proof}

\paragraph{Auxiliary proofs for Theorem~\ref{thm:discovery-boundary}.}
In the following, we present the omitted proofs for the lemmas used above.

\begin{proof}[Proof of Lemma~\ref{lem:tail-pivot-HCL}]
The proof has two steps.  First, we use the latent mixture mechanism to identify the tail law of the pivotal statistic conditional on the label $\theta_t$.  The null case is straightforward.  For the watermarked component, we then evaluate the tail probability by decomposing the NTP distribution into its dominant, light, and core parts.

We now start with the first step and write $x=n^{-u}$.  Since $\sigma(\theta_t,\bP_t)\subset\mathcal G_t$, conditional on $\mathcal G_t$ the values of $\theta_t$ and $\bP_t$ are fixed and should be considered as non-deterministic.  By Assumption~\ref{asmp:main}(b), if $\theta_t=0$, then $w_t$ is conditionally independent of $\zeta_t$, and the pivotal statistic $Y_t$ has the null law $\mu_0$.  If $\theta_t=1$, then $w_t=S(\bP_t,\zeta_t)$ is generated from the watermarked component.  Therefore, by the watermarked-component CDF in~\eqref{eq:alt-pivot-cdf},
\[
\mathbb P( Y_t>1-x \mid \mathcal G_t)
= (1-\theta_t)x + \theta_t \sum_{w\in\mathcal W_n} P_{t,w}\Bigl\{1-(1-x)^{1/P_{t,w}}\Bigr\}.
\]
Since $\sigma(\theta_t)\subset\mathcal G_t$, the tower property gives for
$i\in\{0,1\}$,
\[
\mathbb P(Y_t>1-x\mid\theta_t=i)=\mathbb E[\mathbb P(Y_t>1-x\mid\mathcal G_t)\mid\theta_t=i].
\]
 Hence, $\mathbb P(Y_t>1-n^{-u}\mid\theta_t=0)=n^{-u}$, which proves the null case.
For the watermarked component, the same identity gives
\[
\mathbb P(Y_t>1-x\mid\theta_t=1)
=
\mathbb E_{\bP_t}\!\left[
\sum_{w\in\mathcal W_n}
P_{t,w}\Bigl\{1-(1-x)^{1/P_{t,w}}\Bigr\}
\,\middle|\,\theta_t=1
\right].
\]
The estimates below are uniform over all admissible $\bP_t$, so it suffices to analyze the inner sum conditional on a particular $\bP_t$.  By Assumption~\ref{asmp:HCL+}, we split this sum into the dominant token $w_{t,n}^{\star}$, the light set $L_{t,n}$, and the core set $C_{t,n}$:
\begin{equation}
\label{eq:help4}
\mathbb P(Y_t>1-x\mid \theta_t=1,\bP_t)
=
\left[
\sum_{w=w_{t,n}^{\star}}
+
\sum_{w\in L_{t,n}}
+
\sum_{w\in C_{t,n}}
\right]
P_{t,w}\Bigl\{1-(1-x)^{1/P_{t,w}}\Bigr\}.
\end{equation}
To analyze each summand, we repeatedly use the elementary inequality in Lemma \ref{lem:help-inequality}.
\begin{lem}
\label{lem:help-inequality}
There exist constants \(c_1,c_2>0\) such that, for all \(x\in[0,1/2]\) and \(P\in(0,1]\),
\[
c_1(x\wedge P)
\le
P\bigl\{1-(1-x)^{1/P}\bigr\}
\le
c_2(x\wedge P).
\]
\end{lem}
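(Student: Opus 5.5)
We must prove Lemma~\ref{lem:help-inequality}: for $x\in[0,1/2]$ and $P\in(0,1]$, $P\{1-(1-x)^{1/P}\}\asymp x\wedge P$ with absolute constants. The plan is to write $g(x,P):=P\{1-(1-x)^{1/P}\}=P\{1-e^{-s}\}$ with $s:=-\log(1-x)/P$. We then use two elementary facts. The first is $x\le -\log(1-x)\le 2\log 2\cdot x\le 2x$ for $x\in[0,1/2]$, which holds by concavity of $-\log(1-x)$ between $0$ and $1/2$. The second is $(1-e^{-1})(s\wedge 1)\le 1-e^{-s}\le s\wedge 1$ for $s\ge0$.

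Combining these gives
\[
(1-e^{-1})\,P\,(s\wedge 1)\le g(x,P)\le P\,(s\wedge 1).
\]
Since $P(s\wedge1)=(Ps)\wedge P$ and $x\le Ps\le 2x$, we get $(1-e^{-1})(x\wedge P)\le g(x,P)\le (2x)\wedge P\le 2(x\wedge P)$.

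So one may take $c_1=1-e^{-1}$ and $c_2=2$.

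The only step needing care is the lower bound $1-e^{-s}\ge(1-e^{-1})(s\wedge1)$. For $s\ge1$ it follows from monotonicity of $1-e^{-s}$. For $s\in[0,1]$ it follows from concavity of $1-e^{-s}$, since the chord from $0$ to $1$ lies below the curve. Everything else is routine, and no earlier results are needed.
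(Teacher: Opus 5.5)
Your proof is correct and is essentially the paper's argument: both rest on $x\le -\log(1-x)\le 2x$ on $[0,1/2]$ and $1-e^{-r}\asymp 1\wedge r$. The paper substitutes $r=x/P$ and sandwiches $1-(1-x)^{1/P}$ between $1-e^{-r}$ and $1-e^{-2r}$, whereas you write it exactly as $1-e^{-s}$ with $s=-\log(1-x)/P$ and use $Ps\in[x,2x]$ at the end, which yields the explicit constants $c_1=1-e^{-1}$ and $c_2=2$. One wording fix: $-\log(1-x)$ is \emph{convex}, not concave (equivalently, $\log(1-x)$ is concave), and it is convexity that places it above the tangent $x$ and below the chord $2\log 2\cdot x$ on $[0,1/2]$.
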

\begin{proof}[Proof of Lemma \ref{lem:help-inequality}]
Let \(r=x/P\).  Since \(x\in[0,1/2]\), we have \(-2x\le \log(1-x)\le -x\), and hence \(1-e^{-r}\le 1-(1-x)^{1/P}\le 1-e^{-2r}\). 
We assert that \(1-e^{-r}\asymp 1\wedge r\) for \(r\ge0\).
This is because for \(0\le r\le 1\), \(e^{-1}r\le 1-e^{-r}\le r\), while for \(r\ge 1\), \(1-e^{-1}\le 1-e^{-r}\le 1\).
Hence, we get \(1-(1-x)^{1/P}\asymp 1\wedge x/P\).  Multiplying by \(P\) on both sides gives the desired bound.
\end{proof}

Recall that $x:=n^{-u}$ for simplicity.
From Lemma \ref{lem:help-inequality}, if \(x/P\to\infty\), then
\(1-(1-x)^{1/P} \asymp 1\) is bounded above and below by positive constants.  If \(x/P=O(1)\), then
\(1-(1-x)^{1/P}\asymp x/P\). We then start to analyze each term in \eqref{eq:help4} by considering two cases.

\begin{itemize}
    \item First, we suppose \(0<u<\alpha+q\).  For the dominant token $w_{t,n}^{\star}$, \(P_{t,w^\star_{t,n}}=1-\Delta_n\asymp1\), so \(x/P_{t,w^\star_{t,n}}\asymp x\to0\).  Hence,
\(1-(1-x)^{1/P_{t,w^\star_{t,n}}}\asymp x\), and the dominant token in the contribution to \eqref{eq:help4} is \(\asymp x=n^{-u}\).
For each light token, \(P_{t,w}\asymp n^{-(\alpha+q)}\), so
\(x/P_{t,w}\asymp n^{\alpha+q-u}\to\infty\).  Therefore,
\(1-(1-x)^{1/P_{t,w}}\asymp1\), and each light token contributes \(\asymp P_{t,w}\asymp n^{-(\alpha+q)}\). Since \(|L_{t,n}|\asymp n^\alpha\), the total light contribution to the sum in~\eqref{eq:help4} is \(\asymp n^{-q}\).  Finally, the core contribution is at most its total mass, \(\sum_{w\in C_{t,n}}P_{t,w}=\Delta_n^{\mathrm{core}}\lesssim n^{-q}\), which is smaller than the contribution of light tokens.  Combining these three
bounds, the signal tail is \(\asymp n^{-u}+n^{-q}\) when \(0<u<\alpha+q\).

\item Second, we suppose \(u\ge\alpha+q\).  The dominant contribution is again \(\asymp x=n^{-u}\), since the dominant probability $P_{t,w^\star_{t,n}}$ is of constant order.  For each light token $w \in L_{t,n}^{\star}$, \(x/P_{t,w}=O(1)\), so
\(1-(1-x)^{1/P_{t,w}}\asymp x/P_{t,w}\).  Thus each light token contributes \(\asymp x\) to the sum in~\eqref{eq:help4}, and the
total light contribution is \(|L_{t,n}|x\asymp n^\alpha n^{-u}=n^{\alpha-u}\). For the core set $C_{t,n}$, we use the upper bound \(1-(1-x)^{1/P_{t,w}}\lesssim x/P_{t,w}\), which gives \(P_{t,w}\{1-(1-x)^{1/P_{t,w}}\}\lesssim x\) for every core token.  Hence, the core contribution is at most \(|C_{t,n}|x\asymp n^{r-u}=o(n^{\alpha-u})\), since \(r<\alpha\).  Therefore, the light contribution dominates, and the signal tail is \(\asymp n^{\alpha-u}\) when \(u\ge\alpha+q\).

\end{itemize}

The bounds above are uniform over all admissible \(\bP_t\), so they remain valid after removing the
conditioning on \(\bP_t\).  This proves the lemma.
\end{proof}

\begin{proof}[Proof of Lemma~\ref{lem:bayes-tail-rule-v3}]
For a fixed coordinate $t$, conditional on $Y_t$, the conditional risk is
\[
\mathbb E[\theta_t(1-\delta_t)+\lambda(1-\theta_t)\delta_t\mid Y_t]
=\pi_t(Y_t)(1-\delta_t)+\lambda(1-\pi_t(Y_t))\delta_t,
\]
where $\pi_t(Y_t):=\mathbb P(\theta_t=1\mid Y_t)$.  Choosing $\delta_t=1$ is better than choosing
$\delta_t=0$ exactly when $\pi_t(Y_t)\ge\lambda/(1+\lambda)$.

It remains to show that the event
\(\{\pi_t(Y_t)\ge \lambda/(1+\lambda)\}\) is a right-tail event.  Let
\(f_{1,t}\) denote the density of \(Y_t\) conditional on \(\theta_t=1\), after averaging over the possible NTP distributions at position \(t\):
\(f_{1,t}(y):=\mathbb E[f_{1,\bP_t}(y)\mid \theta_t=1]\).  Under the null, the density is
\(f_0(y)\equiv1\).  Therefore, by Bayes' rule, for \(y\in(0,1)\),
\[
\pi_t(y)
=
\mathbb P(\theta_t=1\mid Y_t=y)
=
\frac{\mathbb P(\theta_t=1) f_{1,t}(y)}
{\mathbb P(\theta_t=0) f_0(y)+\mathbb P(\theta_t=1) f_{1,t}(y)}
=
\frac{\mathbb P(\theta_t=1) f_{1,t}(y)}
{\mathbb P(\theta_t=0)+\mathbb P(\theta_t=1) f_{1,t}(y)} .
\]
The map \(x\mapsto \mathbb P(\theta_t=1)x/
\{\mathbb P(\theta_t=0)+\mathbb P(\theta_t=1)x\}\) is increasing in \(x\).  Hence, it is enough to check that \(f_{1,t}(y)\) is nondecreasing in \(y\).  This follows from Lemma~\ref{lem:alt-pivot-density}: for each admissible \(\bP_t\),
\(f_{1,\bP_t}(y)=\sum_{w\in\mathcal W_n} y^{1/P_{t,w}-1}\), and every summand is nondecreasing because \(1/P_{t,w}-1\ge0\).  Averaging over \(\bP_t\) preserves monotonicity, so \(f_{1,t}\), and therefore \(\eta_t\), is nondecreasing in \(y\).

Therefore, the Bayes rejection region is an upper interval in \(Y_t\), say
\(\{Y_t>y_{n,t}^\star\}\), up to irrelevant boundary ties.  Writing
\(1-y_{n,t}^\star=n^{-u_{n,t}}\), with \(u_{n,t}\in[0,\infty]\), gives
\(\delta_t^\star=\mathbf 1\{Y_t>1-n^{-u_{n,t}}\}\).
\end{proof}

\begin{proof}[Proof of Lemma~\ref{lem:threshold-reduction}]
Fix a coordinate $t$ and write the original local rule as $\delta_t=\phi_t(Y_t)$.  Let
$a_{n,t}:=\mathbb P(\delta_t=1\mid \theta_t=0)$.  Since $Y_t\mid\{\theta_t=0\}\sim \mathrm{Unif}(0,1)$, the right-tail rule
$\tilde\delta_t:=\mathbf 1\{Y_t>1-a_{n,t}\}$ has the same null rejection probability as $\delta_t$.

It remains to compare the rejection probabilities under the signal label.  Let $g_t$ denote the density of $Y_t$ conditional on $\theta_t=1$, after averaging over the unobserved NTP distribution and other randomness.  By the watermarked-component law,
$g_t(y)$ is a mixture of densities of the form $f_{1,\bP}(y)=\sum_{w\in\mathcal W_n} y^{1/P_w-1}$.  Each such density is nondecreasing in $y$, and hence $g_t$ is also nondecreasing in $y$. 

Now let $B_t:=\{y:\phi_t(y)=1\}$.  Since the null density is uniform, the null rejection probability of $\delta_t$ is the Lebesgue measure of $B_t$, namely $|B_t|=a_{n,t}$.  Among all measurable sets of Lebesgue measure $a_{n,t}$, the integral of the nondecreasing density $g_t$ is maximized by the upper-tail set $(1-a_{n,t},1)$.  Therefore,
\[
\mathbb P(\tilde\delta_t=1\mid\theta_t=1)
=
\int_{1-a_{n,t}}^1 g_t(y)\,\rd y
\ge
\int_{B_t} g_t(y)\,\rd y
=
\mathbb P(\delta_t=1\mid\theta_t=1).
\]
At the same time, by construction,
$\mathbb P(\tilde\delta_t=1\mid\theta_t=0)=\mathbb P(\delta_t=1\mid\theta_t=0)$.
Multiplying these two inequalities by $\mathbb P(\theta_t=0)$ and $\mathbb P(\theta_t=1)$, respectively, and summing over $t$, we obtain
\[
\mathrm{EFP}_{\tilde\Bdelta}\le \mathrm{EFP}_{\Bdelta},
\qquad
\mathrm{ETP}_{\tilde\Bdelta}\ge \mathrm{ETP}_{\Bdelta}.
\]
Thus, every local rule is dominated by a coordinatewise right-tail threshold rule in the stated sense.  Finally, if $a_{n,t}>0$, we may write $a_{n,t}=n^{-u_{n,t}}$; the cases $a_{n,t}=0$ and $a_{n,t}=1$ correspond to the conventions $u_{n,t}=\infty$ and $u_{n,t}=0$. 
\end{proof}

\begin{proof}[Proof of Lemma~\ref{lem:mixing-rare-count}]
Since $0\le Z_t\le1$, $\sum_t\mathrm{Var}(Z_t)\le m_n$.  For $k\ge1$, by definition,
\[
|\mathrm{Cov}(Z_t,Z_{t+k})|
\le4\alpha_Z(k)
\le4C\rho^k,
\]
and also $|\mathrm{Cov}(Z_t,Z_{t+k})|\le\mathbb E(Z_tZ_{t+k})\le\mathbb E Z_t$.  Hence
\[
|\mathrm{Cov}(Z_t,Z_{t+k})|
\le \min\{\mathbb E Z_t,4C\rho^k\}.
\]
Summing this bound over $t$ and $k$ gives
\[
\mathrm{Var}\Big(\sum_{t=1}^n Z_t\Big)
\le m_n+2\sum_{k=1}^{n-1}\min\{m_n,4Cn\rho^k\}.
\]
Let \(K_n:=\min\{k:4Cn\rho^k\le m_n\}\).  Since \(\rho\in(0,1)\), we have
\(K_n=O(\log n)\).  For \(k<K_n\), the summand is at most \(m_n\), so these lags contribute
at most \(m_nK_n=O(m_n\log n)\).  For \(k\ge K_n\), the summand is at most \(4Cn\rho^k\), and
the geometric tail is bounded by
\[
\sum_{k\ge K_n}4Cn\rho^k
\lesssim n\rho^{K_n}
\lesssim m_n,
\]
where the last inequality follows from the definition of \(K_n\).  Therefore
\[
\sum_{k=1}^{n-1}\min\{m_n,4Cn\rho^k\}=O(m_n\log n),
\]
which proves~\eqref{eq:mixing-rare-count-var}.  If \(m_n/\log n\to\infty\), Chebyshev's inequality gives
\[
\mathbb P\Big(\sum_t Z_t=0\Big)
\le
\mathbb P\Big(\Big|\sum_tZ_t-m_n\Big|\ge m_n\Big)
\le
\frac{\mathrm{Var}(\sum_tZ_t)}{m_n^2}
\lesssim
\frac{\log n}{m_n}
\to0 .
\]
\end{proof}

\subsection{Proof of Theorem~\ref{thm:class}: Impossibility of classification}

\begin{proof}[Proof of Theorem~\ref{thm:class}]
  Suppose, for contradiction, that a local rule \(\Bdelta=(\delta_1,\ldots,\delta_n)\) achieves classification.  The following lemma records what this implies for its false and true discoveries.

\begin{lem}[Consequences of successful classification]
\label{lem:classification-count-implications}
If a sequence of local rules \(\Bdelta=(\delta_1,\ldots,\delta_n)\) achieves classification, then
\[
\mathrm{ETP}_{\Bdelta}=(1+o(1))\mathbb E|I|,
\qquad
\mathrm{EFP}_{\Bdelta}=o(\mathbb E|I|),
\]
where \(I=\{t:\theta_t=1\}\).  Moreover, \(\mathbb E|I|\asymp n\varepsilon_n\), with
\(n\varepsilon_n\asymp n\) in the dense case \(p=0\).
\end{lem}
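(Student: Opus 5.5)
The argument is pure bookkeeping: unpack the two definitions in Definition~\ref{def:FPR-MDR}, using the identity $\mathrm{ETP}_{\Bdelta}+\mathrm{EFN}_{\Bdelta}=\mathbb E|I|$.

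First, I will compute $\mathbb E|I|$. Write $\mathbb E|I|=\sum_{t=1}^n\mathbb P(\theta_t=1)$, and note that $\mathbb P(\theta_t=1)=\mathbb E[\pi_t]$ by the tower property. Assumption~\ref{asmp:main}(c) then gives $c\varepsilon_n\le\mathbb P(\theta_t=1)\le C\varepsilon_n$ uniformly in $t$. Hence $\mathbb E|I|\asymp n\varepsilon_n$. In the dense case $p=0$, Assumption~\ref{asmp:HCL+} gives $\varepsilon_n\asymp1$, so $n\varepsilon_n\asymp n$.

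Second, I will handle the true positives. Summing $\delta_t+(1-\delta_t)=1$ over $t\in I$ and taking expectations gives $\mathrm{ETP}_{\Bdelta}+\mathrm{EFN}_{\Bdelta}=\mathbb E|I|$. Therefore
\[
\mathrm{MDR}_{\Bdelta}=\frac{\mathrm{EFN}_{\Bdelta}}{\mathbb E|I|}.
\]
Classification forces $\mathrm{MDR}_{\Bdelta}\to0$, so $\mathrm{EFN}_{\Bdelta}=o(\mathbb E|I|)$. It follows that $\mathrm{ETP}_{\Bdelta}=(1+o(1))\mathbb E|I|$.

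Third, I will handle the false positives. Since $\mathbb E|I|>0$, we have $\mathrm{ETP}_{\Bdelta}>0$ for all large $n$, so the ratio in $\mathrm{mFDR}_{\Bdelta}$ is well defined. From $\mathrm{mFDR}_{\Bdelta}\to0$ we get $\mathrm{EFP}_{\Bdelta}(1-\mathrm{mFDR}_{\Bdelta})=\mathrm{mFDR}_{\Bdelta}\,\mathrm{ETP}_{\Bdelta}$. This yields
\[
\mathrm{EFP}_{\Bdelta}=\frac{\mathrm{mFDR}_{\Bdelta}}{1-\mathrm{mFDR}_{\Bdelta}}\,\mathrm{ETP}_{\Bdelta}=o(\mathrm{ETP}_{\Bdelta})=o(\mathbb E|I|),
\]
where the last step uses the second step.

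There is no real obstacle here. The only points that need care are checking that the denominators are nonzero and that the tower-property step transfers the conditional bound in Assumption~\ref{asmp:main}(c) to the unconditional probabilities.
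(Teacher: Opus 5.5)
Your proposal is correct and follows essentially the same route as the paper. Both arguments use Assumption~\ref{asmp:main}(c) to get $\mathbb E|I|\asymp n\varepsilon_n$, and the identity $\mathrm{ETP}_{\Bdelta}+\mathrm{EFN}_{\Bdelta}=\mathbb E|I|$ to turn $\mathrm{MDR}_{\Bdelta}\to0$ into $\mathrm{ETP}_{\Bdelta}=(1+o(1))\mathbb E|I|$; $\mathrm{mFDR}_{\Bdelta}\to0$ then gives $\mathrm{EFP}_{\Bdelta}=o(\mathbb E|I|)$. Your tower-property step and the explicit rearrangement $\mathrm{EFP}_{\Bdelta}=\mathrm{mFDR}_{\Bdelta}\,\mathrm{ETP}_{\Bdelta}/(1-\mathrm{mFDR}_{\Bdelta})$ only spell out what the paper leaves implicit.
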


By Lemma~\ref{lem:classification-count-implications}, this means that
\(\mathrm{ETP}_{\Bdelta}=(1+o(1))\mathbb E|I|\) and
\(\mathrm{EFP}_{\Bdelta}=o(\mathbb E|I|)\).  By Lemma~\ref{lem:threshold-reduction}, it is enough to
consider right-tail threshold rules of the form
\(\delta_t=\mathbf 1\{Y_t>1-a_{n,t}\}\), where \(0\le a_{n,t}\le1\).  Set \(A_n:=\sum_{t=1}^n a_{n,t}\).  Since \(Y_t\mid\{\theta_t=0\}\sim\mathrm{Unif}(0,1)\), the null
rejection probability at coordinate \(t\) is \(\PB(\delta_t = 1 \mid \theta_t = 0) = a_{n,t}\), and hence by definition,
\[
\mathrm{EFP}_{\Bdelta} = \sum_{t=1}^n\PB(\theta_t=0) \cdot \PB(\delta_t = 1 \mid \theta_t = 0)  \asymp A_n.
\]

We then derive a contradiction.
First assume \(q>0\).  We use the signal-tail upper bound from Lemma~\ref{lem:tail-pivot-HCL}:
for every threshold level \(a\in[0,1]\),
\(\mathbb P(Y_t>1-a\mid\theta_t=1)\lesssim a+n^{-q}\).  Applying this with \(a=a_{n,t}\) gives
\[
\mathrm{ETP}_{\Bdelta}=
\sum_{t=1}^n\PB(\theta_t=1) \cdot \PB(\delta_t = 1 \mid \theta_t = 1)
\lesssim
\varepsilon_n\sum_{t=1}^n(a_{n,t}+n^{-q})
=
\varepsilon_n A_n+n\varepsilon_n n^{-q}.
\]
On the other hand, classification requires
\(\mathrm{ETP}_{\Bdelta}=(1+o(1))\mathbb E|I|\asymp n\varepsilon_n\).  Dividing the preceding upper bound by \(\varepsilon_n\), we obtain \(A_n+n^{1-q}\gtrsim n\).  Since \(q>0\), we have \(n^{1-q}=o(n)\), and therefore \(A_n\gtrsim n\).  It follows that \(\mathrm{EFP}_{\Bdelta}\asymp A_n\gtrsim n\).  This contradicts Lemma~\ref{lem:classification-count-implications}, which requires \(\mathrm{EFP}_{\Bdelta}=o(\mathbb E|I|)\).  Indeed, when \(p>0\), \(\mathbb E|I|\asymp n\varepsilon_n=o(n)\); when \(p=0\), \(\mathbb E|I|\asymp n\), so the same condition still requires \(\mathrm{EFP}_{\Bdelta}=o(n)\).  In both cases, \(\mathrm{EFP}_{\Bdelta}\gtrsim n\) is impossible.

It remains to consider \(q=0\).  In this case, Assumption~\ref{asmp:HCL+} gives a constant \(\eta_\star\in(0,1)\) such that \(\max_{w}P_{t, w} = P_{t,w^\star_{t,n}}\ge \eta_\star\) for all large \(n\) and all \(t\).  Hence, for a right-tail threshold with null rejection probability \(a\), the proof of Lemma~\ref{lem:tail-pivot-HCL} implies that \(\mathbb P(Y_t>1-a\mid \theta_t=1)\le 1-\eta_\star+Ca\) for some constant \(C>0\).  Applying this bound with \(a=a_{n,t}\) and summing over \(t\), we obtain \(\mathrm{ETP}_{\Bdelta}\le (1-\eta_\star)\mathbb E|I|+C\varepsilon_n A_n\).
On the other hand, classification requires \(\mathrm{EFP}_{\Bdelta}\asymp A_n=o(\mathbb E|I|)\), and hence \(C\varepsilon_n A_n=o(\mathbb E|I|)\).  Therefore the preceding upper bound gives \(\mathrm{ETP}_{\Bdelta}\le (1-\eta_\star+o(1))\mathbb E|I|\), which contradicts the necessary condition \(\mathrm{ETP}_{\Bdelta}=(1+o(1))\mathbb E|I|\) from Lemma~\ref{lem:classification-count-implications}.  This proves that classification is impossible also when \(q=0\), and hence for all \(p,q\in[0,1]\) and \(\alpha\in[0,1)\).
\end{proof}

\begin{proof}[Proof of Lemma~\ref{lem:classification-count-implications}]
By Assumption~\ref{asmp:main}(c),
\(\mathbb E|I|=\sum_{t=1}^n\mathbb P(\theta_t=1)\asymp n\varepsilon_n\).  If
\(\mathrm{MDR}_{\Bdelta}\to0\), then
\[
\mathrm{EFN}_{\Bdelta}
=
o(\mathrm{EFN}_{\Bdelta}+\mathrm{ETP}_{\Bdelta})
=
o(\mathbb E|I|),
\]
because \(\mathrm{EFN}_{\Bdelta}+\mathrm{ETP}_{\Bdelta}= \mathbb E|I|\).  Therefore
\[
\mathrm{ETP}_{\Bdelta}
=
\mathbb E|I|-\mathrm{EFN}_{\Bdelta}
=
(1+o(1))\mathbb E|I|.
\]
If also \(\mathrm{mFDR}_{\Bdelta}\to0\), then
\[
\frac{\mathrm{EFP}_{\Bdelta}}
{\mathrm{EFP}_{\Bdelta}+\mathrm{ETP}_{\Bdelta}}
\to0.
\]
Since \(\mathrm{ETP}_{\Bdelta}\asymp\mathbb E|I|\), this implies
\(\mathrm{EFP}_{\Bdelta}=o(\mathbb E|I|)\).  This proves the lemma.
\end{proof}

\subsection{Proof of Theorem~\ref{thm:adaptivityTrGoF}: Adaptive optimality for global detection}
\label{sec:th3.45proof}

Let $p_t:=1-Y_t$ be the p-value.  Under the null, $p_t\sim\mathrm{Unif}(0,1)$.  For $r\in(0,1)$, define the empirical lower tail
\[
\mathbb F_n(r):=\frac1n\sum_{t=1}^n\mathbf 1\{p_t\le r\}
=\frac1n\sum_{t=1}^n\mathbf 1\{Y_t\ge1-r\}.
\]
For any \(r\in(0,1)\), we also write $\mathbb E_1[\mathbb F_n(r)]
= \frac1n \sum_{t=1}^n \mathbb P_1(p_t\le r)$ for its expectation under the alternative hypothesis $H_1$.
As a reminder, for fixed $s\in[-1,2]$, the \texttt{Tr-GoF} statistic is
\[
S_n^+(s):=\sup_{r\in[p_n^+,1)}K_s^+(\mathbb F_n(r),r),
\]
where $K_s^+$ is the upper-tail one-sided Jager--Wellner divergence (defined below) and $p_n^+ = \sup\{p_{(i)}: p_{(i)} \le c_n^+ \}$ is the stability truncation used by the procedure with $0 \le c_n^+ \le \frac1n$.  The test rejects $H_0$ if $nS_n^+(s)$ is larger than a threshold.

\begin{defn}[One-sided Jager--Wellner divergence]
$K_s^+(\cdot, \cdot)$ is defined in the following way.
\begin{equation}
\label{eq:Ks+}
K_s^+(u, v) = \begin{cases}
K_s(u, v), &~~\text{if}~~0 < v < u < 1,\\
0, &\new{~~\text{otherwise}},
\end{cases}
\end{equation}
where $K_s(u, v)$ represents the $\phi_s$-divergence between $\mathrm{Ber}(u)$ and $\mathrm{Ber}(v)$:\footnote{$\mathrm{Ber}(u)$ denotes a Bernoulli distribution with parameter (or head probability) $u$.}
\begin{align*}
 \label{eq:Ks}
 \begin{split}
 K_s(u, v) &= D_{\phi_s}(\mathrm{Ber}(u) \| \mathrm{Ber}(v))
 = v \phi_s\left( \frac{u}{v} \right) + (1-v) \phi_s\left( \frac{1-u}{1-v} \right).
 \end{split}
\end{align*}
Here, the scalar function $\phi_s(x)$, indexed by $s \in \RB$, is convex in $x$ and is defined by \citep{jager2007goodness}:
\begin{equation}
\label{eq:phi}
\phi_s(x) = \begin{cases}
x \log x -x +1, & ~~\text{if}~~s = 1, \\
\frac{1-s+sx-x^s}{s(1-s)}, & ~~\text{if}~~s \neq 0, 1, \\
-\log x +x -1,& ~~\text{if}~~s = 0.
\end{cases}
\end{equation}
\end{defn}

\begin{proof}[Proof of Theorem~\ref{thm:adaptivityTrGoF}]
The Type~I error follows directly from Lemma~\ref{lem:trgof-null-comparison}, because under
\(H_0\), \(nS_n^+(s)\) is at most of order \(\log\log n\) with probability tending to one.

\begin{lem}[Null calibration of $K_s^+$]
\label{lem:trgof-null-comparison}
Under $H_0$ in Definition \ref{def:detection}, for every fixed $\delta>0$ and $s\in[-1,2]$,
\begin{equation*}
\mathbb P_0\big(nS_n^+(s)>(1+\delta)\log\log n\big)\to0.
\end{equation*}
\end{lem}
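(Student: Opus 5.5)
The plan is to reduce the claim to the classical null asymptotics of the Jager--Wellner statistics for the uniform empirical process, and to check that the one-sided, truncated version used by \texttt{Tr-GoF} is dominated by the untruncated two-sided supremum. Under $H_0$, the $Y_t$ are i.i.d.\ $\mathrm{Unif}(0,1)$, so $p_t=1-Y_t$ are i.i.d.\ uniform. Hence $\mathbb F_n$ is exactly the uniform empirical CDF, and no dependence or NTP structure enters. Let $p_{(1)}\le\cdots\le p_{(n)}$ be the order statistics.

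\textbf{Step 1: domination by the classical statistic.} By~\eqref{eq:Ks+}, $K_s^+(u,v)\le K_s(u,v)$, and $K_s^+$ vanishes unless $0<v<u<1$. I would split $r\in[p_n^+,1)$ into three pieces.
\begin{enumerate}
\item For $r<p_{(1)}$ we have $\mathbb F_n(r)=0$, so the condition $u>v$ fails and $K_s^+=0$. This includes the case where $p_n^+$ is a supremum over an empty set, interpreted as $0$.
\item For $r\ge p_{(n)}$ we have $\mathbb F_n(r)=1$, so the condition $u<1$ fails and again $K_s^+=0$.
\item The remaining range gives
\[
nS_n^+(s)\le \sup_{r\in[p_{(1)},p_{(n)})} nK_s(\mathbb F_n(r),r)=:nS_n(s).
\]
\end{enumerate}
This $nS_n(s)$ is precisely the Berk--Jones / Jager--Wellner statistic. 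For fixed $s\in[-1,2]$, \citet{jager2007goodness} (Theorem~3.1) gives $nS_n(s)/\log\log n\to1$ in probability under uniform data. Therefore $\mathbb P_0(nS_n^+(s)>(1+\delta)\log\log n)\le \mathbb P_0(nS_n(s)>(1+\delta)\log\log n)\to0$.

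\textbf{Step 2: a self-contained route, if citing is not desired.} I would split $[p_{(1)},p_{(n)})$ into a central range $r\in[(\log n)^{3}/n,\,1-(\log n)^3/n]$ and two extreme ranges.

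In the central range, the relative error $|\mathbb F_n(r)-r|/r$ is uniformly $o_P(1)$ by the Wellner ratio limit theorems. A Taylor expansion of $\phi_s$ at $1$ then gives
\[
nK_s(\mathbb F_n(r),r)=(1+o_P(1))\,\frac{n(\mathbb F_n(r)-r)^2}{2r(1-r)}
\]
uniformly in $r$. The supremum of the right-hand side is the squared normalized uniform empirical process. By the Eicker--Jaeschke law of the iterated logarithm, it is $(1+o_P(1))\log\log n$.

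In the extreme ranges, only $O((\log n)^3)$ order statistics are involved. There I would bound $nK_s(i/n,p_{(i)})$ directly, using the Beta/Gamma representation of $p_{(i)}$ and Chernoff bounds for binomial tails. The aim is to show that these terms are $O_P(\log\log\log n)$, or at least $o_P(\log\log n)$. By symmetry of the construction, the upper extreme range is handled the same way, and it is in any case irrelevant for the one-sided $K_s^+$ near $r\to1$.

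\textbf{Main obstacle.} The delicate part is the lower extreme range near $r\approx p_{(1)}$ for $s\le 0$. There $\phi_s(x)$ for large ratios $x=\mathbb F_n(r)/r$ grows like $x\log x$ or faster, and the region is exactly where the truncation matters. I would handle it via the order-statistic bound $p_{(i)}\ge i/(n\log n)$, which holds simultaneously for all $i$ with probability tending to one. This bound controls the ratio $\mathbb F_n(r)/r$ at every $r\ge p_{(1)}$ by $O(\log n)$, which is enough for the $s$-range $[-1,2]$ as in Jager--Wellner. For this reason I would primarily rely on their theorem and present Step~2 only as a sketch.
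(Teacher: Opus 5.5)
Your Step~1 is correct and is essentially the paper's proof: dominate $S_n^+(s)$ by the untruncated Jager--Wellner statistic, then invoke their Theorem~3.1. Your check that $K_s^+$ vanishes for $r<p_{(1)}$ and $r\ge p_{(n)}$ is in fact slightly more careful than the paper's comparison statistic ``defined in the same way'' on $[p_n^+,1)$. For $s\le 0$ that statistic would literally be infinite, since $K_s(1,r)=\infty$ on $[p_{(n)},1)$, whereas Jager--Wellner take the supremum over $[p_{(1)},p_{(n)})$.

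The optional Step~2 is not needed, but its obstacle paragraph is inaccurate. For the one-sided $K_s^+$, the large-ratio regime is hardest when $s\in(1,2]$, where $\phi_s(x)\asymp x^s$; for $s\le 0$, $\phi_s(x)$ grows only linearly in $x$. Moreover, the bound $\mathbb F_n(r)/r=O(\log n)$ only gives $nK_2(1/n,p_{(1)})\lesssim \log n$, so you would need distributional control such as $np_{(1)}\to\mathrm{Exp}(1)$ in law to reach $o_P(\log\log n)$.
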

  It remains to prove that the Type~II error vanishes.  We do this by finding, in each detectable regime, a
deterministic tail point \(r_n\) inside the range (that is
\(r_n\in[p_n^+,1)\)) such that the empirical upper-tail excess
\(\mathbb F_n(r_n)-r_n\) is much larger than the stochastic fluctuation
\(\mathbb F_n(r_n)-\mathbb E_1\mathbb F_n(r_n)\) under the alternative $H_1$. To that end, we will use the following lemma.

\begin{lem}[Tail empirical concentration]
\label{lem:tail-count-concentration}
Under Assumption~\ref{asmp:alphamixing}, if $r_n=n^{-u}$ with any fixed $u<1$ that satisfies $\mathbb E_1 [\mathbb F_n(r_n)] \lesssim r_n$, then under the alternative $H_1$ in Definition \ref{def:detection},
\begin{equation}
\label{eq:tail-count-concentration}
\mathbb F_n(r_n)-\mathbb E_1 [\mathbb F_n(r_n)]
=O_{\mathbb P}\left(\sqrt{\frac{r_n}{n}}\log n\right).
\end{equation}
The same bound holds uniformly over any polylogarithmic grid of such $r_n$'s.
\end{lem}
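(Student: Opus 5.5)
We need to prove Lemma~\ref{lem:tail-count-concentration}: concentration of the tail count $\mathbb F_n(r_n)=n^{-1}\sum_t \mathbf 1\{Y_t\ge 1-r_n\}$ under $H_1$, with error $O_{\mathbb P}(\sqrt{r_n/n}\log n)$, uniformly over a polylogarithmic grid.

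Set $Z_t:=\mathbf 1\{Y_t\ge 1-r_n\}\in\{0,1\}$ and $m_n:=\sum_t\mathbb E Z_t=n\,\mathbb E_1\mathbb F_n(r_n)\lesssim nr_n=n^{1-u}\to\infty$, using $u<1$. The plan is to reuse the variance argument of Lemma~\ref{lem:mixing-rare-count}. First, each $Z_t$ is a measurable function of $(w_t,\xi_t)$, hence $\bar{\mathcal G}_t$-measurable, so $\sigma(Z_1,\dots,Z_t)\subseteq\bar{\mathcal G}_t$ and $\sigma(Z_{t+k},\dots,Z_n)\subseteq\bar{\mathcal G}^+_{t+k}$. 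Monotonicity of strong-mixing coefficients then gives $\alpha_Z(k)\le\alpha_{\bar{\mathcal G}}(k)\le C\rho^k$ by Assumption~\ref{asmp:alphamixing}. Lemma~\ref{lem:mixing-rare-count} now yields $\mathrm{Var}(\sum_t Z_t)\lesssim m_n\log n\lesssim nr_n\log n$.

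Second, Chebyshev's inequality gives, for any $M>0$,
\[
\mathbb P\Bigl(\bigl|\mathbb F_n(r_n)-\mathbb E_1\mathbb F_n(r_n)\bigr|>M\sqrt{r_n/n}\,\log n\Bigr)\le\frac{n r_n\log n}{M^2 n^2 (r_n/n)(\log n)^2}\lesssim\frac{1}{M^2\log n}.
\]
This is $o(1)$, which already gives the single-point claim, with an extra $\sqrt{\log n}$ to spare.

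Third, for a grid of $M_n\asymp(\log n)^a$ points satisfying the hypothesis, a union bound costs a factor $(\log n)^a$. The Chebyshev bound alone is then insufficient once $a\ge1$, so the plan is to upgrade to an exponential inequality. I would apply a Bernstein inequality for bounded strongly mixing sequences via blocking. Split $\{1,\dots,n\}$ into alternating blocks of length $b_n\asymp\log n$, large enough that $n\rho^{b_n}=o(n^{-2})$. Couple the big blocks with independent copies using Berbee's coupling, or use Bradley's lemma for $\alpha$-mixing, at total cost $n\alpha(b_n)/b_n=o(1)$. Then apply Bernstein to the independent block sums. Each block sum is bounded by $b_n$ and has variance $\lesssim b_n^2\,\mathbb E(\text{block sum})$ by the computation in Lemma~\ref{lem:mixing-rare-count}. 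This gives a tail bound
\[
\exp\Bigl(-c\min\Bigl\{\frac{x^2}{m_n\log n},\frac{x}{\log n}\Bigr\}\Bigr)
\]
for deviations $x$ of $\sum_t Z_t$. Taking $x=M\sqrt{nr_n}\log n$, the first term is $\exp(-cM^2\log n)$. The second is $\exp(-cM\sqrt{nr_n})$, which is superpolynomially small since $nr_n\to\infty$ polynomially. Both beat the polylogarithmic union bound, which proves uniformity over the grid.

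The main obstacle is this third step: getting the blocking/coupling constants right under $\alpha$-mixing rather than $\beta$-mixing. If Berbee's coupling is unavailable, I would fall back to a higher-moment bound, namely a Rosenthal-type inequality $\mathbb E|\sum_t(Z_t-\mathbb EZ_t)|^{2k}\lesssim(m_n\log n)^k$ for mixing sequences with a fixed $k>a$. Markov's inequality with this moment gives a single-point probability of $(M\sqrt{\log n})^{-2k}$, which still survives a union over $(\log n)^a$ points.
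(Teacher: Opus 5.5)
Your proof is correct, but it is organized differently from the paper's.

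The paper handles the single point and the grid in one step. It applies the Bernstein inequality of Merlev\`ede--Peligrad--Rio for bounded, geometrically strongly mixing sequences at $x=C\sqrt{r_n/n}\log n$. This gives a failure probability $o(n^{-A})$ at each point, and a union bound over the $(\log n)^{O(1)}$ grid finishes.

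You instead get the single-point claim from Chebyshev and the paper's own variance bound (Lemma~\ref{lem:mixing-rare-count}), after correctly checking that $\sigma(Z_1,\dots,Z_t)\subseteq\bar{\mathcal G}_t$ and $\sigma(Z_{t+k},\dots,Z_n)\subseteq\bar{\mathcal G}^{+}_{t+k}$. This step has several advantages:
\begin{itemize}
\item it is self-contained and needs no external concentration inequality;
\item it depends only on the aggregate $m_n$, which is exactly what the hypothesis $\mathbb E_1[\mathbb F_n(r_n)]\lesssim r_n$ controls, whereas the MPR variance proxy is a supremum over coordinates;
\item it gives the slightly sharper rate $\sqrt{r_n\log n/n}$.
\end{itemize}
The cost is a failure probability of only $O(1/(M^2\log n))$, which is why you need something stronger for the grid.

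The $\alpha$- versus $\beta$-mixing obstacle you flag there is exactly what the paper's citation removes, since the MPR inequality is stated under geometric $\alpha$-mixing. Note that its variance term includes cross-covariances. Here these are a priori only $\lesssim r_n\log n$ per coordinate, matching your $m_n\log n/n$. So the exponent at the paper's choice of $x$ is of order $C^2\log n$, not $C^2\log^2 n$, which is still polynomially small.

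Your Rosenthal fallback also works. The second Rosenthal term is at most of order $nr_n(\log n)^{2k-1}$, which is dominated by $(nr_n\log n)^k$ because $nr_n$ grows polynomially, and taking $k>a$ beats the union bound. It gives only polylogarithmically small failure probabilities. That suffices for the $O_{\mathbb P}$ statement, but not for the $o(n^{-1})$-type guarantees the paper extracts from the same Bernstein argument in Lemma~\ref{lem:tail-relative-u}.

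One slip in your blocking sketch: bound the block variance by $\mathbb E B_j^2\le b_n\,\mathbb E B_j$, not $b_n^2\,\mathbb E B_j$. With the extra factor $b_n$, the exponent at $x=M\sqrt{nr_n}\log n$ would only be of order $M^2$, which does not survive the union over the grid. The correct bound is what produces the $m_n\log n$ in your displayed tail estimate.
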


\paragraph{Case 1: $0 < p < \alpha$.}
First, we consider the case where \(p<\alpha\).  The detection condition reduces to \(p+q<1\).  Choose a small fixed
\(\eta>0\) such that \(p+q+\eta<1\), and set \(r_n=n^{-u}\) with \(u=p+q-\eta\).  Then
\(u\in(0,\alpha+q)\).  By Lemma~\ref{lem:tail-pivot-HCL}, the signal tail probability, namely
\(\mathbb P(Y_t>1-r_n\mid\theta_t=1)\), is of order \(n^{-u}+n^{-q}\), while the null tail probability
is \(r_n=n^{-u}\).  Therefore,
\[
\mathbb E_1\mathbb F_n(r_n)-r_n
\asymp 
\varepsilon_n n^{-q}
\asymp
n^{-(p+q)}.
\]
Because \(r_n=n^{-(p+q-\eta)}\gg n^{-(p+q)}\), we then have $\mathbb E_1\mathbb F_n(r_n) = r_n+O(n^{-(p+q)}) \lesssim r_n$.
The concentration in Lemma \ref{lem:tail-count-concentration} shows that with probability tending to one,
\begin{equation}
\label{eq:help5}
\mathbb F_n(r_n)-\mathbb E_1\mathbb F_n(r_n) =O_{\mathbb P}\left(\sqrt{\frac{r_n}{n}}\log n\right) =o_{\mathbb P}(r_n).
\end{equation}
This stochastic fluctuation is negligible compared with the mean excess \(n^{-(p+q)}\), since
\[
\frac{\sqrt{r_n/n}\log n}{n^{-(p+q)}}
=
n^{(p+q)-(1+u)/2}\log n
=
n^{(p+q-1+\eta)/2}\log n
\to0.
\]
Thus, with probability tending to one, \(\mathbb F_n(r_n)-r_n \asymp n^{-(p+q)}\).  
Moreover, by the last equation in \eqref{eq:help5}, 
\(\mathbb F_n(r_n)-\mathbb E_1\mathbb F_n(r_n)=o_{\mathbb P}(r_n)\).  Hence, $\frac{\mathbb F_n(r_n)}{r_n}=O_{\mathbb P}(1)$ as a result of $\mathbb E_1\mathbb F_n(r_n) \lesssim r_n$.
On the other hand, the standardized tail deviation implies that under $H_1$,
\[
\frac{\sqrt n\{\mathbb F_n(r_n)-r_n\}}{\sqrt{r_n}}
\gtrsim
n^{1/2-(p+q)+u/2}
=
n^{(1-p-q-\eta)/2}
\to\infty.
\]
Finally, \(r_n\) is included in the scan $[p_{n}^+, 1)$. Indeed, the truncation condition gives
\(p_n^+\le c_n^+ \le \frac1n \lesssim\varepsilon_n\Delta_n\asymp n^{-(p+q)}\), while
\(r_n=n^{-(p+q-\eta)}\gg n^{-(p+q)}\).

\paragraph{Case 2: $p\ge\alpha$.}
Next suppose \(p\ge\alpha\).  The detection condition is \(2p+q-\alpha<1\) and we will take \(r_n=n^{-(\alpha+q)}\).  This is the transition point in Lemma~\ref{lem:tail-pivot-HCL}.  Similarly, we still have
\[
\mathbb E_1\mathbb F_n(r_n)-r_n
\asymp
\varepsilon_n n^{-q}
\asymp
n^{-(p+q)}.
\]
 Since \(r_n=n^{-(\alpha+q)}\) and \(p\ge\alpha\), we also have \(\mathbb E_1\mathbb F_n(r_n)=r_n+O(n^{-(p+q)})\lesssim r_n\).
From Lemma \ref{lem:tail-count-concentration}, the stochastic fluctuation  is \(\mathbb F_n(r_n)-\mathbb E_1 \mathbb F_n(r_n) = O_{\mathbb P}(\sqrt{r_n/n}\log n)\), which is negligible relative to \(n^{-(p+q)}\), because
\[
\frac{\sqrt{r_n/n}\log n}{n^{-(p+q)}}
=
n^{p+q-(1+\alpha+q)/2}\log n
=
n^{(2p+q-\alpha-1)/2}\log n
\to0.
\]
On the other hand, we also have \(\mathbb F_n(r_n)-\mathbb E_1\mathbb F_n(r_n)=o_{\mathbb P}(r_n)\) because $n r_n = n^{1-\alpha-q} \to \infty$.
As a result, we have $\frac{\mathbb F_n(r_n)}{r_n}=O_{\mathbb P}(1).$
Finally, \(\mathbb F_n(r_n)-r_n\gtrsim n^{-(p+q)}\) with probability tending to one, and
\[
\frac{\sqrt n\{\mathbb F_n(r_n)-r_n\}}{\sqrt{r_n}}
\gtrsim
n^{1/2-(p+q)+(\alpha+q)/2}
=
n^{(1+\alpha-q-2p)/2}
\to\infty.
\]
We assert that \(r_n=n^{-(\alpha+q)}\) here is also in the scan $[p_{n}^+, 1)$. Indeed, the detection condition \(2p+q-\alpha<1\) and \(p\ge\alpha\) imply \(\alpha+q<1\), and hence \(r_n=n^{-(\alpha+q)}\gg n^{-1}\ge p_n^+\).

We are now ready to conclude the proof.  In every regime considered above, we have shown that $\frac{\mathbb F_n(r_n)}{r_n} \lesssim 1$ and found a tail point
\(r_n\in[p_n^+,1)\) such that
\[
\frac{\sqrt n\{\mathbb F_n(r_n)-r_n\}}{\sqrt{r_n}}
\to\infty
\qquad\text{in probability}.
\]
Since the deviation is positive with probability tending to one, the upper-tail divergence satisfies
\[
nK_2^+(\mathbb F_n(r_n),r_n)
=
\frac{n\{\mathbb F_n(r_n)-r_n\}^2}{2r_n(1-r_n)}
\to\infty
\qquad\text{in probability}.
\]

\begin{lem}[Local comparison lemma {\cite[Lemma~A.11]{li2025robust}}]
\label{lem:Ks-comparison}
On any moderate-deviation range where $v/u > c$ for some $c \in (0, 1)$, $K_s^+(u,v)$ is bounded below by a positive constant times $K_2^+(u,v)$.
More specifically, $K_2^+(u, v) = \frac{(u-v)^2}{2v(1-v)}$ and 
\begin{enumerate}
\item For any \(s\le 2\) with \(s\neq 1\), it follows that
\[
K_s^+(u,v)
\ge
K_2^+(u,v)
\left[
1-(1-v)\left\{1-\left(\frac{v}{u}\right)^{2-s}\right\}
\right].
\]
\item For \(s=1\), it follows that
\[
K_1^+(u,v)
\ge
K_2^+(u,v)\cdot \frac{v}{u}.
\]
\end{enumerate}
\end{lem}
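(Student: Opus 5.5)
The plan is a direct second-order (Taylor with integral remainder) comparison of the convex generator $\phi_s$ with the quadratic generator $\phi_2$. No asymptotics are needed: the inequalities hold pointwise for $0<v<u<1$.

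\textbf{Step 1: the formula for $K_2^+$.} First check the identity $K_2^+(u,v)=\frac{(u-v)^2}{2v(1-v)}$. From \eqref{eq:phi}, $\phi_2(x)=\frac{1-2+2x-x^2}{-2}=\frac{(x-1)^2}{2}$. Then
\[
K_2(u,v)=\frac{v}{2}\Bigl(\frac{u}{v}-1\Bigr)^2+\frac{1-v}{2}\Bigl(\frac{1-u}{1-v}-1\Bigr)^2=\frac{(u-v)^2}{2}\Bigl(\frac1v+\frac1{1-v}\Bigr),
\]
which is the claimed formula.

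\textbf{Step 2: an integral representation of $\phi_s$.} For every $s\in\RB$, including $s=0,1$, one checks directly that $\phi_s(1)=0$, $\phi_s'(1)=0$ and $\phi_s''(x)=x^{s-2}$. Taylor's formula with integral remainder then gives
\[
\phi_s(x)=\int_1^x (x-t)\,t^{s-2}\,\rd t ,
\]
read with the obvious orientation when $x<1$.

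\textbf{Step 3: bound the two terms of $K_s$ separately, using $s\le 2$.} Write $x=u/v$ and $y=(1-u)/(1-v)$, so that $x>1>y$.
\begin{itemize}
\item On $t\in[1,x]$, the map $t\mapsto t^{s-2}$ is nonincreasing because $s-2\le 0$. Hence $t^{s-2}\ge x^{s-2}=(v/u)^{2-s}$, and so $\phi_s(x)\ge (v/u)^{2-s}\frac{(x-1)^2}{2}$.
\item On $t\in[y,1]$, we have $t^{s-2}\ge 1$, and so $\phi_s(y)\ge\frac{(1-y)^2}{2}$.
\end{itemize}
Substituting both bounds into $K_s=v\phi_s(x)+(1-v)\phi_s(y)$ gives
\[
K_s(u,v)\ge \frac{(u-v)^2}{2}\Bigl[\frac{(v/u)^{2-s}}{v}+\frac{1}{1-v}\Bigr]
= K_2(u,v)\bigl[(1-v)(v/u)^{2-s}+v\bigr].
\]
The bracket equals $1-(1-v)\{1-(v/u)^{2-s}\}$, which is item 1.

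\textbf{Step 4: the case $s=1$ and the constant lower bound.} For $s=1$ the same computation applies, with bracket $1-(1-v)(1-v/u)=v/u+v(1-v/u)\ge v/u$; this gives item 2. More generally the bracket is always at least $(v/u)^{2-s}$. So on the range $v/u>c$ we get $K_s^+\ge c^{2-s}K_2^+$ (and $c^{3}$ serves uniformly for $s\in[-1,2]$), which is the moderate-deviation statement.

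The only real obstacle is bookkeeping of signs and orientation in the remainder for $y<1$, together with confirming that $\phi_s''(x)=x^{s-2}$ holds in the logarithmic cases $s=0,1$; both are routine to verify.
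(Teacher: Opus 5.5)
Your proof is correct and complete. The paper itself gives no proof of this lemma: it imports the statement from \citet[Lemma~A.11]{li2025robust} and uses it only in the proof of Theorem~\ref{thm:adaptivityTrGoF}, so there is no in-paper argument to compare yours against. Your argument makes the lemma self-contained.

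Your steps check out:
\begin{itemize}
\item For every $s$, including the logarithmic cases $s=0,1$, one has $\phi_s(1)=\phi_s'(1)=0$ and $\phi_s''(t)=t^{s-2}$, which gives the integral form of Taylor's theorem.
\item For $s\le 2$, the bound $t^{s-2}\ge (v/u)^{2-s}$ holds on $[1,u/v]$.
\item The bound $t^{s-2}\ge 1$ holds on $[(1-u)/(1-v),1]$.
\item The bracket then simplifies to $v+(1-v)(v/u)^{2-s}$, which equals $1-(1-v)\{1-(v/u)^{2-s}\}$.
\end{itemize}

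Your argument also yields slightly more than what is stated:
\begin{itemize}
\item The inequalities hold pointwise on all of $0<v<u<1$, with no moderate-deviation restriction.
\item Item~1 also holds at $s=1$. Its bracket $v+(1-v)v/u$ is sharper than the factor $v/u$ in item~2.
\item Since the bracket is at least $(v/u)^{2-s}$, you get the explicit constant $c^{2-s}\ge c^{3}$ uniformly over $s\in[-1,2]$. This is exactly what the application with $v/u\ge M^{-1}$ in the proof of Theorem~\ref{thm:adaptivityTrGoF} needs.
\end{itemize}

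One bookkeeping remark: the closed form $K_2^+(u,v)=(u-v)^2/\{2v(1-v)\}$ is valid only on $0<v<u<1$. Off that range, both $K_s^+$ and $K_2^+$ vanish by definition, so the inequalities hold trivially there.
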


We are now ready to conclude the proof.  In both regimes, we have shown that \(\mathbb F_n(r_n)/r_n=O_{\mathbb P}(1)\) and that \(\sqrt n\{\mathbb F_n(r_n)-r_n\}/\sqrt{r_n}\to\infty\) in probability.  Since \(\mathbb F_n(r_n)-r_n>0\) with probability tending to one, the upper-tail divergence satisfies \(nK_2^+(\mathbb F_n(r_n),r_n)=n\{\mathbb F_n(r_n)-r_n\}^2/\{2r_n(1-r_n)\}\to\infty\) in probability.

Moreover, the preceding bound \(\mathbb F_n(r_n)/r_n=O_{\mathbb P}(1)\) implies that, with probability tending to one, \(\mathbb F_n(r_n)/r_n\le M\) for some constant \(M>1\).  On this event, with \(u=\mathbb F_n(r_n)\) and \(v=r_n\), we have \(0<v<u<1\) and \(v/u\ge M^{-1}\). Therefore, Lemma~\ref{lem:Ks-comparison} gives \(K_s^+(\mathbb F_n(r_n),r_n)\gtrsim K_2^+(\mathbb F_n(r_n),r_n)\) with probability tending to one.  Since \(S_n^+(s)\) is the supremum over all scanned tail points and \(r_n\in[p_n^+,1)\), we obtain \[
nS_n^+(s)
\ge
nK_s^+(\mathbb F_n(r_n),r_n)
\gtrsim
nK_2^+(\mathbb F_n(r_n),r_n)
\to\infty
\]
in probability.  Thus \(nS_n^+(s)\) exceeds the null critical order \(\log\log n\) with probability tending to one, and the Type~II error vanishes.

\end{proof}

\paragraph{Auxiliary proofs for Theorem~\ref{thm:adaptivityTrGoF}.}
In the following, we present the omitted proof for Lemma~\ref{lem:trgof-null-comparison} used above.

\begin{proof}[Proof of Lemma~\ref{lem:trgof-null-comparison}]
The proof follows from~\citet{li2025robust}; we include the details for completeness.  Under \(H_0\), the \(p\)-values are i.i.d. \(\mathrm{Unif}(0,1)\).  Moreover, \(S_n^+(s)\) is bounded above by the corresponding untruncated Jager--Wellner statistic \(S_n(s)\), that is, \(S_n^+(s)\le S_n(s)\) almost surely for all \(s\).  This inequality makes sense because \(S_n(s)\) is defined in the same way as \(S_n^+(s)\), except that it uses \(K_s(\cdot,\cdot)\) instead of the one-sided truncated divergence \(K_s^+(\cdot,\cdot)\).  By Theorem~3.1 of~\citet{jager2007goodness},
\[
\mathbb P_0(nS_n(s)\le (1+\delta)\log\log n)\to 1
\]
for every fixed \(\delta>0\) and \(s\in[-1,2]\).  As a result, the desired null bound follows.
\end{proof}

\begin{proof}[Proof of Lemma~\ref{lem:tail-count-concentration}]
Fix \(r_n=n^{-u}\) with \(u<1\), and write
\(X_t:=\mathbf 1\{p_t\le r_n\}\).  Then
\(\mathbb F_n(r_n)=n^{-1}\sum_{t=1}^nX_t\).  The variables \(X_t\) are bounded by one and are geometrically strongly mixing by Assumption~\ref{asmp:alphamixing}.  Moreover, \(\mathbb E X_t=\mathbb P(p_t\le r_n)\lesssim r_n+n^{-q}\) from Lemma \ref{lem:tail-pivot-HCL}, and on the tail points used in the proof, this is \(O(r_n)\); in particular, the variance scale is at most of order \(r_n\).

A Bernstein inequality for bounded geometrically mixing sequences with bounded expectation $\mathbb E_1 [\mathbb F_n(r_n)] \lesssim r_n$ (e.g., \citep[Theorem~2]{MerlevedePeligradRio2009Bernstein}) gives, for a constant \(c>0\),
\[
\mathbb P\left(\left|\mathbb F_n(r_n)-\mathbb E\mathbb F_n(r_n)\right|>x\right)
\le
2\exp\left[-c\frac{n x^2}{r_n+x\log^2 n}\right].
\]
Now take \(x=C\sqrt{r_n/n}\log n\).  Since \(r_n=n^{-u}\) with \(u<1\), we have
\(n r_n\to\infty\), and
\[
\frac{x\log^2 n}{r_n}
=
C\frac{\log^3 n}{\sqrt{n r_n}}
\to0.
\]
Hence, the denominator $r_n+x\log^2 n$ is \(r_n(1+o(1))\), and the exponent in the failure probability is at least a constant multiple of
\(n x^2/r_n=C^2\log^2 n\).  Choosing \(C\) large enough gives a probability \(o(n^{-A})\) for any
fixed \(A>0\).  This proves with probability at least $1-o(n^{-A})$,
\[
\mathbb F_n(r_n)-\mathbb E\mathbb F_n(r_n)
=
O_{\mathbb P}\left(\sqrt{\frac{r_n}{n}}\log n\right).
\]

For the uniform version over a polylogarithmic grid, apply the same bound to each grid point.  Since
the grid has only \((\log n)^{O(1)}\) points, the union bound preserves the same order, after
increasing the constant \(C\) if necessary.
\end{proof}

\subsection{Proof of Theorem~\ref{thm:adaptivity}: Adaptive optimality for discovery}
\label{sec:th3.5proof}

As a reminder, we use the following notations:
\[
\tau(u):=1-n^{-u},
\qquad
\widehat S_n(u):=\frac1n\sum_{t=1}^n\mathbf 1\{Y_t>\tau(u)\},
\qquad
S_n(u):=\mathbb E\widehat S_n(u).
\]
Let \(\mathcal U_n\) be the grid used by \Algo{}, with mesh \(\Delta_{u,n}\) and cardinality
\(M_n\asymp(\log n)^a\).  Define the oracle and empirical tail false-discovery proxies by
\[
T_n^{\mathrm{orc}}(u):=\frac{(1-\varepsilon_n)n^{-u}}{S_n(u)},
\qquad
\widehat T_n(u):=\frac{(1-\widehat\varepsilon_n)n^{-u}}{\widehat S_n(u)\vee n^{-1}}.
\]
The proof has three steps.  First, we control the empirical tail counts
\(\widehat S_n(u)\) uniformly over the grid \(\mathcal U_n\).  Second, combining this concentration
with the accuracy of \(\widehat\varepsilon_n\), we show that the empirical tail false-discovery proxy
\(\widehat T_n(u)\) uniformly approximates the oracle \(T_n^{\mathrm{orc}}(u)\).  Third, we locate
the crossing point of \(T_n^{\mathrm{orc}}(u)\) at level \(\lambda_n\), and transfer this localization to
the threshold selected by \Algo{}.

\begin{lem}[Uniform tail-count concentration on the \(u\)-grid]
\label{lem:tail-relative-u}
Under Assumptions~\ref{asmp:main}--\ref{asmp:alphamixing}, if \(u_{\max}<1\), then
\[
\mathbb P\left(
\max_{u\in\mathcal U_n}
\frac{|\widehat S_n(u)-S_n(u)|}{S_n(u)}>\frac{c}{(\log n)^2}
\right)=o(n^{-1})
\]
for a sufficiently small universal constant \(c>0\).
\end{lem}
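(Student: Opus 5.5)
We prove Lemma~\ref{lem:tail-relative-u} by applying, for each fixed $u\in\mathcal U_n$, a Bernstein inequality for bounded geometrically strongly mixing sequences, in the same way as in the proof of Lemma~\ref{lem:tail-count-concentration}, and then taking a union bound over the grid. For fixed $u$, set $X_t:=\mathbf 1\{Y_t>\tau(u)\}$. These variables are measurable functions of $(w_t,\xi_t)$, so by Assumption~\ref{asmp:alphamixing} their strong-mixing coefficients are dominated by $\alpha_{\bar{\mathcal G}}(k)\le C\rho^k$. Write $S_n(u)=n^{-1}\sum_t\mathbb E X_t$. Lemma~\ref{lem:tail-pivot-HCL} and Assumption~\ref{asmp:main}(c) give the two-sided bound
\[
S_n(u)\asymp (1-\varepsilon_n)n^{-u}+\varepsilon_n\,\mathbb P(Y_t>\tau(u)\mid\theta_t=1),
\]
so that $S_n(u)\ge c\,n^{-u}\ge c\,n^{-u_{\max}}$ uniformly on the grid. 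This lower bound is all we need from the signal side.

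\emph{Step 1 (variance proxy).} We bound the variance proxy by the mean: $\sum_t\mathrm{Var}(X_t)\le nS_n(u)$. By the covariance argument of Lemma~\ref{lem:mixing-rare-count}, $\mathrm{Var}(\sum_tX_t)\lesssim nS_n(u)\log n$. The Bernstein inequality of \citet[Theorem~2]{MerlevedePeligradRio2009Bernstein} then gives
\[
\mathbb P\bigl(|\widehat S_n(u)-S_n(u)|>x\bigr)\le 2\exp\Bigl[-c\,\frac{nx^2}{S_n(u)\log n+x\log^2n}\Bigr].
\]
We use the $\log n$-inflated variance to be safe.

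\emph{Step 2 (relative deviation).} Take $x=c_0S_n(u)/(\log n)^2$. The exponent is then of order
\[
\frac{nS_n(u)}{(\log n)^4\,(\log n+\log^2 n)}\gtrsim\frac{n^{1-u_{\max}}}{(\log n)^6}.
\]
Since $u_{\max}<1$, this exceeds any multiple of $\log n$. Each grid point therefore fails with probability at most $\exp(-n^{(1-u_{\max})/2})$.

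\emph{Step 3 (union bound).} A union bound over the $M_n\asymp(\log n)^a$ grid points keeps the total failure probability $o(n^{-1})$, indeed $o(n^{-A})$ for any $A$. This proves the lemma for any fixed small $c$.

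The main obstacle is justifying the Bernstein inequality for the triangular array $X_t$. The mixing is stated for the generation–editing sigma-fields, and the mean $S_n(u)$ is small but polynomially bounded below. The key observation is that $nS_n(u)\ge n^{1-u_{\max}}$ grows polynomially, which absorbs all polylogarithmic losses. If the cited inequality requires a stationary-type variance condition, we would first block the sequence into blocks of length $\asymp\log n$, use the geometric mixing to couple the blocks with independent ones (Berbee's coupling), and then apply classical Bernstein to the block sums. The coupling error is $n\rho^{\log n}$, which is polynomially small after choosing the block constant large.
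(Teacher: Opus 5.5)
Your proposal is correct and follows essentially the same route as the paper. For each grid point you lower-bound $S_n(u)\gtrsim n^{-u_{\max}}$, apply the Merlev\`ede--Peligrad--Rio Bernstein inequality for bounded geometrically $\alpha$-mixing arrays at relative deviation $c/(\log n)^2$ so that the exponent is $n^{1-u_{\max}}$ up to polylogarithmic factors, and take a union bound over the $M_n\asymp(\log n)^a$ grid points. Your $\log n$-inflated variance proxy is, if anything, the more faithful reading of that inequality's $v^2$ term (a supremum of $\mathrm{Var}(X_i)+2\sum_{j>i}|\mathrm{Cov}(X_i,X_j)|$) under strong mixing alone, and the extra polylogarithmic factor is harmlessly absorbed. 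The blocking fallback is unnecessary, since that inequality already allows non-stationary sequences; as stated it would not apply anyway, because Berbee's coupling requires $\beta$-mixing rather than the strong ($\alpha$-)mixing of Assumption~\ref{asmp:alphamixing}.
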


\begin{lem}[Uniform tail proxy stability]
\label{lem:tail-LFDR-stability-u}
Suppose the assumptions of Theorem~\ref{thm:adaptivity} hold.  
Let $\eta_n = C_2/(\log n)^2$ for a fixed constant $C_2 > 0$.
Then
\[
\mathbb P\left(
\max_{u\in\mathcal U_n}|\widehat T_n(u)-T_n^{\mathrm{orc}}(u)|>\eta_n/2
\right)=o(n^{-1}).
\]
\end{lem}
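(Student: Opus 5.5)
The plan is to work on the intersection of two high-probability events and reduce the claim to a deterministic perturbation bound. Let $E_1$ be the event from Lemma~\ref{lem:tail-relative-u}, namely $\max_{u\in\mathcal U_n}|\widehat S_n(u)-S_n(u)|/S_n(u)\le c_0/(\log n)^2$. This lemma applies because $u_{\max}<1$, and its constant $c_0$ can be taken as small as we like. Let $E_2:=\{|\widehat\varepsilon_n-\varepsilon_n|\le c\eta_n\}$, which is the accuracy condition~\eqref{eq:estimator-accruacy-condition}. Both complements have probability $o(n^{-1})$, so it suffices to show that $\max_u|\widehat T_n(u)-T_n^{\mathrm{orc}}(u)|\le\eta_n/2$ holds deterministically on $E_1\cap E_2$ for all large $n$. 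No further probabilistic input is needed; everything else is bookkeeping of constants uniformly over the grid.

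First I would establish a uniform lower bound on the oracle tail mass. Using Lemma~\ref{lem:tail-pivot-HCL} and dropping the (nonnegative) signal contribution, we get $S_n(u)\ge n^{-1}\sum_t\mathbb P(\theta_t=0)\,n^{-u}\ge \kappa\, n^{-u}$. Here $\kappa:=\inf_t\mathbb P(\theta_t=0)>0$ uniformly in $n$, by Assumption~\ref{asmp:main}(c) together with $\varepsilon_n\le\pi_\star$ when $p=0$. Two consequences follow.
\begin{itemize}
\item $T_n^{\mathrm{orc}}(u)\le K:=1/\kappa$ uniformly over $u\in\mathcal U_n$.
\item $S_n(u)\ge\kappa n^{-u_{\max}}\gg n^{-1}$. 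Hence on $E_1$ we have $\widehat S_n(u)\ge S_n(u)/2>n^{-1}$ for large $n$, so the truncation $\vee\,n^{-1}$ is inactive throughout the grid.
\end{itemize}

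On $E_1\cap E_2$ I would then split the difference as
\[
\widehat T_n(u)-T_n^{\mathrm{orc}}(u)
=\frac{(\varepsilon_n-\widehat\varepsilon_n)n^{-u}}{\widehat S_n(u)}
+T_n^{\mathrm{orc}}(u)\,\frac{S_n(u)-\widehat S_n(u)}{\widehat S_n(u)}.
\]
For the first term, the bound $n^{-u}/\widehat S_n(u)\le (1+o(1))\,n^{-u}/S_n(u)\le (1+o(1))K$ gives a contribution of at most $(1+o(1))cK\eta_n$. For the second term, the bound $T_n^{\mathrm{orc}}\le K$ combined with the relative bound on $E_1$ gives a contribution of at most $K\cdot c_0(\log n)^{-2}/(1-c_0(\log n)^{-2})$. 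Since $\eta_n=C_2/(\log n)^2$, choosing $c_0\le C_2/(8K)$ makes this at most $\eta_n/4$ for large $n$.

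The main obstacle is the first term. The constant $c$ in~\eqref{eq:estimator-accruacy-condition} must satisfy $cK<1/4$. I would handle this by reading the accuracy condition with a sufficiently small constant; equivalently, one applies it at the level $\eta_n/(4K)$, which changes only constants. Because $K$ depends only on the lower bound on $\mathbb P(\theta_t=0)$, this requirement does not involve $p$, $q$ or $\alpha$. The remaining care is ensuring every bound above is uniform in $u\in\mathcal U_n$. This holds because $\kappa$, $K$ and the rate in Lemma~\ref{lem:tail-relative-u} are grid-uniform.
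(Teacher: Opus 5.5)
Your proposal is correct and follows essentially the same route as the paper: you intersect the event of Lemma~\ref{lem:tail-relative-u} with the estimator-accuracy event, and you use $S_n(u)\gtrsim n^{-u}\ge n^{-u_{\max}}\gg n^{-1}$ both to make the truncation inactive and to bound $n^{-u}/S_n(u)$; you then split the difference into an $|\widehat\varepsilon_n-\varepsilon_n|$ term and a relative tail-count error term. The point you flag, that the constant in~\eqref{eq:estimator-accruacy-condition} must be small compared with $1/K$ (which is not literally implied by ``for some constant $c>0$''), is handled the same way in the paper, by simply taking that constant sufficiently small.
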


\begin{lem}[Oracle calibration]
\label{lem:oracle-jump-calibration}
Suppose Assumptions~\ref{asmp:main}--\ref{asmp:HCL+} hold.  Assume \(p<\alpha\) and
\[
0<u_{\min}<p+q<u_{\max}<1,
\qquad
p+q<\alpha+q,
\]
and let \(\mathcal U_n\subset[u_{\min},u_{\max}]\) be the grid used by \emph{\Algo{}}, with mesh
\(\Delta_{u,n}=(u_{\max}-u_{\min})/(M_n-1)\to0\).  Then, uniformly for \(u\) in any fixed
neighborhood of \(p+q\) contained in \((u_{\min},u_{\max})\cap(0,\alpha+q)\),
\[
S_n(u):=\mathbb P(Y_t>\tau(u))\asymp n^{-u}+n^{-(p+q)},
\qquad
T_n^{\mathrm{orc}}(u)=\frac{(1-\varepsilon_n)n^{-u}}{S_n(u)}
\asymp \frac{n^{-u}}{n^{-u}+n^{-(p+q)}}.
\]
Define \(\lambda_n=C_1/\log n\) and the oracle exponent $u_n^\circ$ by
\[
u_n^\circ:=\inf\{u\in\mathcal U_n:T_n^{\mathrm{orc}}(u)\le\lambda_n\}.
\]
Then, \(u_n^\circ\) is well-defined for all large \(n\) and
\[
u_n^\circ
=
p+q+\frac{\log\log n}{\log n}
+O\!\left(\frac1{\log n}+\Delta_{u,n}\right).
\]
The same expansion holds if \(\lambda_n\) is replaced by any level in
\([\lambda_n/2,2\lambda_n]\).
\end{lem}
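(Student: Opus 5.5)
The plan is a direct two-sided computation of $S_n(u)$ from Lemma~\ref{lem:tail-pivot-HCL}. I then use a monotonicity property of $T_n^{\mathrm{orc}}$ to locate its level-$\lambda_n$ crossing.

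\emph{Step 1: the order of $S_n(u)$.} Here $S_n(u)$ is the position average $n^{-1}\sum_t\mathbb P(Y_t>\tau(u))$, which is what $\mathbb E\widehat S_n(u)$ equals. Conditioning on $\theta_t$ gives
\[
\mathbb P(Y_t>\tau(u))=(1-\mathbb P(\theta_t=1))\,n^{-u}+\mathbb P(\theta_t=1)\,\mathbb P(Y_t>\tau(u)\mid\theta_t=1).
\]
Lemma~\ref{lem:tail-pivot-HCL}, case \eqref{eq:tail-signal-u-less}, applies because $u<\alpha+q$. It shows that the signal tail is $\kappa_{t,n}(u)\,(n^{-u}+n^{-q})$, where $\kappa_{t,n}(u)$ lies between fixed positive constants uniformly in $t$ and in $u$ on the neighborhood.

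Assumption~\ref{asmp:main}(c) gives $\mathbb P(\theta_t=1)\asymp\varepsilon_n\asymp n^{-p}$, with $1-\varepsilon_n$ bounded away from zero even when $p=0$ by Assumption~\ref{asmp:HCL+}. Averaging over $t$ then yields
\[
S_n(u)=n^{-u}\,\bigl(1+O(\varepsilon_n)\bigr)+\Theta(n^{-(p+q)})\asymp n^{-u}+n^{-(p+q)}.
\]
Substituting into the definition gives the stated two-sided bound $T_n^{\mathrm{orc}}(u)\asymp 1/(1+x)$, where $x:=n^{u-(p+q)}$.

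\emph{Step 2: monotonicity of $T_n^{\mathrm{orc}}$.} I will show that $u\mapsto T_n^{\mathrm{orc}}(u)$ is nonincreasing. Writing $y=n^{-u}$, we have
\[
\frac{S_n(u)}{n^{-u}}=\frac1n\sum_t\frac{\mathbb P(Y_t>1-y)}{y}.
\]
Each tail map $y\mapsto\mathbb P(Y_t>1-y)$ is concave with value $0$ at $y=0$. Under the null it is linear. Under the signal it is a mixture of the functions $P_w\{1-(1-y)^{1/P_w}\}$, each concave since $1/P_w\ge1$. So the ratio tail$/y$ is nonincreasing in $y$, hence nondecreasing in $u$, and therefore $T_n^{\mathrm{orc}}$ is nonincreasing in $u$. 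Consequently, $\{u\in\mathcal U_n:T_n^{\mathrm{orc}}(u)\le\lambda_n\}$ is an upper set of grid points, and $u_n^\circ$ is its first element.

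\emph{Step 3: locating the crossing.} By Step~1 there are constants $0<c_-\le c_+$ with $c_-/(1+x)\le T_n^{\mathrm{orc}}(u)\le c_+/(1+x)$.

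If $x\ge c_+/\lambda_n$, then $T_n^{\mathrm{orc}}(u)\le\lambda_n$. Equivalently this holds once
\[
u\ge p+q+\frac{\log\log n+\log(c_+/C_1)}{\log n}.
\]
If instead $x\le c_-/(2\lambda_n)$, then $T_n^{\mathrm{orc}}(u)>\lambda_n$ for large $n$. Equivalently this holds whenever
\[
u\le p+q+\frac{\log\log n+\log(c_-/(2C_1))}{\log n}.
\]

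Both bounds lie in the fixed neighborhood of $p+q$ for large $n$, so Step~1 applies there. The upper bound also lies inside $(u_{\min},u_{\max})$, and the grid has a point within $\Delta_{u,n}$ above it, so $u_n^\circ$ is well defined. By monotonicity, $u_n^\circ$ is sandwiched between the two bounds, up to one grid mesh, which is the claimed expansion. Replacing $\lambda_n$ by any level in $[\lambda_n/2,2\lambda_n]$ only shifts the constants inside the logarithms by $\log 2$, an $O(1/\log n)$ change.

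\emph{Main obstacle.} The delicate point is Step~2 together with the uniformity in Step~1. A mere $\asymp$ statement does not locate a first crossing, so I need the two-sided constants to be uniform over $t$ and over the neighborhood, and I need the concavity argument to rule out oscillation of $T_n^{\mathrm{orc}}$ around $\lambda_n$. Should concavity fail to pass through the averaging over $\bP_t$ (it should, since it holds for each fixed $\bP_t$), the fallback is to argue directly from the sandwich bounds, which already pin down every crossing point to the same window.
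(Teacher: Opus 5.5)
Your proposal is correct and follows essentially the paper's route: decompose the tail by $\theta_t$, use Lemma~\ref{lem:tail-pivot-HCL} to get $S_n(u)\asymp n^{-u}+n^{-(p+q)}$, then locate the level-$\lambda_n$ crossing where $n^{-(u-(p+q))}\asymp\lambda_n$, up to one grid mesh. The paper does not have your Step~2 or your explicit constants $c_\pm$. Step~2 gets monotonicity of $T_n^{\mathrm{orc}}$ from concavity of $y\mapsto\mathbb P(Y_t>1-y)$, which is equivalent to the nondecreasing alternative density used in Lemma~\ref{lem:threshold-reduction}; the paper instead argues directly from the $\asymp$ orders and the observation that $T_n^{\mathrm{orc}}\asymp 1$ for $u<p+q$, so your additions make the first-crossing claim cleaner, especially for grid points below the fixed neighborhood of $p+q$.
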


\begin{proof}[Proof of Theorem~\ref{thm:adaptivity}]
Let
\[
\widehat u_n:=\inf\{u\in\mathcal U_n:\widehat T_n(u)\le\lambda_n-\eta_n\}
\]
be the exponent selected by \Algo{}, and let \(\Bdelta_{\Algo}\) be the resulting decision rule.  The
first step is to show that this data-driven exponent is close to the oracle exponent.  On the event
in Lemma~\ref{lem:tail-LFDR-stability-u}, we have
\[
\max_{u\in\mathcal U_n}|\widehat T_n(u)-T_n^{\mathrm{orc}}(u)|\le \eta_n/2.
\]
Since \(\eta_n=o(\lambda_n)\), for all large \(n\) we have
\(\lambda_n-3\eta_n/2\in[\lambda_n/2,2\lambda_n]\) and
\(\lambda_n-\eta_n/2\in[\lambda_n/2,2\lambda_n]\).  Thus the empirical selection rule is sandwiched
between oracle rules at levels in \([\lambda_n/2,2\lambda_n]\): if the empirical proxy is below
\(\lambda_n-\eta_n\), then the oracle proxy is below \(\lambda_n-\eta_n/2\), and if the oracle proxy
is below \(\lambda_n-3\eta_n/2\), then the empirical proxy is below \(\lambda_n-\eta_n\).  Applying
Lemma~\ref{lem:oracle-jump-calibration} to these two oracle levels gives
\[
\widehat u_n
=
p+q+\frac{\log\log n}{\log n}
+O_{\mathbb P}\left(\frac1{\log n}+\Delta_{u,n}\right).
\]
In particular, since \(p<\alpha\), this implies
\(\widehat u_n\in(p+q,\alpha+q)\) with probability tending to one.

Let \(r_n:=C(1/\log n+\Delta_{u,n})\) with \(C\) sufficiently large, and define the good event
\[
\mathcal E_n:=\left\{
u_n^{-} \le \widehat u_n\le u_n^+
\right\}
\quad
\text{with}
\quad
u_n^{\pm}:=p+q+\frac{\log\log n}{\log n} \pm r_n
\]
By the preceding display, $\mathbb P(\mathcal E_n^c)=o(n^{-1})$.  We introduce $\mathcal E_n$ because $\widehat u_n$ is data-dependent, so the tail bounds in Lemma \ref{lem:tail-pivot-HCL} should not be applied directly at the random exponent $\widehat u_n$.  On $\mathcal E_n$, however, the data-driven threshold is sandwiched between two deterministic thresholds with exponents $u_n^-$ and $u_n^+$.  
Both $u_n^-$ and $u_n^+$ are deterministic such that Lemma \ref{lem:tail-pivot-HCL} can still apply.
The contribution from $\mathcal E_n^c$ is $o(1)$, since $\mathbb P(\mathcal E_n^c)=o(n^{-1})$ and all counts are bounded by $n$.

 We now bound the expected false and true positives on \(\mathcal E_n\).  
\paragraph{Analysis of {EFP}.} Note that on \(\mathcal E_n\), the rule with exponent $u_n^-$ gives an upper bound for false positives. As a result,
\[
\begin{aligned}
\mathrm{EFP}_{\Bdelta_{\Algo}} 
& = \mathbb E\left[
\sum_{t=1}^n
(1-\theta_t)\mathbf 1\{Y_t>1-n^{-\widehat u_n}\} \right] \\
&\le \mathbb E\left[
\sum_{t=1}^n (1-\theta_t)\mathbf 1\{Y_t>1-n^{-u_n^-}\} \right] + n\mathbb P(\mathcal E_n^c) \\
&=
\sum_{t=1}^n
\mathbb P(\theta_t=0)\,
\mathbb P\!\left(Y_t>1-n^{-u_n^-}\mid \theta_t=0\right)
+
n\mathbb P(\mathcal E_n^c).
\end{aligned}
\]
Since \(Y_t\mid\{\theta_t=0\}\sim\mathrm{Unif}(0,1)\), the null tail probability from Lemma \ref{lem:tail-pivot-HCL} is \(n^{- u_n^-}\).  Therefore,
$ \mathrm{EFP}_{\Bdelta_{\Algo}} \lesssim n\,n^{- u_n^-} = n^{1-u_n^-}. $
Because $ u_n^- = p+q+\frac{\log\log n}{\log n} +o(1)$, we have
\[
n^{1- u_n^-}
=
n^{1-p-q}
\cdot
n^{-\log\log n/\log n}
\cdot
n^{o(1)}
=
\frac{n^{1-p-q+o(1)}}{\log n}.
\]
The above argument, together with $\mathbb P(\mathcal E_n^c)  = o(\frac1n)$, implies that 
\begin{equation}
\label{eq:our-EFP}
\mathrm{EFP}_{\Bdelta_{\Algo}}
\lesssim n^{1- u_n^-} + o(1)
\lesssim
\frac{n^{1-p-q+o(1)}}{\log n} + o(1).
\end{equation}

\paragraph{Analysis of {ETP}.}
Similarly, note that on $\mathcal{E}_n$, the rule with exponent $u_n^+$ gives a lower bound for true positives and nonempty discovery. Here, we have
\[
\begin{aligned}
\mathrm{ETP}_{\Bdelta_{\Algo}} 
& = \mathbb E\left[
\sum_{t=1}^n \theta_t\mathbf 1\{Y_t>1-n^{-\widehat u_n}\} \right] 
\ge \mathbb E\left[
\sum_{t=1}^n \theta_t\mathbf 1\{Y_t>1-n^{-u_n^+}\} \1\{\mathcal{E}_n\} \right]  \\
&\ge \mathbb E\left[
\sum_{t=1}^n \theta_t\mathbf 1\{Y_t>1-n^{-u_n^+}\} \right] 
- n \PB(\mathcal{E}_n^c) \\
&= \sum_{t=1}^n \mathbb P(\theta_t=1)\,\mathbb P\!\left(Y_t>1-n^{-u_n^+}\mid \theta_t=1\right)-n \PB(\mathcal{E}_n^c).
\end{aligned}
\]
Lemma~\ref{lem:tail-pivot-HCL} gives $\mathbb P\big(Y_t>1-n^{- u_n^+}\mid\theta_t=1\big) \asymp n^{- u_n^+}+n^{-q}.$
Since \(u_n^+>p+q\ge q\), we have \(n^{-u_n^+}=o(n^{-q})\), so the \(n^{-q}\) term
dominates.  Using \(\mathbb P(\theta_t=1)\asymp\varepsilon_n\) and $\mathbb P(\mathcal E_n^c)  = o(\frac1n)$, we obtain
\begin{equation}
\label{eq:our-ETP}
\mathrm{ETP}_{\Bdelta_{\Algo}}
\gtrsim n\varepsilon_n n^{-q} + o(1)
\asymp n^{1-p-q} + o(1).
\end{equation}
Combining the above two results in \eqref{eq:our-EFP} and \eqref{eq:our-ETP}, we have \(\mathrm{ETP}_{\Bdelta_{\Algo}}\to\infty\) and $\mathrm{ETP}_{\Bdelta_{\Algo}}/\mathrm{EFP}_{\Bdelta_{\Algo}}\to\infty$. Hence \(\mathrm{mFDR}_{\Bdelta_{\Algo}}\to0\).

\paragraph{Non-trivial discovery set.}
It remains to prove that \(\Algo{}\) makes at least one discovery with probability tending to one.
On \(\mathcal E_n\), \(\widehat u_n\le u_n^+\), where
\(u_n^+:=p+q+\log\log n/\log n+r_n\).  Since a smaller exponent corresponds to a lower threshold
and hence a larger selected set, the discovery set of \(\Algo{}\) contains the discovery set of the
deterministic comparison rule \(\mathbf 1\{Y_t>1-n^{-u_n^+}\}\) on \(\mathcal E_n\).  Moreover,
\(u_n^+<\alpha+q\) for all large \(n\), and Lemma~\ref{lem:tail-pivot-HCL} gives this comparison
rule expected true positives of order
\[
n\varepsilon_n(n^{-u_n^+}+n^{-q})\asymp n^{1-p-q}.
\]
Since \(p+q<1\), this expectation diverges.  By Lemma~\ref{lem:mixing-rare-count}, the comparison
rule makes at least one true discovery with probability tending to one.  Therefore
\(\mathbb P(|S_{\Bdelta_{\Algo}}|\ge1)\to1\).  Together with
\(\mathrm{mFDR}_{\Bdelta_{\Algo}}\to0\), this proves that \(\Algo{}\) achieves discovery.

\end{proof}

\paragraph{Auxiliary proofs for Theorem~\ref{thm:adaptivity}.}
In the following, we present the omitted proofs for the lemmas used above.

\begin{proof}[Proof of Lemma~\ref{lem:tail-relative-u}]
Fix a grid point \(u\in\mathcal U_n\), and write
\(X_t(u):=\mathbf 1\{Y_t>\tau(u)\}\).  Then
\(\widehat S_n(u)=n^{-1}\sum_{t=1}^nX_t(u)\) and \(S_n(u)=n^{-1}\sum_{t=1}^n\mathbb E X_t(u)\).
The variables \(X_t(u)\) are bounded by one and are geometrically strongly mixing by
Assumption~\ref{asmp:alphamixing}, as a result of the fact that $\mathcal{Y}_t \subset \mathcal{F}_t$ for any $t \ge 1$. Also, since the null tail is \(n^{-u}\) and
\(\mathbb P(\theta_t=0)\) is bounded away from zero, we have
\(S_n(u)\ge c_0n^{-u}\ge c_0n^{-u_{\max}}\).  Thus \(nS_n(u)\) grows polynomially because
\(u_{\max}<1\).

Therefore, we apply a Bernstein-type inequality for bounded, geometrically strongly mixing sequences; see, for example, \citep[Theorem~2]{MerlevedePeligradRio2009Bernstein}. This inequality applies in our setting under the uniform mixing coefficients established above. Concretely, there exist constants
\(c_1,c_2>0\),
\[
\mathbb P\left(\left|\sum_{t=1}^n\{X_t(u)-\mathbb E X_t(u)\}\right|>y\right)
\le
2\exp\left\{-c_1\frac{y^2}{nS_n(u)+y\log^2 n}\right\}.
\]
Taking \(y=xnS_n(u)\) gives that for any $u \in \mathcal{U}_n$,
\[
\mathbb P\left(|\widehat S_n(u)-S_n(u)|>xS_n(u)\right)
\le
2\exp\left\{-c_1\frac{nS_n(u)x^2}{1+x\log^2 n}\right\}.
\]
Now set \(x=c/(\log n)^2\).  Since \(nS_n(u)\ge c_0n^{1-u_{\max}}\), the exponent is at least a
constant multiple of \(n^{1-u_{\max}}/(\log n)^4\), which is larger than any multiple of \(\log n\).
Therefore the probability above is \(o(n^{-2})\) uniformly over \(u\in\mathcal U_n\) for sufficiently large $n$.  Since \(|\mathcal U_n|=M_n\asymp(\log n)^a\), a union bound over the grid gives the desired \(o(n^{-1})\) probability.
\end{proof}

\begin{proof}[Proof of Lemma~\ref{lem:tail-LFDR-stability-u}]
Let \(\mathcal E_S := \left\{ \max_{u\in\mathcal U_n}
\frac{|\widehat S_n(u)-S_n(u)|}{S_n(u)}\le \frac{c}{(\log n)^2}  \right\} \) be the event in Lemma~\ref{lem:tail-relative-u}, and let
\(\mathcal E_\varepsilon:=\{|\widehat\varepsilon_n-\varepsilon_n|\le c\eta_n\}\).  By the accuracy assumption and Lemma~\ref{lem:tail-relative-u}, we have
\[
\mathbb P(\mathcal E_S^c\cup\mathcal E_\varepsilon^c)=o(n^{-1})
\]
We work on the event \(\mathcal E_S\cap\mathcal E_\varepsilon\).  On that event, uniformly over
\(u\in\mathcal U_n\), write \(\widehat S_n(u)=S_n(u)(1+e_u)\), where
\(|e_u|\le c/(\log n)^2\).  For \(c\) sufficiently small (which we will further minimize later), \(1+e_u\ge1/2\).  Also,
\(S_n(u)\ge c_0n^{-u}\ge c_0n^{-u_{\max}}\), because \(u\le u_{\max}\).  Since
\(u_{\max}<1\), we have \(n^{-u_{\max}}\gg n^{-1}\).  Hence, for all large \(n\), $\widehat S_n(u)\ge \frac{c_0}{2}n^{-u_{\max}}>n^{-1},$ so the truncation in the denominator is inactive:
\(\widehat S_n(u)\vee n^{-1}=\widehat S_n(u)\).

For such \(u\), we can write
\[
\widehat T_n(u)
=
\frac{(1-\widehat\varepsilon_n)n^{-u}}{S_n(u)(1+e_u)},
\qquad
T_n^{\mathrm{orc}}(u)
=
\frac{(1-\varepsilon_n)n^{-u}}{S_n(u)}.
\]
Subtracting the two expressions gives
\[
|\widehat T_n(u)-T_n^{\mathrm{orc}}(u)|
\le
\frac{n^{-u}}{S_n(u)}
\left|
\frac{1-\widehat\varepsilon_n}{1+e_u}
-
(1-\varepsilon_n)
\right|.
\]
Since \(S_n(u)\ge c_0n^{-u}\), the prefactor \(n^{-u}/S_n(u)\) is uniformly bounded.  For the
remaining term, add and subtract \(1-\widehat\varepsilon_n\):
\[
\left|
\frac{1-\widehat\varepsilon_n}{1+e_u}
-
(1-\varepsilon_n)
\right|
\le
|\widehat\varepsilon_n-\varepsilon_n|
+
(1-\widehat\varepsilon_n)\left|\frac{1}{1+e_u}-1\right|.
\]
Because \(|e_u|\le c/(\log n)^2\le1/2\), we have
\(\left|(1+e_u)^{-1}-1\right|\le 2|e_u|\).  Therefore, uniformly over \(u\in\mathcal U_n\),
\[
|\widehat T_n(u)-T_n^{\mathrm{orc}}(u)|
\lesssim
|\widehat\varepsilon_n-\varepsilon_n|+|e_u|
\le
c\eta_n+\frac{c}{(\log n)^2}.
\]
Since \(\eta_n=C_2/(\log n)^2\), the right-hand side is at most \(\eta_n/2\) by taking this fixed constant \(c>0\) sufficiently small.  Thus
\[
\max_{u\in\mathcal U_n}
|\widehat T_n(u)-T_n^{\mathrm{orc}}(u)|
\le \eta_n/2
\]
on \(\mathcal E_S\cap\mathcal E_\varepsilon\).  Since the complement of this event has probability \(o(n^{-1})\), the lemma follows.
\end{proof}

\begin{proof}[Proof of Lemma~\ref{lem:oracle-jump-calibration}]
Fix \(u\) in a fixed neighborhood of \(p+q\) contained in \((0,\alpha+q)\).  By the definition of \(S_n(u)\), $S_n(u) = \frac1n\sum_{t=1}^n \mathbb P(Y_t>\tau(u)).$
For each \(t\), decompose the tail probability according to \(\theta_t\):
\[
\mathbb P(Y_t>\tau(u))
=
\mathbb P(\theta_t=0)\mathbb P(Y_t>\tau(u)\mid\theta_t=0)
+
\mathbb P(\theta_t=1)\mathbb P(Y_t>\tau(u)\mid\theta_t=1).
\]
The null tail $\mathbb P(\theta_t=0)\mathbb P(Y_t>\tau(u)\mid\theta_t=0)$ is \(n^{-u}\).  Since \(u<\alpha+q\), Lemma~\ref{lem:tail-pivot-HCL} gives the signal
tail order \( \mathbb P(Y_t>\tau(u)\mid\theta_t=1) \asymp n^{-u}+n^{-q}\).  Assumption~\ref{asmp:main}(c) gives
\(\mathbb P(\theta_t=1)\asymp\varepsilon_n\) uniformly in \(t\), and thus
\[
S_n(u)
\asymp
n^{-u}+\varepsilon_n(n^{-u}+n^{-q})
\asymp
n^{-u}+n^{-(p+q)}.
\]
Note that the term \(\varepsilon_n n^{-u}\) is absorbed into \(n^{-u}\), since \(\varepsilon_n\le1\).
Therefore, we have
\[
T_n^{\mathrm{orc}}(u)
=
\frac{(1-\varepsilon_n)n^{-u}}{S_n(u)}
\asymp
\frac{n^{-u}}{n^{-u}+n^{-(p+q)}}.
\]

We now determine the oracle exponent \(u_n^\circ\), namely the smallest value of \(u\) for which \(T_n^{\mathrm{orc}}(u)\le \lambda_n\).  If \(u<p+q\), then \(n^{-u}\gg n^{-(p+q)}\), so
\(T_n^{\mathrm{orc}}(u)\asymp1\), which is much larger than
\(\lambda_n=C_1/\log n\).  Hence, the oracle crossing cannot occur below \(p+q\).  If
\(u>p+q\), then \(n^{-u}\ll n^{-(p+q)}\), and the preceding display becomes
\(T_n^{\mathrm{orc}}(u)\asymp n^{-(u-(p+q))}\).  Therefore, the crossing at level
\(\lambda_n\) is determined by the boundary
\[
n^{-(u-(p+q))}\asymp \frac{1}{\log n}
\quad\Longleftrightarrow\quad
(u-(p+q))\log n\approx \log\log n.
\]
Taking logarithms gives
\[
u
=
p+q+\frac{\log\log n}{\log n}+O\!\left(\frac1{\log n}\right),
\]
where the \(O(1/\log n)\) term absorbs constants such as \(C_1\) and the implicit constants in
\(\asymp\).

This gives the continuous crossing location.  Since \(\Algo{}\) scans the grid \(\mathcal U_n\), the
first grid point at which \(T_n^{\mathrm{orc}}(u)\le\lambda_n\) can differ from the continuous
crossing location by at most one grid mesh, namely \(\Delta_{u,n}\).  Thus
\[
u_n^\circ
=
p+q+\frac{\log\log n}{\log n}
+O\!\left(\frac1{\log n}+\Delta_{u,n}\right).
\]

If \(\lambda_n\) is replaced by a level \(L_n\in[\lambda_n/2,2\lambda_n]\), then
\(L_n=C_L/\log n\) for some constant \(C_L\in[C_1/2,2C_1]\).  Solving
\(n^{-(u-(p+q))}\asymp L_n\) similarly gives $u = p+q+\frac{\log\log n}{\log n} + O\!\left(\frac1{\log n}\right)$, because the constant \(C_L\) only contributes an \(O(1/\log n)\) term after taking logarithms.
Thus, the same expansion holds.
\end{proof}

\subsection{Proof of Theorem~\ref{thm:25optimal}: Near-optimal number of discoveries}

\begin{proof}[Proof of Theorem~\ref{thm:25optimal}]
Let \(\Bdelta_{\Algo}\) be the rule returned by \Algo{}.  From the proof of
Theorem~\ref{thm:adaptivity}, namely \eqref{eq:our-EFP} and \eqref{eq:our-ETP}, we have
\[
\mathrm{EFP}_{\Bdelta_{\Algo}}
\lesssim
\lambda_n n^{1-p-q} + o(1),
\qquad
\mathrm{ETP}_{\Bdelta_{\Algo}}
\gtrsim
n^{1-p-q} + o(1),
\]
where we use the facts that \(\lambda_n=C_1/\log n\).
As a result, for a constant \(c_1>0\),
\[
\mathrm{mFDR}_{\Bdelta_{\Algo}} = \frac{\mathrm{EFP}_{\Bdelta_{\Algo}}}
{\mathrm{EFP}_{\Bdelta_{\Algo}}+\mathrm{ETP}_{\Bdelta_{\Algo}}} \le c_1\lambda_n+o(1).
\]

It remains to upper bound the number of true discoveries achievable by any homogeneous local rule
satisfying the same false-discovery constraint.  By the monotone likelihood-ratio rearrangement in
Lemma~\ref{lem:threshold-reduction}, it is enough to consider right-tail threshold rules.  Let
\(\Bdelta\in \mathcal{D}_n^{\mathrm{hom}}(\lambda_n)\), and let \(A_n\) be its total null rejection mass.  Then
\(\mathrm{EFP}_{\Bdelta}\asymp A_n\) as we show many times in the previous proof.  Lemma~\ref{lem:tail-pivot-HCL} gives the same uniform signal-tail upper bound as \eqref{eq:bound-ETP}:
\begin{equation}
\label{eq:previous-display}
\mathrm{ETP}_{\Bdelta}
\lesssim
\varepsilon_n(A_n+n^{1-q}).
\end{equation}
Since \(\Bdelta\in \mathcal{D}_n^{\mathrm{hom}}(\lambda_n)\), we have
\(\mathrm{mFDR}_{\Bdelta}
=\mathrm{EFP}_{\Bdelta}/(\mathrm{EFP}_{\Bdelta}+\mathrm{ETP}_{\Bdelta})
\le\lambda_n\) by definition.  Equivalently,
\((1-\lambda_n)\mathrm{EFP}_{\Bdelta}\le\lambda_n\mathrm{ETP}_{\Bdelta}\).  Since
\(\lambda_n\to0\), for all large \(n\) we have \(1-\lambda_n\ge1/2\), and therefore
\(\mathrm{EFP}_{\Bdelta}\le2\lambda_n\mathrm{ETP}_{\Bdelta}\).  Because
\(A_n\asymp\mathrm{EFP}_{\Bdelta}\), this gives
\(A_n\lesssim\lambda_n\mathrm{ETP}_{\Bdelta}\).

Substituting this \(A_n\lesssim\lambda_n\mathrm{ETP}_{\Bdelta}\) into the previous display in \eqref{eq:previous-display} yields
\[
\mathrm{ETP}_{\Bdelta}
\precsim \varepsilon_n\lambda_n\mathrm{ETP}_{\Bdelta} + n^{1-p-q}.
\]
Since \(\varepsilon_n\lambda_n\le\lambda_n\to0\), the first term on the right-hand side can be summarized as $o(1) \cdot \mathrm{ETP}_{\Bdelta}$.
Arranging the last inequality, we have that for every \(\Bdelta\in \mathcal{D}_n^{\mathrm{hom}}(\lambda_n)\) satisfies
\[
\mathrm{ETP}_{\Bdelta}
\lesssim
n^{1-p-q}.
\]
Taking the supremum over the benchmark class gives
\[
\sup_{\Bdelta\in \mathcal{D}_n^{\mathrm{hom}}(\lambda_n)}
\mathrm{ETP}_{\Bdelta}
\lesssim
n^{1-p-q}.
\]

On the other hand, the lower bound from Theorem~\ref{thm:adaptivity} gives
\(\mathrm{ETP}_{\Bdelta_{\Algo}}\gtrsim n^{1-p-q}+o(1)\).  Combining the upper bound for the benchmark
class with this lower bound for \(\Algo{}\), we obtain a constant \(c_2>0\) such that
\[
\mathrm{ETP}_{\Bdelta_{\Algo}}
\ge
c_2
\sup_{\Bdelta\in \mathcal{D}_n^{\mathrm{hom}}(\lambda_n)}
\mathrm{ETP}_{\Bdelta}
+o(1).
\]
Together with the false-discovery bound for \(\Bdelta_{\Algo}\), this proves the theorem.
\end{proof}